\newcommand{\prob}{P}
\newcommand{\probMatrix}{\mathbf{P}}
\newcommand{\symmetryMatrix}{\mathbf{S}}
\newcommand{\jordanMatrix}{\mathbf{J}}
\newcommand{\diagonalMatrix}{\mathbf{D}}
\newcommand{\generatorMatrix}{\mathbf{Q}}
\newcommand{\unifProb}{\overline{P}}
\newcommand{\unifProbMatrix}{\overline{\mathbf{P}}}
\newcommand{\erlang}{\mathcal{E}}
\newcommand{\probMeasure}{\mathrm{Pr}}
\newcommand{\rate}{E}
\newcommand{\Succ}{\mathrm{Succ}}
\newcommand{\Distr}{\mathrm{Distr}}
\newcommand{\supp}{\mathrm{supp}}
\newcommand{\Paths}{\mathrm{Paths}}
\newcommand{\labelFunction}{L}
\newcommand{\stateSpace}{S}
\newcommand{\initialState}{s_{init}}
\newcommand{\uniformizationRate}{q}
\newcommand{\uniformization}[2]{\mathrm{unif}(#1, #2)}
\newcommand{\probability}{\mathrm{Prob}}
\newcommand{\Diff}{\mathrm{Diff}}
\newcommand{\reward}{\rho}
\newcommand{\dirac}{1}
\newcommand{\xfrac}[2]{%
	\mbox{\raisebox{0.3ex}{\ensuremath{\displaystyle #1}\hspace{-0.2ex}}%
		/%
		\raisebox{-0.3ex}{\footnotesize{\ensuremath{#2}}}%  
	}%
}
\newcommand{\reachability}{\lozenge}
\author{
	Timm Spork\inst{1}\Envelope\orcidID{0009-0008-4461-0667} \and
	Christel Baier\inst{1}\orcidID{0000-0002-5321-9343} \and
	Joost-Pieter Katoen \inst{2}\orcidID{0000-0002-6143-1926} \and
	Sascha Klüppelholz\inst{1}\orcidID{0000-0003-1724-2586} \and
	Jakob Piribauer\inst{1,3}\orcidID{0000-0003-4829-0476}
}
\authorrunning{T. Spork et al.}
\institute{
	\textsuperscript{1}Technische Universität Dresden, Dresden, Germany\\
	\email{firstname.lastname@tu-dresden.de} \\
	\textsuperscript{2}RWTH Aachen University, Aachen, Germany \\
	\email{katoen@cs.rwth-aachen.de} \\
	\textsuperscript{3}Universität Leipzig, Leipzig, Germany 
}
\title{Approximate Probabilistic Bisimulation for Continuous-Time Markov Chains}
\begin{document}
	\maketitle
	
	%---------------------
	\begin{abstract}
		We introduce $(\varepsilon, \delta)$-bisimulation, a novel type of approximate probabilistic bisimulation for continuous-time Markov chains. In contrast to related notions, $(\varepsilon, \delta)$-bisimulation allows the use of different tolerances for the transition probabilities ($\varepsilon$, additive) and total exit rates ($\delta$, multiplicative) of states. Fundamental properties of the notion, as well as bounds on the absolute difference of time- and reward-bounded reachability probabilities for $(\varepsilon,\delta)$-bisimilar states, are established.
		
		\keywords{Continuous-Time Markov Chains \and Approximate Probabilistic Bisimulation \and Quasi-Lumpability \and Time-Bounded Reachability}
	\end{abstract}

	%---------------------
	
	\section{Introduction}
	Continuous-time Markov chains (CTMCs) are a prominent probabilistic
	model in various application fields, e.g., reliability engineering,
	systems biology, modeling of chemical reactions, and performance evaluation. CTMCs are state-based
	models whose transitions yield a discrete probability distribution over
	states---as in discrete-time Markov chains (DTMCs)---while the state residence
	times are governed by exponential distributions. Various model-checking
	approaches for CTMCs exist
	\cite{DBLP:journals/pe/AmparoreD18,DBLP:journals/tse/BaierCHKS07,DBLP:conf/lics/ChenHKM09,DBLP:journals/corr/abs-1104-4983}
	and are supported by tools such as PRISM \cite{PRISM} and Storm \cite{PMCS}. CTMC model checking is used
	to analyze, e.g., stochastic Petri nets
	\cite{DBLP:conf/apn/AmparoreBD14}, fault trees
	\cite{DBLP:conf/safecomp/ArnoldBBGS13,DBLP:journals/tii/VolkJK18},
	biological systems \cite{DBLP:series/natosec/KwiatkowskaT14,DBLP:journals/jetc/MadsenZRWM14}, and chemical reactions \cite{AAMCQACRN,RASBS}.
	
	The central issue in CTMC model checking is computing timed reachability
	probabilities: what is the probability to reach a set of goal states
	within a given deadline from a given start state? The reliability of a
	fault tree, or the probability that all molecules have been catalyzed
	within two days, are instances of this question. Computing timed
	reachability probabilities reduces to computing transient probabilities
	in a uniformized CTMC, i.e., a CTMC in which the state residence times
	are ``normalized'' \cite{MCACTMC}. This method is
	quite efficient, numerically stable, and scales to CTMCs with millions
	of states.
	
	In practical applications, however, transition probabilities and state residence
	time distributions---defined by exit rates---are usually not known exactly. Component failure rates
	in fault trees are vulnerable to environmental conditions, and
	reaction rates of molecules are obtained experimentally. This raises
	the question whether CTMC model-checking results are robust w.r.t.
	perturbations of their stochastic aspects. The aim of this paper is to
	investigate to what extent transition probabilities and exit rates in a
	given CTMC can be altered while ensuring that timed reachability
	probabilities are preserved up to a small tolerance $\theta$.
	
	To this end we define the novel notion of
	$(\varepsilon,\delta)$-bisimulation on CTMCs, investigate its
	fundamental properties and derive bounds for timed reachability
	probabilities. The results yield under which (absolute) $\varepsilon$-tolerance on
	transition probabilities and (relative) $\delta$-tolerance on exit rates, timed
	reachability probabilities are close up to $\theta$. This enables, e.g.,
	to determine the maximal tolerances in components' failure rates while
	ensuring the fault tree's (i.e., overall systems') reliability. Our
	notion generalizes strong probabilistic bisimulation \cite{EOLFMC} (also known as lumpability) that  preserves timed reachability probabilities exactly. 
	
	Let us illustrate the conceptual difference of perturbing exit rates and transition probabilities separately in
	$(\varepsilon,\delta)$-bisimulations compared to existing notions such as 
	$\tau$-quasi-lumpability (also known as near-lumpability) \cite{EOLFMC,BQLMC,CBPIQLSWFN} that consider \emph{transition rates}, i.e., products of exit rates and transition probabilities.
		
		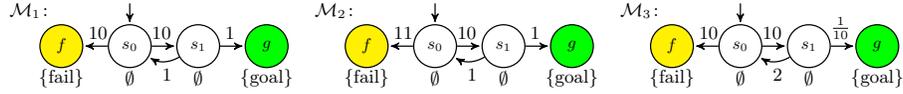
\begin{figure}[t]
			\centering
			\resizebox{!}{0.0715\textheight}{
				\begin{tikzpicture}[->,>=stealth',shorten >=1pt,auto, semithick]
						\tikzstyle{every state} = [text = black, scale = 0.9]
						
						%lhs
						\node[state] (s1) [] {$s_0$};  
						\node[state] (s2) [ right of = s1, node distance = 1.3cm] {$s_1$}; 
						\node[state] (g1) [fill = green, right of = s2, node distance = 1.3cm] {$g$}; 
						\node[state] (f1) [left of = s1, node distance = 1.3cm, fill = yellow] {$f$}; 
						\node[left of = f1,node distance=.5cm, yshift = 0.6cm] {$\mathcal{M}_1\colon$};
						
						\node[] (sinit) [above of = s1, node distance = 0.85cm] {}; 
						
						\path 
						(sinit) edge (s1)
						(s1) edge node [above, pos = 0.4] {$10$} (s2)
						(s2) edge node [above, pos = 0.4] {$1$} (g1)
						(s2) edge [bend left] node [below, pos = 0.4] {$1$} (s1)
						(s1) edge node [above, pos = 0.4] {$10$} (f1)
						;
						
						%label
						\node [below of = s1, node distance = 0.55cm] {$\emptyset$};
						\node [below of = s2, node distance = 0.55cm] {$\emptyset$};
						
						\node [below of = f1, node distance = 0.55cm] {$\{\text{fail}\}$};
						\node [below of = g1, node distance = 0.55cm] {$\{\text{goal}\}$}; 
						
						%middle
						\node[state] (s11) [right of = s1, node distance = 5.8cm] {$s_0$}; 
						\node[state] (s21) [ right of = s11, node distance = 1.3cm] {$s_1$}; 
						\node[state] (g11) [fill = green, right of = s21, node distance = 1.3cm] {$g$}; 
						\node[state] (f11) [left of = s11, node distance = 1.3cm, fill = yellow] {$f$}; 
						\node[left of = f11,node distance=.5cm, yshift = 0.6cm] {$\mathcal{M}_2\colon$};
						
						\node[] (sinit1) [above of = s11, node distance = 0.85cm] {}; 
						
						\path 
						(sinit1) edge (s11)
						(s11) edge node [above, pos = 0.4] {$10$} (s21)
						(s21) edge node [above, pos = 0.4] {$1$} (g11)
						(s21) edge [bend left] node [below, pos = 0.4] {$1$} (s11)
						(s11) edge node [above, pos = 0.4] {$11$} (f11)
						;
						
						%label
						\node [below of = s11, node distance = 0.55cm] {$\emptyset$};
						\node [below of = s21, node distance = 0.55cm] {$\emptyset$};
						
						\node [below of = f11, node distance = 0.55cm] {$\{\text{fail}\}$};
						\node [below of = g11, node distance = 0.55cm] {$\{\text{goal}\}$};

						%rhs
						\node[state] (s12) [right of = s11, node distance = 5.8cm] {$s_0$}; 
						\node[state] (s22) [ right of = s12, node distance = 1.3cm] {$s_1$}; 
						\node[state] (g12) [fill = green, right of = s22, node distance = 1.3cm] {$g$}; 
						\node[state] (f12) [left of = s12, node distance = 1.3cm, fill = yellow] {$f$}; 
						\node[left of = f12,node distance=.5cm, yshift = 0.6cm] {$\mathcal{M}_3\colon$};
						
						\node[] (sinit2) [above of = s12, node distance = 0.85cm] {}; 
						
						\path 
						(sinit2) edge (s12)
						(s12) edge node [above, pos = 0.4] {$10$} (s22)
						(s22) edge node [above, pos = 0.4] {$\frac{1}{10}$} (g12)
						(s22) edge [bend left] node [below, pos = 0.4] {$2$} (s12)
						(s12) edge node [above, pos = 0.4] {$10$} (f12)
						;
						
						%label
						\node [below of = s12, node distance = 0.55cm] {$\emptyset$};
						
						\node [below of = s22, node distance = 0.55cm] {$\emptyset$};
						\node [below of = f12, node distance = 0.55cm] {$\{\text{fail}\}$};
						\node [below of = g12, node distance = 0.55cm] {$\{\text{goal}\}$}; 

					\end{tikzpicture}}
			\caption{Three CTMCs where the copies of the states  can be related by a $\tau$-quasi-lumpability iff $\tau \geq 1$. The numbers indicate the transition rates between states.}
			\label{fig:example-untinuitive-quasi-lumpabilitiy}
		\end{figure}
	
	\begin{example}\label{ex:intro}
		Consider the CTMCs in \Cref{fig:example-untinuitive-quasi-lumpabilitiy}. The value attached to an arrow from state $s$ to state $s'$ describes the \emph{transition rate} $R(s,s')$ between $s$ and $s'$, given as the product of the total exit rate $\rate(s)$ of $s$ and the probability $\prob(s,s')$ to move from $s$ to $s'$. The colors of the states indicate their labels (also written next to the states). Note that the CTMCs $\mathcal{M}_1$ and $\mathcal{M}_2$ behave very similarly---in $\mathcal{M}_2$, the transition to $f$ is only taken with a slightly higher rate, and the probability to reach the goal state $g$ until time $t$ is almost the same in the two models. The CTMCs $\mathcal{M}_1$ and $\mathcal{M}_3$, however, behave very differently---reaching $g$ until $t$ is much less likely in $\mathcal{M}_3$ because of the significantly lower rate $R(s_1, g)$.
		
		A $\tau$-quasi-lumpability \cite{EOLFMC,BQLMC,CBPIQLSWFN} is an equivalence relation such that related states have the same label and the total transition rates to move to any other equivalence class from related states differ by at most $\tau$.
		If we want to relate the states of $\mathcal{M}_1$ with their copies in $\mathcal{M}_2$ with a $\tau$-quasi-lumpability, $\tau$ has to be at least $1$ due to the change of the transition rate from $s_0$ to $f$ from $10$ to $11$.
		For $\tau=1$, however, we can also relate the states of $\mathcal{M}_1$ with their copies in $\mathcal{M}_3$. Hence, $\tau$-quasi-lumpability does not allow us to capture the intuition that $\mathcal{M}_1$ and $\mathcal{M}_2$ behave similarly while $\mathcal{M}_1$ and $\mathcal{M}_3$ behave very differently.
		
	\end{example}

	The central idea behind $(\epsilon,\delta)$-bisimulations is to decouple the changes of exit rates and transition probabilities of related states.
	In $\mathcal{M}_1$ and $\mathcal{M}_2$ from \Cref{fig:example-untinuitive-quasi-lumpabilitiy}, the  total exit rate $\rate(s_0) = \sum_{s'} R(s_0, s')$ of $s_0$ changes by a factor of $1.05$ from $20$ to $21$, and the probability $\prob(s_0, f) = \frac{R(s_0,f)}{\rate(s_0)}$ to transition from $s_0$ to $f$ is $1/2$ and $11/21$, respectively.
	The absolute change of transition probabilities and the relative change of exit rates is small, so the chains behave similarly.
	For $\mathcal{M}_1$ and $\mathcal{M}_3$, however, $\prob(s_1, g)$ changes from $1/2$ to $1/21$, a huge difference.
	In this way, $(\epsilon,\delta)$-bisimulation can express (dis)similarities of CTMCs in a more fine-grained manner than notions that consider the transition rates between states. 
	
	\paragraph{Main Contributions.} Our main contributions are the following: 
	\begin{itemize}
		\item We introduce \emph{$(\varepsilon,\delta)$-bisimulation} for CTMCs, a novel type of approximate probabilistic bisimulation that allows an absolute tolerance of $\varepsilon$ for the transition probabilities of related states, and a relative tolerance of $\delta$ for their total exit rates. We prove that the union of all $(\varepsilon, \delta)$-bisimulations, called $(\varepsilon,\delta)$-bisimilarity, is an $(\varepsilon, \delta)$-bisimulation itself, that it is additive in $\varepsilon$ and $\delta$, and that it coincides with strong bisimilarity \cite{CBTSMC,EOLFMC} iff $\varepsilon = \delta = 0$. Moreover, we discuss how quasi-lumpability \cite{EOLFMC,BQLMC,CBPIQLSWFN} and $(\varepsilon, \delta)$-bisimilarity relate, and show how to ``split'' $(\varepsilon, \delta)$-bisimilarity into $(\varepsilon,0)$- and $(0,\delta)$-bisimilarity, allowing an individual treatment of the parameters (\Cref{sec:definition-fundamental-properties}).
		\item We derive bounds on the absolute difference of timed reachability probabilities of $(\varepsilon, \delta)$-bisimilar states by uniformizing the CTMC and applying a bound from \cite{RBBTEAPC} for the difference in finite horizon reachability probabilities of $\varepsilon$-bisimilar states of DTMCs \cite{AAPP}. The bounds are tight if $\delta = 0$ (\Cref{sec:time-bounded-reachability-bounds}). 
		\item We analyze the absolute difference of timed reachability probabilities between $(0, \delta)$-bisimilar states in more detail. Using so-called \emph{Erlang CTMCs} \cite{SSPTPSATE} we show how to compute them \emph{exactly} if $\varepsilon = 0$. Subsequently, we derive different bounds on the difference: an easy-to-compute one based on an Erlang CTMC of a specific length, and bounds that utilize a spectral decomposition of the transition probability matrix of the given CTMC (\Cref{sec:errors-0-delta-states}). 
		\item We extend (some of) our results to reward-bounded reachability probabilities in CTMCs with nonnegative rewards, by utilizing a method from \cite{LCPP} that allows us to express them as timed reachability probabilities (\Cref{sec:reward-bounds}). 
	\end{itemize}
	
	\paragraph{Related Work.}
	In the discrete-time setting of DTMCs a lot of work has been done on approximate probabilistic bisimulations, mainly focusing on $\varepsilon$-bisimulations as introduced by Desharnais \textit{et al.} \cite{RBBTEAPC,AAPP}. Other notions include approximate probabilistic bisimulations with precision $\varepsilon$ \cite{AMBPBGSSMP,PMCLMPFAB,RPCTLMC}, up-to-$(n, \varepsilon)$-bisimulations \cite{SAAPBPCTL,AAPP}, or approximate versions of weak- and branching probabilistic bisimulation \cite{NAWPB,SAPB}. See, e.g., \cite{SAPB} for a comparison of several different notions.
	
	For CTMCs, an approximate version of lumpability, called $\tau$-\emph{quasi}- or \emph{near-lumpability}, is known \cite{EOLFMC,BQLMC,CBPIQLSWFN}. It is best suited for the analysis of long-run or stationary properties of a chain, but does not provide good guarantees on its \emph{transient} behavior. A more recent notion is \emph{proportional lumpability} \cite{PL,PLPB,RPL,EAPLASMPB}. Intuitively, states $s, s'$ are proportionally lumpable w.r.t. a function $\kappa$ that maps states to values in $\mathbb{R}_{>0}$ iff there is an equivalence $R$ such that the transition rates of $s$ and $s'$ to any equivalence class of $R$ differ by the constant factors $\kappa(s)$ resp. $\kappa(s')$. Proportional lumpability preserves the \emph{exact} stationary distributions.
	
	\emph{Uncertain continuous-time Markov chains} (\emph{UCTMC}) \cite{LUCTMC,AMUCTMC} are CTMCs with time-varying transition rates. At each time $t \geq 0$ and for all states $s,s'$, the transition rate to move from $s$ to $s'$ must be contained in a fixed interval $[m(s,s'), M(s,s')] \subseteq \mathbb{R}_{\geq 0}$. Lumpabilities for UCTMCs are equivalences that require related states to have the same ``extremal realizations'', defined as the (time-invariant) CTMCs in which all transition rates $R(s,s')$ coincide with $m(s,s')$ resp. $M(s,s')$. These lumpabilities can be computed efficiently and are characterized both via value functions and satisfaction equivalence of formulas in a variant of the continuous stochastic logic CSL \cite{MCCTMC,MCACTMC}. However, there does not seem to be a direct connection to our notion of $(\varepsilon, \delta)$-bisimulation.
	
	Another related line of research is \emph{perturbation theory} for Markov chains. Given a CTMC $\mathcal{M}$ and a (slightly) perturbed $\mathcal{M}'$, the goal is to bound the difference of (some) performance measures of the models. See, e.g., \cite{PACTMCWS,PACTMC,EBATAMCPM,SCUEMC,PTEMCANA} or the overview article \cite{APBFCTMC}. 
	The bounds presented in the literature are usually on the \emph{total} error between the \emph{stationary} distributions of $\mathcal{M}$ and $\mathcal{M}'$. 
	Furthermore, they are oftentimes obtained under the assumption of \emph{drift conditions} or rely on the \emph{ergodicity} of the chain. 
	This is in contrast to our work, which focuses on the \emph{componentwise} difference in \emph{transient} reachability probabilities up to time $t$.
			
	\section{Preliminaries}\label{sec:preliminaries}
	\textbf{Distributions and vectors.} Let $S \neq \emptyset$ be finite. The set of \emph{distributions} on $S$ is $\Distr(S) = \{\mu \colon S \to [0,1] \mid \sum_{s \in S} \mu(s) = 1\}$. The \emph{support} of $\mu \in \Distr(S)$ is $\supp(\mu) = \{s \in S \mid \mu(s) > 0\}$. Given $A \subseteq S$, we set $\mu(A) = \sum_{a \in A} \mu(a)$. The \emph{Dirac-distribution} $\dirac_s$ satisfies $\dirac_s(s') = 1$ if $s = s'$ and $\dirac_s(s') = 0$ otherwise. We may associate $\mu \in \Distr(S)$ with a row vector $\boldsymbol{\mu} \in [0,1]^{1 \times \vert S \vert}$ such that $\boldsymbol{\mu}[i] = \mu(i)$ for every $i \in S$. Vectors are written in bold face, and $\boldsymbol{\mu}^\top$ is the transpose of $\boldsymbol{\mu}$. 
	
	\smallskip
	\noindent
	\textbf{Relations.} Given a relation $R \subseteq S \times S$, the \emph{image} of $A \subseteq S$ under $R$ is $R(A) = \{t \in S \mid \exists \, s \in A \colon (s,t) \in R\}$. $A$ is called \emph{$R$-closed} if $R(A) \subseteq A$. If $R$ is an equivalence, we write $\xfrac{S}{R}$ for the set of $R$ equivalence classes. The $R$-closed sets of an equivalence $R$ are precisely the (unions of) elements of $\xfrac{S}{R}$.
	
	\smallskip
	\noindent
	\textbf{Markov chains.} Fix a countable set $AP$ of \emph{atomic propositions}. A \emph{discrete-time Markov chain} (\emph{DTMC}) is a tuple $\mathcal{D} = (\stateSpace, \prob, \initialState, \labelFunction)$, with $\stateSpace$ a finite set of \emph{states}, $\prob \colon \stateSpace \to \Distr(\stateSpace)$ a \emph{transition distribution function}, $\initialState \in \stateSpace$ a unique \emph{initial state} and $\labelFunction\colon \stateSpace \to 2^{AP}$ a \emph{labeling function}. $\prob(s,s') = \prob(s)(s')$ denotes the probability to move from $s$ to $s'$ in one step, and we write $\probMatrix \in [0,1]^{\vert S \vert \times \vert S \vert}$ for the \emph{transition probability matrix} of $\mathcal{D}$ with entries $\probMatrix_{s, s'} = \prob(s, s')$. $\Succ(s) = \{s' \in S \mid \prob(s,s') > 0\}$ is the set of \emph{successors} of $s$. Given $s \in \stateSpace$, let $\mathcal{D}_s$ be the DTMC that is exactly like $\mathcal{D}$, but with initial state $s$. The \emph{direct sum} $\mathcal{D} \oplus \mathcal{D}'$ of DTMCs $\mathcal{D}, \mathcal{D}'$ is the DTMC obtained from the disjoint union of $\mathcal{D}$ and $\mathcal{D}'$. The initial state of $\mathcal{D} \oplus \mathcal{D}'$ is not relevant for our purposes. 
	
	A \emph{continuous-time Markov chain} (\emph{CTMC}) is a tuple $\mathcal{M} = (\stateSpace, \prob, \rate, \initialState, \labelFunction)$ such that $\mathcal{D}_\mathcal{M} = (\stateSpace, \prob, \initialState, \labelFunction)$ is a DTMC, called the \emph{embedded} or \emph{underlying DTMC} of $\mathcal{M}$, and $\rate\colon \stateSpace \to \mathbb{R}_{> 0}$ is an \emph{exit rate function}. We use the same notations for CTMCs as for DTMCs, and we let $\mathcal{M}, \mathcal{N}$ range over CTMCs. Note that we do not exclude the possibility of $\prob(s,s) > 0$, i.e., we allow self-loops in CTMCs. The residence time in a state $s$ of $\mathcal{M}$ is negative exponentially distributed with rate $\rate(s)$, so the probability to take \emph{any} outgoing transition of $s$ until time $t \geq 0$ (including self-loops) is $1 - e^{-\rate(s) \cdot t}$, and the probability to take a transition from $s$ to some specific state $s'$ until $t$ is $\prob(s,s') \cdot (1-e^{-\rate(s) \cdot t})$. $\generatorMatrix \in \mathbb{R}^{\vert \stateSpace \vert \times \vert \stateSpace \vert}$ denotes the (\emph{infinitesimal}) \emph{generator} of $\mathcal{M}$. The entries of $\generatorMatrix$ are given as $\generatorMatrix_{i,j} = \prob(i,j) \cdot \rate(i)$ if $i \neq j$ and $\generatorMatrix_{i,i} = - \sum_{j \neq i} \generatorMatrix_{i,j}$. For more information on CTMCs and DTMCs see, e.g., \cite{PoMC,FMC,IMASS}. 
	
	\smallskip
	\noindent
	\textbf{Paths and probability measures.} Let $\mathcal{D}$ be a DTMC. A sequence $\sigma = s_0s_1\ldots \in S^{\omega}$ is an (\emph{infinite}) \emph{path} of $\mathcal{D}$ if $s_{i+1} \in Succ(s_i)$ for all $i \in \mathbb{N}$.
	$\sigma[i]=s_i$ is the state at position $i$ of $\sigma$, and  $trace(\sigma) = \labelFunction(s_0)\labelFunction(s_1)\ldots \in (2^{AP})^\omega$ is the \emph{trace} of $\sigma$.
	The set of infinite paths is $\Paths(\mathcal{D})$. \emph{Finite} paths $\sigma = s_0s_1\ldots s_k \in \stateSpace^{k+1}$ and their traces are defined similarly. $\Paths^*(\mathcal{D})$ is the set of finite paths of $\mathcal{D}$.
	
	Let $s \in S$. We consider the standard probability measure $\probMeasure_s^{\mathcal{D}}$ on subsets of $\Paths(\mathcal{D})$, defined via \emph{cylinder sets} $\mathit{Cyl}(\rho) = \{ \sigma \in \Paths(\mathcal{D}) \mid \rho \text{ is a prefix of } \sigma \}$ for $\rho \in \Paths^*(\mathcal{D})$.
	See, e.g., \cite{PoMC} for details.
	We write $\probMeasure_{}^{\mathcal{D}}$ for $\probMeasure_{\initialState}^{\mathcal{D}}$ and drop the superscript if $\mathcal{D}$ is clear from the context.
		
	Now, let $\mathcal{M}$ be a CTMC. $\sigma = s_0 t_0 s_1 t_1 \ldots \in (\stateSpace \cdot \mathbb{R}_{>0})^{\omega}$, where $\cdot$ denotes concatenation, is an (\emph{infinite}) \emph{timed path} of $\mathcal{M}$ if $s_{i+1} \in \Succ(s_i)$ for all $i \in \mathbb{N}$. $\sigma[i]$ and $trace(\sigma)$ are defined as for DTMCs. A \emph{finite} timed path is a finite prefix of a timed path that ends in a state. We write $\Paths(\mathcal{M})$ (resp. $\Paths^*(\mathcal{M})$) for the set of infinite (resp. finite) timed paths of $\mathcal{M}$. The value $\sigma_i = t_i$ describes the residence time in state $s_i$ along $\sigma$. Given some $t \in \mathbb{R}_{\geq 0}$, $\sigma @ t$ is the state of $\sigma$ at time $t$, i.e., $\sigma @ t = \sigma[k]$ for $k$ the smallest index such that $t \leq \sum_{i=0}^{k} t_i$ \cite{MCACTMC}. 
	
	Let $k \in \mathbb{N}$, $s, s_0, \ldots, s_{k+1} \in S$ and $I_0, \ldots, I_{k}$ nonempty intervals in $\mathbb{R}_{\geq 0}$. We consider the standard probability measure $\probMeasure_s^{\mathcal{M}}$ on subsets of $\Paths(\mathcal{M})$, defined via \emph{timed cylinder sets} $\mathit{Cyl}(s_0 I_0 \ldots s_{k} I_{k} s_{k+1}) = \{\sigma \in \Paths(\mathcal{M}) \mid \sigma[i] = s_i \text{ for } i \leq k+1 \text{ and} \sigma_j \in I_j \text{ for} j \leq k \}$. See, e.g., \cite{MCACTMC} for details. 
	We use the same abbreviations as for DTMCs regarding $\probMeasure^\mathcal{M}_s$. For a finite untimed path $\pi = s_0s_1\ldots s_n$, let $\probMeasure^{\mathcal{M}}(\pi)$ denote the probability of $\mathcal{M}$ to follow any timed path with a state sequence prefixed by $\pi$. 
	
	Given $G \subseteq \stateSpace$ and $t \geq 0$, $\probMeasure^\mathcal{M}_s(\reachability^{\leq t} G)$ is the probability to reach in $\mathcal{M}$ from $s \in \stateSpace$ a state in $G$ after time at most $t$. For a DTMC $\mathcal{D}$, $\probMeasure^\mathcal{D}_{s}(\reachability^{\leq k} G)$ is the probability to reach in $\mathcal{D}$ from $s$ a state in $G$ after at most $k \in \mathbb{N}$ steps.
	
	\smallskip
	\noindent
	\textbf{Transient probabilities.} 
	Let $n \in \mathbb{N}$. The probability of a DTMC $\mathcal{D}$ to be in state $s'$ after $n$ steps when starting in state $s$ is $\pi_n^\mathcal{D}(s,s') = \probMatrix^n_{s,s'}$. 
	
	For CTMC $\mathcal{M}$, the values $\pi^\mathcal{M}_t(s)$ of the \emph{transient probability distribution} $\pi^\mathcal{M}_t \in \Distr(S)$ describe the probabilities of $\mathcal{M}$ to be in $s$ at time $t$. Let $\boldsymbol{\pi}_t^\mathcal{M} \in [0,1]^{1 \times \vert S \vert}$ denote the vector representation of $\pi^\mathcal{M}_t$, and let $\pi^\mathcal{M}_t(s,s') = \pi^{\mathcal{M}_s}_t(s')$ be the probability to be in $s'$ after time $t$ when starting in $s$. $\boldsymbol{\pi}_t^\mathcal{M}$ can be computed by solving the following system of forward Kolmogorov differential equations \cite{ÜAMW}:
	\begin{center}
		$\frac{d}{dt} \boldsymbol{\pi}^\mathcal{M}_t = \boldsymbol{\pi}^\mathcal{M}_t \cdot \generatorMatrix \qquad \text{ given } \boldsymbol{\pi}^\mathcal{M}_0 = \mathbf{\dirac}_{\initialState}.$
	\end{center}
	Let $\uniformizationRate \geq \max_{s \in \stateSpace} \rate(s)$. The DTMC $\uniformization{\mathcal{M}}{\uniformizationRate} = (\stateSpace, \unifProb, \initialState, \labelFunction)$ with $\unifProb(s,s') = \frac{\prob(s,s') \cdot \rate(s)}{\uniformizationRate}$ if $s \neq s'$, and $\unifProb(s,s) = 1 + \frac{\prob(s,s) \cdot \rate(s)}{\uniformizationRate} - \frac{\rate(s)}{\uniformizationRate}$ is the \emph{uniformization} of $\mathcal{M}$ w.r.t. $\uniformizationRate$. $\boldsymbol{\pi}_t^\mathcal{M}$ can also be computed by the \emph{uniformization method} via \cite{TSMQS,MCASMP}
	\begin{center}
		$\boldsymbol{\pi}_t^\mathcal{M} = \sum_{k=0}^{\infty} e^{-\uniformizationRate \cdot t} \cdot \frac{(\uniformizationRate \cdot t)^k}{k!} \cdot \mathbf{\dirac}_{\initialState} \cdot \unifProbMatrix^k.$
	\end{center}
	We omit the superscript from $\pi_t^\mathcal{M}$ (resp. $\pi_n^\mathcal{D}$) if no confusion can arise. 
	
	\smallskip 
	\noindent
	\textbf{Bisimulation.} A \emph{(probabilistic) bisimulation} $R \subseteq S \times S$ for DTMC $\mathcal{D}$ is an equivalence such that for all $(s,s') \in R$ it holds that $\labelFunction(s) = \labelFunction(s')$ and $\prob(s,C) = \prob(s', C)$ for every $C \in \xfrac{\stateSpace}{R}$ \cite{BTPT}. For CTMC $\mathcal{M}$, an equivalence $R$ is a \emph{strong (probabilistic) bisimulation} if it is a bisimulation on $\mathcal{D}_\mathcal{M}$, and additionally satisfies $\rate(s)=\rate(s')$ for all $(s,s') \in R$ \cite{CBTSMC,EOLFMC}. States $s$ and $s'$ of $\mathcal{M}$ are \emph{strongly (probabilistic) bisimilar}, written $s \sim s'$, if there is a strong bisimulation $R$ on $\mathcal{M}$ with $(s,s') \in R$. CTMCs $\mathcal{M}$ and $\mathcal{N}$ are strongly (probabilistic) bisimilar, written $\mathcal{M}\sim\mathcal{N}$, if $\initialState^\mathcal{M} \sim \initialState^\mathcal{N}$ in $\mathcal{M} \oplus \mathcal{N}$. 
	
	\section{$(\varepsilon, \delta)$-Bisimulation on CTMCs} \label{sec:definition-fundamental-properties}
	If not specified otherwise, we always assume that $\varepsilon, \delta \geq 0$. In this section, after a short recap on $\varepsilon$-bisimulation for DTMCs (\Cref{sec:epsilon-bisim-dtmc}), we formally define $(\varepsilon, \delta)$-bisimulation for CTMCs (\Cref{sec:def-epsilon-delta}) and establish fundamental properties of this notion (\Cref{sec:fundamental-properties}).
	
	\subsection{$\varepsilon$-Bisimulation in the Discrete-Time Setting}\label{sec:epsilon-bisim-dtmc}
	In the discrete-time setting of DTMCs, the most well-known and well-studied \cite{RBBTEAPC,AAPP,ABM} notion of approximate probabilistic bisimulation are \emph{$\varepsilon$-bisimulations}. They were introduced by Desharnais \textit{et al.} \cite {AAPP} for \emph{labeled Markov processes} \cite{LCBLMP,BfLMP} and have since been adjusted to other models like DTMC \cite{RBBTEAPC,ABM}. 
	\begin{definition}[\cite{RBBTEAPC,AAPP}]\label{def:epsilon-bisimulation}
		Let $\mathcal{D}$ be a DTMC. A reflexive\footnote{In contrast to \cite{RBBTEAPC,AAPP} we require reflexivity of $\varepsilon$-bisimulations. This assumption is rather natural (a state should always simulate itself) and does not affect $\sim_{\varepsilon}$.} and symmetric relation $R \subseteq \stateSpace \times \stateSpace$ is an \emph{$\varepsilon$-bisimulation} if for all $(s,t) \in R$ and any $A \subseteq \stateSpace$ 
		\begin{center}
			$(\text{\emph{i}}) \, \, \labelFunction(s) = \labelFunction(t) \quad \text{ and } \quad (\text{\emph{ii}}) \, \, \prob(s,A) \leq \prob(t,R(A)) + \varepsilon. $
		\end{center}
		States $s$ and $t$ are \emph{$\varepsilon$-bisimilar}, denoted $s \sim_{\varepsilon} t$, if there is an $\varepsilon$-bisimulation $R$ with $(s,t) \in R$. DTMCs $\mathcal{D}_1$ and $\mathcal{D}_2$ are  $\varepsilon$-bisimilar, denoted $\mathcal{D}_1 \sim_\varepsilon \mathcal{D}_2$, if $\initialState^{\mathcal{D}_1} \sim_\varepsilon \initialState^{\mathcal{D}_2}$ in $\mathcal{D}_1 \oplus \mathcal{D}_2$.
	\end{definition}
	
	The tolerance $\varepsilon$ describes by how much the transition probabilities of related states may differ. If $\varepsilon \approx 1$, many states can be related, but their behavior might be significantly different, whereas an $\varepsilon$ close to $0$ allows us to only relate states with almost the same behavior, decreasing the number of relatable states. 
	
	The satisfaction of condition (ii) of \Cref{def:epsilon-bisimulation} can be characterized by the existence of suitable \emph{weight functions} that describe how to split the successor probabilities of related states such that the condition holds. This approach is used in, e.g., \cite{FMOSCSMDP,ABM,CDBPA}. We will later use the following characterization (where we slightly abuse notation and write $\Delta_{s,t}(s', t')$ instead of $\Delta_{s,t}(s')(t')$). 
	\begin{lemma}[\cite{SAPB}]\label{lem:characterization-weight-functions-concur}
		A reflexive and symmetric relation $R \subseteq S \times S$ that only relates states with the same label is an $\varepsilon$-bisimulation iff for all $(s,t) \in R$ there is a map $\Delta_{s,t}\colon \Succ(s) \to \mathit{\Distr}(\Succ(t))$ such that 
		\begin{enumerate}
			\item
			for all $t^\prime \in\Succ(t) $ we have $\prob(t,t^\prime)= \sum_{s^\prime\in \Succ(s) } \prob(s,s^\prime)\cdot\Delta_{s,t}(s^\prime,t^\prime) $, and
			\item 
			$
			\sum_{s^\prime\in \Succ(s)} \prob(s,s^\prime)\cdot \Delta_{s,t}(s^\prime,R(s^\prime) \cap \Succ(t)) \geq 1 -\varepsilon
			$.
		\end{enumerate}
	\end{lemma}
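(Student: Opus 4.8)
The plan is to prove the stated equivalence pairwise, i.e.\ to show for each fixed $(s,t) \in R$ (the label condition $\labelFunction(s)=\labelFunction(t)$ being assumed throughout) that condition (ii) of \Cref{def:epsilon-bisimulation}, namely $\prob(s,A) \le \prob(t,R(A)) + \varepsilon$ for all $A \subseteq \stateSpace$, is equivalent to the existence of a map $\Delta_{s,t}\colon \Succ(s) \to \Distr(\Succ(t))$ satisfying items~1 and~2. Since $R$ is reflexive and symmetric and both sides of the claimed equivalence range over all related pairs, the global statement then follows immediately. The key reformulation is to read a candidate $\Delta_{s,t}$ as a \emph{coupling}: setting $\omega(s',t') = \prob(s,s')\cdot\Delta_{s,t}(s',t')$ defines a distribution on $\Succ(s)\times\Succ(t)$ whose $\Succ(s)$-marginal is automatically $\prob(s,\cdot)$ (because each $\Delta_{s,t}(s')$ is a distribution), whose $\Succ(t)$-marginal equals $\prob(t,\cdot)$ exactly under item~1, and which places mass $\sum_{(s',t')\in R} \omega(s',t') = \sum_{s'} \prob(s,s')\cdot \Delta_{s,t}(s', R(s')\cap\Succ(t))$ on $R$-related pairs, so that item~2 is equivalent to $\omega(R) \ge 1-\varepsilon$. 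Conversely, any coupling $\omega$ of $\prob(s,\cdot)$ and $\prob(t,\cdot)$ yields a valid $\Delta_{s,t}$ via $\Delta_{s,t}(s',t') = \omega(s',t')/\prob(s,s')$, which is well defined and a distribution on $\Succ(t)$ since $\prob(s,s') > 0$ for $s'\in\Succ(s)$.

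After this reformulation the claim becomes a transportation/Strassen statement: a coupling of $\prob(s,\cdot)$ and $\prob(t,\cdot)$ with $\omega(R)\ge 1-\varepsilon$ exists iff $\prob(s,A) \le \prob(t,R(A)) + \varepsilon$ for all $A$. I would first reduce the range of $A$: since $\prob(s,A)=\prob(s,A\cap\Succ(s))$ while enlarging $A$ only enlarges $R(A)$, it suffices to test $A\subseteq\Succ(s)$, and there $\prob(t,R(A)) = \prob(t,R(A)\cap\Succ(t))$ because $\prob(t,\cdot)$ is supported on $\Succ(t)$. The ``coupling $\Rightarrow$ (ii)'' direction is then a short estimate: for any $A$ I split $\prob(s,A) = \omega(A\times\Succ(t))$ into its $R$-part, which is bounded by $\omega(\Succ(s)\times (R(A)\cap\Succ(t))) = \prob(t,R(A))$ since $(s',t')\in R$ and $s'\in A$ force $t'\in R(A)$, and its non-$R$-part, which is at most $1-\omega(R)\le\varepsilon$.

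The main obstacle is the converse direction, which I would establish by max-flow/min-cut. I build the bipartite network with a source feeding each $s'\in\Succ(s)$ through an edge of capacity $\prob(s,s')$, each $t'\in\Succ(t)$ draining into a sink through an edge of capacity $\prob(t,t')$, and an infinite-capacity edge $s'\to t'$ whenever $(s',t')\in R$. Because the $R$-edges cannot be cut, every finite cut is determined by the set $A\subseteq\Succ(s)$ of successors kept on the sink side, and a direct computation gives minimum cut value $\min_{A\subseteq\Succ(s)} \bigl(1-\prob(s,A)+\prob(t,R(A))\bigr)$, which is exactly $\ge 1-\varepsilon$ precisely under condition (ii). By max-flow/min-cut there is thus a feasible flow of value at least $1-\varepsilon$ carried along $R$-edges.

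It remains to turn this flow into a full coupling. Interpreting the flow as the ``on-$R$'' part of $\omega$, its row sums are at most $\prob(s,\cdot)$ and its column sums at most $\prob(t,\cdot)$, so the two residual mass vectors $\prob(s,\cdot)-\sum_{t'}\omega(\cdot,t')$ and $\prob(t,\cdot)-\sum_{s'}\omega(s',\cdot)$ are nonnegative with equal total mass at most $\varepsilon$; coupling them arbitrarily (e.g.\ by a product coupling) extends the flow to a genuine coupling of $\prob(s,\cdot)$ and $\prob(t,\cdot)$ whose $R$-mass is still at least the original flow value, hence $\ge 1-\varepsilon$. The two delicate points that need careful checking are the min-cut formula above and the fact that this residual-coupling step never decreases the mass already placed on $R$; everything else is routine. (Alternatively, one may simply invoke Strassen's theorem for the existence of couplings meeting a prescribed bound on a relation, the max-flow argument being one way to prove it.)
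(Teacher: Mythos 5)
The paper never proves this lemma itself: it is imported verbatim from \cite{SAPB} (note the citation in the lemma header), and no proof of it appears in the appendix, so your proposal has to stand on its own merits rather than be matched against an in-paper argument. On its own merits it is correct and complete. The pairwise reduction is legitimate, since both the $\varepsilon$-condition of \Cref{def:epsilon-bisimulation} and the weight-function condition are quantified per pair $(s,t)$ with the relation $R$ held fixed throughout. The dictionary between weight functions and couplings, $\omega(s',t')=\prob(s,s')\cdot\Delta_{s,t}(s',t')$, is exact in both directions (item~1 of the lemma is precisely the statement that the $\Succ(t)$-marginal of $\omega$ is $\prob(t,\cdot)$, item~2 is precisely $\omega(R)\geq 1-\varepsilon$, and the division $\omega(s',t')/\prob(s,s')$ is well defined on $\Succ(s)$), so the lemma is equivalent to the Strassen-type statement you formulate. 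The easy direction (coupling implies the $\varepsilon$-condition) is argued correctly, including the reduction to $A\subseteq\Succ(s)$. For the converse, your min-cut computation is right: finiteness of a cut forces $R(A)\cap\Succ(t)$ onto the source side when $A$ is the source-side part of $\Succ(s)$, giving minimum cut value $\min_{A\subseteq\Succ(s)}\bigl(1-\prob(s,A)+\prob(t,R(A))\bigr)$; one harmless slip is that you call $A$ the \emph{sink}-side set while your formula is the one for the \emph{source}-side set, but since the minimum ranges over all subsets of $\Succ(s)$ the two parametrizations give the same value. The residual-extension step is also sound: the nonnegative residual vectors have equal total mass $1-v$, a product coupling of them adds only nonnegative mass off or on $R$, so the $R$-mass of the completed coupling is still at least $v\geq 1-\varepsilon$ (and if $v=1$ the flow is already a coupling). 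This flow-based route is the standard way such weight-function characterizations are established in the literature, and it is consistent with how the paper itself treats the $\varepsilon$-condition algorithmically (via maximum-flow problems, cf.\ \Cref{cor:algorithmic-complexity-sim-epsilon-delta}); whether it coincides step-by-step with the proof in \cite{SAPB} cannot be checked from the paper, but as a self-contained proof of the stated lemma it is valid.
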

	
	Furthermore, step-bounded reachability probabilities $\probMeasure_{*}(\lozenge^{\leq n} g)$ of a goal state $g$ for $\varepsilon$-bisimilar states $s,s'$ are related as follows.
	\begin{theorem}[\cite{RBBTEAPC}]\label{thm:main-result-rbbteapc}
		Let $\mathcal{D}$ be a DTMC, $k \in \mathbb{N}$ and $s \sim_\varepsilon s'$. Then 
		\begin{center}$\vert \probMeasure_s(\lozenge^{\leq k} g) - \probMeasure_{s'}(\lozenge^{\leq k} g) \vert \leq 1 - (1-\varepsilon)^k.$\end{center}
	\end{theorem}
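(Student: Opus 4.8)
The plan is to fix an $\varepsilon$-bisimulation $R$ with $(s,s') \in R$ and prove, by induction on $k$, the uniform statement that $\vert \probMeasure_u(\lozenge^{\leq k} g) - \probMeasure_v(\lozenge^{\leq k} g) \vert \leq 1 - (1-\varepsilon)^k$ holds for \emph{every} pair $(u,v) \in R$ at once; the theorem is then just the instance $(u,v)=(s,s')$. Abbreviating $f_k(x) := \probMeasure_x(\lozenge^{\leq k} g)$, I will work from the standard recursion $f_{k+1}(x) = \sum_{x' \in \Succ(x)} \prob(x,x') \cdot f_k(x')$ for non-goal $x$, with $f_{k+1}(x) = 1$ whenever $x$ is a goal state. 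For the base case $k=0$, $f_0$ is the indicator of the goal, and since $(u,v)\in R$ forces $\labelFunction(u)=\labelFunction(v)$ and goal membership is label-determined, $u$ and $v$ agree on being a goal state; hence $f_0(u)=f_0(v)$ and the difference is $0 = 1-(1-\varepsilon)^0$.

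For the inductive step I would first dispose of the case where $u$ (hence $v$) is a goal state, where both values equal $1$. Otherwise I invoke the weight-function characterization of \Cref{lem:characterization-weight-functions-concur} to obtain a map $\Delta_{u,v}\colon \Succ(u) \to \Distr(\Succ(v))$. Using Property~1 to rewrite $f_{k+1}(v) = \sum_{u'} \prob(u,u') \sum_{v'} \Delta_{u,v}(u',v')\, f_k(v')$ and exploiting that each $\Delta_{u,v}(u',\cdot)$ is a genuine distribution (so $\sum_{v'} \Delta_{u,v}(u',v') = 1$), the difference telescopes into
\[
f_{k+1}(u) - f_{k+1}(v) = \sum_{u' \in \Succ(u)} \prob(u,u') \sum_{v' \in \Succ(v)} \Delta_{u,v}(u',v')\,\bigl[f_k(u') - f_k(v')\bigr].
\]

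The heart of the argument is to bound the inner sum for each fixed $u'$ by splitting the mass of $\Delta_{u,v}(u',\cdot)$ into the part landing in $R(u') \cap \Succ(v)$ and the rest. On the related part $(u',v') \in R$ gives $\vert f_k(u') - f_k(v') \vert \leq 1-(1-\varepsilon)^k$ by the induction hypothesis, while on the remaining part only the trivial bound $\vert f_k(u') - f_k(v') \vert \leq 1$ is available. Writing $p_{u'} := \Delta_{u,v}(u', R(u') \cap \Succ(v))$ for the related mass, the triangle inequality bounds the inner sum in absolute value by $p_{u'}\bigl(1-(1-\varepsilon)^k\bigr) + (1-p_{u'}) = 1 - p_{u'}(1-\varepsilon)^k$. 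Summing against $\prob(u,\cdot)$ then yields
\[
\vert f_{k+1}(u) - f_{k+1}(v) \vert \;\leq\; 1 - (1-\varepsilon)^k \sum_{u' \in \Succ(u)} \prob(u,u')\, p_{u'},
\]
and Property~2 of the characterization supplies exactly $\sum_{u'} \prob(u,u')\, p_{u'} \geq 1-\varepsilon$, closing the induction with $1 - (1-\varepsilon)^k(1-\varepsilon) = 1-(1-\varepsilon)^{k+1}$.

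I expect the main obstacle to lie in the telescoping step rather than in any delicate estimate: one must check that replacing $f_k(u')$ by the $\Delta_{u,v}(u',\cdot)$-average of the $f_k(v')$ is legitimate (this is precisely what makes $\Delta_{u,v}(u',\cdot)$ being a distribution essential), and that the two regimes of the split are recombined so that the bound \emph{contracts} by the factor $(1-\varepsilon)$ at each step instead of merely accumulating an additive $\varepsilon$. The multiplicative factor $(1-\varepsilon)^k$ in the final bound is exactly the product, over the $k$ steps, of the ``related mass'' guaranteed by Property~2, which is the conceptual reason the bound takes the stated geometric form.
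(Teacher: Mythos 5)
Your proof is correct, but there is nothing in the paper to compare it against: \Cref{thm:main-result-rbbteapc} is imported from \cite{RBBTEAPC} as a black box, and the paper never reproves it (its appendix only proves the paper's own results, and \Cref{prop:transient-prob-bounds-epsilon-delta-greater-0} \emph{uses} this bound rather than deriving it). Your argument is a legitimate self-contained derivation built entirely from tools the paper does provide: you fix one $\varepsilon$-bisimulation $R$, strengthen the induction to all pairs of $R$ simultaneously, telescope the one-step recursion through Property~1 of \Cref{lem:characterization-weight-functions-concur} (which is exactly where the fact that each $\Delta_{u,v}(u',\cdot)$ is a distribution is needed), split the coupled successor mass into the $R$-related part (induction hypothesis) and the rest (trivial bound $1$), and close the induction with Property~2, which delivers the per-step contraction factor $1-\varepsilon$ and hence the geometric form $1-(1-\varepsilon)^{k+1}$. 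All steps check out, including the sum interchange (finite sums) and the sign condition $(1-\varepsilon)^k \geq 0$ needed to apply Property~2 inside the inequality. One point worth making explicit, which you handle correctly but tersely: the base case requires that goal membership be determined by the labeling (a goal state must not be $\varepsilon$-bisimilar to a non-goal state), since otherwise the claim already fails at $k=0$; the paper makes this assumption explicit only later (in \Cref{sec:time-bounded-reachability-bounds}, where $g$ is taken to be uniquely labeled), so your proof is correct precisely under the reading of the theorem that the paper itself intends.
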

	
	\subsection{Definition of $(\varepsilon,\delta)$-Bisimulation}\label{sec:def-epsilon-delta}
	In contrast to the discrete-time case, the behavior of a CTMC $\mathcal{M}$ is not only influenced by the transition probabilities, but also by the exit rates of the states. Thus, it is natural to consider two different tolerance values $\varepsilon$ and $\delta$ for the probabilities resp. the exit rates. As the values of $\prob(s, \cdot)$ are in $[0,1]$ for any $s \in \stateSpace$, similar to the discrete-time case an additive tolerance $\varepsilon$ is suitable. 
	
	The rates $\rate(s)$, however, can take on any (finite) positive value. This makes the use of \emph{additive} tolerances $\delta$ for the rates difficult. Consider, for example, a CTMC with states $s,s'$ such that $\rate(s) =1$ and $\rate(s') = 100$, and assume that the exit rates of related states are allowed to differ by at most $10\%$. For $s$, an additive $\delta$ would need a value of $0.1$ (i.e., all states with a rate in $[0.9, 1.1]$ are considered to behave almost the same as $s$), while $\delta = 10$ would be necessary for $s'$. The former allows virtually no tolerance for $\rate(s')$ (only $0.001\%$ instead of the desired $10\%$), while the latter allows a tolerance of up to $1000\%$ for $\rate(s)$. 
	
	To circumvent this issue we propose the use of a \emph{multiplicative} tolerance $\delta$ for the exit rates of related states, so that the tolerable difference is relative to the actual rates. For the rest of the paper, let $\ln(\cdot)$ denote the natural logarithm. \unskip
	\begin{definition}\label{def:epsilon-delta-bisimulation}
		Let $\mathcal{M}$ be a CTMC and let $R \subseteq S \times S$ be a reflexive and symmetric relation. $R$ is an $(\varepsilon, \delta)$-bisimulation if for all $(s,s') \in R$ it holds that
		\begin{enumerate}
			\item $\labelFunction(s) = \labelFunction(s')$ \hfill \emph{(labeling condition)}
			\item $\vert \ln(\rate(s)) - \ln(\rate(s')) \vert \leq \delta$ \hfill \emph{($\delta$-condition)}
			\item for all $A \subseteq \stateSpace$: $\prob(s, A) \leq \prob(s', R(A)) + \varepsilon$. \hfill \emph{($\varepsilon$-condition)}
		\end{enumerate}
		States $s, s' \in S$ are \emph{$(\varepsilon, \delta)$-bisimilar}, written $s \sim_{\varepsilon, \delta} s'$, if $(s, s') \in R$ for an $(\varepsilon, \delta)$-bisimulation $R$. CTMCs $\mathcal{M}_1$ and $\mathcal{M}_2$ are \emph{$(\varepsilon, \delta)$-bisimilar}, written $\mathcal{M}_1 \, {\sim_{\varepsilon, \delta}} \, \mathcal{M}_2$, if $\initialState^{\mathcal{M}_1} \, {\sim_{\varepsilon, \delta}} \, \initialState^{\mathcal{M}_2}$ in $\mathcal{M}_1 \oplus \mathcal{M}_2$. 
	\end{definition} 

	Any $(\varepsilon, \delta)$-bisimulation on $\mathcal{M}$ induces an $\varepsilon$-bisimulation on the embedded DTMC $\mathcal{D}_\mathcal{M}$, i.e., $s \sim_{\varepsilon, \delta}^\mathcal{M} t$ implies $s \sim_\varepsilon^{\mathcal{D}_\mathcal{M}} t$. Hence, there are weight functions $\Delta_{s,t}$ as in \Cref{lem:characterization-weight-functions-concur} (w.r.t. the probabilities of $\mathcal{M}$) for $(\varepsilon, \delta)$-bisimilar states $s,t$.   
	
	\begin{figure}[t!]
		\centering
		\resizebox{!}{0.15\textheight}{\begin{tikzpicture}[->,>=stealth',shorten >=1pt,auto, semithick]
			\tikzstyle{every state} = [text = black]
			
			\node[state] (s0) [fill = yellow] {$s_0$};
			\node[state] (s1) [right of = s0, fill = yellow, node distance = 2cm] {$s_1$};
			\node[state] (s2) [below of = s0, node distance = 1.5cm, fill = yellow] {$s_2$};
			\node[state] (s3) [left of = s0, node distance = 2cm, fill = yellow] {$s_3$};
			\node[state] (s4) [below of = s3, node distance = 1.5cm, fill = yellow] {$s_4$}; 
			\node[state] (g) [right of = s1, node distance = 2cm, fill = green] {$g$}; 
			\node (sinit) [above left of = s0, node distance = 1cm] {}; 
			
			\path 
			(sinit) edge (s0)
			(s0) edge node [above] {$\frac{1}{6}$} (s1)
			(s0) edge node [above] {$\frac{1}{3}$} (s3)
			(s0) edge node [right, pos = 0.4] {$\frac{1}{2}$} (s2)
			
			(s1) edge [loop below] node {$1{-}\varepsilon$} (s1)
			(s1) edge node [above] {$\varepsilon$} (g)
			
			(s2) edge [loop left] node [below, pos = 0.15] {$\frac{5}{6}{-}\varepsilon$} (s2)
			(s2) edge node [below, xshift = 0.2cm] {$\frac{1}{6}{+}\varepsilon$} (s1)
			
			(s3) edge [loop left] node {$\frac{5}{6}$} (s3)
			(s3) edge [bend left = 20] node [right, pos = 0.4] {$\frac{1}{6}$} (s4)
			
			(s4) edge [loop left] node  {$1{-}\frac{\varepsilon}{2}$} (s3)
			(s4) edge [bend left] node [pos = 0.4, left] {$\frac{\varepsilon}{2}$} (s3)
			
			(g) edge [loop right] node {$1$} (g)
			
			;
			
			%label
			\node[above of = s3, node distance = 0.7cm] {$\{a\}, e^{\frac{\delta}{2}}$};
			\node[below of = s4, node distance = 0.65cm] {$\{a\}, e^{-\frac{3\delta}{2}}$};
			\node[right of = s2, node distance = 0.9cm] {$\{a\}, e^{\delta}$};
			\node[above of = s1, node distance = 0.7cm] {$\{a\}, e^{-\delta}$};
			\node[above of = g, node distance = 0.7cm] {$\{b\}, 1$};
			\node[above of = s0, node distance = 0.7cm] {$\{a\}, 1$};
			
		\end{tikzpicture}}
		\caption{The CTMC used in \Cref{ex:epsilon-delta-bisim}.}
		\label{fig:example-epsilon-delta-bisim}
	\end{figure}
		\begin{example}\label{ex:epsilon-delta-bisim}
		Let $\varepsilon < \frac{1}{2}$ and $\delta > 0$. In \Cref{fig:example-epsilon-delta-bisim},   $\sim_{\varepsilon, \delta}$ is the reflexive and symmetric closure of $\{(s_0, s_2), (s_0, s_3), (s_1, s_4), (s_2, s_3)\}$. States $s_1$ and $s_2$ are not $(\varepsilon, \delta)$-bisimilar as they violate the $\delta$-condition: $\vert \ln (\rate(s_1)) - \ln(\rate(s_2)) \vert = 2\delta$. Moreover, \mbox{$s_0 \nsim_{\varepsilon, \delta} s_1$} since, although the pair of states satisfies the $\delta$-condition, the $\varepsilon$-condition is violated: $\prob(s_0, \{s_2\}) = \frac{1}{2} > \varepsilon = \prob(s_1, {\sim_{\varepsilon, \delta}}(\{s_2\})) + \varepsilon.$ 
		
		Note that the steady state distributions of $(\varepsilon, \delta)$-bisimilar states can differ significantly: starting from $s_0$ the probability to reach $g$ (and stay there forever) is $2/3$, from $s_2$ it is $1$ and
		from $s_3$ it is $0$, even though $s_0, s_2$, and $s_3$ are pairwise $(\varepsilon, \delta)$-bisimilar. 
	\end{example}
	
	\subsection{Fundamental Properties of $\sim_{\varepsilon, \delta}$}\label{sec:fundamental-properties}
	We now establish some fundamental properties of $\sim_{\varepsilon, \delta}$ that are generally desirable for approximate probabilistic bisimulations \cite{AAPP,SAPB}. 
	\begin{restatable}{theorem}{ThmFundamentalProperties}\label{thm:fundamental-properties}
		Let $\mathcal{M}$ be a CTMC and $s, s', s'' \in \stateSpace$.
		\begin{enumerate}
			\item ${\sim_{\varepsilon, \delta}}$ is the largest $(\varepsilon, \delta)$-bisimulation on $\mathcal{M}$.
			\item If $s \sim_{\varepsilon_1, \delta_1} s'$ and $s' \sim_{\varepsilon_2, \delta_2} s''$ then $s \sim_{\varepsilon_1 + \varepsilon_2, \delta_1 + \delta_2} s''$. 
			\item $s \sim s'$ iff $s \sim_{0,0} s'$.
		\end{enumerate}
	\end{restatable}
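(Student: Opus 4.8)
The plan is to prove the three statements in order, bootstrapping the later ones from the earlier. For statement (1), recall that by \cref{def:epsilon-delta-bisimulation} the relation ${\sim_{\varepsilon,\delta}}$ is \emph{by definition} the union of all $(\varepsilon,\delta)$-bisimulations, so it suffices to show that this union is itself an $(\varepsilon,\delta)$-bisimulation; largeness is then immediate, as it contains every such relation. Reflexivity and symmetry of the union are clear because every $(\varepsilon,\delta)$-bisimulation is reflexive and symmetric (and the family is nonempty, the identity relation being one), and the labeling and $\delta$-conditions survive the union since they are pointwise constraints on individual related pairs. The only nontrivial point is the $\varepsilon$-condition. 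Here I would use that any $(s,s')\in{\sim_{\varepsilon,\delta}}$ lies in some witnessing $(\varepsilon,\delta)$-bisimulation $R$, and that $R\subseteq{\sim_{\varepsilon,\delta}}$ forces $R(A)\subseteq{\sim_{\varepsilon,\delta}}(A)$ for every $A\subseteq\stateSpace$; monotonicity of $\prob(s',\cdot)$ in its set argument then upgrades $\prob(s,A)\le\prob(s',R(A))+\varepsilon$ to $\prob(s,A)\le\prob(s',{\sim_{\varepsilon,\delta}}(A))+\varepsilon$.

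For statement (2) I would fix witnessing relations $R_1$ (an $(\varepsilon_1,\delta_1)$-bisimulation with $(s,s')\in R_1$) and $R_2$ (an $(\varepsilon_2,\delta_2)$-bisimulation with $(s',s'')\in R_2$), and take $R$ to be the \emph{symmetric closure of the relational composition} $R_1\,;R_2=\{(a,c)\mid \exists\, b\colon (a,b)\in R_1 \text{ and } (b,c)\in R_2\}$, i.e. $R=(R_1\,;R_2)\cup(R_2\,;R_1)$. This $R$ is reflexive (both $R_i$ are), symmetric by construction, and contains $(s,s'')$. The labeling condition is inherited by transitivity of equality along a witness $b$, and the $\delta$-condition follows from the triangle inequality $\vert\ln\rate(a)-\ln\rate(c)\vert\le\vert\ln\rate(a)-\ln\rate(b)\vert+\vert\ln\rate(b)-\ln\rate(c)\vert\le\delta_1+\delta_2$ --- this is precisely where the multiplicative rate tolerance, being additive in log-space, recombines additively. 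For the $\varepsilon$-condition on a pair $(a,c)\in R_1\,;R_2$ witnessed by $b$, I would chain the two given conditions: apply the $R_1$-condition to $(a,b)$ and $A$ to obtain $\prob(a,A)\le\prob(b,R_1(A))+\varepsilon_1$, then apply the $R_2$-condition to $(b,c)$ and the set $R_1(A)$ to obtain $\prob(b,R_1(A))\le\prob(c,R_2(R_1(A)))+\varepsilon_2$; since $R_2(R_1(A))=(R_1\,;R_2)(A)\subseteq R(A)$, monotonicity yields $\prob(a,A)\le\prob(c,R(A))+\varepsilon_1+\varepsilon_2$. The case of pairs in $R_2\,;R_1$ is identical with the roles of $R_1,R_2$ swapped. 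I expect this part --- selecting the correct reflexive-symmetric relation and chaining the two $\varepsilon$-conditions through the intermediate set $R_1(A)$ --- to be the main obstacle.

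Statement (3) I would then deduce from (1) and (2). For the forward direction, given a strong bisimulation (an equivalence) $R$ with $(s,s')\in R$, I would verify it is a $(0,0)$-bisimulation: strong bisimulation requires $\labelFunction(s)=\labelFunction(s')$ and $\rate(s)=\rate(s')$, giving conditions (1) and (2) with $\delta=0$ (using that $\ln$ is injective on $\mathbb{R}_{>0}$); and for the $\varepsilon=0$ condition I would note that $R(A)$ is $R$-closed, hence a disjoint union of equivalence classes, so that $\prob(s,A)\le\prob(s,R(A))=\sum_{C}\prob(s,C)=\sum_{C}\prob(s',C)=\prob(s',R(A))$ by the defining class-wise equality of strong bisimulation. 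For the converse, I would first observe that ${\sim_{0,0}}$ is an equivalence: reflexivity and symmetry come from (1), and transitivity is the special case $\varepsilon_1=\varepsilon_2=\delta_1=\delta_2=0$ of (2). It then remains to check that this equivalence is a strong bisimulation: condition (2) with $\delta=0$ gives $\rate(s)=\rate(s')$, and instantiating the $\varepsilon=0$ condition at $A=C$ for an equivalence class $C$ (so that ${\sim_{0,0}}(C)=C$) gives $\prob(s,C)\le\prob(s',C)$, whence equality follows by symmetry. Thus ${\sim_{0,0}}$ coincides with $\sim$.
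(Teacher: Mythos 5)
Your proof is correct and follows essentially the same route as the paper's: for item 1, the union of all $(\varepsilon,\delta)$-bisimulations plus monotonicity of $A \mapsto R(A)$ under inclusion of relations; for item 2, chaining the two $\varepsilon$-conditions through the intermediate set $R_1(A)$ and the triangle inequality in log-space for the rates; for item 3, reducing everything to class-wise probability equality for the equivalence $\sim_{0,0}$. (The paper packages this last step in an auxiliary lemma characterizing the $\varepsilon$-condition of equivalences via $R$-closed sets; you verify the same computation directly, which is equivalent.)

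One point where you are in fact more careful than the paper: in item 2 the paper defines $R$ as the one-sided composition $\{(x,z) \mid \exists\, y\colon x \sim_{\varepsilon_1,\delta_1} y \text{ and } y \sim_{\varepsilon_2,\delta_2} z\}$ and simply asserts that it is symmetric. This does not hold in general: symmetry of the components only gives $(z,x) \in {\sim_{\varepsilon_2,\delta_2}} \, ; \, {\sim_{\varepsilon_1,\delta_1}}$, the composition in the opposite order, which need not coincide with $R$. Your choice $R = (R_1 \, ; R_2) \cup (R_2 \, ; R_1)$ repairs this by construction, at the price of checking the $\varepsilon$-condition for pairs of both kinds --- which you do, and which goes through symmetrically since $(R_1\,;R_2)(A) = R_2(R_1(A)) \subseteq R(A)$ and likewise with the roles swapped. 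So your write-up closes a small gap in the published argument while otherwise matching it.
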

	
	\Cref{thm:fundamental-properties} shows that $\sim_{\varepsilon, \delta}$ is itself an $(\varepsilon, \delta)$-bisimulation, and that it is the largest such relation (item 1). Furthermore $\sim_{\varepsilon, \delta}$ is additive in both $\varepsilon$ and $\delta$ \mbox{(item 2)} and coincides with strong bisimilarity $\sim$ iff $\varepsilon = \delta = 0$ (item 3).
	
	By slightly adjusting a procedure proposed in \cite{AAPP} that constructs $\sim_\varepsilon$ for DTMCs by solving flow networks á la \cite{PTATPBS,PSfPP}, $\sim_{\varepsilon, \delta}$ can be computed in time polynomial in the number of states of $\mathcal{M}$. 
	
	\begin{restatable}{corollary}{CorAlgorithmicComplexity}\label{cor:algorithmic-complexity-sim-epsilon-delta}
		For a given CTMC $\mathcal{M}$, $\sim_{\varepsilon, \delta}$ can be computed in time $\mathcal{O}(\vert \stateSpace \vert^7)$.
	\end{restatable}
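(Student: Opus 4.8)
The plan is to compute $\sim_{\varepsilon, \delta}$ as a greatest fixpoint by iterative refinement, adapting the flow-network procedure of \cite{AAPP}. First I would observe that the labeling condition and the $\delta$-condition of \Cref{def:epsilon-delta-bisimulation} are \emph{static}: whether they hold for a pair $(s,s')$ does not depend on the relation $R$. Hence I would initialize $R_0 = \{(s,s') \in \stateSpace \times \stateSpace \mid \labelFunction(s) = \labelFunction(s') \text{ and } \vert \ln(\rate(s)) - \ln(\rate(s')) \vert \leq \delta\}$, which is reflexive and symmetric and can be built in time $\mathcal{O}(\vert \stateSpace \vert^2)$. The only condition coupling distinct pairs is the $\varepsilon$-condition, and it is \emph{monotone} in $R$: enlarging $R$ enlarges every image $R(A)$ and thus only increases $\prob(s',R(A)) + \varepsilon$. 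This justifies a downward refinement loop: in each round, scan the current relation $R$, determine all pairs $(s,s')$ that violate the $\varepsilon$-condition w.r.t.\ $R$, and delete them together with their mirrors $(s',s)$; repeat until a round deletes nothing.

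For the inner test I would use the weight-function characterization of \Cref{lem:characterization-weight-functions-concur}: deciding the $\varepsilon$-condition for a fixed pair $(s,s')$ w.r.t.\ $R$ is exactly deciding whether a map $\Delta_{s,s'}\colon \Succ(s) \to \Distr(\Succ(s'))$ meeting conditions (1)--(2) exists. Following \cite{PTATPBS,PSfPP,AAPP}, this feasibility problem is phrased as a maximum-flow problem on a bipartite network with a source, a sink, one node per element of $\Succ(s)$ and of $\Succ(s')$, source and sink capacities $\prob(s,\cdot)$ resp.\ $\prob(s',\cdot)$, routing edges realizing $\Delta_{s,s'}$, and a gadget that caps at $\varepsilon$ the flow permitted through non-$R$-related successor pairs so as to enforce condition (2); the pair passes the test iff the maximum flow saturates the source. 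Such a network has $\mathcal{O}(\vert \stateSpace \vert)$ vertices and $\mathcal{O}(\vert \stateSpace \vert^2)$ edges, so one maximum-flow computation runs in $\mathcal{O}(\vert \stateSpace \vert^3)$.

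Correctness follows from the monotonicity above together with \Cref{thm:fundamental-properties}(1). By induction one shows $\sim_{\varepsilon, \delta} \subseteq R_i$ for every intermediate relation $R_i$: a pair in $\sim_{\varepsilon, \delta}$ satisfies the $\varepsilon$-condition w.r.t.\ $\sim_{\varepsilon, \delta}$, hence, by monotonicity and $\sim_{\varepsilon, \delta} \subseteq R_i$, also w.r.t.\ $R_i$, so it is never deleted; reflexive pairs are never deleted either, since $A \subseteq R(A)$ for reflexive $R$. Conversely, the final relation is reflexive, symmetric, inherits the two static conditions from $R_0$, and is a fixpoint of the $\varepsilon$-test, so it is an $(\varepsilon,\delta)$-bisimulation and therefore contained in $\sim_{\varepsilon, \delta}$; the two inclusions give equality. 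For the running time, every round except the last deletes at least one of the $\mathcal{O}(\vert \stateSpace \vert^2)$ pairs, bounding the number of rounds by $\mathcal{O}(\vert \stateSpace \vert^2)$; each round tests $\mathcal{O}(\vert \stateSpace \vert^2)$ pairs at cost $\mathcal{O}(\vert \stateSpace \vert^3)$ each, yielding the claimed $\mathcal{O}(\vert \stateSpace \vert^7)$ bound. The main obstacle is not the refinement bookkeeping but faithfully transferring the flow-network encoding of \cite{AAPP} to the transition probabilities of $\mathcal{M}$ and checking that the $\varepsilon$-slack is modeled so that saturation of the source is equivalent to conditions (1)--(2); the additional $\delta$-condition poses no difficulty, as it is discharged once during initialization.
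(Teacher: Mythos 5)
Your proposal is correct and takes essentially the same route as the paper: both discharge the labeling and $\delta$-conditions statically in the initial relation $R_0 = \{(s,s') \mid \labelFunction(s) = \labelFunction(s'),\ \vert \ln(\rate(s)) - \ln(\rate(s')) \vert \leq \delta\}$ and then run the iterative maximum-flow refinement algorithm of \cite{AAPP} unchanged for the $\varepsilon$-condition, yielding the $\mathcal{O}(\vert \stateSpace \vert^7)$ bound. The only difference is presentational: you unpack the internals of the \cite{AAPP} loop (monotonicity, the flow-network test, the round/pair/flow complexity accounting) that the paper invokes as a black box.
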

	
	Furthermore, $(\varepsilon, \delta)$-bisimilarity in $\mathcal{M}$ implies $\tau$-bisimilarity (for a $\tau \geq 0$ that depends on $\varepsilon$ and $\delta$) in the uniformization $\uniformization{\mathcal{M}}{\uniformizationRate}$ of $\mathcal{M}$ for $q \geq \max_{s \in \stateSpace} \rate(s)$. \unskip
	\begin{restatable}{lemma}{LemApproximatBisimulationInUniformization}\label{lem:approximate-bisimulation-in-uniformization}
		$s \sim_{\varepsilon, \delta}^\mathcal{M} s'$ implies $s \sim_\tau^{\uniformization{\mathcal{M}}{\uniformizationRate}} s'$, where $\tau = e^\delta \cdot(1 + \varepsilon) - 1$.
	\end{restatable}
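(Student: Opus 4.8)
The plan is to prove something slightly stronger: that \emph{every} $(\varepsilon,\delta)$-bisimulation $R$ on $\mathcal{M}$ is, as a relation on $\stateSpace$, already a $\tau$-bisimulation on the DTMC $\uniformization{\mathcal{M}}{\uniformizationRate}$. Since $\sim_{\varepsilon,\delta}$ is itself such a relation by \Cref{thm:fundamental-properties}(1) and $\sim_\tau$ is the union of all $\tau$-bisimulations, instantiating $R={\sim_{\varepsilon,\delta}}$ then yields the claim. Of the three conditions in \Cref{def:epsilon-bisimulation}, reflexivity and symmetry transfer verbatim, and the labeling condition is exactly condition~(1) of \Cref{def:epsilon-delta-bisimulation}; so all the work lies in the $\tau$-condition: for every $(s,t)\in R$ and every $A\subseteq\stateSpace$ we must show $\unifProb(s,A)\le\unifProb(t,R(A))+\tau$.

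First I would record the identity $\unifProb(s,A)=\bigl(1-\tfrac{\rate(s)}{\uniformizationRate}\bigr)\mathbf{1}_{s\in A}+\tfrac{\rate(s)}{\uniformizationRate}\,\prob(s,A)$, valid for $\uniformizationRate\ge\max_{u}\rate(u)$, which cleanly separates the fictitious self-loop mass $1-\rate(s)/\uniformizationRate$ from the genuine transition mass and holds irrespective of whether $\mathcal{M}$ has a real self-loop at $s$. Abbreviating $x=\rate(s)/\uniformizationRate$ and $y=\rate(t)/\uniformizationRate$, both in $(0,1]$, three facts feed into the estimate: the $\delta$-condition $\lvert\ln x-\ln y\rvert\le\delta$ gives $\max(x,y)\le e^\delta\min(x,y)$ and hence $\lvert x-y\rvert\le(e^\delta-1)\min(x,y)\le e^\delta-1$; the $\varepsilon$-condition gives $\prob(s,A)\le\prob(t,R(A))+\varepsilon$; and the membership $(s,t)\in R$ forces $t\in R(A)$ whenever $s\in A$, i.e.\ $\mathbf{1}_{s\in A}\le\mathbf{1}_{t\in R(A)}$.

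Substituting the identity on both sides and inserting these three facts (all coefficients being nonnegative), the target inequality collapses to $(y-x)\bigl(\mathbf{1}_{t\in R(A)}-\prob(t,R(A))\bigr)+x\varepsilon\le\tau$. Here the bracketed factor lies in $[-1,1]$ and $x\le1$, so the left-hand side is at most $\lvert x-y\rvert+\varepsilon\le(e^\delta-1)+\varepsilon\le e^\delta(1+\varepsilon)-1=\tau$, the final step using $e^\delta\ge1$. Applying the same estimate with $s$ and $t$ exchanged (legitimate since $R$ is symmetric) discharges condition~(ii) in full.

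The principal obstacle is that uniformization does not act uniformly on related states: since $\rate(s)\neq\rate(t)$ in general, the self-loop masses $1-x$ and $1-y$ differ, so the $\varepsilon$-bound of the embedded DTMC cannot simply be inherited. The essential move is therefore to translate the \emph{multiplicative} rate tolerance $\lvert\ln x-\ln y\rvert\le\delta$ into the \emph{additive} bound $\lvert x-y\rvert\le e^\delta-1$ and to fuse it with the additive probability tolerance $\varepsilon$; the separation identity together with the indicator monotonicity is precisely what lets this fusion proceed with a single sign-free estimate rather than a cumbersome case analysis on $\mathrm{sgn}(x-y)$ and on the two indicators.
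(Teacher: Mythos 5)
Your proof is correct, and while it shares the paper's overall skeleton---take an $(\varepsilon,\delta)$-bisimulation $R$ (the paper works directly with $R={\sim_{\varepsilon,\delta}}$) and verify that it satisfies \Cref{def:epsilon-bisimulation} on $\uniformization{\mathcal{M}}{\uniformizationRate}$ with tolerance $\tau$---the verification of the $\tau$-condition is genuinely different. The paper splits into three cases according to whether $s\in A$ and whether $s'\in{\sim_{\varepsilon,\delta}}(A)$, unfolds the definition of $\unifProb$ separately in each (with special handling of the self-loop correction term), and obtains the per-case bounds $e^{\delta}(1+\varepsilon)-1$, $e^{\delta}(1+\varepsilon)-1$ and $1+e^{-\delta}(\varepsilon-1)$, concluding by taking their maximum. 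Your separation identity $\unifProb(s,A)=\bigl(1-\rate(s)/\uniformizationRate\bigr)\mathbf{1}_{s\in A}+\bigl(\rate(s)/\uniformizationRate\bigr)\prob(s,A)$, combined with the indicator monotonicity $\mathbf{1}_{s\in A}\le\mathbf{1}_{s'\in R(A)}$ and the conversion of the multiplicative rate tolerance into the additive bound $\vert\rate(s)-\rate(s')\vert/\uniformizationRate\le e^{\delta}-1$, compresses all three cases into a single sign-free estimate; I checked the reduction to $(y-x)\bigl(\mathbf{1}_{s'\in R(A)}-\prob(s',R(A))\bigr)+x\varepsilon$ and it is exact. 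Besides brevity, this buys a quantitative improvement: your chain bounds the defect by $(e^{\delta}-1)+\varepsilon$, i.e.\ it actually shows $s\sim_{\varepsilon+e^{\delta}-1}s'$ in $\uniformization{\mathcal{M}}{\uniformizationRate}$, and $\varepsilon+e^{\delta}-1\le e^{\delta}(1+\varepsilon)-1=\tau$ with gap $\varepsilon(e^{\delta}-1)$; the paper's computations, as written, only yield $\tau$ (its first two cases are estimated more loosely than necessary), so your argument proves a slightly sharper version of the lemma and the stated one follows a fortiori. One small remark: your closing step of exchanging $s$ and $t$ is redundant rather than wrong---your estimate was already carried out for an arbitrary pair of the symmetric relation $R$ and arbitrary $A\subseteq\stateSpace$, so condition (ii) of \Cref{def:epsilon-bisimulation} holds for all pairs without a second pass.
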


	A notion from the literature related to $(\varepsilon, \delta)$-bisimulations is that of $\tau$-quasi-lumpability, which is also known as $\tau$- or near-lumpability \cite{EOLFMC,BQLMC,CBPIQLSWFN}. It is defined for partitions $\Omega = \{\Omega_1, \ldots, \Omega_m\}$ of the state space $\stateSpace$ of $\mathcal{M}$. More precisely, $\Omega$ is a $\tau$-quasi-lumpability for some $\tau \geq 0$ if for all $1 \leq i,j \leq m$ and all $s, s' \in \Omega_i$ it holds that $\vert \prob(s, \Omega_j) \cdot \rate(s) - \prob(s', \Omega_j) \cdot \rate(s')\vert \leq \tau$ \cite{PL}. The partitions of the state space induced by \emph{transitive} $(\varepsilon, \delta)$-bisimulations are quasi-lumpabilities. 	\begin{restatable}{proposition}{PropConnectionTauLump}
		Let $R$ be a transitive $(\varepsilon, \delta)$-bisimulation and $\uniformizationRate \geq \max_{s \in \stateSpace} \rate(s)$. The partition induced by $R$, i.e., the set $\xfrac{\stateSpace}{R}$, is a $q \cdot (e^{\delta} \cdot(1 + \varepsilon) - 1)$-lumpability.
	\end{restatable}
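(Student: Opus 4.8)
The plan is to verify the defining inequality of $\tau$-quasi-lumpability directly for $\tau = \uniformizationRate \cdot (e^{\delta}(1+\varepsilon) - 1)$. First I would observe that, since every $(\varepsilon,\delta)$-bisimulation is reflexive and symmetric and $R$ is additionally assumed transitive, $R$ is an equivalence relation; hence $\xfrac{\stateSpace}{R}$ is a genuine partition $\{\Omega_1, \dots, \Omega_m\}$, each block $\Omega_j$ is an $R$-equivalence class, and therefore $R(\Omega_j) = \Omega_j$. It then remains to show, for any two states $s, s'$ lying in the same block (i.e. $(s,s') \in R$) and any block $\Omega_j$, that $\vert \prob(s,\Omega_j)\rate(s) - \prob(s',\Omega_j)\rate(s') \vert \leq \tau$.

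Next I would extract the two elementary estimates supplied by the definition of $(\varepsilon,\delta)$-bisimulation. Applying the $\varepsilon$-condition to the set $A = \Omega_j$ and using $R(\Omega_j) = \Omega_j$ gives $\prob(s,\Omega_j) \leq \prob(s',\Omega_j) + \varepsilon$; by symmetry of $R$ the same holds with $s$ and $s'$ interchanged, so $\vert \prob(s,\Omega_j) - \prob(s',\Omega_j)\vert \leq \varepsilon$. The $\delta$-condition $\vert \ln \rate(s) - \ln \rate(s')\vert \leq \delta$ rewrites as $e^{-\delta} \leq \rate(s)/\rate(s') \leq e^{\delta}$, i.e. $\rate(s) \leq e^{\delta}\rate(s')$ and symmetrically. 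Here transitivity is exactly what makes the $\varepsilon$-condition usable in this form: without it, $R(\Omega_j)$ need not equal $\Omega_j$, and the cumulative rate into the block could not be compared cleanly.

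Finally I would combine the two estimates. Assuming without loss of generality that $\prob(s,\Omega_j)\rate(s) \geq \prob(s',\Omega_j)\rate(s')$, I would bound
\begin{align*}
\prob(s,\Omega_j)\rate(s) - \prob(s',\Omega_j)\rate(s')
&\leq \bigl(\prob(s',\Omega_j)+\varepsilon\bigr)\,e^{\delta}\rate(s') - \prob(s',\Omega_j)\rate(s') \\
&= \prob(s',\Omega_j)\rate(s')\bigl(e^{\delta} - 1\bigr) + \varepsilon\, e^{\delta} \rate(s'),
\end{align*}
and then apply the crude bounds $\prob(s',\Omega_j) \leq 1$ and $\rate(s') \leq \uniformizationRate$ to obtain $\uniformizationRate(e^{\delta} - 1) + \varepsilon\, e^{\delta}\uniformizationRate = \uniformizationRate(e^{\delta}(1+\varepsilon) - 1) = \tau$. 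The reversed case is identical, yielding the claimed two-sided bound. I do not expect a genuine obstacle here; the only care needed is the final bookkeeping, namely checking that substituting $\prob(s',\Omega_j) \le 1$ and $\rate(s') \le \uniformizationRate$ reproduces exactly the constant $\uniformizationRate(e^{\delta}(1+\varepsilon)-1)$ and not a looser one. A useful sanity check is that at $\varepsilon = \delta = 0$ the bound collapses to $\tau = 0$, i.e. exact lumpability, which is consistent with $\sim_{0,0}$ coinciding with strong bisimilarity.
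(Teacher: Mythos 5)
Your proposal is correct and follows essentially the same route as the paper's proof: apply the $\varepsilon$-condition to an $R$-closed block $C$ (so that $R(C)=C$), combine it with $\rate(s)\leq e^{\delta}\rate(s')$ from the $\delta$-condition, and then use $\prob(s',C)\leq 1$ and $\rate(s')\leq \uniformizationRate$ to arrive at the constant $\uniformizationRate\cdot(e^{\delta}(1+\varepsilon)-1)$, with the symmetric case handled identically. The only cosmetic difference is that you subtract $\prob(s',C)\rate(s')$ before estimating, whereas the paper keeps it on the right-hand side of the inequality; the algebra is the same.
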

	
	On the other hand, states related by a $\tau$-lumpability might require large values of $\varepsilon, \delta$ to be able to relate them by an $(\varepsilon, \delta)$-bisimulation.
	\begin{restatable}{proposition}{PropTauLumpDoesNotImplyEpsilonDelta}\label{prop:quasi-lumpability-does-not-imply-epsilon-delta}
		For any given $\varepsilon \in (0,1), \delta > 0$ and $\tau > 0$ there is a CTMC $\mathcal{M} = \mathcal{M}(\varepsilon, \delta, \tau)$ with states $s,s'$ such that $\{s,s'\}$ is a block of a $\tau$-quasi lumpability on $\mathcal{M}$ but the pair $(s,s')$ does neither satisfy the $\varepsilon$- nor the $\delta$-condition.
	\end{restatable}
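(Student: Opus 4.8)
The plan is to exploit exactly the mismatch highlighted in \Cref{ex:intro}: a $\tau$-quasi-lumpability bounds the \emph{absolute} difference of the aggregate transition rates $\prob(\cdot,B)\cdot\rate(\cdot)$ into each block $B$, whereas the $(\varepsilon,\delta)$-conditions bound an \emph{absolute} difference of transition \emph{probabilities} together with a \emph{relative} difference of exit rates. Uniformly scaling down two exit rates drives every transition rate towards $0$ while leaving both the transition probabilities and the ratio of the two exit rates untouched. Hence the $\tau$-bound can be met ``for free'', independently of how badly the $\varepsilon$- and $\delta$-conditions are violated, and a tiny fixed CTMC suffices.

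Concretely, I would take $\stateSpace=\{s,s',u,v\}$ with $\labelFunction(s)=\labelFunction(s')=a$ and $\labelFunction(u)=b$, $\labelFunction(v)=c$ for some $b\neq c$, deterministic transitions $\prob(s,u)=\prob(s',v)=1$, self-loops on $u$ and $v$ to keep the chain well-defined, and exit rates $\rate(s)=\tau$ and $\rate(s')=\tau\cdot e^{-\delta'}$ for a fixed $\delta'>\delta$ (the rates of $u,v$ are irrelevant). The candidate partition is $\Omega=\{\{s,s'\},\{u\},\{v\}\}$, all of whose blocks are label-homogeneous.

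First I would verify that $\Omega$ is a $\tau$-quasi-lumpability. The singleton blocks are vacuous, so only the block $\{s,s'\}$ needs to be checked, and there the aggregate transition rates into the three blocks are $(0,0)$, $(\tau,0)$, and $(0,\tau e^{-\delta'})$ for $s$ and $s'$ respectively; every difference is at most $\tau$. Next, the $\delta$-condition fails by construction, as $\vert\ln\rate(s)-\ln\rate(s')\vert=\delta'>\delta$. Finally, for the $\varepsilon$-condition I would use the witness $A=\{u\}$: since any relation $R$ satisfying the labeling condition can relate $u$ only to $b$-labeled states and $\Succ(s')=\{v\}$ with $\labelFunction(v)=c\neq b$, we get $\prob(s',R(A))=0$, whereas $\prob(s,A)=1>\varepsilon$ because $\varepsilon<1$. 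Thus $(s,s')$ violates the $\varepsilon$-condition, and indeed does so for \emph{every} label-respecting $R$, so the two states cannot be $(\varepsilon,\delta)$-bisimilar.

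I do not anticipate a real obstacle; the only thing that looks like a tension is that the construction simultaneously wants the exit rates small (to meet the $\tau$-bound), their ratio large (to break the $\delta$-condition), and the probability mass mismatched (to break the $\varepsilon$-condition). This tension is only apparent, since the $\tau$-bound constrains scale-dependent absolute transition rates while the other two quantities are scale-invariant: fixing the ratio $e^{\delta'}$ first and then choosing the common scale small enough that both rates are $\le\tau$ settles it. The remaining work is routine bookkeeping---checking the quasi-lumpability inequality for all block pairs and phrasing the label-mismatch argument so that it quantifies over every admissible relation rather than one fixed choice.
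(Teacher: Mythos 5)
Your construction is correct, and it takes a genuinely different route from the paper's proof. The paper builds a parameter-dependent chain with $n+3$ states, where $n > \max\{\varepsilon(\tau+1)/(\tau(1-\varepsilon)),\, e^{\delta}/\tau - 1\}$: it keeps $\rate(s)=1$, makes $\rate(s')=1+(n+1)\tau$ large, sends $s$ to a uniquely labeled state $g$ with probability $1$, and lets $s'$ spread its mass over $g$ and $n$ uniquely labeled absorbing states, so that every blockwise transition-rate difference equals exactly $\tau$; the $\delta$-condition then fails because the exit rates differ by a factor exceeding $e^{\delta}$, and the $\varepsilon$-condition fails at $A=\{g\}$ because $\prob(s',\{g\})+\varepsilon<1=\prob(s,\{g\})$. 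You instead fix a four-state chain and shrink both exit rates to at most $\tau$, exploiting precisely the asymmetry that the quasi-lumpability bound is scale-dependent (it constrains products $\prob\cdot\rate$) while the $\varepsilon$- and $\delta$-conditions are scale-invariant; your $\varepsilon$-violation comes from a label mismatch of the two successors rather than from a probability gap towards a common successor. Your version buys simplicity and uniformity: the state space has constant size independent of $\varepsilon,\delta,\tau$, the $\tau$-bound is immediate (all transition rates are $\le\tau$, hence so are all differences), and the quantification over every admissible relation $R$ is handled cleanly by unique successor labels --- the same device the paper uses. The paper's version buys a somewhat stronger moral: the phenomenon persists even when exit rates are bounded below by $1$ (so it is not an artifact of vanishing rates), when the blockwise rate differences saturate the bound $\tau$ exactly, and when both states reach the same goal state with positive probability. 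Both arguments establish the proposition as stated.
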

	\Cref{prop:quasi-lumpability-does-not-imply-epsilon-delta} illustrates the advantage of considering differences in the exit rates and transition probabilities of states separately when arguing about the similarity of their behavior, instead of considering the transition rates, i.e., the product of these two values, as is usually done in the literature \cite{EOLFMC,BQLMC,PL}.
	
	We finish the section by showing how to ``split'' $(\varepsilon,\delta)$-bisimilarity into $(\varepsilon, 0)$- and $(0, \delta)$-bisimilarity. More precisely, given $\mathcal{M} \sim_{\varepsilon, \delta} \mathcal{N}$ we construct CTMCs $\mathcal{M}', \mathcal{N}'$ with the same graph structure and such that \mbox{$\mathcal{M} \sim_{\varepsilon, 0} \mathcal{M}' \sim_{0, \delta} \mathcal{N}' \sim \mathcal{N}$}. This decomposition makes it possible to treat the two parameters individually and, together with the additivity of $\sim_{\varepsilon, \delta}$, allows us to extend results shown for $(\varepsilon,0)$- and $(0, \delta)$-bisimilar states or chains to $(\varepsilon, \delta)$-bisimilar ones. 
	\begin{restatable}{theorem}{ThmSameStructureConstruction}\label{thm:same-structure-no-additional-tolerance}
		Let $\mathcal{M} \sim_{\varepsilon, \delta} \mathcal{N}$. Then there are CTMCs $\mathcal{M}'$ and $\mathcal{N}'$ with the same graph structure and such that
$			\mathcal{M} \sim_{\varepsilon, 0} \mathcal{M}' \sim_{0, \delta} \mathcal{N}' \sim \mathcal{N}.$
	\end{restatable}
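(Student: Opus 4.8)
The plan is to realize the chain $\mathcal{M} \sim_{\varepsilon,0}\mathcal{M}' \sim_{0,\delta}\mathcal{N}' \sim \mathcal{N}$ by building $\mathcal{M}'$ and $\mathcal{N}'$ over one common state space, so that they literally share their embedded DTMC (hence the same graph structure) and differ only in their exit rate functions. First I would fix an $(\varepsilon,\delta)$-bisimulation $R$ on $\mathcal{M}\oplus\mathcal{N}$ with $(\initialState^{\mathcal{M}},\initialState^{\mathcal{N}})\in R$; by the remark following \Cref{def:epsilon-delta-bisimulation} every related pair $(s,t)$ carries a weight function $\Delta_{s,t}\colon\Succ(s)\to\Distr(\Succ(t))$ as in \Cref{lem:characterization-weight-functions-concur}. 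The states of the common space will be the reachable \emph{matched pairs} $(s,t)$ with $s\in\mathcal{M}$, $t\in\mathcal{N}$ and $(s,t)\in R$, plus auxiliary states introduced below. A matched pair gets the label $\labelFunction(s)=\labelFunction(t)$, exit rate $\rate(s)$ in $\mathcal{M}'$, and exit rate $\rate(t)$ in $\mathcal{N}'$; the $\delta$-condition of $R$ then gives $\lvert\ln\rate(s)-\ln\rate(t)\rvert\le\delta$, which is exactly the $(0,\delta)$-requirement for the identity relation between $\mathcal{M}'$ and $\mathcal{N}'$ on these states.

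The common transition probabilities are read off the coupling induced by the weight functions: from a matched pair $(s,t)$ I assign mass $\prob(s,s')\cdot\Delta_{s,t}(s',t')$ to $(s',t')$. By the two properties in \Cref{lem:characterization-weight-functions-concur}, the first marginal $\sum_{t'}\prob(s,s')\Delta_{s,t}(s',t')$ equals $\prob(s,s')$ and the second marginal $\sum_{s'}\prob(s,s')\Delta_{s,t}(s',t')$ equals $\prob(t,t')$. The first identity is what will tie $\mathcal{M}'$ to $\mathcal{M}$, the second what will tie $\mathcal{N}'$ to $\mathcal{N}$. The difficulty, and the main obstacle of the proof, is that this coupling places the \emph{mismatched} mass $\sum_{s'}\prob(s,s')\cdot\Delta_{s,t}(s',\Succ(t)\setminus R(s'))\le\varepsilon$ (property 2 bounds the complementary mass from below by $1-\varepsilon$) onto pairs $(s',t')\notin R$, whose exit rates $\rate(s')$ and $\rate(t')$ need not lie within a factor $e^{\delta}$; such pairs would break the middle $(0,\delta)$-relation.

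To repair this I route all mismatched mass, keeping the second coordinate $t'$ fixed, into a fresh \emph{$\mathcal{N}$-copy} of $t'$: an auxiliary state that in \emph{both} $\mathcal{M}'$ and $\mathcal{N}'$ carries exit rate $\rate(t')$, label $\labelFunction(t')$, and simulates the subchain of $\mathcal{N}$ started in $t'$ through further $\mathcal{N}$-copies. This gadget is engineered so that all three relations survive. Since the rerouting only moves mass among states sharing a second coordinate, the second marginal from each matched pair is still $\prob(t,\cdot)$; hence the relation identifying each matched pair $(s,t)$ with $t$ and each $\mathcal{N}$-copy of $t'$ with $t'$ is a strong bisimulation, giving $\mathcal{N}'\sim\mathcal{N}$. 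On every $\mathcal{N}$-copy the two exit rates coincide, so together with the matched pairs the identity relation is a $(0,\delta)$-bisimulation, yielding $\mathcal{M}'\sim_{0,\delta}\mathcal{N}'$.

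Finally, for $\mathcal{M}\sim_{\varepsilon,0}\mathcal{M}'$ I relate each matched pair $(s,t)$ to its first component $s$, leave the $\mathcal{N}$-copies unrelated to $\mathcal{M}$, and take the reflexive, symmetric closure. Labels and exit rates then agree exactly, so the $\delta$-part is $0$; for condition (iii) of \Cref{def:epsilon-delta-bisimulation} I would note that the mass landing on matched pairs whose first component lies in $A$ equals $\prob(s,A)$ minus the mismatched mass from states in $A$, which differs from $\prob(s,A)$ by at most the total mismatched mass $\le\varepsilon$, and symmetrically in the other direction. Thus the $\varepsilon$-condition holds with tolerance $\varepsilon$ both ways, the $\le\varepsilon$ mass flowing to the unrelated $\mathcal{N}$-copies being exactly what the slack absorbs. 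The crux throughout is that the mismatched mass must be absorbed on the probability side, where an $\varepsilon$ slack is available, rather than on the rate side, where only a multiplicative $\delta$ tolerance between \emph{related} states exists; the $\mathcal{N}$-copy construction is precisely what localizes that mass on the probability side while keeping the shared graph structure of $\mathcal{M}'$ and $\mathcal{N}'$ intact.
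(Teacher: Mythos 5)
Your proposal is correct and takes essentially the same route as the paper: a common pair-based state space whose shared transition function couples matched pairs via the weight functions of \Cref{lem:characterization-weight-functions-concur}, with the mismatched mass routed to states that carry $\mathcal{N}$'s rates and labels in \emph{both} chains, and the three relations (first-projection for $\mathcal{M} \sim_{\varepsilon,0} \mathcal{M}'$, identity for $\mathcal{M}' \sim_{0,\delta} \mathcal{N}'$, second-projection equivalence for $\mathcal{N}' \sim \mathcal{N}$) verified exactly as you describe. The only difference is cosmetic: the paper works on the full product $\stateSpace^{\mathcal{M}} \times \stateSpace^{\mathcal{N}}$ and lets unmatched pairs evolve under the independent product coupling with $\mathcal{N}$'s exit rate and label, so those pairs play precisely the role of your ``$\mathcal{N}$-copies'' with a redundant first coordinate that your construction quotients away.
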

	\paragraph{Proof sketch.}
		We describe how to construct $\mathcal{M}'$ and $\mathcal{N}'$. The models share the state space $\stateSpace' = \stateSpace^\mathcal{M} \times \stateSpace^\mathcal{N}$, with initial state $(\initialState^\mathcal{M}, \initialState^\mathcal{N})$. The label function $\labelFunction'$ of both models is $\labelFunction'((s,t)) = \labelFunction^\mathcal{N}(t)$, and the exit rate functions are defined via 
		\begin{center}
			${\rate^\mathcal{M}}'((s,t)) = \begin{cases}
				\rate^\mathcal{M}(s), &\text{if } s \sim_{\varepsilon, \delta} \! \! t \\
				\rate^\mathcal{N}(t), &\text{if} s \nsim_{\varepsilon, \delta} t
			\end{cases} \qquad \text{and} \qquad
			{\rate^\mathcal{N}}'((s,t)) = \rate^\mathcal{N}(t).$
		\end{center}
		Furthermore, the common transition probability function of $\mathcal{M}'$ and $\mathcal{N}'$ is 
		\begin{align*}
			\prob'((s,t), (s',t')) &= \begin{cases}
				\Delta_{s,t}(s',t') \prob^\mathcal{M}(s,s'), &\text{if } s \sim_{\varepsilon, \delta} t, s' \in \Succ(s), t' \in \Succ(t) \\
				\prob^\mathcal{M}(s,s')  \prob^\mathcal{N}(t,t'), &\text{if } s \nsim_{\varepsilon, \delta} t \\
				0, &\text{otherwise}
			\end{cases}
		\end{align*}
		where, for a given pair of $(\varepsilon, \delta)$-bisimilar states $(s,t) \in \stateSpace^\mathcal{M} \times \stateSpace^\mathcal{N}$, the function $\Delta_{s,t} \colon \Succ(s) \to \Distr(\Succ(t))$ is a weight function as in \Cref{lem:characterization-weight-functions-concur}, which exists for all such $(s,t)$ since $s \sim_{\varepsilon} t$ in the underlying DTMCs of $\mathcal{M}$ and $\mathcal{N}$. The proof proceeds by showing the desired relations between the models.  \qed
		\begin{figure}[t!]
			\centering
			\resizebox{!}{0.28\textheight}{
			\begin{tikzpicture}[->,>=stealth',shorten >=1pt,auto, semithick]
				\tikzstyle{every state} = [text = black]
				
				%M
				\node[state] (s0) [fill = yellow] {$s_0$};
				\node (tempm) [right of = s0, node distance = 2.5cm] {};
				\node[state] (s1) [fill = yellow, above of = tempm, node distance = 0.75cm] {$s_1$};
				\node[state] (s2) [fill = green, below of = tempm, node distance = 0.75cm] {$s_2$}; 
				\node (minit) [above left of = s0, node distance = 1cm] {}; 
				
				\path
					(minit) edge (s0)
					(s0) edge node [above] {$\frac{1}{2}$} (s1)
					(s0) edge node [below] {$\frac{1}{2}$} (s2)
					(s1) edge [loop right] node {$\frac{1}{2}$} (s1)
					(s1) edge node {$\frac{1}{2}$} (s2)
					(s2) edge [loop right] node {$1$} (s2)
				;
				
				%label
				\node[below of = s0, node distance = 0.65cm] {$\{a\}, 1$}; 
				\node[above of = s1, node distance = 0.65cm] {$\{a\}, e^{\delta}$}; 
				\node[below of = s2, node distance = 0.65cm] {$\{b\}, e^{\frac{\delta}{2}}$};

				%N
				\node[state] (t0) [fill = yellow, below of = s0, node distance = 3.5cm] {$t_0$};
				\node (tempn) [right of = t0, node distance = 2.5cm] {};
				\node[state] (t1) [above of = tempn, node distance = 0.75cm] {$t_1$};
				\node[state] (t2) [fill = green, below of = tempn, node distance = 0.75cm] {$t_2$}; 
				\node (ninit) [above left of = t0, node distance = 1cm] {}; 
				
				\path
				(ninit) edge (t0)
				(t0) edge node [above] {$\varepsilon$} (t1)
				(t0) edge[loop above] node {$\frac{1}{2}{-}\varepsilon$} (t1)
				(t0) edge node [below] {$\frac{1}{2}$} (t2)
				(t1) edge [loop right] node {$1$} (t1)
				(t2) edge node {$\varepsilon$} (t1)
				(t2) edge [loop right] node {$1{-}\varepsilon$} (t2)
				;
				
				%label
				\node[below of = t0, node distance = 0.65cm] {$\{a\}, e^{\frac{\delta}{2}}$}; 
				\node[above of = t1, node distance = 0.65cm] {$\{c\}, e^{-\delta}$}; 
				\node[below of = t2, node distance = 0.65cm] {$\{b\}, e^{-\frac{\delta}{2}}$};

				%M', N'
				\node[] (tempmid) [below of = s2, node distance = 1cm] {};
				\node[state, right of = tempmid, node distance = 3.5cm, fill = yellow, inner sep = 0pt] (s0t0) {$(s_0, t_0)$}; 
				\node[state] (s1t0) [right of = s0t0, node distance = 2cm, fill = yellow, inner sep = 0pt] {$(s_1, t_0)$}; 
				\node[state] (s1t1) [above of = s1t0, node distance = 2cm, inner sep = 0pt] {$(s_1, t_1)$}; 
				\node[state] (s2t2) [below of = s1t0, node distance = 2cm, inner sep = 0pt, fill = green] {$(s_2, t_2)$}; 
				\node[state] (s2t1) [right of = s1t1, node distance = 2cm, inner sep = 0pt] {$(s_2, t_1)$};
				\node (mninit) [above left of = s0t0, node distance = 1.2cm] {};
				
				\path 
					(mninit) edge (s0t0)
					(s0t0) edge [bend left = 20] node [above] {$\varepsilon$} (s1t1)
					(s0t0) edge node [above] {$\frac{1}{2}{-} \varepsilon$} (s1t0)
					(s0t0) edge [bend right = 20] node [below] {$\frac{1}{2}$} (s2t2)
					
					(s1t1) edge [loop above] node[left, pos = 0.15] {$\frac{1}{2}$} (s1t1)
					(s1t1) edge node [above] {$\frac{1}{2}$} (s2t1)
					
					(s1t0) edge node [left, pos = 0.4] {$\varepsilon$} (s1t1)
					(s1t0) edge [loop right] node [below, pos = 0.85] {$\frac{1}{2}{-}\varepsilon$} (s1t0)
					(s1t0) edge node [left, pos = 0.4] {$\frac{1}{2}$} (s2t2)
					
					(s2t2) edge [loop below] node [pos = 0.15, right] {$1{-}\varepsilon$} (s2t2)
					(s2t2) edge [bend right = 30] node [below, yshift = -0.2cm] {$\varepsilon$} (s2t1)
				
					(s2t1) edge [loop above] node [left, pos = 0.15] {$1$} (s2t1)
				;
				
				%label
				\node [below of = s0t0, node distance = 0.75cm, xshift = -0.5cm] {$\{a\}$, \textcolor{blue}{$1$}, \textcolor{red}{$e^{\frac{\delta}{2}}$}};
				\node [right of = s2t2, node distance = 1.5cm] {$\{b\}$, \textcolor{blue}{$e^{\frac{\delta}{2}}$}, \textcolor{red}{$e^{-\frac{\delta}{2}}$}};
				\node [above of = s1t0, node distance = 0.65cm, xshift = 0.95cm] {$\{a\}$, \textcolor{blue}{$e^{\delta}$}, \textcolor{red}{$e^{\frac{\delta}{2}}$}};
				\node [left of = s1t1, node distance = 1.5cm] {$\{c\}$, \textcolor{blue}{$e^{-\delta}$}, \textcolor{red}{$e^{-\delta}$}};
				\node [below of = s2t1, node distance = 0.65cm, xshift = 1.1cm] {$\{c\}$, \textcolor{blue}{$e^{-\delta}$}, \textcolor{red}{$e^{-\delta}$}};
				
			\end{tikzpicture}}
			\caption{The CTMCs used in \Cref{ex:same-structure}.}
			\label{fig:example-construction}
		\end{figure}
		
		\begin{example}\label{ex:same-structure}
			We illustrate the construction of \Cref{thm:same-structure-no-additional-tolerance} in \Cref{fig:example-construction}. The CTMCs $\mathcal{M}$ (top left) and $\mathcal{N}$ (bottom left) are $(\varepsilon, \delta)$-bisimilar, as all states with the same label are pairwise $(\varepsilon, \delta)$-bisimilar in $\mathcal{M} \oplus \mathcal{N}$. 
			The chains $\mathcal{M}'$ and $\mathcal{N}'$, which only differ in their exit rate functions $\rate^{\mathcal{M}'}$(first, in blue) and $\rate^{\mathcal{N}'}$(second, in red) are on the right of the figure. To construct $\mathcal{M}'$ and $\mathcal{N}'$ it is necessary to compute weight functions $\Delta_{s,t}$ for all $s \sim_{\varepsilon, \delta} t$. For example, a suitable choice for $(s_0, t_0)$ is $\Delta_{s_0, t_0}(s_1, t_0) = 1 - 2 \cdot \varepsilon, \Delta_{s_0, t_0}(s_1, t_1) = 2 \cdot \varepsilon, \Delta_{s_0, t_0}(s_2, t_2) = 1$ and $\Delta_{s_0, t_0}(\cdot, \cdot) = 0$ otherwise. It is easy to prove that $\mathcal{M} \sim_{\varepsilon, 0} \mathcal{M}' \sim_{0, \delta} \mathcal{N}' \sim \mathcal{N}$.
		\end{example}
		
	%-----------
	
	\section{Bounding Timed Reachability Probabilities}\label{sec:time-bounded-reachability-bounds}
	Let $\mathcal{M}$ be a CTMC and $t \in \mathbb{R}_{\geq 0}$. We are interested in bounds for the absolute difference of the probabilities of $(\varepsilon, \delta)$-bisimilar states $s$ and $s'$ to reach a unique \emph{goal state} $g$ until the deadline $t$, i.e., in bounds for $\vert \probMeasure_s(\lozenge^{\leq t} g) - \probMeasure_{s'}(\lozenge^{\leq t} g) \vert$. 
	
	For the rest of the paper we therefore assume that $\mathcal{M}$ has such a unique goal state $g$, which is w.l.o.g. \cite{MCACTMC} absorbing and uniquely labeled.
	 If there are multiple, potentially non-absorbing goal states, these states can be collapsed to a single absorbing goal state without affecting the time-bounded reachability probabilities.
	The same pre-processing is also applied in the computation of satisfaction probabilities for time-bounded until-formulas of the \emph{continuous stochastic logic CSL} \cite{MCCTMC,MCACTMC}.
	Similarly, we assume that all states from which $g$ is not reachable are collapsed to a single, uniquely labeled absorbing fail state. Hence, there are at most two absorbing states in $\mathcal{M}$ which are not $(\varepsilon, \delta)$-bisimilar to any other state (because of the unique labels) and are eventually reached almost surely.
	 Further, let $\uniformizationRate = \max_{s \in \stateSpace} \rate(s)$ be the smallest possible uniformization rate of $\mathcal{M}$.
	
	By considering the DTMC $\uniformization{\mathcal{M}}{\uniformizationRate}$ and applying the bound of \Cref{thm:main-result-rbbteapc} \cite{RBBTEAPC}, which is possible because of the preservation of approximate bisimilarity in the uniformization (see \Cref{lem:approximate-bisimulation-in-uniformization}), we derive an upper bound on the absolute difference of timed reachability probabilities of $(\varepsilon, \delta)$-bisimilar states. 
	\begin{restatable}{proposition}{PropTransientBoundsEpsilonDeltaGreateZero}\label{prop:transient-prob-bounds-epsilon-delta-greater-0}
		For $ s \sim_{\varepsilon, \delta} s'$: 
		$\vert \probMeasure_s(\reachability^{\leq t} g) - \probMeasure_{s'}(\reachability^{\leq t} g) \vert \leq 1 - e^{-\uniformizationRate  t  (e^\delta(1 + \varepsilon)-1)}$.
	\end{restatable}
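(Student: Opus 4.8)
The plan is to reduce the continuous-time bound to the discrete-time bound of \Cref{thm:main-result-rbbteapc} via uniformization, exploiting that the goal state $g$ is absorbing. First I would observe that, since $g$ is absorbing, the timed reachability probability coincides with the transient probability of residing in $g$ at time $t$, i.e.\ $\probMeasure_s^\mathcal{M}(\reachability^{\leq t} g) = \pi_t^{\mathcal{M}_s}(g)$, and likewise that in the uniformized DTMC the $k$-step reachability coincides with the $k$-step transient probability, $\pi_k^{\uniformization{\mathcal{M}}{\uniformizationRate}}(s,g) = \probMeasure_s^{\uniformization{\mathcal{M}}{\uniformizationRate}}(\reachability^{\leq k} g)$, because absorbingness of $g$ is preserved by uniformization. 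Reading off the $g$-coordinate of the uniformization series for $\boldsymbol{\pi}_t^{\mathcal{M}_s}$ then yields the representation
\[
\probMeasure_s^\mathcal{M}(\reachability^{\leq t} g) = \sum_{k=0}^\infty e^{-\uniformizationRate t}\,\frac{(\uniformizationRate t)^k}{k!}\,\probMeasure_s^{\uniformization{\mathcal{M}}{\uniformizationRate}}(\reachability^{\leq k} g),
\]
which expresses the CTMC quantity as a Poisson-weighted average of step-bounded reachability probabilities in $\uniformization{\mathcal{M}}{\uniformizationRate}$.

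Next I would subtract the analogous expression for $s'$, push the absolute value inside the sum (legitimate since the Poisson weights are nonnegative) via the triangle inequality, and bound each summand. By \Cref{lem:approximate-bisimulation-in-uniformization}, the hypothesis $s \sim_{\varepsilon,\delta}^\mathcal{M} s'$ gives $s \sim_\tau^{\uniformization{\mathcal{M}}{\uniformizationRate}} s'$ with $\tau = e^\delta(1+\varepsilon)-1$, so \Cref{thm:main-result-rbbteapc} applied inside $\uniformization{\mathcal{M}}{\uniformizationRate}$ bounds the $k$-th inner difference by $1-(1-\tau)^k$. This reduces the claim to the single-variable estimate
\[
\bigl| \probMeasure_s^\mathcal{M}(\reachability^{\leq t} g) - \probMeasure_{s'}^\mathcal{M}(\reachability^{\leq t} g) \bigr| \leq \sum_{k=0}^\infty e^{-\uniformizationRate t}\,\frac{(\uniformizationRate t)^k}{k!}\,\bigl(1-(1-\tau)^k\bigr).
\]

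Finally I would evaluate the right-hand side in closed form by splitting it into two exponential series: the term $\sum_{k} e^{-\uniformizationRate t}(\uniformizationRate t)^k/k!$ equals $1$, while $\sum_{k} e^{-\uniformizationRate t}(\uniformizationRate t)^k(1-\tau)^k/k! = e^{-\uniformizationRate t}\,e^{\uniformizationRate t(1-\tau)} = e^{-\uniformizationRate t\,\tau}$. Their difference collapses to $1 - e^{-\uniformizationRate t\,\tau} = 1 - e^{-\uniformizationRate t(e^\delta(1+\varepsilon)-1)}$, which is exactly the claimed bound. (The final expression always lies in $[0,1]$ and is therefore a valid bound; the per-term estimate $1-(1-\tau)^k$ is used in its meaningful regime.)

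I expect the main obstacle to be the first step, namely cleanly justifying the Poisson-weighted representation: one must argue that absorbingness of $g$ simultaneously identifies the timed reachability with the transient $g$-probability \emph{and} identifies the $k$-step DTMC reachability with the $k$-step transient probability, so that the uniformization series for $\boldsymbol{\pi}_t^\mathcal{M}$ transfers verbatim from transient probabilities to reachability probabilities. Once this representation is established, the remaining steps---the triangle inequality, the invocation of \Cref{lem:approximate-bisimulation-in-uniformization} and \Cref{thm:main-result-rbbteapc}, and the recombination of the two exponential series---are routine.
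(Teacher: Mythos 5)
Your proposal is correct and follows essentially the same route as the paper's proof: uniformize, invoke \Cref{lem:approximate-bisimulation-in-uniformization} to obtain $\tau$-bisimilarity with $\tau = e^\delta(1+\varepsilon)-1$, express the timed reachability probabilities as Poisson-weighted sums of step-bounded reachability probabilities in $\uniformization{\mathcal{M}}{\uniformizationRate}$ (using absorbingness of $g$), apply the triangle inequality and \Cref{thm:main-result-rbbteapc} termwise, and collapse the two exponential series to $1 - e^{-\uniformizationRate t \tau}$. The only cosmetic difference is that the paper justifies the Poisson representation by identifying the step-bounded reachability probabilities with the matrix entries $\unifProbMatrix^k_{s,g}$, which is exactly the observation you flag as the main obstacle.
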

	
	Considering $\varepsilon = 0$ or $\delta  = 0$ in \Cref{prop:transient-prob-bounds-epsilon-delta-greater-0} yields the following corollary. 
	\begin{corollary}\label{cor:transient-prob-bounds-epsilon-or-delta-0}
		Let $ s \sim_{\varepsilon, \delta} s'$.
		\begin{enumerate}
			\item If $\delta = 0$ then $\vert \probMeasure_s(\reachability^{\leq t} g) - \probMeasure_{s'}(\reachability^{\leq t} g) \vert \leq 1 - e^{-q  t  \varepsilon}.$
			\item If $\varepsilon = 0$ then $	\vert \probMeasure_s(\reachability^{\leq t} g) - \probMeasure_{s'}(\reachability^{\leq t} g) \vert \leq 1 - e^{-q  t  (e^\delta -1)}.$
			\item If $\varepsilon = \delta = 0$ then $\probMeasure_s(\reachability^{\leq t} g) = \probMeasure_{s'}(\reachability^{\leq t} g)$.
		\end{enumerate}
	\end{corollary}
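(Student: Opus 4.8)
The plan is to obtain all three cases by direct substitution into the general bound of \Cref{prop:transient-prob-bounds-epsilon-delta-greater-0}, namely
\[
\vert \probMeasure_s(\reachability^{\leq t} g) - \probMeasure_{s'}(\reachability^{\leq t} g) \vert \leq 1 - e^{-q t (e^\delta(1+\varepsilon)-1)}.
\]
Since $s \sim_{\varepsilon, \delta} s'$ is assumed, this inequality is available for whichever values of $\varepsilon$ and $\delta$ we specialize to, so the only task is to simplify the factor $e^\delta(1+\varepsilon) - 1$ appearing in the exponent in each regime.

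For item~1 I would set $\delta = 0$, so that $e^\delta = 1$ and hence $e^\delta(1+\varepsilon) - 1 = (1+\varepsilon) - 1 = \varepsilon$; the bound then collapses to $1 - e^{-q t \varepsilon}$. For item~2 I would set $\varepsilon = 0$, so that $1 + \varepsilon = 1$ and hence $e^\delta(1+\varepsilon) - 1 = e^\delta - 1$, yielding the bound $1 - e^{-q t (e^\delta - 1)}$.

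For item~3 I would set $\varepsilon = \delta = 0$ simultaneously, giving $e^0 \cdot (1+0) - 1 = 0$, so the right-hand side becomes $1 - e^0 = 0$. Because the left-hand side is an absolute value and thus nonnegative, it is squeezed to $0$, which establishes the claimed equality $\probMeasure_s(\reachability^{\leq t} g) = \probMeasure_{s'}(\reachability^{\leq t} g)$. As a consistency check, item~3 can alternatively be derived from \Cref{thm:fundamental-properties}(3), where $\sim_{0,0}$ is shown to coincide with strong bisimilarity $\sim$, which is known to preserve timed reachability probabilities exactly. The whole argument is pure substitution, so I expect no genuine obstacle; the only place demanding the slightest care is item~3, where one must additionally invoke the nonnegativity of the absolute value to turn the $\leq 0$ bound into an equality.
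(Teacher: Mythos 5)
Your proposal is correct and matches the paper's own argument exactly: the paper obtains all three items by direct substitution of $\delta=0$, $\varepsilon=0$, and $\varepsilon=\delta=0$ into the bound of \Cref{prop:transient-prob-bounds-epsilon-delta-greater-0}, and it likewise notes that item~3 follows alternatively from \Cref{thm:fundamental-properties}(3) together with exact preservation of timed reachability probabilities under strong bisimilarity. Nothing is missing.
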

	
	The third result of \Cref{cor:transient-prob-bounds-epsilon-or-delta-0} also directly follows from $s \sim_{0, 0} s'$ iff \mbox{$s \sim s'$}, which was proved in the third item of \Cref{thm:fundamental-properties}, and the fact that $\sim$ exactly preserves transient (reachability) probabilities \cite{MCACTMC,EOLFMC}. 
	\begin{figure}[t!]
		\centering
		\resizebox{!}{0.09\textheight}{
		\begin{tikzpicture}[->,>=stealth',shorten >=1pt,auto, semithick]
			\tikzstyle{every state} = [text = black]
			
			\node[state] (s) [fill = yellow] {$s$}; 
			\node[state] (g) [fill = green, right of = s, node distance = 2cm] {$g$}; 
			\node[state] (s2) [right of = g, node distance = 2.5cm, fill = yellow] {$s'$}; 
			
			\node[] (sinit) [above of = s, node distance = 0.9cm] {}; 
			
			\path 
			(sinit) edge (s)
			(s) edge node [above, pos = 0.4] {$\varepsilon$} (g)
			(s) edge [loop left] node [pos = 0.5, left] {$1{-}\varepsilon$} (s)
			(g) edge [loop right] node {$1$} (g)
			(s2) edge [loop right] node {$1$} (s2)
			;

			%labels
			\node[below of = s, node distance = 0.65cm] {$\{a\}, \uniformizationRate$}; 
			\node[below of = g, node distance = 0.65cm] {$\{b\}, \uniformizationRate$};
			\node[below of = s2, node distance = 0.65cm] {$\{a\}, \uniformizationRate$};
			
		\end{tikzpicture}}
		\caption{A CTMC with $s \sim_{\varepsilon, 0} s'$ and $\vert \probMeasure_s(\reachability^{\leq t} g) - \probMeasure_{s'}(\reachability^{\leq t} g) \vert = 1 - e^{-\uniformizationRate t  \varepsilon}$.}
		\label{fig:example-bound-is-tight-for-epsilon-0}
	\end{figure}

	We now show that the bound of \Cref{prop:transient-prob-bounds-epsilon-delta-greater-0} is tight if $\delta = 0$.
	\begin{example}[\cite{RBBTEAPC}]\label{ex:tightness-bound-epsilon-0}
		In \Cref{fig:example-bound-is-tight-for-epsilon-0}, $s \sim_{\varepsilon ,0} s'$, $\probMeasure_{s'}(\reachability^{\leq t} g) = 0$ and  
		$\probMeasure_s(\lozenge^{\leq t} g) = 1 - e^{- \uniformizationRate  t  \varepsilon}$, so $\vert \probMeasure_s(\reachability^{\leq t} g) - \probMeasure_{s'}(\reachability^{\leq t} g) \vert = \vert 1 - e^{- \uniformizationRate  t  \varepsilon} - 0 \vert = 1 - e^{-\uniformizationRate  t  \varepsilon}$ for all $t$.
	\end{example}
	
	A disadvantage of the bounds in \Cref{prop:transient-prob-bounds-epsilon-delta-greater-0} (and \Cref{cor:transient-prob-bounds-epsilon-or-delta-0}) is that they converge to $1$ exponentially fast for $t \to \infty$, while the maximal total difference in timed reachability probabilities is usually much smaller for $(\varepsilon, \delta)$-bisimilar states. Intuitively, the bounds converge to $1$ because the bound of \Cref{thm:main-result-rbbteapc} used in their derivation always assumes the maximal possible error. However it does not consider, e.g., that for growing $t$ some of the probability mass already reached the goal state and can thus not contribute to the total error anymore. 
	This disadvantage is particularly observable for $(0, \delta)$-bisimilar states, as in this case the actual error converges to $0$ for $t\to \infty$ (see also \Cref{fig:4-state-chain,fig:example-bounds-diag-queue} in \Cref{sec:errors-0-delta-states}). 
	In \Cref{sec:errors-0-delta-states}, we derive explicit formulas and better bounds for the absolute difference in timed reachability probabilities of $(0, \delta)$-bisimilar states. 
	
	For now we consider the problem of computing, for given $\theta \in [0,1)$ and time $t > 0$, values for $\varepsilon$ and $\delta$ that guarantee  the absolute difference in reachability probabilities of $g$ until $t$ for $(\varepsilon, \delta)$-bisimilar states to be $\leq \theta$. By using the bound of \Cref{prop:transient-prob-bounds-epsilon-delta-greater-0} and solving $1 - e^{-\uniformizationRate t ( e^{\delta}(1+\varepsilon) - 1)} \leq \theta$ w.r.t. $\varepsilon$ and $\delta$ we obtain:
	\begin{restatable}{theorem}{ThmParetoCurveUnif}\label{cor:admissible-values-of-varepsilon-and-delta-for-given-theta}
		Let $\theta \in [0,1)$, $t > 0$, and $q = \max_{p \in \stateSpace} \rate(p)$. Then, for all $\varepsilon, \delta$ with $\varepsilon \in \left[0, \frac{1}{e^{\delta}} \cdot \left(\frac{\uniformizationRate \cdot t - \ln(1-\theta)}{\uniformizationRate\cdot t} \right) - 1\right]$ and $\delta \in \left[0, \ln\left(\frac{q\cdot t - \ln(1-\theta)}{(\varepsilon + 1)\cdot q \cdot t}\right)\right]$, it holds that
		\begin{center} 
			$s\sim_{\varepsilon, \delta} s'$ \, implies \, $\vert \probMeasure_s(\lozenge^{\leq t} g) - \probMeasure_{s'}(\lozenge^{\leq t} g)\vert \leq \theta$.
		\end{center}
	\end{restatable}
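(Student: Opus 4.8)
The plan is to reduce the claim to the upper bound of \Cref{prop:transient-prob-bounds-epsilon-delta-greater-0} and then invert that bound, solving the resulting inequality for $\varepsilon$ and $\delta$. For $s \sim_{\varepsilon, \delta} s'$ the proposition gives
\[
\vert \probMeasure_s(\reachability^{\leq t} g) - \probMeasure_{s'}(\reachability^{\leq t} g) \vert \leq 1 - e^{-\uniformizationRate t (e^\delta(1+\varepsilon)-1)},
\]
so it suffices to characterize those pairs $(\varepsilon, \delta)$ for which the right-hand side is at most $\theta$.

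First I would isolate this condition. Since $\theta \in [0,1)$ we have $1 - \theta > 0$, and the requirement $1 - e^{-\uniformizationRate t(e^\delta(1+\varepsilon)-1)} \leq \theta$ is equivalent to $e^{-\uniformizationRate t(e^\delta(1+\varepsilon)-1)} \geq 1 - \theta$. Applying the (monotone) natural logarithm and dividing by $-\uniformizationRate t < 0$, which reverses the inequality, yields the single constraint
\[
e^\delta(1+\varepsilon) \leq \frac{\uniformizationRate t - \ln(1-\theta)}{\uniformizationRate t}.
\]
Because $\ln(1-\theta) \leq 0$, the right-hand side is at least $1$, so this constraint is satisfiable.

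Next I would check that the two stated interval bounds are each just a rearrangement of this single constraint (hence equivalent to it and to one another). Multiplying the upper bound on $\varepsilon$ by $e^\delta > 0$ and moving terms around recovers the constraint exactly; dually, exponentiating the upper bound on $\delta$ (using monotonicity of $\exp$) and multiplying by $(\varepsilon + 1) > 0$ again recovers it. Hence whenever $\varepsilon$ and $\delta$ lie in the claimed ranges the constraint holds, the bound of \Cref{prop:transient-prob-bounds-epsilon-delta-greater-0} is at most $\theta$, and the implication $s \sim_{\varepsilon, \delta} s' \Rightarrow \vert \probMeasure_s(\reachability^{\leq t} g) - \probMeasure_{s'}(\reachability^{\leq t} g) \vert \leq \theta$ follows.

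I expect no substantial obstacle: the result is obtained purely by inverting an explicit bound. The only points demanding care are the reversal of the inequality when dividing by the negative quantity $-\uniformizationRate t$, and the positivity of $1 - \theta$ ensuring $\ln(1-\theta)$ is well defined; both are guaranteed by $t > 0$ and $\theta < 1$. It is also worth noting that the two interval descriptions are complementary presentations of the same feasibility region (a Pareto frontier in the $(\varepsilon, \delta)$-plane), so verifying either one suffices, and an empty range makes the universally quantified claim vacuously true.
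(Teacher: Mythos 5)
Your proposal is correct and follows essentially the same route as the paper's own proof: apply the bound of \Cref{prop:transient-prob-bounds-epsilon-delta-greater-0}, observe that $1 - e^{-\uniformizationRate t (e^\delta(1+\varepsilon)-1)} \leq \theta$ is equivalent to $e^\delta(1+\varepsilon) \leq \frac{\uniformizationRate t - \ln(1-\theta)}{\uniformizationRate t}$, and note that each of the two stated interval bounds is an algebraic rearrangement of this single constraint. Your additional remark that the two ranges are therefore equivalent descriptions of the same feasibility region (so either suffices) is a correct refinement of what the paper phrases as ``if both inequalities hold simultaneously.''
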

	
	If $\delta = 0$, the upper bound for $\varepsilon$ collapses to $\varepsilon \leq - \frac{\ln(1 - \theta)}{q \cdot t}$, while for $\varepsilon = 0$ we obtain $\delta \leq \ln \left( 1 - \frac{\ln(1-\theta)}{q \cdot t} \right)$.
	As the bound of \Cref{prop:transient-prob-bounds-epsilon-delta-greater-0} is tight if $\delta = 0$ (see \Cref{ex:tightness-bound-epsilon-0}), the range of admissible values for $\varepsilon$ in \Cref{cor:admissible-values-of-varepsilon-and-delta-for-given-theta} is tight in this case as well, in the sense that for a given $\theta \in [0,1)$ there is a CTMC $\mathcal{M}$ with states $s \sim_{\varepsilon, 0} s'$ such that $\vert \probMeasure_s(\lozenge^{\leq t} g) - \probMeasure_{s'}(\lozenge^{\leq t} g)\vert \leq \theta$ iff $\varepsilon \leq - \frac{\ln(1 - \theta)}{q \cdot t}$.
	
	\section{The Effect of Changing Rates Only} \label{sec:errors-0-delta-states}
	The goal of this section is to obtain more refined bounds for the absolute difference of timed reachability probabilities of $(0, \delta)$-bisimilar CTMCs. This setting is relevant in different scenarios like, e.g., queuing theory. Consider a CTMC $\mathcal{M}$ modeling a single server queue with a buffer of size $n$, in which tasks arrive with a constant rate $\tau$ and are served with a constant rate $\mu$. If the buffer is full and a new task arrives, it has to be dropped and an absorbing fail state $f$ is entered. The timed reachability probability $\probMeasure^{\mathcal{M}}(\lozenge^{\leq t} f)$ describes the probability of failure until $t$. Now let $\mathcal{M}'$ be like $\mathcal{M}$, but with an increased service rate $\mu' = \mu \cdot c$ for some $c > 1$. Then $\mathcal{M} \sim_{0, \delta} \mathcal{M}'$ for $\delta \geq \ln(c)$ and $\vert \probMeasure^\mathcal{M}(\lozenge^{\leq t} f) - \probMeasure^{\mathcal{M}'}(\lozenge^{\leq t} f) \vert$ describes how much higher the probability of failure until $t$ is when using the slower server instead of the faster one, making it easy to compare their performance. 
	
	We first show how to compute the absolute difference in timed reachability probabilities of $(0, \delta)$-bisimilar CTMCs $\mathcal{M}$ and $\mathcal{M}'$ exactly, without relying on the uniformization method \cite{TSMQS,MCASMP} or the Kolmogorov forward equations \cite{ÜAMW}. We then derive two bounds for the difference: an easy-to-compute constant value (that depends on $t$ and $\delta$), and a bound that yields better results if $t \to \infty$.
	
	\begin{remark}
		Throughout this section we assume that the CTMCs are uniform, i.e., that all states have the same exit rates. 
		For \emph{transitive} $(0,\delta)$-bisimulations  $R$ on non-uniformized CTMCs, the following procedure allows us to construct uniformized CTMCs for the analysis of time-bounded reachability probabilities:
		Let $\mathcal{M}$ and $\mathcal{N}$ be related by such an $R$. 
		In all equivalence classes $C$ of $R$, we change the exit rates in $\mathcal{M}$ to the minimal exit rate $\rate_{\min}^\mathcal{M}(C)$ present in $C$
		 and all rates in  $\mathcal{N}$ to $\rate_{\min}^\mathcal{M}(C)\cdot e^\delta$. As the transition probabilities of the chains are not affected by this change, and the rates of related states differ by a factor of at most $e^{\delta}$, $R$ is also a $(0,\delta)$-bisimulation on the transformed chains. Afterwards, $\mathcal{M}$ and $\mathcal{N}$ can be uniformized with rates $\uniformizationRate$ and $\uniformizationRate\cdot e^\delta$, respectively, for a suitable $\uniformizationRate$.
		This again preserves the fact that $R$ is a $(0,\delta)$-bisimulation and results in uniform CTMCs. During the procedure, the difference $\vert \probMeasure^\mathcal{M}(\lozenge^{\leq t} g) - \probMeasure^\mathcal{N}(\lozenge^{\leq t} g) \vert$ in the original CTMCs is at most as big as in the resulting uniformized CTMCs. For more details on the transformation, see \Cref{app:uniformizing}.
	\end{remark}
	
	Our formula for $\vert \probMeasure^{\mathcal{M}}(\lozenge^{\leq t} g) - \probMeasure^{\mathcal{M}'}(\lozenge^{\leq t} g) \vert$ depends on the corresponding differences in CTMCs of a specific form, which we call \emph{Erlang CTMCs} \cite{SSPTPSATE}.
	\begin{definition}\label{def:erlang-ctmc}
		Let $n \in \mathbb{N}$. The \emph{$(n$-$)$Erlang CTMC} $\erlang_n$ is a CTMC with $n$ non-goal states $s_0, \ldots, s_{n-1}$ and a unique goal state $s_n = g$, such that $\labelFunction(s_i) = \labelFunction(s_j)$ for $0 \leq i, j < n$, $\labelFunction(g) \neq \labelFunction(s_i)$ for any $i < n$, $\rate(s_i) = 1$ for $0 \leq i \leq n$ and $\prob(s_{i}, s_{i+1}) = 1$ for $0 \leq i < n$, as well as $\prob(g, g) = 1$. 
	\end{definition}
	
	\begin{figure}[t!]
		\centering
		\resizebox{!}{0.06\textheight}{
		\begin{tikzpicture}[->,>=stealth',shorten >=1pt,auto, semithick]
			\tikzstyle{every state} = [text = black]
			
			\node[state] (s0) [fill = yellow] {$s_0$}; 
			\node[state] (s1) [fill = yellow, right of = s0, node distance = 2cm] {$s_1$}; 
			\node[state] (s2) [fill = yellow, right of = s1, node distance = 2cm] {$s_2$}; 
			\node[state] (s3) [fill = yellow, right of = s2, node distance = 2cm] {$s_3$}; 
			\node[state] (g) [fill = green, right of = s3, node distance = 2cm] {$g$}; 
			
			\node[] (sinit) [left of = s0, node distance = 1.1cm] {}; 
			
			\path 
			(sinit) edge (s0)
			(s0) edge node {$1$} (s1)
			(s1) edge node {$1$} (s2)
			(s2) edge node {$1$} (s3)
			(s3) edge node {$1$} (g)
			(g) edge [loop right] node {$1$} (g)
			;
			
			%label
			\node [below of = s0, node distance = 0.65cm] {$\{a\}, 1$};
			\node [below of = s1, node distance = 0.65cm] {$\{a\}, 1$};
			\node [below of = s2, node distance = 0.65cm] {$\{a\}, 1$};
			\node [below of = s3, node distance = 0.65cm] {$\{a\}, 1$};
			\node [below of = g, node distance = 0.65cm] {$\{g\}, 1$};
		\end{tikzpicture}}
		\caption{The Erlang CTMC $\erlang_4$.}
		\label{fig:erlang-4}
	\end{figure}

	$\erlang_4$ is illustrated in \Cref{fig:erlang-4}. Given $\mathcal{M}$ and $c \in \mathbb{R}_{> 0}$, let $c \cdot \mathcal{M}$ be the CTMC obtained from $\mathcal{M}$ by multiplying all rates with $c$. Then $\mathcal{M} \sim_{0, \delta} c \cdot \mathcal{M}$ for $\delta \geq \ln(c)$. For an absorbing, uniform CTMC $\mathcal{M}$ it is clear that an acceleration $c \cdot \mathcal{M}$ with $c = e^{\delta} \geq 1$ or a deceleration  $1/c\cdot \mathcal{M}$ induces the maximal possible difference in timed reachability probabilities compared to $\mathcal{M}$ among all $\mathcal{M}'$ with \mbox{$\mathcal{M} \sim_{0, \delta} \mathcal{M}'$}. 
	In the sequel, results are formulated for the comparison of $\mathcal{M}$ with $c\cdot \mathcal{M}$. By switching the roles of $\mathcal{M}$ and $c\cdot \mathcal{M}$, we obtain symmetric results for the deceleration.
	\begin{restatable}{proposition}{PropErrorErlangCTMC}\label{prop:error-erlang-ctmc}
		Let $n \in \mathbb{N}$, $t \geq 0$, and $\erlang'_n = c \cdot \erlang_n$ for some $c \geq 1$. Then 
		\begin{center}
			$\Diff_t(\erlang_n) \coloneqq \vert \probMeasure^{\erlang_n}(\lozenge^{\leq t} g) - \probMeasure^{\erlang'_n}(\lozenge^{\leq t} g) \vert = \sum\limits_{k = 0}^{n-1} \frac{t^k}{k!} \cdot \left(e^{-t} - c^ke^{-ct} \right)$.
		\end{center}
		If $c > 1$, $\Diff_t(\erlang_n)$ has a local maximum at $t = \frac{n \cdot \ln(c)}{c - 1}$ that is global on $[0, \infty)$. 
	\end{restatable}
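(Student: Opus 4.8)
The plan is to compute both reachability probabilities in closed form via the Poisson/Erlang connection, subtract them, and then analyse the resulting function of $t$ by elementary calculus.

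First I would determine $\probMeasure^{\erlang_n}(\reachability^{\leq t} g)$. In $\erlang_n$ every state has rate $1$ and the chain deterministically traverses the path $s_0 \to s_1 \to \dots \to s_{n-1} \to g$, so the number of jumps up to time $t$ is a rate-$1$ Poisson process $N(t)$, and $g$ is reached by time $t$ exactly when at least $n$ jumps have fired, i.e. $N(t) \geq n$ (further jumps are only the self-loop at $g$ and are irrelevant). Since $N(t)$ is Poisson with mean $t$, this yields $\probMeasure^{\erlang_n}(\reachability^{\leq t} g) = 1 - \sum_{k=0}^{n-1}\frac{t^k}{k!}e^{-t}$; formally this is also the value of $\pi_t^{\erlang_n}(g)$ obtained from the uniformization method with $\uniformizationRate = 1$. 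The accelerated chain $\erlang'_n = c\cdot\erlang_n$ is identical except that all rates equal $c$, so the counting process has mean $ct$ and $\probMeasure^{\erlang'_n}(\reachability^{\leq t} g) = 1 - \sum_{k=0}^{n-1}\frac{(ct)^k}{k!}e^{-ct}$. Subtracting, the two constants $1$ cancel and I obtain $\probMeasure^{\erlang'_n}(\reachability^{\leq t} g) - \probMeasure^{\erlang_n}(\reachability^{\leq t} g) = \sum_{k=0}^{n-1}\frac{t^k}{k!}\bigl(e^{-t} - c^k e^{-ct}\bigr)$, which is exactly the claimed expression.

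To show that this is already the absolute difference (and so fix the sign inside $\vert\cdot\vert$) I would use a coupling: realising both chains with the same exponentially distributed waiting times $W_1,\dots,W_n$ of rate $1$, the hitting time of $g$ is $\sum_i W_i$ in $\erlang_n$ and $\tfrac1c\sum_i W_i$ in $\erlang'_n$. As $c \geq 1$ the latter is pointwise no larger, so whenever $\erlang_n$ has reached $g$ by time $t$ so has $\erlang'_n$; hence $\probMeasure^{\erlang'_n}(\reachability^{\leq t} g) \geq \probMeasure^{\erlang_n}(\reachability^{\leq t} g)$ and the displayed difference is nonnegative, equal to $\Diff_t(\erlang_n)$.

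For the location of the maximum I set $f(t) = \Diff_t(\erlang_n)$ and differentiate. The key computational step is that the survival function $h_\lambda(t) = \sum_{k=0}^{n-1}\frac{(\lambda t)^k}{k!}e^{-\lambda t}$ has a telescoping derivative, $h_\lambda'(t) = -\frac{\lambda^n t^{n-1}}{(n-1)!}e^{-\lambda t}$, since the intermediate terms cancel in consecutive pairs. As $f(t) = h_1(t) - h_c(t)$, this gives the single-term form $f'(t) = \frac{t^{n-1}}{(n-1)!}\bigl(c^n e^{-ct} - e^{-t}\bigr)$. For $t>0$ the sign of $f'(t)$ is that of $c^n e^{-ct} - e^{-t} = e^{-t}\bigl(c^n e^{-(c-1)t} - 1\bigr)$; since $c>1$ the factor $c^n e^{-(c-1)t}$ strictly decreases from $c^n>1$ at $t=0$ to $0$ as $t\to\infty$, so the bracket is positive then negative with a unique sign change where $c^n e^{-(c-1)t} = 1$, i.e. $n\ln c = (c-1)t$, i.e. $t = \frac{n\ln c}{c-1}$. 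Thus $f$ strictly increases up to this point and strictly decreases afterwards, so it is the unique interior critical point and a local maximum; together with $f(0)=0$, $f(t)\to 0$ as $t\to\infty$, and $f\geq 0$ throughout, it is the global maximum on $[0,\infty)$. The computations are essentially routine; the step needing the most care is the telescoping differentiation of $h_\lambda$, which collapses $f'$ to a single term and is what permits solving the critical-point equation in closed form, while the coupling is the cleanest way to remove the absolute value.
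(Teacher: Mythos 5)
Your proof is correct, but it reaches the closed form by a genuinely more direct route than the paper. The paper proves the identity by induction on $n$: it first establishes the decomposition $\probMeasure^{\erlang'_n}(\lozenge^{\leq t}g) = \probMeasure^{\erlang_n}(\lozenge^{\leq t}g) + \sum_{i=0}^{n-1}\pi_t(s_0,s_i)\cdot \pi_{t(c-1)}(s_i,g)$ --- i.e., it reads $\Diff_t(\erlang_n)$ as the probability of entering $g$ during the interval $(t, c\cdot t]$ --- then substitutes the Poisson formulas for the transient probabilities and collapses the resulting double sum using a separately proved combinatorial identity (\Cref{lem:binom-coeff}) together with the induction hypothesis, treating $t=0$ and $c=1$ as separate cases. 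You instead observe that the hitting time of $g$ is Erlang-distributed, so both reachability probabilities are explicit CDFs, $1-\sum_{k<n}\frac{t^k}{k!}e^{-t}$ and $1-\sum_{k<n}\frac{(ct)^k}{k!}e^{-ct}$, and the identity follows by subtraction, with the edge cases handled uniformly by the same formulas; your coupling (scaling the same rate-$1$ waiting times by $1/c$) settles the sign, where the paper gets nonnegativity implicitly because its expression for the difference is itself a probability. What your shortcut forgoes is precisely the interval-entry interpretation $\Diff_t(\erlang_n)=\probMeasure^{\erlang_n}(\lozenge^{(t,\,c\cdot t]}g)$, which the paper's proof sets up deliberately and then reuses in the proof of \Cref{thm:exact-computation-error-0-delta}, where conditioning on discrete paths reduces a general uniform chain to exactly this Erlang quantity; following your route, that interpretation would have to be re-derived there. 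For the maximum, the two arguments essentially coincide: your telescoping derivative yields the same single-term expression $\frac{t^{n-1}}{(n-1)!}\left(c^ne^{-ct}-e^{-t}\right)$ as the paper's term-by-term differentiation of the sum, and your increasing-then-decreasing sign analysis is the same globality argument the paper makes, rendering the paper's additional second-derivative check unnecessary.
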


	\begin{remark}\label{rem:arbitrary-rate}
		We could have also allowed arbitrary (but uniform) rates for $\erlang_n$ in \Cref{def:erlang-ctmc}, i.e., $\rate(s) = r$ for all $s \in \stateSpace$ and some $r > 0$. In this case,
		\begin{center}
			$\Diff_t(\erlang_n) = \sum\limits_{k=0}^{n-1} \frac{(r\cdot t)^k}{k!} \cdot \left(e^{-r\cdot t} - c^{k}e^{-c \cdot r\cdot t}\right)$, 
		\end{center}
		with a (local) maximum at $t^* = \frac{n \cdot \ln(c)}{r \cdot (c-1)}$ for $c > 1$. Note that $\Diff_{t^*}(\erlang_n)$ does not depend on $r$. Moreover, the assumption that $r = 1$ can be made w.l.o.g., as in the case of $r \neq 1$ the time-bounded reachability probabilities until $t$ are the same as those of the corresponding CTMC with uniform exit rate $1$ until $r \cdot t$. 
	\end{remark}
	
	\Cref{prop:error-erlang-ctmc} induces a way to exactly compute the absolute difference in timed reachability probabilities of a CTMC $\mathcal{M}$ and its acceleration $c \cdot \mathcal{M}$. 
	\begin{restatable}{theorem}{ThmExactBoundZeroDelta}\label{thm:exact-computation-error-0-delta}
		Let $\mathcal{M}$ be a CTMC with $\rate(s) = 1$ for all $s \in \stateSpace$, and let $g$ be a unique absorbing goal state. Let $\delta > 0$, $c = e^\delta > 1$, and $\mathcal{M}' = c \cdot \mathcal{M}$. Then \begin{center}
			$\Diff_t(\mathcal{M}) = \vert \probMeasure^{\mathcal{M}'}(\lozenge^{\leq t} g) - \probMeasure^{\mathcal{M}}(\lozenge^{\leq t}g) \vert = \sum\limits_{n=1}^{\infty} p_n \cdot \Diff_t(\erlang_n), $
		\end{center}
		where $p_n$ denotes the probability to reach $g$ after exactly $n$ (discrete) steps. 
	\end{restatable}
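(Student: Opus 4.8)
The plan is to exploit that $\mathcal{M}$ is uniform with exit rate $1$, so that the uniformization of $\mathcal{M}$ with rate $\uniformizationRate = 1$ and the uniformization of $\mathcal{M}' = c \cdot \mathcal{M}$ with rate $\uniformizationRate = c$ both produce exactly the embedded DTMC, i.e.\ $\unifProbMatrix = \probMatrix$ in either case (multiplying all rates by $c$ leaves the transition probabilities, and hence the uniformized chain, unchanged). Since $g$ is absorbing, reaching $g$ within time $t$ coincides with residing in $g$ at time $t$, so I would write the timed reachability probability as a transient probability and expand it with the uniformization series from the preliminaries:
\[
\probMeasure^{\mathcal{M}}(\reachability^{\leq t} g) = \sum_{k=0}^{\infty} e^{-t}\,\frac{t^k}{k!}\,(\probMatrix^k)_{\initialState, g},
\qquad
\probMeasure^{\mathcal{M}'}(\reachability^{\leq t} g) = \sum_{k=0}^{\infty} e^{-ct}\,\frac{(ct)^k}{k!}\,(\probMatrix^k)_{\initialState, g}.
\]

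Next I would rewrite the matrix entries via the first-passage probabilities $p_n$. Because $g$ is absorbing in the embedded DTMC, being in $g$ after $k$ steps is the same as having first hit $g$ at some step $n \leq k$, whence $(\probMatrix^k)_{\initialState, g} = \sum_{n=1}^{k} p_n$. Substituting this and interchanging the two (nonnegative, hence by Tonelli absolutely convergent) summations yields
\[
\probMeasure^{\mathcal{M}}(\reachability^{\leq t} g) = \sum_{n=1}^{\infty} p_n \sum_{k=n}^{\infty} e^{-t}\,\frac{t^k}{k!}
= \sum_{n=1}^{\infty} p_n \left(1 - \sum_{k=0}^{n-1} e^{-t}\,\frac{t^k}{k!}\right),
\]
and the inner Poisson tail is precisely $\probMeasure^{\erlang_n}(\reachability^{\leq t} g)$, the probability that a rate-$1$ Erlang-$n$ passage time is at most $t$; intuitively, conditioned on reaching $g$ in exactly $n$ uniform jumps, the passage time is Erlang-$(n,1)$ distributed, exactly as in $\erlang_n$. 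The identical computation with $ct$ in place of $t$ gives $\probMeasure^{\mathcal{M}'}(\reachability^{\leq t} g) = \sum_{n=1}^{\infty} p_n\, \probMeasure^{\erlang'_n}(\reachability^{\leq t} g)$ with $\erlang'_n = c \cdot \erlang_n$.

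Finally I would subtract the two series, factor out the common $p_n$, and note that since $c > 1$ each summand satisfies $\probMeasure^{\erlang'_n}(\reachability^{\leq t} g) - \probMeasure^{\erlang_n}(\reachability^{\leq t} g) \geq 0$ (acceleration can only reach $g$ sooner; this is exactly the sign-definite quantity $\Diff_t(\erlang_n)$ established in \Cref{prop:error-erlang-ctmc}). This uniform sign is what lets me pull the absolute value inside the sum, delivering $\Diff_t(\mathcal{M}) = \sum_{n=1}^{\infty} p_n\,\Diff_t(\erlang_n)$. The only genuinely delicate point is the interchange of the two infinite sums and the distribution of the absolute value; both are handled by the nonnegativity of all terms (Tonelli for the swap, and the common sign of the per-$n$ differences for the absolute value), so no further convergence bookkeeping is required. \qed
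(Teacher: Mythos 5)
Your proposal is correct, and it takes a genuinely different route from the paper. The paper first observes that, by time-scaling, $\Diff_t(\mathcal{M}) = \probMeasure^{\mathcal{M}}(\lozenge^{(t, c\cdot t]}\, g)$, and then decomposes this interval probability over the untimed first-passage paths $\Pi_n$ using Bayes' rule, arguing that conditioned on following a fixed untimed path of length $n$ the entry time into $g$ is distributed exactly as in $\erlang_n$, so that $\probMeasure^{\mathcal{M}}(\mathrm{Traj}^*(\pi)\mid\pi) = \Diff_t(\erlang_n)$ and summing the path probabilities gives $p_n$. You instead work analytically: you use the uniformization series (valid here with $\uniformizationRate = 1$ resp.\ $\uniformizationRate = c$, under which both chains uniformize to the same embedded DTMC $\probMatrix$), decompose $(\probMatrix^k)_{\initialState,g}$ into first-passage probabilities, swap the sums by Tonelli, and recognize the Poisson tail as the Erlang CDF. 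This buys you a stronger intermediate statement than the paper proves, namely the mixture representations $\probMeasure^{\mathcal{M}}(\reachability^{\leq t} g) = \sum_n p_n\,\probMeasure^{\erlang_n}(\reachability^{\leq t} g)$ and likewise for $\mathcal{M}'$, from which the theorem follows termwise using the common sign of the per-$n$ differences; it also replaces the paper's somewhat informal conditioning argument ("conditioning on $\pi$ dictates a path, so no probabilistic choices remain") with a purely algebraic identity. What the paper's route buys in exchange is the conceptually useful reformulation of $\Diff_t(\mathcal{M})$ as the probability of entering $g$ during $(t, c\cdot t]$. One cosmetic point: your identity $(\probMatrix^k)_{\initialState,g} = \sum_{n=1}^{k} p_n$ presumes $\initialState \neq g$; the degenerate case $\initialState = g$ (where both reachability probabilities equal $1$ and all $p_n$ with $n \geq 1$ vanish) should be dispatched separately in one line, as the paper does.
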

	\paragraph{Proof sketch.} Let $\Pi_n$ contain all finite paths entering $g$ after exactly $n$ steps, and let $\mathrm{Traj}^*(\pi)$ contain all timed versions of $\pi \in \Pi_n$ that enter $g$ in $(t, c \cdot t]$. Then $\Diff_t(\mathcal{M}) = \sum_{n=0}^{\infty} \sum_{\pi \in \Pi_n} \probMeasure^\mathcal{M}(\mathrm{Traj}^*(\pi)) = \sum_{n=0}^{\infty} \sum_{\pi \in \Pi_n} \mathrm{Pr}^\mathcal{M}(\mathrm{Traj}^*(\pi) \mid \pi) \cdot \mathrm{Pr}^\mathcal{M}(\pi)$ by Bayes' rule. As $\probMeasure^\mathcal{M}(\mathrm{Traj}^*(\pi) \mid \pi) = \Diff_t(\erlang_n)$ for all $\pi \in \Pi_n$, $\Diff_t(\erlang_0) = 0$ for all $t$, and $\sum_{\pi \in \Pi_n} \probMeasure^\mathcal{M}(\pi) = p_n$, the claim follows. \qed
	
	\medskip 
	
	In \Cref{thm:exact-computation-error-0-delta} (and all other results in this section) we could have also used any (uniform) exit rate $r > 0$ for the states of $\mathcal{M}$ since, as described in \Cref{rem:arbitrary-rate}, we can model different rates by scaling the time bound $t$.
	
	Next we show how to obtain, for given $t$ and $\delta$, a maximal $N$ (that depends on $t, \delta$) such that $\Diff_t(\mathcal{M}) \leq \Diff_t(\erlang_N)$. Using the explicit form of $\Diff_t(\erlang_N)$ from \Cref{prop:error-erlang-ctmc}, this yields an easy-to-compute upper bound for $\Diff_t(\mathcal{M})$. 
	\begin{restatable}{proposition}{PropImproveBoundZeroDeltaViaMaxErlang}\label{prop:erlang-bound-0-delta}
		Let $N = \left \lceil \frac{(e^{\delta} - 1) \cdot t}{\delta} \right \rceil$. Under the conditions of \Cref{thm:exact-computation-error-0-delta},
		\begin{center}
			$\Diff_t(\mathcal{M}) \leq \Diff_t(\erlang_N) = \sum\limits_{k=0}^{N-1} \frac{t^k}{k!} \cdot \left(e^{-t} - c^ke^{-ct} \right).$
		\end{center}
	\end{restatable}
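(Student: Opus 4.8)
The plan is to combine the exact series representation from \Cref{thm:exact-computation-error-0-delta} with a discrete maximisation of the Erlang errors over the index $n$. Concretely, starting from $\Diff_t(\mathcal{M}) = \sum_{n=1}^{\infty} p_n \cdot \Diff_t(\erlang_n)$, I reduce the statement to the single claim that, for the fixed $t$ and $c = e^\delta$ under consideration, the index $N$ maximises $\Diff_t(\erlang_n)$ among all $n \geq 0$. Once this is established, nonnegativity of every $\Diff_t(\erlang_n)$ together with $\sum_{n} p_n \leq 1$ (the $p_n$ are probabilities of the disjoint events of first entering $g$ at step $n$, so they sum to the reachability probability of $g$, which is at most $1$) yields
\[
\Diff_t(\mathcal{M}) = \sum_{n=1}^{\infty} p_n \, \Diff_t(\erlang_n) \leq \Diff_t(\erlang_N) \sum_{n=1}^{\infty} p_n \leq \Diff_t(\erlang_N),
\]
and substituting the closed form of $\Diff_t(\erlang_N)$ from \Cref{prop:error-erlang-ctmc} gives exactly the asserted right-hand side.

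The heart of the argument is the discrete monotonicity of $n \mapsto \Diff_t(\erlang_n)$. Using the explicit sum from \Cref{prop:error-erlang-ctmc}, consecutive errors telescope to
\[
\Diff_t(\erlang_{n+1}) - \Diff_t(\erlang_n) = \frac{t^n}{n!}\left(e^{-t} - c^n e^{-ct}\right),
\]
so this increment is nonnegative precisely when $e^{-t} \geq c^n e^{-ct}$, i.e.\ when $n \ln(c) \leq t(c-1)$, i.e.\ when $n \leq \frac{(c-1)t}{\ln(c)} = \frac{(e^\delta - 1)t}{\delta} =: M$. Hence the sequence is nondecreasing for $n \leq M$ and nonincreasing for $n \geq M$, so its maximum over the nonnegative integers is attained at $n = \lceil M \rceil = N$.

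I expect the only delicate point to be confirming that the argmax sits at the ceiling $\lceil M \rceil$ rather than the floor, and handling the tie when $M \in \mathbb{N}$: in that boundary case the increment at $n = M$ vanishes (since $c^M e^{-ct} = e^{M\ln(c) - ct} = e^{-t}$), so both $N = M$ and $N+1$ attain the maximum and the choice $N$ remains valid. The sign analysis of the telescoped increment settles this cleanly; the remaining steps are purely the bookkeeping indicated above.
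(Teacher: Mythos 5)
Your proposal is correct and follows essentially the same route as the paper's proof: both start from the series $\Diff_t(\mathcal{M}) = \sum_n p_n \cdot \Diff_t(\erlang_n)$, use $p_n \geq 0$ and $\sum_n p_n \leq 1$ to reduce to maximising $\Diff_t(\erlang_n)$ over $n$, and establish that the maximum sits at $N = \lceil (e^\delta - 1)t/\delta \rceil$ via the sign of the telescoped increment $\frac{t^n}{n!}\left(e^{-t} - c^n e^{-ct}\right)$. Your explicit handling of the tie when $(e^\delta-1)t/\delta$ is an integer is a clean touch that the paper treats implicitly through its two-sided case distinction, but the substance is identical.
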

	
	The bound of \Cref{prop:erlang-bound-0-delta} is usually tighter than that for $(0, \delta)$-bisimilar states in \Cref{cor:transient-prob-bounds-epsilon-or-delta-0}, since it does not converge to $1$ exponentially fast for increasing $t$. See \Cref{fig:4-state-chain,fig:example-bounds-diag-queue} for some examples of the behavior of the bound. 
	
	\begin{remark}\label{rem:adjustment-of-bound-via-Markov-inequality}
		Let $\mathcal{M}$ be as in \Cref{thm:exact-computation-error-0-delta} and let $X$ be the random variable that describes the probability to reach $g$ for the first time after exactly $n$ steps. Then $p_n \leq \probMeasure(X \leq n)$, so the Markov inequality implies $p_n \leq \frac{\mathbb{E}(X)}{n}$ for every $n$, where $\mathbb{E}(X)$ is the expected value of $X$. Together with \Cref{thm:exact-computation-error-0-delta} this yields
		\begin{center}
			$\Diff_t(\mathcal{M}) = \sum\limits_{n=1}^{\infty} p_n \cdot \Diff_t(\erlang_n) \leq 
			\mathbb{E}(X) \cdot \sum\limits_{n=1}^{\infty} \frac{1}{n} \cdot \Diff_t(\erlang_n).$ 		\end{center}
		This bound heavily depends on $\mathbb{E}(X)$, i.e., on the expected number of steps until reaching $g$. If $\mathbb{E}(X)$ is small the bound can be tighter than those of \Cref{prop:transient-prob-bounds-epsilon-delta-greater-0,prop:erlang-bound-0-delta}, while it can quickly become trivial if $\mathbb{E}(X)$ is large. 
	\end{remark}
	
	Next, we present  a way to compute the values of $p_n$ based on a spectral decomposition of the transition probability matrix $\probMatrix$ of CTMC $\mathcal{M}$. We start with the case of a  diagonalizable $\probMatrix$, and afterwards deal with arbitrary $\probMatrix$ by using the Jordan canonical form \cite{MAALA}. Moreover, we use the so obtained explicit formulas for $p_n$ to obtain upper bounds for $\Diff_t(\mathcal{M})$. In contrast to the results so far, these bounds converge to $0$ for $t \to \infty$, just like the actual difference does. 
	
	The transition probability matrix $\probMatrix$ of $\mathcal{M}$ is diagonalizable if there is a diagonal matrix $\diagonalMatrix$, with diagonal elements corresponding to the eigenvalues of $\probMatrix$ (repeated according to their multiplicities), and a regular matrix $\symmetryMatrix$ such that $\probMatrix = \symmetryMatrix \diagonalMatrix \symmetryMatrix^{-1}$ \cite{MAALA}. It is well-known that every eigenvalue $\lambda_i$ of a stochastic matrix satisfies $\vert \lambda_i \vert \leq 1$, and that $\lambda_1 = 1$ is an eigenvalue of any such matrix. 
	As we assume that an absorbing state is reached almost surely, $1$ is the only eigenvalue of modulus $1$, and it has a multiplicity $a_\probMatrix$ of at most $2$ (as there are at most two absorbing states in $\mathcal{M}$).
	Hence, we can w.l.o.g. assume that for the $m$ distinct eigenvalues of $\probMatrix$ we have $\lambda_1 = 1 > \vert \lambda_2 \vert \geq \ldots \geq \vert \lambda_m \vert$, and that the diagonal of $\diagonalMatrix$ is in descending order w.r.t. the absolute values of these eigenvalues. From now on we denote the second largest (absolute value wise) eigenvalue of $\probMatrix$ by $\lambda$.

	\begin{restatable}[\cite{MCASAGE}]{proposition}{PropExactComputationPnDiag}\label{prop:exact-computation-pn-diag}
		Let $\mathcal{M}$ be a CTMC such that $\probMatrix = \symmetryMatrix \diagonalMatrix\symmetryMatrix^{-1}$ is diagonalizable. Let $k \in \mathbb{N}$ and $n = \vert \stateSpace \vert$. Then $p_{k+1} = \sum_{j=a_\probMatrix+1}^{n} \symmetryMatrix_{1, j} \cdot \symmetryMatrix^{-1}_{j, n} \cdot (\lambda_j -1) \cdot \lambda_j^k.$
	\end{restatable}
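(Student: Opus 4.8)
The plan is to express $p_{k+1}$ directly in terms of powers of $\probMatrix$ and then substitute the diagonalization. I index the states so that $\initialState$ is state $1$ and the goal $g$ is state $n$, matching the appearance of $\symmetryMatrix_{1,j}$ and $\symmetryMatrix^{-1}_{j,n}$ in the claim. Since $g$ is absorbing, being in $g$ at discrete step $m$ is equivalent to having reached $g$ within $m$ steps, so the probability of having reached $g$ within $m$ steps from $\initialState$ is exactly the entry $(\probMatrix^m)_{1,n}$, and this quantity is non-decreasing in $m$. The probability $p_{k+1}$ of reaching $g$ \emph{for the first time} at step $k+1$ is therefore the telescoping difference
\begin{center}
$p_{k+1} = (\probMatrix^{k+1})_{1,n} - (\probMatrix^{k})_{1,n}.$
\end{center}
This first-passage identity is the conceptual heart of the argument; the rest is linear algebra.

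First I would substitute $\probMatrix^m = \symmetryMatrix \diagonalMatrix^m \symmetryMatrix^{-1}$, using that $\diagonalMatrix^m$ is diagonal with entries $\lambda_j^m$, where $\lambda_j$ denotes the $j$-th diagonal entry of $\diagonalMatrix$ (eigenvalues repeated according to multiplicity). Reading off the $(1,n)$-entry of this triple product gives
\begin{center}
$(\probMatrix^m)_{1,n} = \sum_{j=1}^{n} \symmetryMatrix_{1,j}\, \lambda_j^m\, \symmetryMatrix^{-1}_{j,n}.$
\end{center}
Inserting $m = k+1$ and $m = k$ into the telescoping formula and factoring $\lambda_j^{k+1} - \lambda_j^k = (\lambda_j - 1)\lambda_j^k$ yields
\begin{center}
$p_{k+1} = \sum_{j=1}^{n} \symmetryMatrix_{1,j}\, \symmetryMatrix^{-1}_{j,n}\, (\lambda_j - 1)\, \lambda_j^{k}.$
\end{center}

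Finally I would discard the vanishing terms. By the standing assumptions, the eigenvalue $1$ occupies exactly the first $a_\probMatrix$ diagonal positions of $\diagonalMatrix$, so $\lambda_j = 1$ and hence $\lambda_j - 1 = 0$ for $1 \leq j \leq a_\probMatrix$, while all remaining eigenvalues satisfy $\vert \lambda_j \vert < 1$. Dropping the first $a_\probMatrix$ summands leaves
\begin{center}
$p_{k+1} = \sum_{j=a_\probMatrix+1}^{n} \symmetryMatrix_{1,j}\, \symmetryMatrix^{-1}_{j,n}\, (\lambda_j - 1)\, \lambda_j^{k},$
\end{center}
which is the claim. The only point requiring real care is the justification of the telescoping identity, namely that $g$ being absorbing makes $(\probMatrix^m)_{1,n}$ the cumulative first-passage probability, together with the bookkeeping that the multiplicity-$a_\probMatrix$ eigenvalue $1$ is precisely what the lower summation index removes. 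Complex eigenvalues pose no obstacle, since the displayed identity holds over $\mathbb{C}$ and the (necessarily real) value of $p_{k+1}$ is recovered by pairing conjugate contributions.
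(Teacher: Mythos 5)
Your proposal is correct and follows essentially the same route as the paper: both identify $p_{k+1}$ with the $(1,n)$-entry of $\probMatrix^{k+1}-\probMatrix^{k}$ (valid because $g$ is absorbing), substitute $\probMatrix^m = \symmetryMatrix\diagonalMatrix^m\symmetryMatrix^{-1}$, factor $\lambda_j^{k+1}-\lambda_j^k = (\lambda_j-1)\lambda_j^k$, and drop the $a_\probMatrix$ terms with eigenvalue $1$. The only cosmetic difference is that the paper treats $a_\probMatrix = 1$ and $a_\probMatrix = 2$ in two explicit cases while you handle them uniformly, which is a harmless simplification.
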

	
	Combining \Cref{thm:exact-computation-error-0-delta} and \Cref{prop:exact-computation-pn-diag} yields a new bound for $\Diff_t(\mathcal{M})$. \unskip
	\begin{restatable}{proposition}{BoundPnDiag}\label{prop:improved-bounds-0delta-with-ev-diagonalizable}
		Under the conditions of \Cref{thm:exact-computation-error-0-delta} we have 
		\begin{center}
			$\Diff_t(\mathcal{M}) \leq (n-a_\probMatrix) \cdot C \cdot \sum_{k=1}^{\infty} \vert \lambda \vert^{k-1} \cdot \Diff_t(\erlang_k),$
		\end{center}
		where $n = \vert \stateSpace \vert$ and $C =  \max_{i = a_\probMatrix+1, \ldots, n} \vert \symmetryMatrix_{1,i} \cdot \symmetryMatrix^{-1}_{i,n} \cdot (\lambda_i - 1) \vert$. 
	\end{restatable}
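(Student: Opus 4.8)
The plan is to feed the spectral formula for the step-distribution probabilities $p_k$ from \Cref{prop:exact-computation-pn-diag} into the exact series representation of $\Diff_t(\mathcal{M})$ from \Cref{thm:exact-computation-error-0-delta}, and then to bound the resulting expression term by term. By \Cref{thm:exact-computation-error-0-delta} we have $\Diff_t(\mathcal{M}) = \sum_{k=1}^\infty p_k \cdot \Diff_t(\erlang_k)$, a sum of nonnegative terms since both $p_k \geq 0$ and $\Diff_t(\erlang_k) \geq 0$. Hence it suffices to bound each $p_k$ individually and sum the resulting estimates; because every summand stays nonnegative, no delicate rearrangement of the (convergent) series is required.

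First I would rewrite the spectral formula of \Cref{prop:exact-computation-pn-diag} with a shifted index, so that for $k \geq 1$
\begin{center}
$p_k = \sum\limits_{j = a_\probMatrix + 1}^{n} \symmetryMatrix_{1,j} \cdot \symmetryMatrix^{-1}_{j,n} \cdot (\lambda_j - 1) \cdot \lambda_j^{\,k-1}.$
\end{center}
Since $p_k$ is a real probability it equals its own modulus, so applying the triangle inequality gives $p_k \leq \sum_{j=a_\probMatrix+1}^n \vert \symmetryMatrix_{1,j} \cdot \symmetryMatrix^{-1}_{j,n} \cdot (\lambda_j-1)\vert \cdot \vert \lambda_j\vert^{k-1}$. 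For each index $j$ in this range the diagonal entry $\lambda_j$ is an eigenvalue of modulus at most $\vert\lambda\vert$ (the diagonal of $\diagonalMatrix$ is ordered by descending modulus, and the first $a_\probMatrix$ positions already exhaust the eigenvalue $1$), so $\vert\lambda_j\vert^{k-1} \leq \vert\lambda\vert^{k-1}$; moreover each coefficient $\vert \symmetryMatrix_{1,j}\cdot\symmetryMatrix^{-1}_{j,n}\cdot(\lambda_j-1)\vert$ is bounded by $C$ by definition of $C$. As there are exactly $n - a_\probMatrix$ summands, this yields $p_k \leq (n - a_\probMatrix)\cdot C\cdot \vert\lambda\vert^{k-1}$.

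Substituting this estimate back into the series and pulling the constant factor out gives
\begin{center}
$\Diff_t(\mathcal{M}) = \sum\limits_{k=1}^{\infty} p_k \cdot \Diff_t(\erlang_k) \leq (n - a_\probMatrix)\cdot C\cdot \sum\limits_{k=1}^{\infty} \vert\lambda\vert^{k-1}\cdot\Diff_t(\erlang_k),$
\end{center}
which is exactly the claimed bound; convergence of the right-hand series is immediate from $\Diff_t(\erlang_k) \leq 1$ and $\vert\lambda\vert < 1$. I expect the only genuinely delicate point to be the bookkeeping around the index shift together with the justification that $\vert\lambda_j\vert \leq \vert\lambda\vert$ on the summation range $j \in \{a_\probMatrix+1,\dots,n\}$ --- i.e., that removing the first $a_\probMatrix$ diagonal positions removes precisely the eigenvalue $1$ (and its second copy when $a_\probMatrix = 2$), which rests on the ordering convention and the almost-sure-absorption assumption fixed before \Cref{prop:exact-computation-pn-diag}. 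Everything else is a routine term-by-term comparison of nonnegative series.
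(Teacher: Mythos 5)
Your proposal is correct and follows essentially the same route as the paper's proof: both feed the spectral formula of \Cref{prop:exact-computation-pn-diag} into the series of \Cref{thm:exact-computation-error-0-delta}, apply the triangle inequality, bound each coefficient by $C$ and each $\vert\lambda_j\vert$ by $\vert\lambda\vert$, and sum term by term. The only cosmetic difference is that you treat the index range $j = a_\probMatrix+1,\ldots,n$ uniformly for both $a_\probMatrix = 1$ and $a_\probMatrix = 2$, whereas the paper spells out the case $a_\probMatrix = 1$ and remarks that the other case is analogous.
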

	\paragraph{Proof sketch.} By \Cref{prop:exact-computation-pn-diag}, $p_{k+1} = \sum_{j=a_\probMatrix+1}^{n} \symmetryMatrix_{1, j} \cdot \symmetryMatrix^{-1}_{j, n} \cdot (\lambda_j -1) \cdot \lambda_j^k$, so by the triangle inequality $p_{k+1} \leq \sum_{j=a_\probMatrix+1}^{n} \vert \symmetryMatrix_{1, j} \cdot \symmetryMatrix^{-1}_{j, n} \cdot (\lambda_j -1) \vert \cdot \vert \lambda_j \vert^k $. Because $\vert \lambda_j \vert \leq \vert \lambda \vert$ for all $j > a_\probMatrix$ we get $p_{k+1} \leq \vert \lambda \vert^{k} \cdot \sum_{j=a_\probMatrix+1}^{n} C = \vert \lambda \vert^{k} \cdot (n-a_\probMatrix) \cdot C$ for $C = \max_{i = a_\probMatrix+1, \ldots, n} \vert \symmetryMatrix_{1,i} \cdot \symmetryMatrix^{-1}_{i,n} \cdot (\lambda_i - 1) \vert$. The claim follows from \Cref{thm:exact-computation-error-0-delta}.  \qed
	
	\medskip
	
	The bound obtained in \Cref{prop:improved-bounds-0delta-with-ev-diagonalizable} is tight: Consider a CTMC $\mathcal{M}$ with two states $s$ and $g$, such that $\rate(s) = \rate(g) = 1$ and $\prob(s, s) = p, \prob(s,g) = 1-p$ for a $p \in (0,1)$. Let $\mathcal{M}' = e^{\delta} \cdot \mathcal{M}$, so $\mathcal{M} \sim_{0, \delta} \mathcal{M}'$. The probability matrix $\probMatrix$ of $\mathcal{M}$ is diagonalizable with eigenvalues $\lambda_1= 1$ and $\lambda_2 = \lambda = p$. Then $p_k =\vert \sum_{j=2}^{2} \symmetryMatrix_{1, j} \cdot \symmetryMatrix^{-1}_{j, 2} \cdot (\lambda_j -1) \cdot \lambda_j^{k-1} \vert = (1-p) \cdot p^{k-1} = (n-1) \cdot C \cdot \lambda^{k-1}$, so here the inequalities in the proof  of \Cref{prop:improved-bounds-0delta-with-ev-diagonalizable} are  equalities. 
	Furthermore, since always $\vert \lambda \vert < 1$, the bounds converge to $0$ for $t \to \infty$, just like $\Diff_t(\mathcal{M})$.
	
	\begin{figure}[t]
		\begin{subfigure}{0.14\textwidth}
			\centering
			\begin{tikzpicture}[ ->,>=stealth',shorten >=1pt,auto, semithick]
				\tikzstyle{every state} = [text = black]
				
				\node[state, scale = 0.7] (s0) [] {$s_0$}; 
				\node[scale = 0.7] (temp) [below of = s0, node distance = 2cm] {};
				\node[state, scale = 0.7] (s1) [left of = temp, node distance = 0.8cm] {$s_1$}; 
				
				\node[state, scale = 0.7] (s2) [right of = temp, node distance = 0.8cm] {$s_2$}; 
				
				\node[state, scale = 0.7] (g) [below of = temp, node distance = 2cm, fill = green] {$g$}; 
				
				\node[, scale = 0.7] (sinit) [left of = s0, node distance = 1.1cm] {}; 
				\node[, scale = 0.7] () [below of = g, node distance = 0.7cm] {}; 
				
				\path 
				(sinit) edge (s0)
				
				(s0) edge [loop right] node [scale = 0.8] {$\frac{1}{4}$} (s0)
				(s0) edge [bend right = 20] node [scale = 0.8, left] {$\frac{1}{4}$} (s1)
				(s0) edge [bend left = 20] node [scale = 0.8, right] {$\frac{1}{2}$} (s2)
				
				(s2) edge node [scale = 0.8, pos = 0.4, above] {$\frac{1}{2}$} (s1)
				(s2) edge [bend left = 20] node [scale = 0.8, right] {$\frac{1}{2}$} (g)
				
				(s1) edge [bend right = 20] node [scale = 0.8, left] {$\frac{1}{2}$} (g)
				(s1) edge [loop left] node [scale = 0.8, below, pos = 0.15] {$\frac{1}{2}$} (s1)
				
				(g) edge [loop left] node [scale = 0.8] {$1$} (g)
				;
				
				%labels
				
				\node[below of = s0, node distance = 0.45cm, scale = 0.8] {$\emptyset, 1$};
				\node[right of = s1, node distance = 0.4cm, scale = 0.8, yshift = -0.4cm] {$\emptyset, 1$};
				\node[right of = s2, node distance = 0.5cm, scale = 0.8] {$\emptyset, 1$};
				\node[right of = g, node distance = 0.65cm, scale = 0.8] {$\{g\}, 1$};
			\end{tikzpicture}
		\end{subfigure}
		\begin{subfigure}{0.85\textwidth}
			
			{\hspace{1.4cm}
				\includegraphics[width=0.83\textwidth]{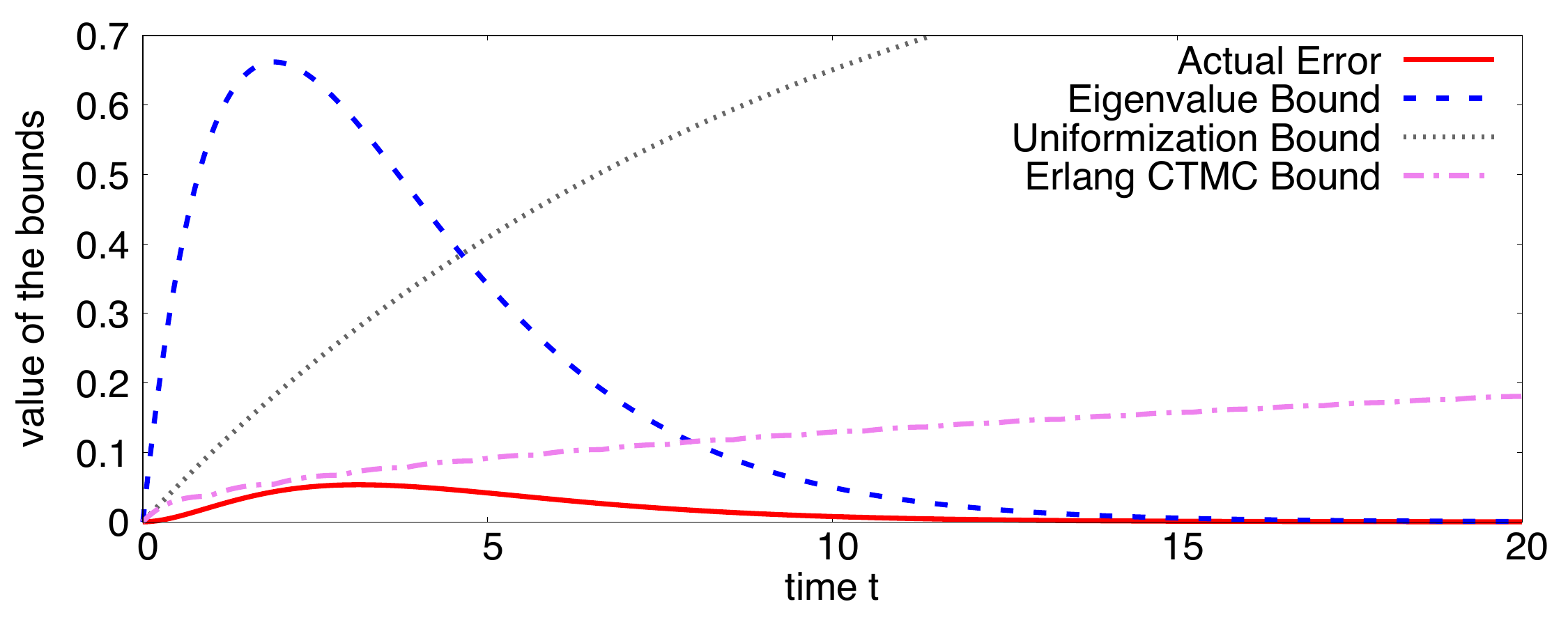}
			}
		\end{subfigure}
		\caption{A CTMC $\mathcal{M}$ with diagonalizable probability matrix (left) and a comparison of the different error bounds for $\Diff_t(\mathcal{M})$.}
		\label{fig:4-state-chain}
	\end{figure}
	\begin{example}\label{ex:4-state-chain}
		Consider the CTMC $\mathcal{M}$ on the left of \Cref{fig:4-state-chain}. The probability matrix $\probMatrix$ of $\mathcal{M}$ is diagonalizable, with second largest eigenvalue $\lambda = \frac{1}{2}$. The graphic on the right of the figure compares the different error bounds for $\Diff_t(\mathcal{M})$ when $\delta = 0.1$. We can observe that the bound from \Cref{prop:transient-prob-bounds-epsilon-delta-greater-0} based on the uniformization method quickly approaches $1$. The bound from \Cref{prop:erlang-bound-0-delta} based on the maximal value of $\Diff_t(\erlang_n)$ performs well if $t$ is small, but becomes imprecise if $t$ grows. On the other hand, the bound from \Cref{prop:improved-bounds-0delta-with-ev-diagonalizable} that utilizes the diagonalization of $\probMatrix$ shows the opposite behavior. It performs bad for small values of $t$ (as here the values of $\Diff(\erlang_k)$ are big for small $k$, for which $\vert \lambda\vert^k$ is not yet close to $0$), while it converges to the actual error with increasing $t$. 
	\end{example}
	
	\begin{figure}[t]
		\begin{subfigure}{0.1\textwidth}
			\centering
			\begin{tikzpicture}[ ->,>=stealth',shorten >=1pt,auto, semithick]
				\tikzstyle{every state} = [text = black]
				
				\node[state, scale = 0.7] (s0) [] {$s_0$}; 
				\node[state, scale = 0.7] (s1) [below of = s0, node distance = 2cm] {$s_1$}; 
				 
				 \node[state, scale = 0.7] (s2) [below of = s1, node distance = 2cm] {$s_2$}; 
				 
				 \node[state, scale = 0.7] (s3) [below of = s2, node distance = 2cm] {$s_3$}; 
				 \node[state, scale = 0.7] (f) [fill = orange, below of = s3, node distance = 2cm] {$f$}; 
				
				\node[, scale = 0.7] (sinit) [left of = s0, node distance = 1.1cm] {}; 
				
				\node[scale = 0.7]  [below of = f, node distance = 1.5cm] {}; 
				
				\path 
				(sinit) edge (s0)
				
				(s0) edge [bend left = 20] node [scale = 0.8] {$1$} (s1)
				
				(s1) edge [bend left = 20] node [scale = 0.8] {$\mu$} (s0)
				(s1) edge [bend left = 20] node [scale = 0.8] {$\tau$} (s2)
				
				(s2) edge [bend left = 20] node [scale = 0.8] {$\mu$} (s1)
				(s2) edge [bend left = 20] node [scale = 0.8] {$\tau$} (s3)
				
				(s3) edge [bend left = 20] node [scale = 0.8] {$\mu$} (s2)
				(s3) edge node [scale = 0.8] {$\tau$} (f)
				
				(f) edge [loop left] node [scale = 0.8] {$1$} (f)
				;
				
				%labels
				
				\node[right of = s0, node distance = 0.55cm, scale = 0.8] {$\emptyset, 1$};
				\node[right of = s1, node distance = 0.55cm, scale = 0.8] {$\emptyset, 1$};
				\node[right of = s2, node distance = 0.55cm, scale = 0.8] {$\emptyset, 1$};
				\node[right of = s3, node distance = 0.55cm, scale = 0.8] {$\emptyset, 1$};
				\node[right of = f, node distance = 0.65cm, scale = 0.8] {$\{f\}, 1$};
			\end{tikzpicture}
			\vfill
		\end{subfigure}
		\begin{subfigure}{0.95\textwidth}
			
			\begin{center}\includegraphics[width=0.85\linewidth]{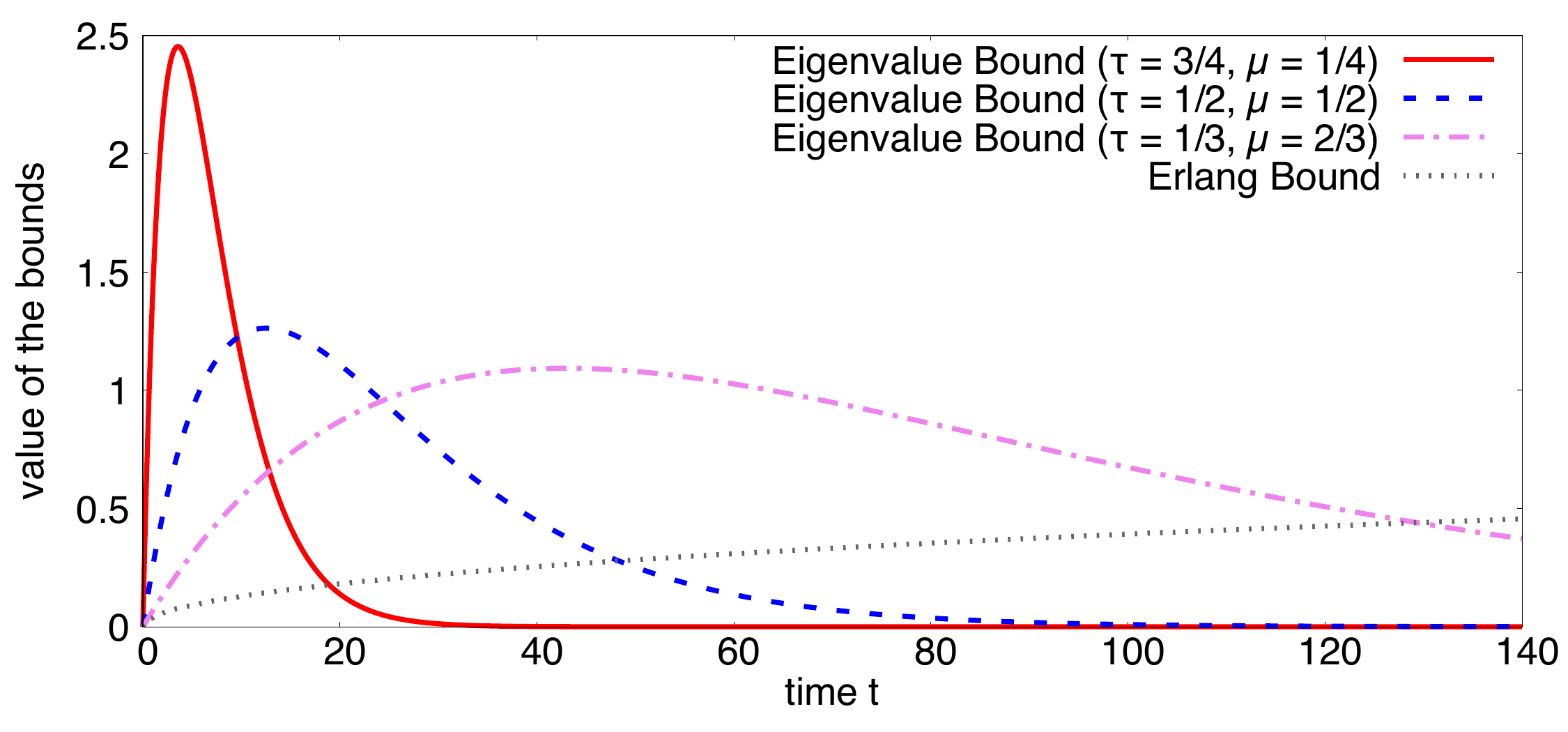}
		
		 \end{center}
			
			\begin{center}\includegraphics[width=0.85\linewidth]{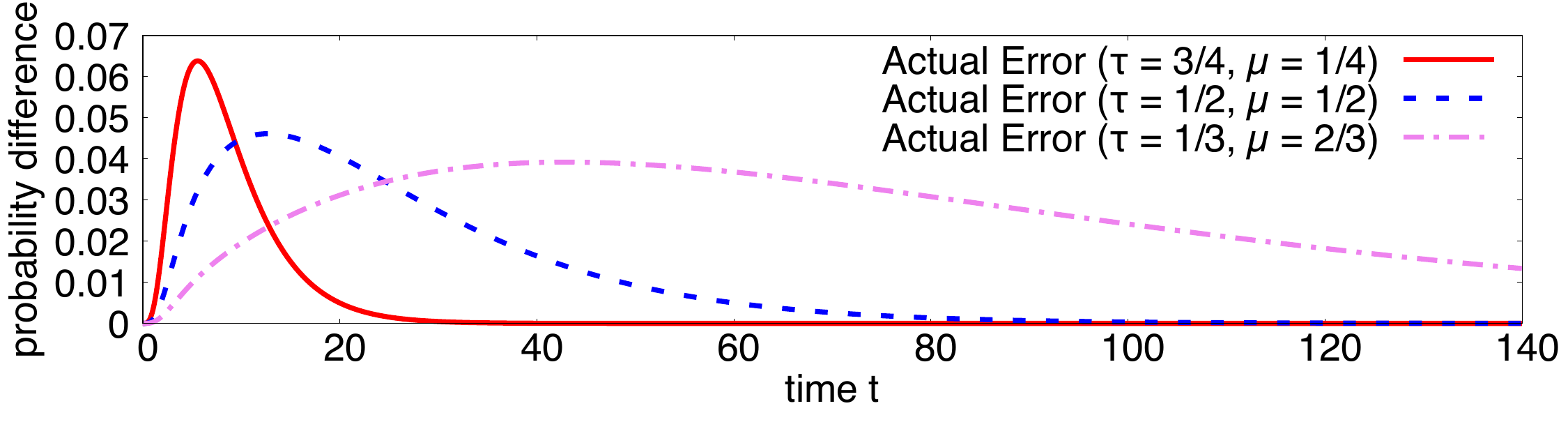} \end{center}
		\end{subfigure}
		\caption{The queue from \Cref{ex:bounds-diag} (left) and the bounds and errors for specific instances (right). The upper graphic depicts the bounds from \Cref{prop:erlang-bound-0-delta,prop:improved-bounds-0delta-with-ev-diagonalizable} for the considered scenarios, the lower one the actual errors.}
		\label{fig:example-bounds-diag-queue}
	\end{figure}
		
\begin{example}\label{ex:bounds-diag}
	Consider a single-server queue with capacity $4$, modeled by the CTMC $\mathcal{M}$ on the left of \Cref{fig:example-bounds-diag-queue}. Tasks are completed with rate $\mu$. If the queue is empty, the next task arrives with rate $1$, and otherwise tasks arrive with rate $\tau$. We assume that $\tau + \mu = 1$. If the queue is full, i.e., if it already contains four tasks, and a fifth one arrives it has to be dropped, leading to a fail-state $f$. This setting is close to the example discussed in the beginning of this section.
	
	Let $\mathcal{M}' = e^{0.1} \cdot \mathcal{M}$. In particular, $\mathcal{M} \sim_{0, 0.1} \mathcal{M}'$.
	We analyze the bounds from \Cref{prop:erlang-bound-0-delta,prop:improved-bounds-0delta-with-ev-diagonalizable} w.r.t. $\mathcal{M}, \mathcal{M}'$ and $f$ by considering three scenarios that are parameterized in $\tau$ and $\mu$.\footnote{We omit the bound based on uniformization from \Cref{prop:transient-prob-bounds-epsilon-delta-greater-0} as we have already seen in \Cref{ex:4-state-chain} that it does not perform well for $(0, \delta)$-bisimilar models.} In the first scenario, $\tau_1 = \frac{3}{4}$ and $\mu_1 = \frac{1}{4}$, i.e., the arrival rate of tasks is three times higher than their service rate. In the second scenario, arrival and service rates coincide ($\tau_2 = \frac{1}{2} = \mu_2$), while in the third scenario the service rate is twice the arrival rate, i.e., $\tau_3 = \frac{1}{3}$  and $\mu_3 = \frac{2}{3}$. We denote the CTMC corresponding to scenario $i \in \{1,2,3\}$ by $\mathcal{M}_i$. Note that for each $i$ the transition probability matrix $\probMatrix_i$ of $\mathcal{M}_i$ is diagonalizable.
	
	The resulting bounds, as well as the actual errors, are depicted on the right-hand side of \Cref{fig:example-bounds-diag-queue}, where the same-colored (and similarly dashed) lines correspond to the same scenario (red and solid for the first, blue and dashed for the second and magenta and dash-dotted for the third). The dotted line in the upper graphic represents the bound from  \Cref{prop:erlang-bound-0-delta}.
	
	We observe that in each scenario the actual error and the  bounds of \Cref{prop:improved-bounds-0delta-with-ev-diagonalizable} converge to $0$ for $t \to \infty$, while the bound of \Cref{prop:erlang-bound-0-delta} converges to $1$ for increasing $t$. The convergence rate to $0$ of the former bound can, however, be quite slow. In, e.g., the third scenario, the second largest eigenvalue of $\probMatrix_3$ has modulus $0.9778$, causing a slow convergence of $\vert \lambda \vert^k$ to $0$. On the other hand, in the first scenario $\mathcal{M}_1$ enters $f$ quickly, which is reflected in a second largest eigenvalue of $0.7334$ whose powers converge to $0$ fast.
	
	While the values of the bounds from \Cref{prop:improved-bounds-0delta-with-ev-diagonalizable} are, in particular for small $t$, significantly larger than the actual error, it is promising that their shape as a function of $t$ closely resembles the actual difference.
	This scaling of the bounds by a large factor is caused by the constant $C$, which is obtained by using the triangle inequality and a maximum.
	Refining this constant in future work might lead to bounds following the actual error much more closely. 
\end{example}

	 As the presented bounds work differently well for small and large $t$, the following formulation is useful.
	\begin{corollary}\label{cor:minimum-of-bounds-diag}
		Let $C$ be as in \Cref{prop:improved-bounds-0delta-with-ev-diagonalizable}, and $N$ as in \Cref{prop:erlang-bound-0-delta}. Then
		\begin{center}
			$\Diff_t(\mathcal{M}) \leq \min \left\{\Diff_t(\erlang_N), (n-a_\probMatrix) \cdot C \cdot \sum_{k=1}^{\infty} \vert \lambda \vert^{k-1} \cdot \Diff_t(\erlang_k)\right\}.$
		\end{center}
	\end{corollary}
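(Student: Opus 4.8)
The plan is to treat this as an immediate consequence of the two upper bounds on $\Diff_t(\mathcal{M})$ established earlier in the section, both of which are available under the shared hypotheses: a uniform exit rate, a unique absorbing goal state $g$, the acceleration $\mathcal{M}' = c \cdot \mathcal{M}$ with $c = e^\delta$, and diagonalizability of $\probMatrix$ (the latter being required only for the spectral bound). Since the statement mixes the quantities $N$ and $C$ coming from the two propositions, I would first note that these hypotheses are consistent and simultaneously in force.

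First I would recall \Cref{prop:erlang-bound-0-delta}, which yields $\Diff_t(\mathcal{M}) \leq \Diff_t(\erlang_N)$ for $N = \lceil (e^\delta - 1)\,t/\delta \rceil$. Second I would recall \Cref{prop:improved-bounds-0delta-with-ev-diagonalizable}, which yields $\Diff_t(\mathcal{M}) \leq (n - a_\probMatrix) \cdot C \cdot \sum_{k=1}^{\infty} \vert \lambda \vert^{k-1} \cdot \Diff_t(\erlang_k)$ with $C$ as defined there. Both inequalities bound the \emph{same} fixed nonnegative real number $\Diff_t(\mathcal{M})$ from above.

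The final step is the elementary observation that if a real number $x$ satisfies $x \leq a$ and $x \leq b$, then $x \leq \min\{a, b\}$. Applying this with $x = \Diff_t(\mathcal{M})$, $a = \Diff_t(\erlang_N)$, and $b = (n - a_\probMatrix) \cdot C \cdot \sum_{k=1}^{\infty} \vert \lambda \vert^{k-1} \cdot \Diff_t(\erlang_k)$ gives the claim. I expect no genuine obstacle here: all the substantive work resides in the two cited propositions, and the corollary merely records that one may always pick whichever of the two bounds is tighter at a given $t$---the Erlang bound for small $t$ and the spectral bound for large $t$, as illustrated in \Cref{ex:4-state-chain,ex:bounds-diag}.
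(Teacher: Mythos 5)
Your proposal is correct and matches the paper's (implicit) argument exactly: the corollary is recorded without further proof precisely because it follows from applying \Cref{prop:erlang-bound-0-delta} and \Cref{prop:improved-bounds-0delta-with-ev-diagonalizable} to the same quantity $\Diff_t(\mathcal{M})$ and taking the minimum of the two upper bounds. Nothing more is needed.
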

	
	\Cref{prop:improved-bounds-0delta-with-ev-diagonalizable}  requires the transition probability matrix $\probMatrix$ of $\mathcal{M}$ to be diagonalizable. 
	If this is not the case, we can use the \mbox{\emph{Jordan canonical form} (\emph{JCF}) \cite{MAALA}} instead, which looks as follows:
	Given an $n \times n$ matrix $M$ over the field of complex numbers $\mathbb{C}$ with distinct eigenvalues $\lambda_1, \ldots, \lambda_m$, given in descending order w.r.t. their absolutes values, the JCF of $M$ is a decomposition of the form $M = \symmetryMatrix \jordanMatrix\symmetryMatrix^{-1}$ for matrices $\symmetryMatrix, \jordanMatrix \in \mathbb{C}^{n \times n}$. $\jordanMatrix$ is a block diagonal matrix
	with so-called \emph{Jordan blocks} $\jordanMatrix_i$, $i = 1, \ldots, k$, $m \leq k \leq n$, on the diagonal. The $i$-th Jordan block $\jordanMatrix_i$ corresponds to one of $M$'s eigenvalues $\tau$ and is a $r_i \times r_i$ matrix for some $r_i$ that is at most the algebraic multiplicity of $\tau$, and such that the sum over all $r_i$ for the Jordan blocks belonging to the same eigenvalue $\tau$ equals its geometric multiplicity. The Jordan blocks
	for eigenvalue $\tau$ have $\tau$ everywhere on the diagonal and entry $1$ directly above the diagonal. All other entries are $0$. We w.l.o.g. assume that the Jordan blocks are in descending order w.r.t. the absolute value of the corresponding eigenvalues, and that multiple Jordan blocks for the same eigenvalue are in descending order w.r.t. their size. We write $q_\jordanMatrix$ for the total number of Jordan blocks of $\jordanMatrix$ and $r_i$ for the size of the $i$-th Jordan block $\jordanMatrix_i$.
		
		It is well-known that every matrix over $\mathbb{C}$ has a JCF \cite{MAALA}, so the above decomposition exists for the transition probability matrix of \emph{any} CTMC. We now show how to compute $p_n$, i.e., the probability to reach the goal state after exactly $n$ discrete steps, in the case that $\probMatrix$ is not diagonalizable.
	\begin{restatable}{proposition}{ThmExactValuePnNonDiagonalizable}\label{thm:exact-value-pn-non-diagonalizable}
		Denote by $\lambda_i$ the eigenvalue corresponding to the $i$-th Jordan block $\jordanMatrix_i$ of $\probMatrix = \symmetryMatrix \jordanMatrix\symmetryMatrix^{-1}$. Let $N \in \mathbb{N}$, $z \in \{0, \ldots, q_{\jordanMatrix}-1\}$ the number of Jordan blocks for eigenvalue $0$ and $h_l = \sum_{i = 1}^{l-1} r_{i}$. Then $p_{N+1}$ equals
		\begin{align*}
			\sum_{l = a_\probMatrix + 1}^{q_{\jordanMatrix}-z} \sum_{j = 1}^{r_{l}} \sum_{k = 1}^{j} \symmetryMatrix_{1, k + h_l} \cdot \symmetryMatrix^{-1}_{j + h_l, n} \cdot \lambda_{l}^{N + k - j} \cdot \left(\lambda_{l} \cdot \binom{N+1}{j-k} - \binom{N}{j-k} \right) + R(N, z)
		\end{align*}
		where, for $\symmetryMatrix_{1, h_l + j}^* = \symmetryMatrix_{1, h_l, +j}$ if $j > 0$ and  $\symmetryMatrix_{1, h_l + j}^* = 0$ if $j = 0$,
		\begin{center}
			$R(N, z) = \sum_{l = q_{\jordanMatrix}-z+1}^{q_{\jordanMatrix}} \sum_{j=0}^{r_{l} - (N + 1)} (\symmetryMatrix_{1, h_l + j}^* - \symmetryMatrix_{1, h_l + j + 1}) \cdot \symmetryMatrix^{-1}_{h_l + N + j, n}$.
		\end{center} 
	\end{restatable}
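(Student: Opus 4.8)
The plan is to reduce everything to entries of powers of $\probMatrix$ and then evaluate those powers block by block through the Jordan decomposition. First I would record the elementary observation that, since $g$ is the $n$-th state and is absorbing, the entry $\probMatrix^m_{1,n}$ is precisely the probability of having reached $g$ within $m$ steps from the initial state (row $1$). Consequently the first-passage probability satisfies
$$p_{N+1} = \probMatrix^{N+1}_{1,n} - \probMatrix^{N}_{1,n},$$
so it suffices to compute $\probMatrix^m_{1,n}$ for $m \in \{N, N+1\}$ and subtract.

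Next I would expand $\probMatrix^m = \symmetryMatrix \jordanMatrix^m \symmetryMatrix^{-1}$, giving $\probMatrix^m_{1,n} = \sum_{a,b} \symmetryMatrix_{1,a} (\jordanMatrix^m)_{a,b} \symmetryMatrix^{-1}_{b,n}$. Because $\jordanMatrix$ is block diagonal, only pairs $(a,b)$ inside the same Jordan block contribute, so the sum decomposes into a sum over blocks $l$ with local coordinates offset by $h_l = \sum_{i<l} r_i$. Writing a single block as $\jordanMatrix_l = \lambda_l I + N$ with $N$ the nilpotent shift (the $1$'s above the diagonal), the binomial theorem for these commuting summands yields $(\jordanMatrix_l^m)_{\alpha,\beta} = \binom{m}{\beta-\alpha}\lambda_l^{m-(\beta-\alpha)}$ for $\beta \geq \alpha$ and $0$ otherwise.

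Then I would split the blocks into the three groups dictated by the descending ordering of $\vert\lambda_l\vert$. For the first $a_\probMatrix$ blocks, belonging to eigenvalue $1$: since eigenvalue $1$ of a stochastic matrix is semisimple, these blocks are all $1\times 1$, and substituting $\lambda_l = 1$, $r_l = 1$ (so $k=j$, $j-k=0$), the factor $\lambda_l\binom{N+1}{0}-\binom{N}{0}$ vanishes; hence they contribute nothing, which is why the main sum starts at $l = a_\probMatrix + 1$. For the middle blocks ($a_\probMatrix < l \leq q_{\jordanMatrix}-z$, i.e.\ $0 < \vert\lambda_l\vert < 1$) I would insert the block-power formula into $\probMatrix^{N+1}_{1,n} - \probMatrix^{N}_{1,n}$, relabel the local row/column indices as $k \leq j$, and factor out the common power $\lambda_l^{N+k-j}$; the remaining factor is exactly $\lambda_l\binom{N+1}{j-k}-\binom{N}{j-k}$, reproducing the stated main triple sum. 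This factorisation is valid only because $\lambda_l \neq 0$.

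The only delicate part, and the main obstacle, is the group of $z$ blocks for eigenvalue $0$, where the factorisation above fails. Here the block powers collapse to the pure shift, $(\jordanMatrix_l^m)_{\alpha,\beta}$ equalling $1$ when $\beta-\alpha = m$ and $0$ otherwise, so the block's contribution to $\probMatrix^m_{1,n}$ is $T_l(m) = \sum_{\alpha} \symmetryMatrix_{1,h_l+\alpha}\,\symmetryMatrix^{-1}_{h_l+\alpha+m,\,n}$. I would compute $T_l(N+1) - T_l(N)$ directly, re-index one of the two sums so that both run over the same $\symmetryMatrix^{-1}$-column, and absorb the unmatched boundary term at the lower end of the range using the convention $\symmetryMatrix^*_{1,h_l+j} = \symmetryMatrix_{1,h_l+j}$ for $j>0$ and $0$ for $j=0$; this yields exactly $R(N,z)$. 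Summing the contributions of the three groups gives the claimed identity. The careful tracking of these index shifts, together with the degenerate $0^0$ conventions implicit in the block-power formula at $\lambda_l = 0$, is where the proof requires the most care.
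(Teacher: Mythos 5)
Your proposal is correct and follows essentially the same route as the paper's proof: writing $p_{N+1}$ as the $(1,n)$-entry of $\probMatrix^{N+1}-\probMatrix^{N}$, passing to the Jordan form block by block, using the binomial block-power formula for the nonzero eigenvalues to extract the main triple sum, and treating the eigenvalue-$0$ blocks as pure shifts whose telescoping difference (with the $\symmetryMatrix^*$ boundary convention) produces $R(N,z)$. The only cosmetic difference is that you justify the triviality of the eigenvalue-$1$ blocks by the general semisimplicity of unit-modulus eigenvalues of stochastic matrices, whereas the paper derives it from the absorbing-chain structure (multiplicity of eigenvalue $1$ equal to the number of absorbing states); both arguments are valid.
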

	
	In particular, the term $R(N, z)$ introduced in \Cref{thm:exact-value-pn-non-diagonalizable} equals $0$ if $z = 0$ or if $N \geq \max_{q_{\jordanMatrix}-z+1 \leq i \leq q_{\jordanMatrix}} r_i$. Hence, if $N$ is at least the size of the largest Jordan block corresponding to eigenvalue $0$, the term $R(N, z)$ vanishes.
	
	\begin{remark}
	\label{rem:multiple absorbing states}
		The decompositions of $\probMatrix$ proved in \Cref{prop:exact-computation-pn-diag,thm:exact-value-pn-non-diagonalizable} also hold when the Markov chain contains more than $2$ (reachable) absorbing states, i.e, if $a_\probMatrix > 2$. In an absorbing Markov chain, the multiplicity (both geometric and algebraic) of the eigenvalue $1$ is  the number of absorbing states, and thus each Jordan block for eigenvalue $1$ has a size of $1$, causing them to get canceled out in the computations of the explicit forms of $p_{N+1}$. 
	\end{remark}
	
	Combining \Cref{thm:exact-computation-error-0-delta} and \Cref{thm:exact-value-pn-non-diagonalizable} yields another upper bound on the absolute difference in timed reachability probabilities of $(0, \delta)$-bisimilar CTMCs.
	
	\begin{restatable}{theorem}{PropJordanBound}\label{prop:error-bound-jordan}
		In the setting of \Cref{thm:exact-value-pn-non-diagonalizable} let $\vert \lambda \vert > 0$. Let $R$ be the maximal size of any Jordan block of $\probMatrix$, $r = \max_{i = a_\probMatrix+1, \ldots, q_\jordanMatrix-z} r_i$ the maximal size of any such block corresponding to an eigenvalue $\neq 0, 1$, $\lambda_{l}^* =\max\{\vert \lambda_{l} \vert,\vert 1 - \lambda_{l} \vert\}$ for $l = a_\probMatrix+1, \ldots, q_{\jordanMatrix}-z$ and $C = \sum_{l = a_\probMatrix+1}^{q_{\jordanMatrix}-z} \sum_{j = 1}^{r_{l}} \sum_{k = 1}^{j}  \vert \symmetryMatrix_{1, k + h_l} \cdot \symmetryMatrix^{-1}_{j + h_l, n} \vert \cdot \vert \lambda_{l}^* \vert$. Then 
		\begin{center}
			$\Diff_t(\mathcal{M}) \leq \sum_{k = 1}^{R-1} p_k \cdot \Diff_t(\erlang_k) + C \cdot  \sum_{k = R}^{\infty} \vert \lambda\vert^{k - r} \cdot  k^{r - 1} \cdot \Diff_t(\mathcal{E}_k).$
		\end{center}
	\end{restatable}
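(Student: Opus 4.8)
The plan is to combine the exact series representation $\Diff_t(\mathcal{M}) = \sum_{n=1}^{\infty} p_n \cdot \Diff_t(\erlang_n)$ from \Cref{thm:exact-computation-error-0-delta} with the closed form for $p_n$ from \Cref{thm:exact-value-pn-non-diagonalizable}, keeping the small-index probabilities exactly and bounding the large-index ones. Concretely, I would split the series at the threshold $R$,
\[
\Diff_t(\mathcal{M}) = \sum_{k=1}^{R-1} p_k \cdot \Diff_t(\erlang_k) + \sum_{k=R}^{\infty} p_k \cdot \Diff_t(\erlang_k),
\]
leave the first (finite) sum untouched, and establish a uniform estimate of the shape $p_k \leq C \cdot |\lambda|^{k-r} \cdot k^{r-1}$ for every $k \geq R$, which fed into the second sum yields the claim termwise. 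The threshold is chosen as the maximal Jordan block size $R$ precisely so that, for all $k \geq R$ (i.e.\ $N = k-1$ at least the size of the largest Jordan block for eigenvalue $0$), the remainder $R(N,z)$ of \Cref{thm:exact-value-pn-non-diagonalizable} vanishes by the remark following that proposition; hence for these indices only the main triple sum survives.

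For the key estimate I would set $k = N+1$ and apply the triangle inequality to the main sum of $p_{N+1}$ (legitimate since $p_{N+1} \geq 0$, so $|p_{N+1}| = p_{N+1}$). Each summand factorizes into $|\symmetryMatrix_{1,k'+h_l} \cdot \symmetryMatrix^{-1}_{j+h_l,n}|$, the power $|\lambda_l|^{N+k'-j}$, and the mixed binomial term $|\lambda_l \cdot \binom{N+1}{j-k'} - \binom{N}{j-k'}|$ (with $k'$ the inner summation index, renamed to avoid clashing with the outer series index $k$). The cleanest way to treat the last factor is Pascal's identity $\binom{N+1}{m} = \binom{N}{m} + \binom{N}{m-1}$, which rewrites it as $(\lambda_l-1)\binom{N}{j-k'} + \lambda_l \binom{N}{j-k'-1}$ and thus bounds it by $\lambda_l^* \cdot \binom{N+1}{j-k'}$, with $\lambda_l^* = \max\{|\lambda_l|, |1-\lambda_l|\}$ exactly the quantity in $C$. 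For the remaining factors I would use the monotonicity bounds $|\lambda_l|^{N+k'-j} \leq |\lambda|^{k-r}$ — valid since $1 \leq k' \leq j \leq r_l \leq r$ forces the exponent to be at least $N+1-r = k-r \geq 0$, while $|\lambda_l| \leq |\lambda| < 1$ for every block not belonging to eigenvalue $0$ or $1$ — together with $\binom{N+1}{j-k'} \leq (N+1)^{r-1} = k^{r-1}$ (as $0 \leq j-k' \leq r-1$). Pulling $|\lambda|^{k-r} \cdot k^{r-1}$ out front leaves the triple sum of $|\symmetryMatrix_{1,k'+h_l} \cdot \symmetryMatrix^{-1}_{j+h_l,n}| \cdot \lambda_l^*$, which is exactly $C$, giving $p_k \leq C \cdot |\lambda|^{k-r} \cdot k^{r-1}$.

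The \emph{main obstacle} I anticipate is not analytic but bookkeeping: aligning the nested indices $l, j, k'$ and the offsets $h_l$ between the formula of \Cref{thm:exact-value-pn-non-diagonalizable} and the definition of $C$, and verifying the edge conditions. Two checks deserve care — that the exponent $k-r$ stays nonnegative over the summation range, which is guaranteed by $R \geq r$, and that $R(N,z)$ genuinely vanishes for all $k \geq R$, which follows from $R$ dominating the largest eigenvalue-$0$ block size; the boundary case where the maximal block itself belongs to eigenvalue $0$ is the one point to inspect carefully. Once $p_k \leq C \cdot |\lambda|^{k-r} \cdot k^{r-1}$ holds for $k \geq R$, substituting it into the second sum and recombining with the untouched first sum yields the stated inequality, with every term kept finite by the $\Diff_t(\erlang_k)$ factors so that no separate convergence argument is required.
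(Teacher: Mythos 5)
Your proposal follows essentially the same route as the paper's proof: split the series from \Cref{thm:exact-computation-error-0-delta} at $R$, drop the vanishing remainder $R(N,z)$ for the tail, and bound $p_k$ for $k\geq R$ by applying the triangle inequality to the main triple sum of \Cref{thm:exact-value-pn-non-diagonalizable}, controlling the mixed binomial factor by $\lambda_l^*\binom{N+1}{j-k'} \leq \lambda_l^* (N+1)^{r-1}$ (the paper reaches the same intermediate bound via an add-and-subtract step plus Pascal's rule, while you apply Pascal's identity directly) and the eigenvalue power by $\vert\lambda\vert^{k-r}$. Your estimate $p_k \leq C\,\vert\lambda\vert^{k-r} k^{r-1}$ and its substitution into the tail coincide with the paper's argument, and the boundary subtlety you flag (a largest Jordan block belonging to eigenvalue $0$) is indeed the one point the paper itself treats only implicitly.
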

	
	The bound of \Cref{prop:error-bound-jordan} is a conservative extension of the one for diagonalizable $\probMatrix$ from \Cref{prop:improved-bounds-0delta-with-ev-diagonalizable}, as in this case $R = 1 = r$ and so the bound of \Cref{prop:error-bound-jordan} collapses to 
	$\Diff_t(\mathcal{M}) \leq C \cdot \sum_{k=1}^{\infty} \vert \lambda \vert^{k-1} \cdot \Diff_t(\erlang_k)$.

	\begin{remark}
		In \Cref{prop:error-bound-jordan} we have to exclude the case $\vert \lambda \vert = 0$, i.e., the case that $\probMatrix$ only has eigenvalues $0$ and $1$. This happens, e.g., if $\probMatrix$ (and hence $\mathcal{M}$) is acyclic. In an acyclic CTMC the goal state $g$ is, however, reached after \emph{at most} as many steps as the length of the longest path in the chain. Therefore, if $\mathcal{M}$ is acyclic, $\Diff_t(\mathcal{M})$ can easily be computed exactly, for example by using the identity in \Cref{thm:exact-computation-error-0-delta} and computing the relevant values of $p_{n} = \probMatrix^{n+1} - \probMatrix^n$, or by applying methods like the ACE algorithm of \cite{TAAMC} that allow computing the transient distribution functions of acyclic CTMCs directly.
	\end{remark}

	\begin{corollary}
		Under the conditions and with the notation of \Cref{prop:error-bound-jordan} and \Cref{prop:erlang-bound-0-delta}, $\Diff_t(\mathcal{M})$ is bounded from above by 
		\begin{center}
			$\min\left\{\Diff_t(\erlang_N), \sum_{k = 1}^{R-1} p_k \cdot \Diff_t(\erlang_k) + C \cdot  \sum_{k = R}^{\infty} \vert \lambda \vert^{k - r}  \cdot k^{r - 1} \cdot \Diff_t(\mathcal{E}_k) \right\}.$
		\end{center}
	\end{corollary}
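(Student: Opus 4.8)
The plan is to observe that this corollary is an immediate consequence of combining the two upper bounds that have already been established separately. Under the stated conditions, \Cref{prop:erlang-bound-0-delta} gives $\Diff_t(\mathcal{M}) \leq \Diff_t(\erlang_N)$, while \Cref{prop:error-bound-jordan} gives $\Diff_t(\mathcal{M}) \leq \sum_{k=1}^{R-1} p_k \cdot \Diff_t(\erlang_k) + C \cdot \sum_{k=R}^{\infty} \vert \lambda \vert^{k-r} \cdot k^{r-1} \cdot \Diff_t(\erlang_k)$. The crucial point is that both expressions bound the \emph{same} quantity $\Diff_t(\mathcal{M})$ from above.

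The elementary fact to invoke is that if a real number $a$ satisfies $a \leq b$ and $a \leq c$, then $a \leq \min\{b, c\}$. Applying this with $a = \Diff_t(\mathcal{M})$, with $b = \Diff_t(\erlang_N)$, and with $c$ the Jordan-block expression from \Cref{prop:error-bound-jordan} yields the claim directly, and no further computation is needed beyond citing the two propositions.

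The only point I would take care to check is that the hypotheses of both cited results hold \emph{simultaneously}, so that the two bounds are applicable at once. \Cref{prop:erlang-bound-0-delta} inherits the setting of \Cref{thm:exact-computation-error-0-delta} (a uniform CTMC with exit rate $1$, a unique absorbing goal state $g$, and $\mathcal{M}' = c \cdot \mathcal{M}$ for $c = e^{\delta} > 1$), whereas \Cref{prop:error-bound-jordan} additionally requires $\vert \lambda \vert > 0$, i.e., that $\probMatrix$ does not have only the eigenvalues $0$ and $1$. Since the corollary is phrased ``under the conditions and with the notation of'' both propositions, these assumptions are taken to hold jointly, so both bounds are valid at the same time and the minimum is well-defined. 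There is essentially no obstacle here; the substance of the corollary is the practical observation that one may always select whichever of the two bounds is smaller, which is useful precisely because the two behave differently across regimes---\Cref{prop:erlang-bound-0-delta} is tighter for small $t$ while \Cref{prop:error-bound-jordan} is tighter for large $t$, as illustrated in \Cref{ex:4-state-chain,ex:bounds-diag}.
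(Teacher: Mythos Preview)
Your proposal is correct and matches the paper's treatment: the corollary is stated without proof in the paper, precisely because it follows immediately by taking the minimum of the two bounds from \Cref{prop:erlang-bound-0-delta} and \Cref{prop:error-bound-jordan}, exactly as you describe.
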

	
	We finish the section by noting that the bounds for $(0, \delta)$-bisimilar CTMCs can also be used to derive bounds for the case that $\varepsilon > 0$. Given $\mathcal{M} \sim_{\varepsilon, \delta} \mathcal{N}$, by \Cref{thm:same-structure-no-additional-tolerance} there are CTMCs $\mathcal{M}'$ and $\mathcal{N}'$ with $\mathcal{M} \sim_{\varepsilon, 0} \mathcal{M}' \sim_{0, \delta} \mathcal{N}' \sim \mathcal{N}$. Hence, the triangle inequality together with the fact that strong bisimilarity preserves timed reachability probabilities \cite{MCACTMC,EOLFMC} implies
	$\vert \probMeasure^{\mathcal{M}}(\lozenge^{\leq t} g) - \probMeasure^{\mathcal{N}}(\lozenge^{\leq t} g) \vert \leq \vert \probMeasure^{\mathcal{M}}(\lozenge^{\leq t} g) - \probMeasure^{\mathcal{M}'} (\lozenge^{\leq t} g) \vert + \vert \probMeasure^{\mathcal{M}'}(\lozenge^{\leq t} g) - \probMeasure^{\mathcal{N}'}(\lozenge^{\leq t} g) \vert$. As $\mathcal{M} \sim_{\varepsilon, 0} \mathcal{M}'$, the bound from \Cref{prop:transient-prob-bounds-epsilon-delta-greater-0} for the special case of $\varepsilon = 0$ is applicable to the first term, and one of the bounds for $(0, \delta)$-bisimilar chains can be applied to the second term. This yields another upper bound on the absolute difference in timed reachability probabilities for $(\varepsilon, \delta)$-bisimilar CTMCs.

	\section{Bounds for Reward-Bounded Reachability Probabilities}\label{sec:reward-bounds}
	In practice, CTMCs are often extended with \emph{rewards} that allow to model, e.g., the accumulation of costs, the energy consumption, or different performance measures like the availability of the system \cite{LCPP,MRMMDPDCTPEO}. We consider state-based reward functions $\reward \colon \stateSpace \to \mathbb{R}$ that assign to every $s \in \stateSpace$ a reward $\rho(s)$. Rewards are accumulated in the states, and the cumulative reward along the timed path $\pi =s_0 t_0 s_1 t_1 \ldots \in \Paths(\mathcal{M})$ of $\mathcal{M}$ until time $t$ is given (for $\sigma @ t = s_m$) as \cite{LCPP}
	\begin{center}
		$\reward(\sigma, t) = \sum_{j=0}^{m-1} t_j \cdot \rho(s_j) + \left(t - \sum_{j = 0}^{m-1} t_j \right) \cdot \rho(s_m).$
	\end{center} 

	Similar to timed reachability probabilities $\probMeasure_s(\reachability^{\leq t} g)$, logics like CSRL \cite{LCPP} allow to specify properties in terms of \emph{reward-bounded} reachability probabilities $\probMeasure_s(\reachability_{\leq r} g)$, which denote the probability to reach, from state $s$, a goal state $g$ while accumulating at most reward $r \in \mathbb{R}$. 
	
	We extend the results of \Cref{sec:time-bounded-reachability-bounds,sec:errors-0-delta-states} to reward-bounded reachability probabilities. Here, we consider the special case of \emph{nonnegative} rewards, i.e., we assume that $\reward(s) \geq 0$ for all $s \in \stateSpace$. To accommodate for the addition of rewards to our model, we adjust the definition of $(\varepsilon, \delta)$-bisimulations to require related states to have the same reward, i.e., $s \sim_{\varepsilon, \delta} s'$ is possible only if $\reward(s) = \reward(s')$. An extension to a notion of, say, $(\varepsilon, \delta, \mu)$-bisimulation, where $\mu$ describes an allowed error in the rewards  in related states, is an interesting direction for future work. 
	
	First, assume $\rho(s) > 0$ for all $s$, a setting also analyzed in, e.g., \cite{LCPP}. There, based on ideas from \cite{PRRMCS}, a transformation is proposed that allows to compute reward-bounded reachability probabilities in $\mathcal{M}$ via \emph{time-bounded} reachability probabilities in a modified model $\widehat{\mathcal{M}}$. This model differs from $\mathcal{M}$ only in the exit rates $\widehat{\rate}(s) = \frac{\rate(s)}{\reward(s)}$ and rewards $\widehat{\reward}(s) = \frac{1}{\reward(s)}$ of states $s \in S$. A direct consequence of \cite[Lem. 1]{LCPP} is that 
$		\probMeasure_s^\mathcal{M}(\reachability_{\leq r} g) = \probMeasure_s^{\widehat{\mathcal{M}}}(\reachability^{\leq r} g)$.	

Thus, given $s \sim_{\varepsilon, \delta} s'$ we can directly apply the bounds obtained in \Cref{sec:time-bounded-reachability-bounds,sec:errors-0-delta-states}, provided that $s$ and $s'$ are still $(\varepsilon, \delta)$-bisimilar in $\widehat{\mathcal{M}}$. However, this is indeed the case as we require $\reward(s) = \reward(s')$, and so the transformation of \cite{LCPP} simply scales the exit rates (and rewards) of $s$ and $s'$ by a common factor.
	\begin{proposition}\label{prop:reward-bounds-all-positive}
		Let $s \sim_{\varepsilon, \delta} s'$, $r \geq 0$ and $\widehat{\uniformizationRate} \geq \max_{p \in S} \widehat{\rate}(p)$. Then 
		\begin{center}$\vert \probMeasure_s(\reachability_{\leq r} g) - \probMeasure_{s'}(\reachability_{\leq r} g)\vert \leq 1 - e^{-\widehat{\uniformizationRate \cdot }t\cdot(e^{\delta}(\varepsilon + 1)- 1)}$. \end{center}
	\end{proposition}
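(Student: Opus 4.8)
The plan is to reduce the reward-bounded statement to the time-bounded bound of \Cref{prop:transient-prob-bounds-epsilon-delta-greater-0} by means of the transformation $\mathcal{M} \mapsto \widehat{\mathcal{M}}$ recalled just above. Since all rewards are assumed positive, $\widehat{\mathcal{M}}$ is well-defined, and the cited identity $\probMeasure_s^\mathcal{M}(\reachability_{\leq r} g) = \probMeasure_s^{\widehat{\mathcal{M}}}(\reachability^{\leq r} g)$ (and the analogous one for $s'$) lets us rewrite both reward-bounded quantities as \emph{time}-bounded reachability probabilities in the single model $\widehat{\mathcal{M}}$. Hence it suffices to bound $\vert \probMeasure_s^{\widehat{\mathcal{M}}}(\reachability^{\leq r} g) - \probMeasure_{s'}^{\widehat{\mathcal{M}}}(\reachability^{\leq r} g) \vert$, which is exactly the expression \Cref{prop:transient-prob-bounds-epsilon-delta-greater-0} controls, with deadline $r$ and uniformization rate $\widehat{\uniformizationRate}$.

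The crucial step is to verify that $s \sim_{\varepsilon, \delta} s'$ continues to hold in $\widehat{\mathcal{M}}$, so that the time-bounded bound becomes applicable. I would check the three defining conditions of \Cref{def:epsilon-delta-bisimulation} directly for an $(\varepsilon, \delta)$-bisimulation $R$ on $\mathcal{M}$ witnessing $s \sim_{\varepsilon, \delta} s'$. The labeling condition is immediate because $\widehat{\mathcal{M}}$ and $\mathcal{M}$ carry the same labels. The $\varepsilon$-condition is preserved because the transformation alters only exit rates and rewards, leaving the transition probability function $\prob$ — and hence every value $\prob(s, A)$ and the relation $R$ itself — unchanged. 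For the $\delta$-condition, recall $\widehat{\rate}(s) = \rate(s)/\reward(s)$, and that $(\varepsilon, \delta)$-bisimilarity here requires $\reward(s) = \reward(s')$; therefore $\vert \ln \widehat{\rate}(s) - \ln \widehat{\rate}(s') \vert = \vert \ln \rate(s) - \ln \reward(s) - \ln \rate(s') + \ln \reward(s') \vert = \vert \ln \rate(s) - \ln \rate(s') \vert \leq \delta$, the common reward terms cancelling. Thus $R$ is again an $(\varepsilon, \delta)$-bisimulation on $\widehat{\mathcal{M}}$, giving $s \sim_{\varepsilon, \delta}^{\widehat{\mathcal{M}}} s'$.

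With this in hand, applying \Cref{prop:transient-prob-bounds-epsilon-delta-greater-0} to $\widehat{\mathcal{M}}$ (with deadline $r$ and any admissible $\widehat{\uniformizationRate} \geq \max_{p \in \stateSpace} \widehat{\rate}(p)$) bounds the time-bounded difference by $1 - e^{-\widehat{\uniformizationRate} \cdot r \cdot (e^{\delta}(\varepsilon+1)-1)}$, and chaining this with the two equalities from the first step yields the claimed inequality. The only genuine subtlety — and the step I would treat most carefully — is the cancellation of the reward terms in the $\delta$-condition, which is precisely where the design choice to enforce $\reward(s) = \reward(s')$ for related states pays off: without it the multiplicative rate tolerance would be perturbed by the state-dependent scaling $1/\reward(\cdot)$, and the bisimulation need not survive the transformation.
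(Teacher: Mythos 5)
Your proposal is correct and takes essentially the same route as the paper: reduce reward-bounded to time-bounded reachability via the transformation $\widehat{\mathcal{M}}$ and the identity $\probMeasure_s^\mathcal{M}(\reachability_{\leq r} g) = \probMeasure_s^{\widehat{\mathcal{M}}}(\reachability^{\leq r} g)$ from \cite{LCPP}, check that $(\varepsilon,\delta)$-bisimilarity is preserved in $\widehat{\mathcal{M}}$ because related states have equal rewards so their exit rates are rescaled by a common factor, and then apply \Cref{prop:transient-prob-bounds-epsilon-delta-greater-0} with deadline $r$ and uniformization rate $\widehat{\uniformizationRate}$. Your explicit cancellation $\vert \ln \widehat{\rate}(s) - \ln \widehat{\rate}(s') \vert = \vert \ln \rate(s) - \ln \rate(s') \vert \leq \delta$ is exactly the detail behind the paper's informal remark that the transformation ``simply scales the exit rates (and rewards) of $s$ and $s'$ by a common factor.''
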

	
	If $\varepsilon = 0$, a similar result can be formulated for the bounds from \Cref{sec:errors-0-delta-states}.

	Next, we  allow $\rho(s) = 0$. As the reward of the goal state $g$ does not matter for our purpose, we can w.l.o.g. assume $\rho(g) > 0$. The transformation from \cite{LCPP} is not applicable, as it could require division by $0$. To circumvent this issue we first construct from $\mathcal{M}$ a CTMC $\mathcal{M}_{>0}$ without states with $\rho(s) = 0$. 
	Intuitively, this is done by redirecting the incoming transitions of states with reward $0$ to their successors, allowing us to remove any such state from the chain.

	\begin{restatable}{proposition}{PropRemoveRewardZeroStates}\label{prop:remove-reward-0-states}
		Let $\mathcal{M}$ be a CTMC. There is a CTMC $\mathcal{M}_{>0}$ with $\stateSpace^{\mathcal{M}_{>0}} = \stateSpace^\mathcal{M} \setminus \{x \mid \rho(x) = 0\}$ such that  $\probMeasure^\mathcal{M}_s(\reachability_{\leq r} g) = \probMeasure^{\mathcal{M}_{> 0}}_s(\reachability_{\leq r}g)$ for all $s \in \stateSpace^{\mathcal{M}_{>0}}$.
	\end{restatable}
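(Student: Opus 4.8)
The plan is to obtain $\mathcal{M}_{>0}$ by \emph{censoring} the reward-$0$ states, that is, by replacing every maximal detour through such states with a single direct transition between the positive-reward states at its ends. Write $Z = \{x \in \stateSpace \mid \reward(x) = 0\}$ and $Y = \stateSpace \setminus Z$. Since the rewards of the goal state $g$ and of the fail state $f$ do not affect $\probMeasure_s(\reachability_{\leq r} g)$---time spent in $f$ never precedes reaching $g$, and accumulation stops upon reaching $g$---I first note that we may assume $\reward(g), \reward(f) > 0$, so that $g, f \in Y$. By the standing assumptions of \Cref{sec:time-bounded-reachability-bounds}, one of the (at most two) absorbing states $g, f$ is reached almost surely; as both lie in $Y$, the set $Z$ is left almost surely, i.e., $Z$ is transient. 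Equivalently, the restriction $\probMatrix_{ZZ}$ of the embedded transition matrix to $Z$ has spectral radius below $1$, so $(I - \probMatrix_{ZZ})$ is invertible.

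I then define $\mathcal{M}_{>0}$ on the state space $Y$, keeping the labels, rewards and exit rates of the surviving states unchanged ($\rate^{\mathcal{M}_{>0}}(s) = \rate(s)$ and $\reward^{\mathcal{M}_{>0}}(s) = \reward(s)$), and setting $\prob^{\mathcal{M}_{>0}}(s, s')$ to the probability that the first $Y$-state visited after leaving $s$ is $s'$. Explicitly, the matrix of these probabilities is the Schur complement $\probMatrix_{YY} + \probMatrix_{YZ}(I - \probMatrix_{ZZ})^{-1}\probMatrix_{ZY}$. Transience of $Z$ guarantees that these entries are nonnegative and that each row sums to $1$, so $\mathcal{M}_{>0}$ is a genuine CTMC in which $g$ remains absorbing.

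For correctness I would decompose the reachability event along \emph{abstract paths}, i.e., finite sequences $\pi = y_0 y_1 \cdots y_k$ of $Y$-states with $y_0 = s$ and $y_k = g$. The key is to show, for each such $\pi$, that (i) the probability of realizing $\pi$---visiting exactly these $Y$-states in this order, with arbitrary reward-$0$ states in between---equals $\prod_{i=0}^{k-1} \prob^{\mathcal{M}_{>0}}(y_i, y_{i+1})$ in \emph{both} chains (by definition in $\mathcal{M}_{>0}$, and in $\mathcal{M}$ because the censored probabilities are precisely the first-passage probabilities between consecutive $Y$-states, via the strong Markov property at the jump times), and that (ii) the reward accumulated before reaching $g$ along $\pi$ equals $\sum_{i=0}^{k-1} \reward(y_i)\, T_i$, where the sojourn times $T_i$ are independent and exponentially distributed with rate $\rate(y_i)$. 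Claim (ii) holds because reward-$0$ states contribute nothing to the accumulated reward and the sojourns in the $Y$-states are untouched by censoring. Since the rates $\rate(y_i)$ and rewards $\reward(y_i)$ coincide in $\mathcal{M}$ and $\mathcal{M}_{>0}$, this accumulated reward has the same distribution in both chains; hence the joint probability of realizing $\pi$ and accumulating reward at most $r$ agrees. Summing over all abstract paths $\pi$ ending in $g$ then yields $\probMeasure^\mathcal{M}_s(\reachability_{\leq r} g) = \probMeasure^{\mathcal{M}_{>0}}_s(\reachability_{\leq r} g)$.

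I expect the main obstacle to be the well-definedness and the identification in the first two paragraphs, rather than the final summation: one must ensure that no probability mass is lost when deleting $Z$, which is exactly where transience of $Z$ (guaranteed by the preprocessing that routes all states unable to reach $g$ into the absorbing fail state) enters, and one must justify that the censored transition probabilities really equal the genuine first-passage probabilities of $\mathcal{M}$ between $Y$-states. The reward bookkeeping in (ii) is then routine, since censoring changes neither the sequence of positive-reward sojourns nor their exponential durations.
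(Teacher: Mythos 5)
Your proposal is correct, but it takes a genuinely different route from the paper's proof. The paper argues locally and analytically: it first removes self-loops at reward-zero states (\Cref{lem:remove-self-loops-reward-bounded-reach}), then eliminates these states one at a time, redirecting incoming probability via $\prob'(t'',t''') = \prob(t'',t''') + \prob(t'',t')\cdot \prob(t',t''')$, and checks each elimination step by substituting into the integral equations from \cite{LCPP} that characterize $\probMeasure_s(\reachability_{\leq r} g)$ as a least solution. You instead censor the whole reward-zero set $Z$ in one shot, take the Schur complement $\probMatrix_{YY}+\probMatrix_{YZ}(I-\probMatrix_{ZZ})^{-1}\probMatrix_{ZY}$ as the new transition matrix, and prove preservation pathwise: decompose over the sequences of positive-reward states visited (each entry one visit, with $g$ occurring only as the last entry), identify their probabilities in the two chains through the Markov property of the embedded chain, and observe that the accumulated reward is a function of the $Y$-sojourns alone, which censoring does not touch. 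The two constructions coincide---the paper's iterated elimination is precisely Gaussian elimination computing your Schur complement---but the proof obligations differ: the paper never needs transience of $Z$ made explicit (cycles inside $Z$ progressively turn into self-loops that its preliminary lemma dissolves) and avoids path-space measure theory entirely, whereas you must justify invertibility of $I-\probMatrix_{ZZ}$ and conservation of probability mass up front, which you correctly extract from the standing assumptions of \Cref{sec:time-bounded-reachability-bounds} after also putting $\reward(f)>0$ w.l.o.g.; note that this extra w.l.o.g.\ is in fact needed by the paper's argument too (an absorbing fail state of reward $0$ would make its redirection formula substochastic), so making it explicit is a point in your favor. What your route buys is a closed form for $\prob^{\mathcal{M}_{>0}}$ and a transparent conceptual reason for the invariance; what it costs is the probabilistic bookkeeping behind your claims (i) and (ii), which you sketch rather than carry out, but the justifications you give---first-passage probabilities via the Markov property, zero reward contribution of $Z$-sojourns, unchanged exponential sojourns in $Y$---are exactly the right ones and are routine to complete.
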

	
	 Hence, by first constructing $\mathcal{M}_{>0}$ from $\mathcal{M}$ and afterwards applying the transformation from \cite{LCPP} to $\mathcal{M}_{>0}$, we obtain the following corollary.
	 \begin{corollary}
	 	\Cref{prop:reward-bounds-all-positive} also holds for CTMCs with nonnegative rewards.
	 \end{corollary}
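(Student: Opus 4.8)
The plan is to obtain $\mathcal{M}_{>0}$ by ``integrating out'' the reward-$0$ states, exploiting that time spent in a state $x$ with $\reward(x)=0$ contributes nothing to the accumulated reward, so only its \emph{routing} matters. Writing $Z = \{x \in \stateSpace : \reward(x) = 0\}$ and $P = \stateSpace \setminus Z$ (so $g \in P$, since $\reward(g) > 0$), I would let $\mathcal{M}_{>0}$ have state space $P$, inherit from $\mathcal{M}$ the exit rates, rewards and labels restricted to $P$, and use the first-passage transition probabilities
\begin{center}
$\prob_{>0}(s, s') = \prob(s, s') + \sum_{z \in Z} \prob(s, z) \cdot h(z, s'), \qquad s, s' \in P,$
\end{center}
where $h(z, s')$ is the probability of reaching $s'$ as the \emph{first} $P$-state when starting in $z \in Z$ and passing only through $Z$. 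With $\probMatrix_{ZZ}, \probMatrix_{ZP}$ the restrictions of $\probMatrix$ to $Z \times Z$ resp.\ $Z \times P$, the matrix of these probabilities is $(\mathbf{I} - \probMatrix_{ZZ})^{-1} \probMatrix_{ZP}$.

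First I would settle well-definedness. As in \Cref{sec:time-bounded-reachability-bounds} I may assume w.l.o.g.\ that every state reaches $g$; the possible absorbing fail state may be assigned any positive reward, since it never reaches $g$ and hence does not influence $\probMeasure_s(\reachability_{\leq r} g)$. Then no bottom strongly connected component lies inside $Z$ (such a component could not reach $g \in P$), so $P$ is hit almost surely from every $z \in Z$; consequently $\probMatrix_{ZZ}$ has spectral radius $< 1$, the Neumann series $(\mathbf{I} - \probMatrix_{ZZ})^{-1} = \sum_{k \geq 0} \probMatrix_{ZZ}^k$ converges, and each row $\prob_{>0}(s, \cdot)$ sums to $1$, so $\mathcal{M}_{>0}$ is a genuine CTMC.

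The heart of the argument is a measure-preserving projection $\sigma \mapsto \widehat{\sigma}$ from timed paths of $\mathcal{M}$ to timed paths of $\mathcal{M}_{>0}$ that deletes every $Z$-sojourn, collapsing each maximal $Z$-excursion into a single reward-free jump between consecutive $P$-states. Two facts then have to be checked. Since $\reward(\sigma, t) = \sum_j t_j \reward(s_j)$ and every deleted sojourn sits in a state with reward $0$, the reward accumulated up to the first visit of $g$ is invariant under the projection. And the projection preserves the probability measure: along any path the residence time in a $P$-state $s$ is $\mathrm{Exp}(\rate(s))$-distributed and independent of the routing—exactly as in $\mathcal{M}_{>0}$, whose exit rates on $P$ agree with those of $\mathcal{M}$—while the probability that the next $P$-state after $s$ is $s'$ equals $\prob_{>0}(s, s')$ by the definition of $h$ and the strong Markov property. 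Hence the joint law of (the sequence of visited $P$-states, their sojourn times) is identical in $\mathcal{M}$ and $\mathcal{M}_{>0}$, and as the event $\reachability_{\leq r} g$ is a function of this data alone, I obtain $\probMeasure_s^\mathcal{M}(\reachability_{\leq r} g) = \probMeasure_s^{\mathcal{M}_{>0}}(\reachability_{\leq r} g)$ for every $s \in P$.

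The main obstacle I anticipate is the rigorous justification that this first-passage construction reproduces the $P$-to-$P$ kernel \emph{and} that the $P$-sojourn times remain independent of the $Z$-routing; both are instances of the strong Markov property but require care with the infinite sum $\sum_{k \geq 0} \probMatrix_{ZZ}^k$ and, when a $Z$-excursion re-enters the very state $s$ it left, with the fact that each such re-entry begins a fresh independent exponential sojourn—precisely accounted for by the self-loop weight $\prob_{>0}(s, s)$ in $\mathcal{M}_{>0}$.
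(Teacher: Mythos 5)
Your proposal is correct in substance, but it establishes the key reduction---the existence of a reward-$0$-free chain $\mathcal{M}_{>0}$ with unchanged reward-bounded reachability probabilities---by a genuinely different argument than the paper. The paper (\Cref{lem:remove-self-loops-reward-bounded-reach} and \Cref{prop:remove-reward-0-states}) stays entirely within the analytic framework of \cite{LCPP}: it first removes self-loops at reward-$0$ states and then eliminates reward-$0$ states one at a time, redirecting incoming probability via $\prob'(t'',t''') = \prob(t'',t''') + \prob(t'',t')\cdot\prob(t',t''')$, verifying each step by manipulating the least-fixed-point integral equations that characterize $\probMeasure_s(\reachability_{\leq r} g)$. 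Iterating that single-state elimination computes exactly your first-passage kernel $(\mathbf{I}-\probMatrix_{ZZ})^{-1}\probMatrix_{ZP}$, so both constructions yield the same chain $\mathcal{M}_{>0}$; what differs is the correctness proof. Your route---a one-shot construction justified by a measure-preserving projection that deletes $Z$-excursions, together with the strong Markov property---is probabilistically more transparent and treats all reward-$0$ states simultaneously, but it requires the path-space bookkeeping you yourself flag (almost-sure return from $Z$ to $P$, convergence of the Neumann series, fresh exponential sojourns when an excursion re-enters the state it left, matched by the self-loop $\prob_{>0}(s,s)$); the paper's route avoids path-space arguments entirely at the cost of an auxiliary self-loop lemma and an induction over removed states. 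Your device of assigning the absorbing fail state an arbitrary positive reward is a clean alternative to the paper's treatment, which instead excludes absorbing states in \Cref{lem:remove-self-loops-reward-bounded-reach} by noting their reachability probability is $0$. One step you should still make explicit to obtain the corollary as stated: after constructing $\mathcal{M}_{>0}$, apply the transformation of \cite{LCPP} and \Cref{prop:reward-bounds-all-positive} to $\mathcal{M}_{>0}$, using that $(\varepsilon,\delta)$-related states carry equal (now strictly positive) rewards; this concluding one-liner is exactly the same as in the paper and is fully compatible with your construction.
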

	\section{Conclusion and Future Work}\label{sec:conclusion}
	We introduced $(\varepsilon, \delta)$-bisimulations, a novel type of approximate probabilistic bisimulation for CTMCs. 
	In contrast to notions from the literature such as quasi-lumpability \cite{EOLFMC,BQLMC,CBPIQLSWFN}, the separate bounds on changes in transition probabilities and changes in exit rates used in $(\varepsilon,\delta)$-bisimulation result in a 
	 more flexible, fine-grained notion of behavioral similarity.
	We established fundamental properties of the notion, and analyzed the difference in timed reachability probabilities of $(\varepsilon, \delta)$-bisimilar chains. 
	
	We obtained  bounds on this difference by uniformizing the chain and applying known results from \cite{RBBTEAPC} for the discrete-time setting. 
	Although tight if $\delta = 0$, the bounds usually do not perform well if $\delta > 0$. Hence, we investigated the special case of $(0, \delta)$-bisimilarity in more detail---also as a stepping stone towards a refined treatment of $(\varepsilon,\delta)$-bisimilarity in future work.
	For $(0,\delta)$-bisimilar chains, we provided bounds based on the error in Erlang CTMCs $\erlang_n$, as well as based on a spectral analysis of the probability matrix $\probMatrix$ of the CTMC in question. Furthermore, we extended our results to reward-bounded reachability probabilities, provided that all rewards in the model are nonnegative.
	
	From a theoretical point of view, the behavior of the bounds obtained from the spectral analysis of $\probMatrix$ is promising. They are tight and, as can be observed in \Cref{fig:example-bounds-diag-queue}, seem to evolve similar to the actual error. 
	The bounds are, however, stretched by a (large) factor, which happens since, to obtain the constants $C$, the triangle inequality and maxima are used. This causes the constants to become significantly larger than necessary, leading to poor performance of the bounds in particular if $t$ is small. Therefore, searching for smaller constants $C' \ll C$ in future work is in order.
	From a practical point of view, applicability of the bounds seems questionable. They tend to over-approximate the error and, depending on the second largest eigenvalue of $\probMatrix$, can take a long time (i.e., require large values of $t$) to converge to the actual error, even for small chains. Moreover, computing Jordan canonical forms (and diagonalizations) is computationally expensive and numerically unstable \cite{BCJNF,NDWCEM,CME}. Consequently, the search for more practically relevant bounds is an important direction for future work.

	 Other open questions include an extension of $(\varepsilon, \delta)$-bisimulation to a notion of $(\varepsilon, \delta, \mu)$-bisimulation that allows a tolerance of $\mu$ in the rewards of related states, an analysis of the achievable state space reduction in quotients w.r.t. (transitive) $(\varepsilon, \delta)$-bisimulations, and a closer look at the (approximate) preservation of logical properties expressed by, e.g., CSL-formulas \cite{MCCTMC,MCACTMC}, between $(\varepsilon, \delta)$-bisimilar states.
	
	\begin{credits}
		\subsubsection*{Acknowledgements.} This work was partly funded by the DFG through
		the DFG grant 389792660 as part of TRR~248 (see \url{https://perspicuous-computing.science})
		and the Cluster of Excellence EXC 2050/1 (CeTI, project ID 390696704, as part of Germany's Excellence Strategy).
		
		\subsubsection*{Disclosure of Interests.} The authors have no competing interests to declare that are relevant to the content of this article.
	\end{credits}
	
	\newpage
	\bibliography{references}{}

\begin{thebibliography}{10}
\providecommand{\url}[1]{\texttt{#1}}
\providecommand{\urlprefix}{URL }
\providecommand{\doi}[1]{https://doi.org/#1}

\bibitem{AMBPBGSSMP}
Abate, A.: Approximation metrics based on probabilistic bisimulations for
  general state-space {M}arkov processes: A survey. Electronic Notes in
  Theoretical Computer Science  \textbf{297},  3--25 (2013).
  \doi{10.1016/j.entcs.2013.12.002}, proceedings of the first workshop on
  Hybrid Autonomous Systems

\bibitem{AAMCQACRN}
Abate, A., Brim, L., {\v{C}}e{\v{s}}ka, M., Kwiatkowska, M.: Adaptive
  aggregation of {Markov} chains: Quantitative analysis of chemical reaction
  networks. In: Kroening, D., P{\u{a}}s{\u{a}}reanu, C.S. (eds.) Computer Aided
  Verification. Lecture Notes in Computer Science (LNCS), vol.~9206, pp.
  195--213. Springer International Publishing, Cham (2015).
  \doi{10.1007/978-3-319-21690-4_12}

\bibitem{PMCLMPFAB}
Abate, A., Kwiatkowska, M., Norman, G., Parker, D.: Probabilistic model
  checking of labelled {M}arkov processes via finite approximate bisimulations.
  In: van Breugel, F., Kashefi, E., Palamidessi, C., Rutten, J. (eds.) Horizons
  of the Mind. A Tribute to Prakash Panangaden: Essays Dedicated to Prakash
  Panangaden on the Occasion of His 60th Birthday, Lecture Notes in Computer
  Science (LNCS), vol.~8464, pp. 40--58. Springer International Publishing,
  Cham (2014). \doi{10.1007/978-3-319-06880-0_2}

\bibitem{RASBS}
Češka, M., Šafránek, D., Dražan, S., Brim, L.: Robustness {Analysis} of
  {Stochastic} {Biochemical} {Systems}. PLoS ONE  \textbf{9}(4),  e94553 (Apr
  2014). \doi{10.1371/journal.pone.0094553}

\bibitem{NAWPB}
Aldini, A.: A note on the approximation of weak probabilistic bisimulation
  (2009)

\bibitem{DBLP:conf/apn/AmparoreBD14}
Amparore, E.G., Beccuti, M., Donatelli, S.: ({Stochastic}) model checking in
  {GreatSPN}. In: Petri Nets. Lecture Notes in Computer Science (LNCS),
  vol.~8489, pp. 354--363. Springer (2014). \doi{10.1007/978-3-319-07734-5_19}

\bibitem{DBLP:journals/pe/AmparoreD18}
Amparore, E.G., Donatelli, S.: Efficient model checking of the stochastic logic
  {CSL\({}^{\mbox{TA}}\)}. Perform. Evaluation  \textbf{123-124},  1--34
  (2018). \doi{10.1016/j.peva.2018.03.002}

\bibitem{DBLP:conf/safecomp/ArnoldBBGS13}
Arnold, F., Belinfante, A., van~der Berg, F.I., Guck, D., Stoelinga, M.:
  {DFTCalc}: {A} tool for efficient fault tree analysis. In: {SAFECOMP}.
  Lecture Notes in Computer Science (LNCS), vol.~8153, pp. 293--301. Springer
  (2013). \doi{10.1007/978-3-642-40793-2_27}

\bibitem{MCCTMC}
Aziz, A., Sanwal, K., Singhal, V., Brayton, R.: Model-checking continuous-time
  {Markov} chains. ACM Trans. Comput. Logic  \textbf{1}(1),  162–170 (Jul
  2000). \doi{10.1145/343369.343402}

\bibitem{PTATPBS}
Baier, C.: Polynomial time algorithms for testing probabilistic bisimulation
  and simulation. In: Alur, R., Henzinger, T.A. (eds.) Computer Aided
  Verification. Lecture Notes in Computer Science (LNCS), vol.~1102, pp.
  50--61. Springer, Berlin, Heidelberg (1996). \doi{10.1007/3-540-61474-5_57}

\bibitem{DBLP:journals/tse/BaierCHKS07}
Baier, C., Cloth, L., Haverkort, B.R., Kuntz, M., Siegle, M.: Model checking
  {Markov} chains with actions and state labels. {IEEE} Trans. Software Eng.
  \textbf{33}(4),  209--224 (2007). \doi{10.1109/TSE.2007.36}

\bibitem{LCPP}
Baier, C., Haverkort, B., Hermanns, H., Katoen, J.P.: On the logical
  characterisation of performability properties. In: Montanari, U., Rolim,
  J.D.P., Welzl, E. (eds.) Automata, Languages and Programming (ICALP 2000).
  Lecture Notes in Computer Science (LNCS), vol.~1853, pp. 780--792. Springer
  Berlin Heidelberg, Berlin, Heidelberg (2000). \doi{10.1007/3-540-45022-X_65}

\bibitem{MCACTMC}
Baier, C., Haverkort, B.R., Hermanns, H., Katoen, J.: Model-checking algorithms
  for continuous-time {Markov} chains. {IEEE} Trans. Software Eng.
  \textbf{29}(6),  524-- 541 (07 2003). \doi{10.1109/TSE.2003.1205180}

\bibitem{PoMC}
Baier, C., Katoen, J.P.: Principles of Model Checking. The MIT Press (2008)

\bibitem{CBTSMC}
Baier, C., Katoen, J.P., Hermanns, H., Wolf, V.: Comparative branching-time
  semantics for {Markov} chains. Information and Computation  \textbf{200}(2),
  149--214 (2005). \doi{10.1016/j.ic.2005.03.001}

\bibitem{SAAPBPCTL}
Bartoletti, M., Murgia, M., Zunino, R.: Sound approximate and asymptotic
  probabilistic bisimulations for {PCTL}. Logical Methods in Computer Science
  \textbf{19}(1) (2023). \doi{10.46298/lmcs-19(1:22)2023}

\bibitem{PRRMCS}
Beaudry, M.D.: Performance-related reliability measures for computing systems.
  IEEE Transactions on Computers  \textbf{C-27}(6),  540--547 (1978).
  \doi{10.1109/TC.1978.1675145}

\bibitem{RBBTEAPC}
Bian, G., Abate, A.: On the relationship between bisimulation and trace
  equivalence in an approximate probabilistic context. In: Esparza, J.,
  Murawski, A.S. (eds.) Foundations of Software Science and Computation
  Structures (FoSSaCS). Lecture Notes in Computer Science (LNCS), vol. 10203,
  pp. 321--337. Springer, Berlin, Heidelberg (2017).
  \doi{10.1007/978-3-662-54458-7_19}

\bibitem{EOLFMC}
Buchholz, P.: Exact and ordinary lumpability in finite {Markov} chains. Journal
  of Applied Probability  \textbf{31}(1),  59--75 (1994). \doi{10.2307/3215235}

\bibitem{LUCTMC}
Cardelli, L., Grosu, R., Larsen, K.G., Tribastone, M., Tschaikowski, M.,
  Vandin, A.: Lumpability for {Uncertain} {Continuous}-{Time} {Markov}
  {Chains}. In: Abate, A., Marin, A. (eds.) Quantitative {Evaluation} of
  {Systems} (QEST 2021), Lecture Notes in Computer Science (LNCS), vol. 12846,
  pp. 391--409. Springer International Publishing, Cham (Aug 2021).
  \doi{10.1007/978-3-030-85172-9_21}

\bibitem{AMUCTMC}
Cardelli, L., Grosu, R., Larsen, K.G., Tribastone, M., Tschaikowski, M.,
  Vandin, A.: Algorithmic {Minimization} of {Uncertain} {Continuous}-{Time}
  {Markov} {Chains}. IEEE Transactions on Automatic Control  \textbf{68}(11),
  6557--6572 (Nov 2023). \doi{10.1109/TAC.2023.3244093}

\bibitem{DBLP:conf/lics/ChenHKM09}
Chen, T., Han, T., Katoen, J., Mereacre, A.: Quantitative model checking of
  continuous-time {Markov} chains against timed automata specifications. In:
  24th IEEE Symposium on Logic In Computer Science (LICS). pp. 309--318. {IEEE}
  Computer Society (2009). \doi{10.1109/LICS.2009.21}

\bibitem{LCBLMP}
Desharnais, J., Edalat, A., Panangaden, P.: A logical characterization of
  bisimulation for labeled {M}arkov processes. In: Proceedings of the
  Thirteenth Annual IEEE Symposium on Logic in Computer Science (LiCS). pp.
  478--487 (1998). \doi{10.1109/LICS.1998.705681}

\bibitem{BfLMP}
Desharnais, J., Edalat, A., Panangaden, P.: Bisimulation for labelled {M}arkov
  processes. Information and Computation  \textbf{179}(2),  163--193 (2002).
  \doi{10.1006/inco.2001.2962}

\bibitem{AAPP}
Desharnais, J., Laviolette, F., Tracol, M.: Approximate analysis of
  probabilistic processes: Logic, simulation and games. In: Fifth International
  Conference on Quantitative Evaluation of Systems (QEST 2008). pp. 264--273
  (QEST 2008). \doi{10.1109/QEST.2008.42}

\bibitem{BCJNF}
Dey, P., Kannan, R., Ryder, N., Srivastava, N.: Bit {Complexity} of {Jordan}
  {Normal} {Form} and {Polynomial} {Spectral} {Factorization}. In:
  Tauman~Kalai, Y. (ed.) 14th {Innovations} in {Theoretical} {Computer}
  {Science} {Conference} ({ITCS} 2023). Leibniz {International} {Proceedings}
  in {Informatics} ({LIPIcs}), vol.~251, pp. 42:1--42:18. Schloss Dagstuhl –
  Leibniz-Zentrum für Informatik, Dagstuhl, Germany (2023).
  \doi{10.4230/LIPIcs.ITCS.2023.42}

\bibitem{RPCTLMC}
D'Innocenzo, A., Abate, A., Katoen, J.P.: Robust {PCTL} model checking. In:
  Proceedings of the 15th ACM International Conference on Hybrid Systems:
  Computation and Control (HSCC 2012{)}. pp. 275--286. Association for
  Computing Machinery, New York, NY, USA (2012). \doi{10.1145/2185632.2185673}

\bibitem{SSPTPSATE}
Erlang, A.K.: Solution to some problems of the theory of probabilities of
  significance in automatic telephone exchange. The Post Office Electrical
  Engineer's Journal pp. 189--197 (1917)

\bibitem{BQLMC}
Franceschinis, G., Muntz, R.R.: Bounds for quasi-lumpable {Markov} chains.
  Performance Evaluation  \textbf{20}(1),  223--243 (1994).
  \doi{10.1016/0166-5316(94)90015-9}

\bibitem{CBPIQLSWFN}
Franceschinis, G., Muntz, R.R.: Computing bounds for the performance indices of
  quasi-lumpable stochastic well-formed nets. IEEE Transactions on Software
  Engineering  \textbf{20}(7),  516--525 (1994). \doi{10.1109/32.297940}

\bibitem{MRMMDPDCTPEO}
Gouberman, A., Siegle, M.: {Markov} reward models and {Markov} decision
  processes in discrete and continuous time: Performance evaluation and
  optimization. In: Remke, A., Stoelinga, M. (eds.) Stochastic Model Checking.
  Rigorous Dependability Analysis Using Model Checking Techniques for
  Stochastic Systems: International Autumn School (ROCKS 2012). Lecture Notes
  in Computer Science (LNCS), vol.~8453, pp. 156--241. Springer Berlin
  Heidelberg, Berlin, Heidelberg (2014). \doi{10.1007/978-3-662-45489-3_6}

\bibitem{TSMQS}
Grassmann, W.K.: Transient solutions in {M}arkovian queueing systems. Computers
  \& Operations Research  \textbf{4}(1),  47--53 (1977).
  \doi{10.1016/0305-0548(77)90007-7}

\bibitem{FMOSCSMDP}
Haesaert, S., Nilsson, P., Soudjani, S.: Formal multi-objective synthesis of
  continuous-state {MDPs}. In: 2021 American Control Conference (ACC). pp.
  3428--3433 (2021). \doi{10.23919/ACC50511.2021.9482873}

\bibitem{PMCS}
Hensel, C., Junges, S., Katoen, J., Quatmann, T., Volk, M.: The probabilistic
  model checker {STORM}. International Journal on Software Tools for Technology
  Transfer (STTT)  \textbf{24},  589--610 (2022).
  \doi{10.1007/s10009-021-00633-z}

\bibitem{MCASMP}
Jensen, A.: Markoff chains as an aid in the study of {Markoff} processes.
  Scandinavian Actuarial Journal  \textbf{1953}(sup1),  87--91 (1953).
  \doi{10.1080/03461238.1953.10419459}

\bibitem{FMC}
Kemeny, J.G., Snell, J.L.: Finite {Markov} Chains. Undergraduate Texts in
  Mathematics (UTM), Springer New York (1976)

\bibitem{ABM}
Kiefer, S., Tang, Q.: {Approximate Bisimulation Minimisation}. In:
  Boja\'{n}czy, M., Chekuri, C. (eds.) 41st IARCS Annual Conference on
  Foundations of Software Technology and Theoretical Computer Science (FSTTCS
  2021). Leibniz International Proceedings in Informatics (LIPIcs), vol.~213,
  pp. 48:1--48:16. Schloss Dagstuhl -- Leibniz-Zentrum f{\"u}r Informatik,
  Dagstuhl, Germany (2021). \doi{10.4230/LIPIcs.FSTTCS.2021.48}

\bibitem{ÜAMW}
Kolmogorov, A.N.: Über die analytischen {Methoden} in der
  {Wahrscheinlichkeitsrechnung}. Mathematische Annalen  \textbf{104},  415--458
  (1931). \doi{10.1007/BF01457949}

\bibitem{IMASS}
Kulkarni, V.G.: Introduction to Modeling and Analysis of Stochastic Systems.
  Springer Texts in Statistics (STS), Springer New York (2010).
  \doi{10.1007/978-1-4419-1772-0}

\bibitem{PRISM}
Kwiatkowska, M., Norman, G., Parker, D.: {PRISM} 4.0: Verification of
  probabilistic real-time systems. In: Gopalakrishnan, G., Qadeer, S. (eds.)
  Proc. 23rd International Conference on Computer Aided Verification (CAV'11).
  Lecture Notes in Computer Science (LNCS), vol.~6806, pp. 585--591. Springer
  (2011). \doi{10.1007/978-3-642-22110-1_47}

\bibitem{DBLP:series/natosec/KwiatkowskaT14}
Kwiatkowska, M.Z., Thachuk, C.: Probabilistic model checking for biology. In:
  Software Systems Safety, {NATO} Science for Peace and Security Series, {D:}
  Information and Communication Security, vol.~36, pp. 165--189. {IOS} Press
  (2014)

\bibitem{BTPT}
Larsen, K.G., Skou, A.: Bisimulation through probabilistic testing. Information
  and Computation  \textbf{94}(1),  1--28 (1991).
  \doi{10.1016/0890-5401(91)90030-6}

\bibitem{PACTMCWS}
Lin, N., Liu, Y.: Perturbation analysis for continuous-time {Markov} chains in
  a weak sense. Journal of Applied Probability  \textbf{61}(4),  1278--1300
  (2024). \doi{10.1017/jpr.2024.20}

\bibitem{PACTMC}
Liu, Y.: Perturbation analysis for continuous-time {Markov} chains. Science
  China Mathematics  \textbf{58}(2),  2633--2642 (2015).
  \doi{10.1007/s11425-015-5019-z}

\bibitem{EBATAMCPM}
Liu, Y., Li, W.: Error bounds for augmented truncation approximations of
  {Markov} chains via the perturbation method. Advances in Applied Probability
  \textbf{50}(2),  645–669 (2018). \doi{10.1017/apr.2018.28}

\bibitem{DBLP:journals/jetc/MadsenZRWM14}
Madsen, C., Zhang, Z., Roehner, N., Winstead, C., Myers, C.J.: Stochastic model
  checking of genetic circuits. {ACM} J. Emerg. Technol. Comput. Syst.
  \textbf{11}(3),  23:1--23:21 (2014). \doi{10.1145/2644817}

\bibitem{TAAMC}
Maire, R.A., Reibman, A.L., Trivedi, K.S.: Transient analysis of acyclic
  {Markov} chains. Performance Evaluation  \textbf{7}(3),  175--194 (1987).
  \doi{10.1016/0166-5316(87)90039-3}

\bibitem{PL}
Marin, A., Piazza, C., Rossi, S.: Proportional lumpability. In: Andr{\'e},
  {\'E}., Stoelinga, M. (eds.) Formal Modeling and Analysis of Timed Systems
  (FORMATS) 2019. Lecture Notes in Computer Science (LNCS), vol. 11750, pp.
  265--281. Springer International Publishing, Cham (2019).
  \doi{10.1007/978-3-030-29662-9_16}

\bibitem{PLPB}
Marin, A., Piazza, C., Rossi, S.: Proportional lumpability and proportional
  bisimilarity. Acta Inf.  \textbf{59}(2–3),  211–244 (Jun 2022).
  \doi{10.1007/s00236-021-00404-y}

\bibitem{MAALA}
Meyer, C.D.: Matrix Analysis and Applied Linear Algebra. Society for Industrial
  and Applied Mathematics, Philadelphia, PA (2000)

\bibitem{SCUEMC}
Mitrophanov, A.Y.: Sensitivity and convergence of uniformly ergodic {Markov}
  chains. Journal of Applied Probability  \textbf{42}(4),  1003--1014 (2005)

\bibitem{APBFCTMC}
Mitrophanov, A.Y.: The arsenal of perturbation bounds for finite
  continuous-time {Markov} chains: A perspective. Mathematics  \textbf{12}(11)
  (2024). \doi{10.3390/math12111608}

\bibitem{NDWCEM}
Moler, C., Van~Loan, C.: Nineteen {Dubious} {Ways} to {Compute} the
  {Exponential} of a {Matrix}, {Twenty}-{Five} {Years} {Later}. SIAM Review
  \textbf{45}(1),  3--49 (Jan 2003). \doi{10.1137/S00361445024180}

\bibitem{CME}
Pan, V.Y., Chen, Z.Q.: The complexity of the matrix eigenproblem. In:
  Proceedings of the thirty-first annual {ACM} symposium on {Theory} of
  {Computing}. pp. 507--516. ACM, Atlanta Georgia USA (May 1999).
  \doi{10.1145/301250.301389}

\bibitem{RPL}
Piazza, C., Rossi, S.: Reasoning about proportional lumpability. In: Abate, A.,
  Marin, A. (eds.) Quantitative Evaluation of Systems (QEST 2021). Lecture
  Notes in Computer Science (LNCS), vol. 12846, pp. 372--390. Springer
  International Publishing, Cham (2021). \doi{10.1007/978-3-030-85172-9_20}

\bibitem{EAPLASMPB}
Piazza, C., Rossi, S., Smuseva, D.: Efficient algorithm for proportional
  lumpability and its application to selfish mining in public blockchains.
  Algorithms  \textbf{17}(4) (2024). \doi{10.3390/a17040159}

\bibitem{PSfPP}
Segala, R., Lynch, N.: Probabilistic simulations for probabilistic processes.
  In: Jonsson, B., Parrow, J. (eds.) CONCUR '94: Concurrency Theory. Lecture
  Notes in Computer Science (LNCS), vol.~836, pp. 481--496. Springer, Berlin,
  Heidelberg (1994). \doi{10.1007/978-3-540-48654-1_35}

\bibitem{PTEMCANA}
Shardlow, T., Stuart, A.M.: A perturbation theory for ergodic {Markov} chains
  and application to numerical approximations. SIAM Journal on Numerical
  Analysis  \textbf{37}(4),  1120--1137 (2000). \doi{10.1137/S0036142998337235}

\bibitem{SAPB}
Spork, T., Baier, C., Katoen, J.P., Piribauer, J., Quatmann, T.: {A Spectrum of
  Approximate Probabilistic Bisimulations}. In: Majumdar, R., Silva, A. (eds.)
  35th International Conference on Concurrency Theory (CONCUR 2024). Leibniz
  International Proceedings in Informatics (LIPIcs), vol.~311, pp. 37:1--37:19.
  Schloss Dagstuhl -- Leibniz-Zentrum f{\"u}r Informatik, Dagstuhl, Germany
  (2024). \doi{10.4230/LIPIcs.CONCUR.2024.37}

\bibitem{CDBPA}
Tracol, M., Desharnais, J., Zhioua, A.: Computing distances between
  probabilistic automata. In: Massink, M., Norman, G. (eds.) Proceedings Ninth
  Workshop on Quantitative Aspects of Programming Languages ({QAPL} 2011),
  Saarbr{\"{u}}cken, Germany, April 1-3, 2011. {EPTCS}, vol.~57, pp. 148--162
  (2011). \doi{10.4204/eptcs.57.11}

\bibitem{DBLP:journals/tii/VolkJK18}
Volk, M., Junges, S., Katoen, J.: Fast dynamic fault tree analysis by model
  checking techniques. {IEEE} Trans. Ind. Informatics  \textbf{14}(1),
  370--379 (2018). \doi{10.1109/TII.2017.2710316}

\bibitem{MCASAGE}
Watterson, G.A.: Markov {Chains} with {Absorbing} {States}: {A} {Genetic}
  {Example}. The Annals of Mathematical Statistics  \textbf{32}(3),  716--729
  (Sep 1961). \doi{10.1214/aoms/1177704967}

\bibitem{DBLP:journals/corr/abs-1104-4983}
Zhang, L., Jansen, D.N., Nielson, F., Hermanns, H.: Automata-based {CSL} model
  checking. Log. Methods Comput. Sci.  \textbf{8}(2) (2011).
  \doi{10.1007/978-3-642-22012-8_21}

\end{thebibliography}
		
	\newpage
	\appendix
	
	\section{Proofs of \Cref{sec:definition-fundamental-properties}}
	\begin{restatable}{lemma}{LemCharacterizationTransitive}\label{lem:characterization-transitive}
		Let $R \subseteq \stateSpace \times \stateSpace$ be an equivalence. $R$ satisfies the $\varepsilon$-condition of  \Cref{def:epsilon-delta-bisimulation} iff for every $R$-closed set $A \subseteq \stateSpace$: $\vert \prob(s, A) - \prob(s', A) \vert \leq \varepsilon.$
	\end{restatable}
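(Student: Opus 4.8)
The plan is to prove the two implications separately, relying on two elementary structural facts about an equivalence $R$. First, if $A$ is $R$-closed then in fact $R(A) = A$: reflexivity of $R$ gives $A \subseteq R(A)$, and $R$-closedness gives $R(A) \subseteq A$. Second, for an \emph{arbitrary} set $A \subseteq \stateSpace$ the image $R(A)$ is itself $R$-closed (being a union of equivalence classes, as noted in the preliminaries) and satisfies $A \subseteq R(A)$. I will also use that $\prob(s, \cdot)$ is monotone in its set argument, i.e.\ $B \subseteq B'$ implies $\prob(s, B) \leq \prob(s, B')$.

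For the forward direction, I would assume $R$ satisfies the $\varepsilon$-condition and fix $(s,s') \in R$ together with an $R$-closed set $A$. Instantiating the $\varepsilon$-condition at $A$ and using $R(A) = A$ yields $\prob(s, A) \leq \prob(s', A) + \varepsilon$. Since $R$ is symmetric we also have $(s',s) \in R$, so the same instantiation with the roles of $s$ and $s'$ interchanged gives $\prob(s', A) \leq \prob(s, A) + \varepsilon$. Combining the two inequalities produces $\vert \prob(s, A) - \prob(s', A) \vert \leq \varepsilon$, as required.

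For the converse, I would assume the absolute-difference bound for all $R$-closed sets and fix $(s,s') \in R$ and an arbitrary $A \subseteq \stateSpace$. The key move is to pass to the $R$-closed set $R(A)$. Monotonicity together with $A \subseteq R(A)$ gives $\prob(s, A) \leq \prob(s, R(A))$, and the hypothesis applied to the closed set $R(A)$ gives $\prob(s, R(A)) \leq \prob(s', R(A)) + \varepsilon$. Chaining these two estimates yields $\prob(s, A) \leq \prob(s', R(A)) + \varepsilon$, which is exactly the $\varepsilon$-condition.

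I do not expect a genuine obstacle here, as the argument is essentially bookkeeping; the only points requiring care are the two structural facts about $R$ (that $R(A) = A$ for closed $A$, and that $R(A)$ is always $R$-closed), both of which hinge on $R$ being an equivalence, and the explicit use of \emph{symmetry} of $R$, which is what lets the image-based, one-sided $\varepsilon$-condition collapse to the symmetric absolute-difference formulation.
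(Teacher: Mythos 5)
Your proof is correct and follows essentially the same route as the paper's: the forward direction uses $R(A)=A$ for $R$-closed $A$ together with symmetry of $R$ to get both one-sided inequalities, and the converse passes from an arbitrary $A$ to the $R$-closed set $R(A)$ via $A \subseteq R(A)$ and monotonicity of $\prob(s,\cdot)$. No gaps; the extra justifications you give (why $R(A)=A$ and why $R(A)$ is $R$-closed) match the facts the paper invokes from its preliminaries.
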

	\begin{proof}
		Let $(s,s') \in R$. For the direction from left to right, let $A \subseteq \stateSpace$ be $R$-closed. Then $R(A) = A$, so by the $\varepsilon$-condition of \Cref{def:epsilon-delta-bisimulation} we have 
		\begin{align*}
			\prob(s, A) \leq \prob(s', A) + \varepsilon \qquad \text{ and } \qquad \prob(s', A) \leq \prob(s, A) + \varepsilon,
		\end{align*}
		and it follows that $\vert \prob(s, A) - \prob(s', A) \vert \leq \varepsilon.$
		
		For the reverse direction let $A \subseteq \stateSpace$ (not necessarily $R$-closed). Because $R$ is an equivalence $R(A)$ is the union of all equivalence classes containing any element of a $A$, and hence $R$-closed. Thus, we have $\vert \prob(s, R(A)) - \prob(s', R(A)) \vert \leq \varepsilon$, so in particular $\prob(s, R(A)) \leq \prob(s', R(A)) + \varepsilon$. Since $A \subseteq R(A)$ it follows that 
		\begin{align*}
			\prob(s, A) \leq \prob(s, R(A)) \leq \prob(s', R(A)) + \varepsilon.
		\end{align*}
	\end{proof}
	
	\ThmFundamentalProperties*
	\begin{proof}
		\begin{enumerate}
			\item We start by showing that $\sim_{\varepsilon, \delta}$ is itself an $(\varepsilon, \delta)$-bisimulation. To this end we first observe that, as a countable union of symmetric and reflexive relations, $\sim_{\varepsilon, \delta}$ is itself both reflexive and symmetric. Now let $(s,t) \in {\sim_{\varepsilon, \delta}}$. Then there is some $(\varepsilon, \delta)$-bisimulation $R$ such that $(s,t) \in R$. It follows that $\labelFunction(s) = \labelFunction(t)$ and $\vert \ln(\rate(s)) - \ln(\rate(t)) \vert \leq \delta$ (both of these implications are independent of the choice of $R$). 
			
			\quad We now show that for all $A \subseteq \stateSpace$, it holds that $\prob(s, A) \leq \prob(t, {\sim_{\varepsilon, \delta}}(A)) + \varepsilon$. To this end observe that, for any two relations $R_1, R_2 \subseteq \stateSpace \times \stateSpace$ with $R_1 \subseteq R_2$ and all $A \subseteq \stateSpace$ it holds that $R_1(A) \subseteq R_2(A)$. 
			Because $R$ is an $(\varepsilon, \delta)$-bisimulation, $R \subseteq {\sim_{\varepsilon, \delta}}$, and since $(s,t) \in R$ it follows that 
			\begin{align*}
				\prob(s, A) \overset{(s,t) \in R}{\leq} \prob(t, R(A)) + \varepsilon \overset{R(A) \, \subseteq \, {\sim_{\varepsilon, \delta}(A)}}{\leq} \prob(t, {\sim_{\varepsilon, \delta}}(A)) + \varepsilon
			\end{align*}
			for every $A \subseteq \stateSpace$, so $\sim_{\varepsilon, \delta}$ is indeed an $(\varepsilon, \delta)$-bisimulation. 
			
			\quad Now assume that there is a larger $(\varepsilon, \delta)$-bisimulation $R^*$ on $\mathcal{M}$, i.e., an $(\varepsilon, \delta)$-bisimulation $R^*$ such that ${\sim_{\varepsilon, \delta}} \subsetneq R^*$. Then there is some $(s,t) \in R^* \setminus {\sim_{\varepsilon, \delta}}$. However, as $R^*$ is an $(\varepsilon, \delta)$-bisimulation on $\mathcal{M}$, $R^* \subseteq {\sim_{\varepsilon, \delta}}$, and so $R^* \setminus {\sim_{\varepsilon, \delta}} = \emptyset$. Contradiction.
			\item 	It is clear that $\labelFunction(s) = \labelFunction(s') = \labelFunction(s'')$. Define 
			\begin{align*}
				R = \{(x, z) \in \stateSpace \times \stateSpace \mid \exists \, y \in \stateSpace \colon x \sim_{\varepsilon_1, \delta_1} y \text{ and } y \sim_{\varepsilon_2, \delta_2} z\}.
			\end{align*}
			Then it follows from the respective properties of $\sim_{\varepsilon_1, \delta_1}$ and $\sim_{\varepsilon_2, \delta_2}$ that $R$ is a symmetric and reflexive relation, and it holds that $(s,s'') \in R$. Moreover, for every pair of states $(x,z) \in R$, for which $y \in \stateSpace$ is a state satisfying the defining conditions of $R$ w.r.t. $x$ and $z$, we have that 
			\begin{align*}
				\vert \ln(\rate(x)) - \ln(\rate(z)) \vert &= \vert \ln(\rate(x)) - \ln(\rate(y)) + \ln(\rate(y)) - \ln(\rate(z)) \vert \\& \leq \vert \ln(\rate(x)) - \ln(\rate(y)) \vert + \vert \ln(\rate(y)) - \ln(\rate(z)) \vert \\&\leq \delta_1 + \delta_2.
			\end{align*}
			Now let $A \subseteq \stateSpace$. We first observe that ${\sim_{\varepsilon_2, \delta_2}}({\sim_{\varepsilon_1, \delta_1}}(A)) \subseteq R(A)$, since 
			\begin{align*}
				{\sim_{\varepsilon_2, \delta_2}}({\sim_{\varepsilon_1, \delta_1}}(A)) &= \{z \in \stateSpace \mid \exists \, y \in {\sim_{\varepsilon_1, \delta_1}}(A) \colon y \sim_{\varepsilon_2, \delta_2} z\} \\
				&= \{z \in S \mid \exists \, y \in \{x \in \stateSpace \mid \exists \, a \in A \colon a \sim_{\varepsilon_1, \delta_1} x \} \colon y \sim_{\varepsilon_2, \delta_2} z\} \\
				&= \{z \in S \mid \exists \, y \in \stateSpace \colon \exists \, a \in A \colon \underbrace{a \sim_{\varepsilon_1, \delta_1} y \text{ and } y \sim_{\varepsilon_2, \delta_2} z}_{(a, z) \in R}\} \\
				&\subseteq \{z \in S \mid \exists \, a \in A\colon (a, z) \in R \} = R(A).
			\end{align*}
			Hence, we get for all $(x, z) \in R$ and $y$ as above that
			\begin{align*}
				\prob(x, A)& \leq \prob(y, {\sim_{\varepsilon_1, \delta_1}}(A)) + \varepsilon_1 \\&\leq \prob(z, {\sim_{\varepsilon_2, \delta_2}}({\sim_{\varepsilon_1, \delta_1}}(A))) + \varepsilon_1 + \varepsilon_2 \\&\leq \prob(z, R(A)) + \varepsilon_1 + \varepsilon_2,
			\end{align*}
			so $R$ is a $(\varepsilon_1 + \varepsilon_2, \delta_1+ \delta_2)$-bisimulation. As $(s, s'') \in R$, the claim follows.
			\item We first observe that $\sim_{0, 0}$ is an equivalence. To see this, let $s \sim_{0,0} s'$ and $s' \sim_{0,0} s''$. By the additivity of $(\varepsilon, \delta)$-bisimulations proved in item 2 it follows that $s \sim_{0 + 0, 0 + 0} s''$, i.e., $s \sim_{0, 0} s''$, so $\sim_{0, 0}$ is transitive. As any $(\varepsilon, \delta)$-bisimulation is reflexive and symmetric by definition, and because $\sim_{\varepsilon, \delta}$ is itself an $(\varepsilon, \delta)$-bisimulation for any $\varepsilon, \delta \geq 0$ by item 1, $\sim_{0, 0}$ is an equivalence. 
			
			\quad Now let $s \sim_{0, 0} s'$. Then $\labelFunction(s) = \labelFunction(s')$ because $(\varepsilon, \delta)$-bisimulations only relate states with the same label. Furthermore, $\vert \ln(\rate(s)) - \ln(\rate(s')) \vert \leq 0$, so $\ln(\rate(s)) = \ln(\rate(s'))$ which, because of the strong monotonicity of $x \mapsto \ln(x)$ on $\mathbb{R}_{>0}$, is equivalent to $\rate(s) = \rate(s')$. By \Cref{lem:characterization-transitive}, the $\varepsilon$-condition for $\sim_{0,0}$ holds iff $\vert \prob(s, A) - \prob(s', A) \vert \leq 0$ iff $\prob(x, A) = \prob(y, A)$ for every $\sim_{0,0}$-closed set $A$ and any two states $(s,s') \in {\sim_{0,0}}$. As $s \sim_{0,0} s'$ and any equivalence class $C$ of $\sim_{0,0}$ is $\sim_{0, 0}$-closed it follows that for all these $C$, $\prob(s, C) = \prob(s', C)$. Thus, $\sim_{0, 0}$ is a strong bisimulation and hence $s \sim_{0,0} s'$ implies $s \sim s'$. 
			
			\quad In the other direction, let $s \sim s'$. The validity of the labeling- and the $\delta$-condition (for $\delta = 0$) follows as in the previous case. Again by \Cref{lem:characterization-transitive}, the only thing left to prove is that for every $\sim$-closed set $A$ we have $\prob(s, A) = \prob(s', A)$. As $\sim$ is an equivalence, the $\sim$-closed sets are precisely the (unions of) equivalence classes of $\sim$ \cite{RBBTEAPC}. Let $C = \bigcup_{i \in I} C_i$ be such a union for some countable index set $I \subseteq \mathbb{N}$ and equivalence classes $C_i \in \xfrac{\stateSpace}{\sim}$. Then 
			\begin{align*}
				\prob(s, C) = \sum_{i \in I} \prob(s, C_i) \overset{s \sim s'}{=} \sum_{i \in I} \prob(s', C_i) = \prob(s', C). 
			\end{align*}
			Hence, $\sim$ is a $(0, 0)$-bisimulation, and thus $s \sim s'$ implies $s \sim_{0, 0} s'$. 
		\end{enumerate}
	\end{proof}

	\CorAlgorithmicComplexity*
	\begin{proof}
		We adjust the algorithm proposed in \cite{AAPP} for computing $\sim_\varepsilon$ on DTMC in $\mathcal{O}(\vert \stateSpace \vert^7)$. To this end, we first observe that the $\varepsilon$-condition of $(\varepsilon, \delta)$-bisimulation is as for $\varepsilon$-bisimulations on DTMC (see \Cref{def:epsilon-bisimulation,def:epsilon-delta-bisimulation}). The $\varepsilon$-condition can hence be checked by solving specific maximum flow problems, just like in the algorithm of \cite{AAPP}. Furthermore, the $\delta$-condition is a purely state based property, and can hence be checked before considering the $\varepsilon$-condition at all. Thus, altering the algorithm of \cite{AAPP} such that the initial relation $R_0$ equals 
		\begin{align*}
			R_0 = \{(s,s') \in \stateSpace \times \stateSpace \mid \labelFunction(s) = \labelFunction(s') \text{ and } \vert \ln(\rate(s)) - \ln(\rate(s')) \vert \leq \delta\}
		\end{align*}
		causes the relation returned by the algorithm to equal $\sim_{\varepsilon, \delta}$, as it only relates states that satisfy both the labeling and the $\delta$-condition, and only removes states from one of the intermediate relations $R_i$ if they do not satisfy the $\varepsilon$-condition. Since the construction of $R_0$ as above requires at most $\vert \stateSpace \vert^2$ comparisons of state labels and $\vert \stateSpace \vert^2$ comparisons of (the logarithms of)  exit rates, the dominating part of the algorithm is still the (unchanged) loop, leading to the same algorithmic complexity as that of the computation of $\sim_\varepsilon$ on DTMC.
	\end{proof}

	\LemApproximatBisimulationInUniformization*
	\begin{proof}
		Let $s \sim_{\varepsilon, \delta} s'$, and let $\uniformizationRate \geq \max_{p \in \stateSpace} \rate(p)$ be a uniformization rate of $\mathcal{M}$. Recall that the probability distribution $\unifProb$ of $\uniformization{\mathcal{M}}{q}$ is given as \cite{MCASMP}
		\begin{align*}
			\unifProb(s, t) = 
			\begin{cases}
				\frac{\prob(s,t) \cdot \rate(s)}{\uniformizationRate}, & \text{if } s \neq t\\
				1 + \frac{\prob(s,s) \cdot \rate(s)}{\uniformizationRate} - \frac{\rate(s)}{\uniformizationRate}, & \text{if} s = t
			\end{cases}.
		\end{align*}
		In particular, this definition implies $\frac{\prob(s,s) \cdot \rate(s)}{\uniformizationRate} = \unifProb(s, s) + \frac{\rate(s)}{q} - 1$.
		
		The goal is to prove that $\sim_{\varepsilon, \delta}$ induces a $\tau$-bisimulation on $\uniformization{\mathcal{M}}{\uniformizationRate}$ for $\tau = e^\delta \cdot (1 +\varepsilon) - 1$. By definition, $\sim_{\varepsilon, \delta}$ only relates states with the same label. Thus, what remains to be checked is condition (ii) of \Cref{def:epsilon-bisimulation}. To this end, let $A \subseteq \stateSpace$. We distinguish three cases. 
		
		If $s \notin A$ and $s' \notin {\sim_{\varepsilon, \delta}} (A)$ then 
		\begin{align*}
			\unifProb(s, A) &= \frac{\prob(s, A) \cdot \rate(s)}{\uniformizationRate} \\
			&\leq \frac{(\prob(s', {\sim_{\varepsilon, \delta}}(A)) + \varepsilon) \cdot \rate(s') \cdot e^{\delta}}{\uniformizationRate} \\
			&= \underbrace{\frac{\prob(s', {\sim_{\varepsilon, \delta}}(A)) \cdot \rate(s')}{\uniformizationRate}}_{= \unifProb(s', {\sim_{\varepsilon, \delta}}(A))}  \cdot e^{\delta} + \underbrace{\frac{\varepsilon \cdot e^{\delta} \cdot \rate(s')}{\uniformizationRate}}_{\leq \varepsilon \cdot e^{\delta} \text{ as } \rate(s') \leq \uniformizationRate} \\
			&\leq \unifProb(s', {\sim_{\varepsilon, \delta}}(A)) \cdot e^{\delta} + \varepsilon \cdot e^{\delta} \\
			&= \unifProb(s', {\sim_{\varepsilon, \delta}}(A)) + (e^{\delta} - 1) \cdot \unifProb(s', {\sim_{\varepsilon, \delta}}(A)) + \varepsilon \cdot e^{\delta} \\
			&\leq \unifProb(s', {\sim_{\varepsilon, \delta}}(A)) + e^{\delta} - 1 + \varepsilon \cdot e^{\delta} \\
			&= \unifProb(s', {\sim_{\varepsilon, \delta}}(A)) + e^{\delta} \cdot (1 + \varepsilon) - 1. 
		\end{align*}
		Otherwise, if $s \notin A$ but $s' \in {\sim_{\varepsilon, \delta}}(A)$ then 
		\begin{align*}
			\unifProb(s, A) &= \frac{\prob(s, A) \cdot \rate(s)}{q} \\&\leq \frac{(\prob(s', {\sim_{\varepsilon, \delta}}(A))+ \varepsilon)\cdot \rate(s') \cdot e^{\delta}}{\uniformizationRate} \\
			&= e^{\delta} \cdot \underbrace{\frac{\prob(s', {\sim_{\varepsilon, \delta}}(A) \setminus \{s'\}) \cdot \rate(s')}{\uniformizationRate}}_{= \unifProb(s', {\sim_{\varepsilon, \delta}}(A) \setminus \{s'\})} + \ e^{\delta} \cdot \underbrace{\frac{\prob(s', s')\cdot \rate(s')}{\uniformizationRate}}_{= \unifProb(s', s') + \frac{\rate(s')}{\uniformizationRate} - 1} + \frac{\varepsilon \cdot e^{\delta} \cdot\rate(s')}{\uniformizationRate} \\
			&= e^{\delta} \cdot \unifProb(s', {\sim_{\varepsilon, \delta}}(A) \setminus \{s'\}) + e^{\delta} \left(\unifProb(s', s') + \frac{\rate(s')}{\uniformizationRate} - 1 \right) + \frac{\varepsilon \cdot e^{\delta} \cdot \rate(s')}{\uniformizationRate} \\
			&= e^{\delta} \cdot \unifProb(s', {\sim_{\varepsilon, \delta}}(A)) + (1+\varepsilon)\cdot e^{\delta} \cdot \frac{\rate(s')}{\uniformizationRate} - e^{\delta} \\
			&= \unifProb(s', {\sim_{\varepsilon, \delta}}(A)) + (e^{\delta} - 1)\cdot \underbrace{\unifProb(s', {\sim_{\varepsilon, \delta}}(A))}_{\leq 1} + (1 + \varepsilon)\cdot e^{\delta} \cdot \underbrace{\frac{\rate(s')}{\uniformizationRate}}_{\leq 1} - e^{\delta} \\
			&\leq \unifProb(s', {\sim_{\varepsilon, \delta}}(A)) + e^{\delta} - 1 + (1+\varepsilon) \cdot e^\delta - e^{\delta} \\
			&= \unifProb(s', {\sim_{\varepsilon, \delta}}(A)) + e^{\delta} \cdot (1+\varepsilon) - 1.
		\end{align*}
		Lastly, if $s \in A$ then it immediately follows that $s' \in {\sim_{\varepsilon, \delta}}(A)$ as $s \sim_{\varepsilon, \delta} s'$ and 
		\begin{align*}
			\unifProb(s, A) = \frac{\prob(s, A) \cdot \rate(s)}{\uniformizationRate} + 1 - \frac{\rate(s)}{q} = \frac{(\prob(s, A) - 1) \cdot \rate(s)}{q} + 1.
		\end{align*}
		As $\prob(s, A) - 1 \leq 0$, it follows that
		\begin{align*}
			\unifProb(s, A) &\leq \frac{(\prob(s', {\sim_{\varepsilon, \delta}}(A)) + \varepsilon - 1) \cdot e^{-\delta}\cdot \rate(s')}{\uniformizationRate} + 1 \\
			&= e^{-\delta} \cdot \underbrace{\frac{\prob(s', {\sim_{\varepsilon, \delta}}(A) \setminus \{s'\}) \cdot \rate(s')}{\uniformizationRate}}_{= \unifProb(s', {\sim_{\varepsilon, \delta}}(A) \setminus \{s'\})} + e^{-\delta} \cdot \underbrace{\frac{\prob(s', s') \cdot \rate(s')}{\uniformizationRate}}_{= \unifProb(s', s') + \frac{\rate(s')}{\uniformizationRate} - 1} \\&\qquad + (\varepsilon-1) \cdot \frac{e^{-\delta} \cdot \rate(s')}{\uniformizationRate} + 1 \\
			&= \unifProb(s', {\sim_{\varepsilon, \delta}}(A)) + (e^{-\delta} - 1) \cdot \unifProb(s', {\sim_{\varepsilon, \delta}}(A)) \\& \qquad + e^{-\delta} \cdot \frac{\rate(s')}{\uniformizationRate} - e^{-\delta} + (\varepsilon-1) \cdot \frac{e^{-\delta} \cdot \rate(s')}{\uniformizationRate} + 1 \\
			&= \unifProb(s', {\sim_{\varepsilon, \delta}}(A)) + \underbrace{\underbrace{(e^{-\delta} - 1)}_{\leq 0} \cdot \unifProb(s', {\sim_{\varepsilon, \delta}}(A))}_{\leq 0} - e^{-\delta} + \varepsilon \cdot e^{-\delta} \cdot \underbrace{\frac{\rate(s')}{\uniformizationRate}}_{\leq 1}  + 1  \\
			&\leq \unifProb(s', {\sim_{\varepsilon, \delta}}(A)) - e^{-\delta} + \varepsilon \cdot e^{-\delta} + 1 \\
			&= \unifProb(s', {\sim_{\varepsilon, \delta}}(A)) + e^{-\delta} \cdot ( \varepsilon - 1) + 1
		\end{align*}
		Hence, $\sim_{\varepsilon, \delta}$ induces a $\max\{1 + e^{-\delta} \cdot (\varepsilon -1), e^{\delta} \cdot (1 + \varepsilon) -1 \}$-bisimulation on the uniformized DTMC $\uniformization{\mathcal{M}}{\uniformizationRate}$. As this maximum is, for any $\varepsilon, \delta \geq 0$, attained at $e^{\delta} \cdot ( 1 + \varepsilon) - 1$, the claim follows. 
	\end{proof}
	
	\PropConnectionTauLump*
	\begin{proof}
		Let $C \in \xfrac{\stateSpace}{R}$ and $(s, s') \in R$. Then 
		\begin{align*}
			\prob(s,C) \cdot \rate(s) &\leq (\prob(s', C) + \varepsilon)\cdot \rate(s') \cdot e^{\delta} \\
			&= \prob(s', C) \cdot \rate(s') \cdot e^{\delta} + \varepsilon \cdot \rate(s') \cdot e^{\delta} \\
			&= \prob(s', C) \cdot \rate(s') + (e^{\delta} - 1) \cdot \prob(s', C) \cdot \rate(s') + \varepsilon \cdot \rate(s') \cdot e^{\delta} \\
			&= \prob(s', C) \cdot \rate(s') + \rate(s') \cdot ((e^{\delta} - 1)\cdot \prob(s', C) + \varepsilon \cdot e^{\delta}) \\
			&\leq \prob(s', C) \cdot \rate(s') + \rate(s') \cdot (e^{\delta} -1 + \varepsilon \cdot e^{\delta}) \\
			& \leq \prob(s', C) \cdot \rate(s') + q \cdot (e^{\delta} \cdot (1 + \varepsilon) - 1).
		\end{align*}
		With the same calculations, $\prob(s', C) \cdot \rate(s') \leq \prob(s, C) \cdot \rate(s) + q \cdot (e^{\delta} \cdot (1 + \varepsilon) -1)$, and so $\vert \rate(s, C) - \rate(s', C) \vert \leq q \cdot (e^\delta \cdot (1+\varepsilon) - 1)$ for all equivalence classes $C$ of $R$ and all $(s,s') \in R$. Thus, $\xfrac{S}{R}$ is a $q \cdot(e^{\delta} \cdot (1 + \varepsilon) - 1)$-lumpability of $\stateSpace$.
	\end{proof}
	
	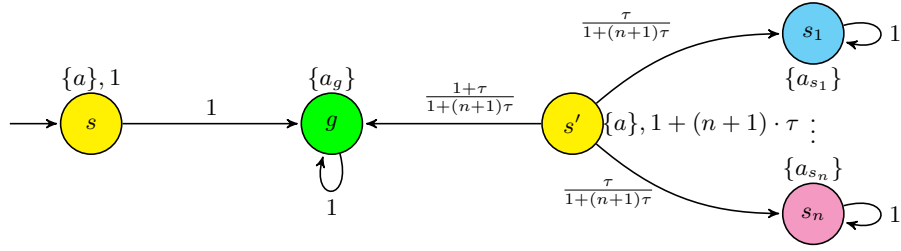
\begin{figure}[t]
		\centering
		\resizebox{!}{!}{
			\begin{tikzpicture}[->,>=stealth',shorten >=1pt,auto, semithick]
				\tikzstyle{every state} = [text = black]
				
				\node[state] (s) [fill = yellow] {$s$};
				\node[state] (g) [right of = s, node distance = 3.2cm, fill = green] {$g$};
				\node[state] (s') [right of = g, node distance = 3.2cm, fill = yellow] {$s'$};
				\node(temp) [right of = s', node distance = 3.2cm] {$\vdots$}; 
				\node[state] (s1) [above of = temp, node distance = 1.2cm, fill = cyan!50] {$s_1$}; 
				\node[state] (sn) [below of = temp, node distance = 1.2cm, fill = magenta!50] {$s_n$}; 
				\node (init) [left of = s, node distance = 1.2cm] {}; 
				
				\path 
					(init) edge (s)
					(s) edge node {$1$} (g)
					(g) edge [loop below] node {$1$} (g)
					(s') edge node [above, pos = 0.4] {$\frac{1 + \tau}{1 + (n+1)\tau}$} (g)
					(s') edge [bend left = 20] node {$\frac{\tau}{1 + (n+1)\tau}$} (s1)
					(s') edge [bend right = 20] node [below, pos = 0.2, xshift = -0.3cm] {$\frac{\tau}{1 + (n+1)\tau}$} (sn)
					(s1) edge [loop right] node {$1$} (s1)
					(sn) edge [loop right] node {$1$} (sn)
				; 
				
				\node[above of = s, node distance = 0.6cm] {$\{a\}, 1$};
				\node[right of = s', node distance = 1.7cm, align=left] {$\{a\}, 1 + (n+1) \cdot \tau$};
				\node[above of = g, node distance = 0.6cm] {$\{a_g\}$};
				\node[below of = s1, node distance = 0.6cm] {$\{a_{s_1}\}$};
				\node[above of = sn, node distance = 0.6cm] {$\{a_{s_n}\}$};
				
		\end{tikzpicture}}
		\caption{The CTMC $\mathcal{M}$ used in the proof of \Cref{prop:quasi-lumpability-does-not-imply-epsilon-delta}.}
		\label{fig:cex-lumpability}
	\end{figure}
	
	\PropTauLumpDoesNotImplyEpsilonDelta*
	\begin{proof}
		Let $\varepsilon \in (0, 1)$, $\delta > 0, \tau > 0$ and $n \in \mathbb{N}$ with \mbox{$n > \max\{\frac{\varepsilon\cdot(\tau + 1)}{\tau\cdot(1 - \varepsilon)}, \frac{e^{\delta}}{\tau} - 1\}$.} We consider a CTMC $\mathcal{M} = \mathcal{M}(\varepsilon, \delta, \tau)$. It has states $s, s', g, s_1, \ldots, s_n$ and transition probabilities $\prob(s, g) = 1, \prob(s', g) = \frac{1 + \tau}{1 + (n+1)\tau}, \prob(s', s_i) = \frac{\tau}{1 + (n+1)\tau}$ for all $i = 1, \ldots, n$, and self-loops in the remaining states. For our purpose, only the exit rates of states $s$ and $s'$ matter, which are $\rate(s) = 1$ and $\rate(s') = 1 + (n+1) \cdot \tau$. The labeling function $\labelFunction$ is such that $\labelFunction(s) = \labelFunction(s') = a$ and it assigns unique labels $a_x$ to all other states $x$. In particular, besides $s$ and $s'$ no two states have the same label. A depiction of $\mathcal{M}$ can be found in \Cref{fig:cex-lumpability}.
		
		In $\mathcal{M}$, the pair $(s,s')$ does not satisfy the $\delta$-condition since
		\begin{align*}
			\vert \rate(s) - \rate(s') \vert = \vert 1 - (1 + (n+1) \cdot \tau) \vert = (n+1) \cdot \tau > \left(\frac{e^{\delta}}{\tau} - 1 + 1 \right) \cdot \tau = e^{\delta}.
		\end{align*} 
		Moreover, the $\varepsilon$-condition is violated because 
		\begin{align*}
			\prob(s', \{g\}) + \varepsilon &= \frac{1 + \tau}{1 + (n+1)\cdot \tau} + \varepsilon 
			\\&< \frac{1 + \tau}{1 + (\frac{\varepsilon\cdot(\tau +1)}{\tau\cdot (1-\varepsilon)} + 1) \cdot \tau} + \varepsilon 
			\\&= \frac{1 + \tau}{1 + \frac{\varepsilon \cdot(\tau + 1) + \tau \cdot (1 - \varepsilon)}{1 - \varepsilon}} + \varepsilon 
			\\&= \frac{1+ \tau}{1 + \frac{\varepsilon + \tau}{1 - \varepsilon}} + \varepsilon 
			\\&= \frac{1 + \tau}{\frac{\varepsilon + \tau + 1 - \varepsilon}{1 - \varepsilon}} + \varepsilon \\&= \frac{1+\tau}{1 + \tau} \cdot (1 - \varepsilon) + \varepsilon 
			\\&= 1 = \prob(s, \{g\})
		\end{align*}
		The partition $\Omega = \{\{s, s'\}, \{g\}\} \cup \{\{s_i\} \mid i = 1, \ldots, n\}$ that relates $s$ and $s'$ is, however, a $\tau$-quasi lumpability since 
		\begin{align*}
			\vert \prob(s, \{g\})  \rate(s)- \prob(s', \{g\})  \rate(s') \vert &= \left \vert 1 - \left(\frac{1 + \tau}{1 + (n+1) \cdot \tau} \right) \cdot (1 + (n+1) \cdot \tau)  \right \vert \\&= \left \vert 1 - (1 + \tau) \right \vert = \tau
		\end{align*} 
		and, for all $i$, 
		\begin{align*}
			\vert \prob(s, \{s_i\})  \rate(s) - \prob(s', \{s_i\})  \rate(s') \vert = \left\vert \frac{\tau}{1 + (n+1)\tau} \cdot (1 + (n+1) \cdot \tau) \right \vert = \tau.
		\end{align*}
	\end{proof}
	
	\ThmSameStructureConstruction*
	\begin{proof}
		We start by defining the models $\mathcal{M}'$ and $\mathcal{N}'$. They both share the same state space $\stateSpace' = \stateSpace^\mathcal{M} \times \stateSpace^\mathcal{N}$, with initial state $\initialState' = (\initialState^\mathcal{M}, \initialState^\mathcal{N})$. The label function $\labelFunction'$ of both models is defined to be $\labelFunction'((s,t)) = \labelFunction^\mathcal{N}(t)$, and the exit rate functions are defined via 
		\begin{align*}
			{\rate^\mathcal{M}}'((s,t)) &= \begin{cases}
				\rate^\mathcal{M}(s), &\text{if } s \sim_{\varepsilon, \delta} \! \! t \\
				\rate^\mathcal{N}(t), &\text{if} s \nsim_{\varepsilon, \delta} t
			\end{cases} \qquad \text{and} \qquad
			{\rate^\mathcal{N}}'((s,t)) = \rate^\mathcal{N}(t).
		\end{align*}
		Lastly, the common transition probability function is defined as 
		\begin{align*}
			\prob'((s,t), (s',t')) &= \begin{cases}
				\Delta_{s,t}(s',t') \prob^\mathcal{M}(s,s'), &\text{if } s \sim_{\varepsilon, \delta} t, s' \in \Succ(s), t' \in \Succ(t) \\
				\prob^\mathcal{M}(s,s')  \prob^\mathcal{N}(t,t') &\text{if } s \nsim_{\varepsilon, \delta} t \\
				0, &\text{otherwise}
			\end{cases}
		\end{align*}
		where, for a given pair $(s,t) \in \stateSpace'$ with $s \sim_{\varepsilon, \delta} t$, $\Delta_{s,t} \colon \Succ(s) \to \Distr(\Succ(t))$ is a weight function as in \Cref{lem:characterization-weight-functions-concur}. Note that $\Delta_{s,t}$ exists for every such pair $(s,t)$ as any two $(\varepsilon, \delta)$-bisimilar states are $\varepsilon$-bisimilar in the DTMCs underlying $\mathcal{M}$ and $\mathcal{N}$ by the definition of $(\varepsilon, \delta)$-bisimulations. All in all, we thus have $\mathcal{M}' = (\stateSpace', \prob', {\rate^\mathcal{M}}', \initialState', \labelFunction')$ and $\mathcal{N}' = (\stateSpace', \prob', {\rate^\mathcal{N}}', \initialState', \labelFunction')$.
		
		To ensure well-definedness of $\mathcal{M}'$ and $\mathcal{N}'$, we first show that $\prob'$ is actually a probability distribution on $\stateSpace'$. To this end, let $(s,t) \in \stateSpace'$. If $s \sim_{\varepsilon, \delta} t$ then, by definition of $\prob'$, we have $\prob'((s,t), (s',t')) = 0$ if $s' \notin \Succ(s)$ or $t' \notin \Succ(t)$, i.e., $\prob'((s,t), (s',t')) = 0$ for every $(s',t') \notin (\Succ(s) \times \Succ(t))$. Hence, 
		\begin{align*}
			\prob'((s,t), \stateSpace') &= \sum_{(s', t') \in (\Succ(s) \times \Succ(t))} \prob'((s,t), (s',t')) \\
			&= \sum_{(s', t') \in (\Succ(s) \times \Succ(t))} \Delta_{s,t}(s', t') \cdot \prob^\mathcal{M}(s,s') \\
			&= \sum_{s' \in \Succ(s)} \prob^\mathcal{M}(s,s') \cdot \underbrace{\sum_{t' \in \Succ(t)} \Delta_{s,t}(s',t')}_{=1 \text{ as} \Delta_{s,t}(s', \cdot) \in \Distr(Succ(t))} \\
			&= \sum_{s' \in \Succ(s)} \prob^\mathcal{M}(s,s') = 1.  
		\end{align*}
		Otherwise, if $s \nsim_{\varepsilon, \delta} t$ then 
		\begin{align*}
			\prob'((s,t), \stateSpace') &= \sum_{(s',t') \in \stateSpace'}\prob^\mathcal{M}(s,s') \cdot \prob^\mathcal{N}(t,t') \\&= \sum_{s' \in \stateSpace^\mathcal{M}} \prob^\mathcal{M}(s,s') \cdot \sum_{t' \in \stateSpace^\mathcal{N}}  \prob^\mathcal{N}(t,t') = 1.
		\end{align*}
		
		We now prove that $\mathcal{M} \sim_{\varepsilon, 0} \mathcal{M}' \sim_{0, \delta} \mathcal{N}' \sim \mathcal{N}$. 
		
		\medskip
		
		\textbf{$\mathcal{M}' \sim_{0, \delta} \mathcal{N}'$: } \qquad Consider the identity relation $R_{id} = \{((s,t), (s,t)) \mid (s,t) \in \stateSpace'\}$ on $\stateSpace'$. $R_{id}$ is reflexive, symmetric and only relates states with the same label. Furthermore, since both $\mathcal{M}'$ and $\mathcal{N}'$ have the same transition probability function $\prob'$, the $\varepsilon$-condition of \Cref{def:epsilon-delta-bisimulation} is satisfied for $\varepsilon = 0$. Lastly, regarding the $\delta$-condition, we distinguish two cases. If $s \sim_{\varepsilon, \delta} t$ then 
		\begin{align*}
			\vert \ln ({\rate^\mathcal{M}}'((s,t))) - \ln({\rate^\mathcal{N}}'((s,t))) \vert = \vert \ln(\rate^\mathcal{M}(s)) - \ln(\rate^\mathcal{N}(t)) \vert \leq \delta.
		\end{align*}
		Otherwise, if $s \nsim_{\varepsilon, \delta} t$ then ${\rate^\mathcal{M}}'((s,t)) = \rate^\mathcal{N}(t) = {\rate^\mathcal{N}}'((s,t))$ and the $\delta$-condition holds trivially. Hence, $R_{id}$ is a $(0, \delta)$-bisimulation, and since it relates the initial states of both $\mathcal{M}'$ and $\mathcal{N}'$ it follows that $\mathcal{M}' \sim_{0, \delta}\mathcal{N}'$.
		
		\medskip 
		
		\textbf{$\mathcal{M} \sim_{\varepsilon, 0} \mathcal{M}'$: } \qquad Denote by $R \subseteq \stateSpace^\mathcal{M} \times \stateSpace'$ the reflexive and symmetric closure of 
		\begin{align*}
			\{(s,(s,t)) \mid s \in \stateSpace^\mathcal{M}, (s,t) \in \stateSpace' \text{ such that } s \sim_{\varepsilon, \delta} t\}.
		\end{align*}
		Note that, in particular, $(\initialState^\mathcal{M}, \initialState') \in R$ as $\initialState^\mathcal{M} \sim_{\varepsilon, \delta} \initialState^\mathcal{N}$ by assumption.
		
		Let $(s,(s,t)) \in R$ (the case for $((s,t), s) \in R$ works analogously, and the cases $(s,s) \in R$ resp. $((s,t), (s,t)) \in R$ are trivial). Then $\labelFunction'((s,t)) = \labelFunction^\mathcal{N}(t) = \labelFunction^\mathcal{M}(s)$ since $s \sim_{\varepsilon, \delta} t$, and 
		\begin{align*}
			\vert \ln (\rate^\mathcal{M}(s)) - \ln ({\rate^\mathcal{M}}'((s,t))) \vert = \vert \ln (\rate^\mathcal{M}(s)) - \ln (\rate^\mathcal{M}(s)) \vert = 0,
		\end{align*}
		so $R$ satisfies both the labeling- and the $\delta$-condition (for $\delta = 0$) of \Cref{def:epsilon-delta-bisimulation}. 
		
		The only thing left to prove for $R$ to be an $(\varepsilon, 0)$-bisimulation is thus the $\varepsilon$-condition. To see that this conditions holds, let $A \subseteq \stateSpace^\mathcal{M} \cup \stateSpace'$. Then 
		\begin{align*}
			\prob'((s,t), R(A)) = \sum_{(s', t') \in R(A)} \prob'((s,t), (s',t')) = \sum_{\substack{(s',t') \in R(A) \\ s' \in \Succ(s) \\ t' \in \Succ(t)}} \prob'((s,t), (s',t')) 
		\end{align*}
		where the second equality follows from the definition of $\prob'$ since $s \sim_{\varepsilon, \delta} t$. 
		
		By the definition of $R$ every $(s', t')$ with $s' \in A \cap \Succ(s)$ and $t' \in {\sim_{\varepsilon, \delta}}{\{s'\}} \cap \Succ(t)$ is contained in $R(A)$, i.e., $\{(s',t') \in R(A) \mid s' \in \Succ(s), t' \in \Succ(t)\}$ is a superset of $\{(s',t') \in \stateSpace' \mid s' \in A \cap \Succ(s), t' \in \Succ(t) \cap {\sim_{\varepsilon, \delta}}(\{s'\})\}$. Thus,
		\begin{align*}
			\prob'((s,t), R(A)) &= \sum_{\substack{(s',t') \in R(A) \\ s' \in \Succ(s) \\ t' \in \Succ(t)}} \prob'((s,t), (s',t')) \\
			&\geq \sum_{s' \in \Succ(s) \cap A} \sum_{t'\in \Succ(t) \cap {\sim_{\varepsilon,\delta}}(\{s'\})} \prob'((s,t), (s',t')) \\
			&= \sum_{s' \in \Succ(s) \cap A} \sum_{t'\in \Succ(t)\cap {\sim_{\varepsilon,\delta}}(\{s'\})} \prob^\mathcal{M}(s,s') \cdot \Delta_{s,t}(s',t') \\
			&= \sum_{s' \in \Succ(s) \cap A} \sum_{t'\in \Succ(t)} \prob^\mathcal{M}(s,s') \cdot \Delta_{s,t}(s',t') \\& \qquad - \sum_{s' \in \Succ(s) \cap A} \sum_{t'\in {\Succ(t) \setminus \sim_{\varepsilon,\delta}}(\{s'\})} \prob^\mathcal{M}(s,s') \cdot \Delta_{s,t}(s',t') \\
			&= \prob^\mathcal{M}(s, A) - \sum_{s' \in \Succ(s) \cap A} \sum_{t'\in {\Succ(t) \setminus \sim_{\varepsilon,\delta}}(\{s'\})} \prob^\mathcal{M}(s,s') \cdot \Delta_{s,t}(s',t')
		\end{align*}
		as $\sum_{t' \in \Succ(t)} \Delta_{s,t}(s',t') = 1$ since $\Delta_{s,t}(s', \cdot) \in \Distr(\Succ(t))$. The sum in the previous term can then be bounded from above via
		\begin{align}
			&\textcolor{white}{=} \sum_{s' \in \Succ(s) \cap A} \sum_{t'\in {\Succ(t) \setminus \sim_{\varepsilon,\delta}}(\{s'\})} \prob^\mathcal{M}(s,s') \cdot \Delta_{s,t}(s',t') \nonumber \\&\leq \sum_{s' \in \Succ(s)} \sum_{t'\in {\Succ(t) \setminus \sim_{\varepsilon,\delta}}(\{s'\})} \prob^\mathcal{M}(s,s') \cdot \Delta_{s,t}(s',t') \nonumber \\
			&= \sum_{s'\in \Succ(s)} \prob^\mathcal{M}(s,s') \cdot \left(1 - \sum_{t' \in \Succ(t) \cap {\sim_{\varepsilon, \delta}}(\{s'\})} \Delta_{s,t}(s',t') \right) \nonumber \\
			&= 1 - \underbrace{\sum_{s' \in \Succ(s)}\sum_{t' \in \Succ(t) \cap {\sim_{\varepsilon, \delta}}(\{s'\})} \prob^\mathcal{M}(s,s') \cdot  \Delta_{s,t}(s',t')}_{\geq 1 - \varepsilon \text{ by condition (2) of  \Cref{lem:characterization-weight-functions-concur}} } \nonumber \\
			&\leq 1 - (1-\varepsilon) = \varepsilon. \label{eq:same-structure-2}
		\end{align}
		All in all, this yields that 
		\begin{align*}
			\prob'((s,t), R(A)) &\geq  \prob^\mathcal{M}(s, A) - \underbrace{\sum_{s' \in \Succ(s) \cap A} \sum_{t'\in {\Succ(t) \setminus \sim_{\varepsilon,\delta}}(\{s'\})} \prob^\mathcal{M}(s,s') \cdot \Delta_{s,t}(s',t')}_{\leq \varepsilon \text{ by \Cref{eq:same-structure-2}}} \\&\geq \prob^\mathcal{M}(s,A) - \varepsilon,
		\end{align*}
		which is equivalent to the $\varepsilon$-condition
		\begin{align*}
			\prob^\mathcal{M}(s,A) \leq \prob'((s,t), R(A)) + \varepsilon.
		\end{align*}
		
		To prove that $\prob'((s,t), A) \leq \prob^\mathcal{M}(s,R(A)) + \varepsilon$ for all $A \subseteq S$, we first observe that since $(s,(s,t)) \in R$ it again has to hold that $s \sim_{\varepsilon, \delta} t$, so we get 
		\begin{align}
			\prob'((s,t), A) &= \sum_{(s', t') \in A} \prob'((s,t), (s',t')) \nonumber \\
			&= \sum_{\substack{(s',t') \in A \\ s' \in \Succ(s) \\ t' \in \Succ(t)}} \prob^\mathcal{M}(s,s') \cdot \Delta_{s,t}(s',t') \nonumber\\
			&= \sum_{\substack{(s',t') \in A \\ s' \in \Succ(s) \\ t' \in \Succ(t) \cap {\sim_{\varepsilon, \delta}}(\{s'\})}} \prob^\mathcal{M}(s,s') \cdot \Delta_{s,t}(s',t') \nonumber\\& \qquad+ \sum_{\substack{(s',t') \in A \\ s' \in \Succ(s) \\ t' \in \Succ(t) \setminus {\sim_{\varepsilon, \delta}}(\{s'\})}} \prob^\mathcal{M}(s,s') \cdot \Delta_{s,t}(s',t') \label{eq:same-struct-1}
		\end{align}
		
		As $(s', t') \in A$ with $s' \sim_{\varepsilon, \delta} t'$ implies $s' \in R(A)$ it follows that the first sum in \Cref{eq:same-struct-1} can be bounded from above via
		\begin{align*}
			&\sum_{\substack{(s',t') \in A \\ s' \in \Succ(s) \\ t' \in \Succ(t) \cap {\sim_{\varepsilon, \delta}}(\{s'\})}} \prob^\mathcal{M}(s,s') \cdot \Delta_{s,t}(s',t') \\&\leq \sum_{s' \in \Succ(s) \cap R(A)} \sum_{t'\in \Succ(t) \cap {\sim_{\varepsilon, \delta}}(\{s'\}) }\prob^\mathcal{M}(s,s') \cdot \Delta_{s,t}(s',t') \\
			&\leq \sum_{s' \in \Succ(s) \cap R(A)} \prob^\mathcal{M}(s,s') \cdot \sum_{t'\in \Succ(t)} \Delta_{s,t}(s',t') \\
			&= \prob^\mathcal{M}(s,R(A)).
		\end{align*}
		Furthermore, similar to \Cref{eq:same-structure-2}, the second sum of \Cref{eq:same-struct-1} satisfies 
		\begin{align*}
			&\sum_{\substack{(s',t') \in A \\ s' \in \Succ(s) \\ t' \in \Succ(t) \setminus {\sim_{\varepsilon, \delta}}(\{s'\})}} \prob^\mathcal{M}(s,s') \cdot \Delta_{s,t}(s',t') \leq \varepsilon.
		\end{align*}
		Hence, \Cref{eq:same-struct-1} becomes
		\begin{align*}
			\prob'((s,t), A) &= \sum_{\substack{(s',t') \in A \\ s' \in \Succ(s) \\ t' \in \Succ(t) \cap {\sim_{\varepsilon, \delta}}(\{s'\})}} \prob^\mathcal{M}(s,s') \cdot \Delta_{s,t}(s',t') \nonumber\\& \qquad+ \sum_{\substack{(s',t') \in A \\ s' \in \Succ(s) \\ t' \in \Succ(t) \setminus {\sim_{\varepsilon, \delta}}(\{s'\})}} \prob^\mathcal{M}(s,s') \cdot \Delta_{s,t}(s',t') \\
			&\leq \prob^\mathcal{M}(s, R(A)) + \varepsilon.
		\end{align*}
		All in all, this proves that $R$ is an $(\varepsilon,0)$-bisimulation, and thus $\mathcal{M} \sim_{\varepsilon, 0} \mathcal{M}'$. 
		
		\medskip 
		
		\textbf{$\mathcal{N}' \sim \mathcal{N}$: } \qquad Let $R'$ be the smallest equivalence on $\stateSpace^\mathcal{N} \cup \stateSpace'$ that contains all pairs $(t, (s,t))$ with $t \in \stateSpace^\mathcal{N}$ and $s \in \stateSpace^\mathcal{M}$. It is easy to see that every equivalence class of $R'$ is of the form $C = \{t\} \cup (\stateSpace^\mathcal{M} \times \{t\})$ for some $t \in \stateSpace^\mathcal{N}$. 
		
		Now, let $(x,y) \in R$. In the case that $x = y = t$ or $x = y = (s,t)$ for some $t \in \stateSpace^\mathcal{N}$ resp. some $(s,t) \in \stateSpace'$ it is clear that $R'$ satisfies all conditions of a strong bisimulation w.r.t. the pair $(x,y)$. 
		
		Otherwise, we consider two cases. If, on the one hand, $x = t\in \stateSpace^\mathcal{N}$ and $y \in \stateSpace'$ (or the other way around, which can be proved analogously) it follows from the definition of $R'$ that $y = (s,t)$ for some $s \in \stateSpace^\mathcal{M}$, i.e., that $(x,y) = (t, (s,t))$. By definition we have $\labelFunction'((s,t)) = \labelFunction^\mathcal{N}(t)$ and ${\rate^\mathcal{N}}'((s,t)) = \rate^\mathcal{N}(t)$, so we only need to check the transition probabilities. To this end, let $C = \{t'\} \cup (\stateSpace^\mathcal{M} \times \{t'\})$ be some equivalence class of $R'$. If $s \sim_{\varepsilon, \delta} t$ in $\mathcal{M} \oplus \mathcal{N}$ then there is a weight function $\Delta_{s,t}$ as in \Cref{lem:characterization-weight-functions-concur} for the pair $(s,t)$. If additionally $t' \in \Succ(t)$ we get
		\begin{align*}
			\prob^\mathcal{N}(t,C) &= \prob^\mathcal{N}(t,t') \\
			&= \sum_{s' \in \Succ(s)} \prob^\mathcal{M}(s,s') \cdot \Delta_{s,t}(s',t') \\
			&= \sum_{s' \in \Succ(s)} \prob^\mathcal{M}(s,s') \cdot \Delta_{s,t}(s',t') + \underbrace{ \sum_{s' \in \stateSpace^\mathcal{M} \setminus \Succ(s)} \prob'((s,t), (s',t'))}_{= 0} \\
			&= \sum_{s' \in \Succ(s)} \prob'((s,t), (s',t')) + \sum_{s'\in \stateSpace^\mathcal{M} \setminus \Succ(s)} \prob'((s,t), (s',t')) \\
			&= \sum_{s' \in \stateSpace^\mathcal{M}} \prob'((s,t), (s', t')) \\
			&= \sum_{(s', t') \in C} \prob'((s,t), (s',t')) = \prob'((s,t), C),
		\end{align*}
		while in the case that $t' \notin \Succ(t)$ we have 
		\begin{align*}
			\prob^\mathcal{N}(t,C) &= \prob^\mathcal{N}(t,t') = 0 = \sum_{(s',t') \in C} \prob'((s,t), (s',t')).
		\end{align*}
		Moreover, if $s \nsim_{\varepsilon, \delta} t$, we get 
		\begin{align*}
			\prob^\mathcal{N}(t,C) &= \prob^\mathcal{N}(t,t') \\
			&= \prob^\mathcal{N}(t,t') \cdot \sum_{s' \in \stateSpace^\mathcal{M}}\prob^\mathcal{M}(s,s') \\
			&= \sum_{s' \in \stateSpace^\mathcal{M}} \prob^\mathcal{M}(s,s') \cdot \prob^\mathcal{N}(t,t') \\
			&= \sum_{s' \in \stateSpace^\mathcal{M}} \prob'((s,t),(s',t')) \\
			&= \sum_{(s',t') \in C} \prob'((s,t), (s',t')) = \prob'((s,t), C). 
		\end{align*}
		
		If, on the other hand, $x,y \in \stateSpace'$ it follows from the definition of $R'$ that there is some $t \in \stateSpace^\mathcal{N}$ such that $x = (s, t)$ and $y = (s', t)$. Let $C = \{t'\} \cup (\stateSpace^\mathcal{M} \times \{t'\})$ be an equivalence class of $R'$. 
		
		We again do a case distinction, depending on the $(\varepsilon, \delta)$-bisimilarity of $t$ and $s$ resp. $s'$ in $\mathcal{M} \oplus \mathcal{N}$. In the case that $s \sim_{\varepsilon, \delta} t$, $s' \sim_{\varepsilon, \delta} t$ and $t' \in \Succ(t)$ we have 
		\begin{align*}
			\prob'((s,t), C) &= \sum_{(p,t') \in C} \prob'((s,t), (p,t')) \\&= \sum_{p \in \Succ(s)} \prob^\mathcal{M}(s,p) \cdot \Delta_{s,t}(p,t') \\&= \prob^\mathcal{N}(t,t') \\
			&= \sum_{p \in \Succ(s')} \prob^\mathcal{M}(s', p) \cdot \Delta_{s', t}(p, t')\\&= \sum_{(p, t') \in C} \prob'((s', t), (p, t')) \\&= \prob'((s',t), C),
		\end{align*}
		while if $t' \notin \Succ(t)$ it holds that 
		\begin{align*}
			\prob'((s,t), C) = 0 = \prob'((s', t), C).
		\end{align*}
		Otherwise, if $s \nsim_{\varepsilon, \delta} t$ and $s' \nsim_{\varepsilon, \delta} t$ then 
		\begin{align*}
			\prob'((s,t), C) &= \sum_{(p, t') \in C} \prob^\mathcal{M}(s,p) \cdot \prob^\mathcal{N}(t,t') \\&= \sum_{p \in S^\mathcal{M}} \prob^\mathcal{M}(s,p) \cdot \prob^\mathcal{N}(t,t') \\&= \prob^\mathcal{N}(t,t') \\
			&= \sum_{p \in S^\mathcal{M}} \prob^\mathcal{M}(s',p) \cdot \prob^\mathcal{N}(t,t') \\&= \prob'((s', t), C). 
		\end{align*}
		
		Lastly, if $s \sim_{\varepsilon, \delta} t$ and $s' \nsim_{\varepsilon, \delta} t$ (or the other way around), we have if $t' \notin \Succ(t)$ that 
		\begin{align*}
			\prob'((s,t), C) = 0 &= \sum_{p \in S^\mathcal{M}} \prob^\mathcal{M}(s',p) \cdot \underbrace{\prob^\mathcal{N}(t,t')}_{=0} = \sum_{(p,t') \in C} \prob'((s',t),(p,t')) \\&=  \prob'((s', t), C),
		\end{align*}
		while for $t' \in \Succ(t)$ we get
		\begin{align*}
			\prob'((s,t), C) &= \sum_{(p, t') \in C} \prob'((s,t), (p, t')) \\&= \sum_{p \in \Succ(s)} \prob^\mathcal{M}(s,p) \cdot \Delta_{s,t}(p,t') \\&= \prob^\mathcal{N}(t, t') \\
			&=  \prob^\mathcal{N}(t,t') \cdot \sum_{p \in \stateSpace^\mathcal{M}}\prob^\mathcal{M}(s',p) \\&= \sum_{p \in \stateSpace^\mathcal{M}}\prob^\mathcal{M}(s',p) \cdot \prob^\mathcal{N}(t,t') \\&= \prob'((s', t), C)
		\end{align*}
		
		All in all, the previous calculations proved that, for every pair of states $(x,y) \in R'$ and any $R'$-equivalence class $C$ we have $\prob'(x, C) = \prob'(y, C)$, and hence $R'$ is a strong bisimulation between $\mathcal{N}$ and $\mathcal{N}'$. As $R'$ in particular relates the initial states of $\mathcal{N}$ and $\mathcal{N}'$ it follows that $\mathcal{N}' \sim \mathcal{N}$. 
	\end{proof}

	\section{Proofs of \Cref{sec:time-bounded-reachability-bounds}}

	\PropTransientBoundsEpsilonDeltaGreateZero*
	\begin{proof}
		Let $\uniformizationRate$ be a uniformization rate of $\mathcal{M}$, i.e., $\uniformizationRate \geq \max_{s \in \stateSpace} \rate(s)$. From \Cref{lem:approximate-bisimulation-in-uniformization} we know that in $\uniformization{\mathcal{M}}{\uniformizationRate}$ we have $s \sim_{e^\delta \cdot(1 + \varepsilon) - 1} s'$. 
		
		\Cref{thm:main-result-rbbteapc} \cite{RBBTEAPC} together with the fact that the probability to reach $g$ from $s$ resp. $s'$ after at most $k$ steps in $\uniformization{\mathcal{M}}{\uniformizationRate}$ coincides with the entries $\unifProbMatrix^k_{s,g}$ resp. $\unifProbMatrix^k_{s', g}$ at positions $[s,g]$ resp. $[s', g]$ of $\unifProbMatrix^k$ \cite{PoMC} yields the desired bound via
		\begin{align*}
			\vert \probMeasure_s(\reachability^{\leq t} g) - \probMeasure_{s'}(\reachability^{\leq t} g) 	\vert &= e^{-\uniformizationRate \cdot t} \cdot \left \vert \sum_{k = 0}^\infty \frac{(\uniformizationRate \cdot t)^k}{k!} \cdot (\unifProbMatrix^k_{s,g} - \unifProbMatrix^k_{s', g}) \right \vert \\
			&= e^{-\uniformizationRate \cdot t} \cdot \left \vert \sum_{k = 0}^\infty \frac{(\uniformizationRate \cdot t)^k}{k!} \cdot (\probMeasure_s(\reachability^{\leq k} g) - \probMeasure_{s'}(\reachability^{\leq k} g)) \right \vert \\
			&\leq e^{-\uniformizationRate \cdot t} \cdot \sum_{k=0}^\infty \frac{(\uniformizationRate \cdot  t)^k}{k!} \cdot \vert 	\probMeasure_s(\reachability^{\leq k} g) - \probMeasure_{s'}(\reachability^{\leq k} g) \vert \\
			&\leq e^{-\uniformizationRate\cdot  t} \cdot\sum_{k=0}^\infty \frac{(\uniformizationRate \cdot t)^k}{k!} \cdot (1 - 	(1-(e^\delta \cdot(1 + \varepsilon) - 1))^k) \\
			&= e^{-\uniformizationRate \cdot  t} \cdot \sum_{k=0}^\infty \frac{(\uniformizationRate \cdot  t)^k}{k!} \cdot (1 - (2- e	^\delta \cdot(1 + \varepsilon))^k) \\
			&= e^{-\uniformizationRate\cdot t} \cdot \underbrace{\sum_{k = 0}^{\infty} \frac{(q\cdot t)^k}{k!}}_{= e	^{\uniformizationRate\cdot t}} - e^{- \uniformizationRate\cdot t} \cdot \underbrace{\sum_{k=0}^{\infty} \frac{(\uniformizationRate \cdot t \cdot (2-e^{\delta} \cdot (1 + \varepsilon)))^k}{k!}}_{= e^{\uniformizationRate\cdot t \cdot (2-e^{\delta} \cdot ( 1+ \varepsilon))}} \\ 
			&= e^{-\uniformizationRate \cdot t} \cdot e^{\uniformizationRate \cdot t} - e^{-\uniformizationRate \cdot t} \cdot e^{\uniformizationRate\cdot t \cdot (2-e^{\delta} \cdot (1 + \varepsilon))} \\
			&= 1 - e^{-\uniformizationRate \cdot t + 2 \cdot \uniformizationRate \cdot t - \uniformizationRate \cdot t \cdot e^{\delta} \cdot (1 + \varepsilon)} \\
			&= 1 - e^{-\uniformizationRate \cdot t \cdot (e^\delta \cdot(1 + \varepsilon)-1)}.
		\end{align*}
	\end{proof}

	\ThmParetoCurveUnif*
	\begin{proof}
		By \Cref{prop:transient-prob-bounds-epsilon-delta-greater-0}, $\vert \probMeasure_s(\reachability^{\leq t} g) - \probMeasure_{s'}(\reachability^{\leq t} g) \vert \leq 1 - e^{- q \cdot t \cdot (e^{\delta}\cdot ( 1+ \varepsilon) - 1)}$. Hence, if $1 -e^{- q \cdot t \cdot (e^{\delta}\cdot ( 1+ \varepsilon) - 1)} \leq \theta$ it immediately follows that also the absolute difference in the probabilities to reach $g$ after at most time $t$ between $s$ and $s'$ is $\leq \theta$. It is easy to see that 
		\begin{align*}
			1 -e^{- q \cdot t \cdot (e^{\delta}\cdot ( 1+ \varepsilon) - 1)} \leq \theta \qquad \text{ iff } \qquad e^{\delta}\cdot(1+\varepsilon) \leq 1 - \frac{\ln(1 - \theta)}{q \cdot t}.
		\end{align*}
		For fixed $\delta$, this yields 
		\begin{align*}
			e^{\delta} \cdot (1 + \varepsilon) \leq 1 - \frac{\ln(1 - \theta)}{q \cdot t} \qquad \text{ iff } \qquad \varepsilon \leq \frac{1}{e^{\delta}} \cdot \left(\frac{\uniformizationRate \cdot t - \ln(1-\theta)}{q \cdot t} \right) - 1,
		\end{align*}
		while for fixed $\varepsilon$ we get 
		\begin{align*}
			e^{\delta} \cdot (1 + \varepsilon) \leq 1 - \frac{\ln(1 - \theta)}{q \cdot t} \qquad \text{ iff } \qquad \delta \leq \ln \left( \frac{q \cdot t - \ln(1 - \theta)}{(1+\varepsilon) \cdot q \cdot t} \right).
		\end{align*}
		If both inequalities hold simultaneously, i.e., if $\varepsilon \in \left[0, \frac{1}{e^{\delta}} \cdot \left(\frac{\uniformizationRate \cdot t - \ln(1-\theta)}{\uniformizationRate\cdot t} \right) - 1\right]$ and $\delta \in \left[0, \ln\left(\frac{q\cdot t - \ln(1-\theta)}{(\varepsilon + 1)\cdot q \cdot t}\right)\right]$, then $1 -e^{- q \cdot t \cdot (e^{\delta}\cdot ( 1+ \varepsilon) - 1)} \leq \theta$, and by \Cref{prop:transient-prob-bounds-epsilon-delta-greater-0} it follows that $\vert \probMeasure_s(\reachability^{\leq t} g) - \probMeasure_{s'}(\reachability^{\leq t} g) \vert \leq \theta$, too.
	\end{proof}
	
\section{Additional Material for \Cref{sec:errors-0-delta-states}}
	\label{app:uniformizing}

\subsection{Uniformizing $(0,\delta)$-Bisimilar CTMCs}

For transitive $(0,\delta)$-bisimulations, i.e.,  $(0,\delta)$-bisimulations that are equivalence relations, the following construction shows
that we can restrict our analysis to uniform CTMCs when comparing time-bounded reachability probabilities in related states. 

Let $\mathcal{M}=(\stateSpace^{\mathcal{M}}, \prob^{\mathcal{M}}, \rate^{\mathcal{M}}, \initialState^\mathcal{M}, \labelFunction^\mathcal{M})$ and $\mathcal{N}=(\stateSpace^{\mathcal{N}}, \prob^{\mathcal{N}}, \rate^{\mathcal{N}}, \initialState^\mathcal{N}, \labelFunction^\mathcal{N})$ be two CTMCs with goal states $g$ and $g'$, respectively, and $R$
a transitive $(0,\delta)$-bisimulation on $\mathcal{M} \oplus \mathcal{N}$ with $(\initialState^\mathcal{M},\initialState^\mathcal{N})\in R$. Further, assume that $(g,g')\in R$ and that $g$ and $g'$ are not related to any other states by $R$. (This can be achieved, e.g., by giving both states the same, fresh label.) Let $t>0$.
Assume
\[
 \probMeasure^{\mathcal{M}}(\lozenge^{\leq t} g) \leq \probMeasure^{\mathcal{N}}(\lozenge^{\leq t} g').
 \]
 \paragraph{Claim:}
There are uniform CTMCs $\mathcal{M}'=(\stateSpace^{\mathcal{M}}, \prob^{\mathcal{M}'}, \rate^{\mathcal{M}'}, \initialState^\mathcal{M}, \labelFunction^{\mathcal{M}})$ and $\mathcal{N}'=(\stateSpace^{\mathcal{N}}, \prob^{\mathcal{N}'}, \rate^{\mathcal{N}'}, \initialState^\mathcal{N}, \labelFunction^{\mathcal{N}})$ with the same state spaces as $\mathcal{M}$ and $\mathcal{N}$, such that $R$ is still a $(0,\delta)$-bisimulation on $\mathcal{M}' \oplus \mathcal{N}'$ and such that 
\[
\probMeasure^{\mathcal{M}'}(\lozenge^{\leq t} g) \leq  \probMeasure^{\mathcal{M}}(\lozenge^{\leq t} g) \leq \probMeasure^{\mathcal{N}}(\lozenge^{\leq t} g') \leq \probMeasure^{\mathcal{N}'}(\lozenge^{\leq t} g').
\]
Furthermore, for all states $p \in \stateSpace^{\mathcal{M}}$ and $p'\in  \stateSpace^{\mathcal{N}}$ with $(p,p')\in R$, 
we have $\rate^{\mathcal{N}'}(p')= \rate^{\mathcal{M}'}(p) \cdot e^\delta$.

\begin{proof}
W.l.o.g., we assume that all states in $\mathcal{M}$ and $\mathcal{N}$ are reachable.
First, observe that all (reachable) states have to be part of an $R$-equivalence class as $R$ is an exact probabilistic bisimulation on the embedded DTMCs of $\mathcal{M}$ and $\mathcal{N}$, and the initial states $\initialState^\mathcal{M}$ and $\initialState^\mathcal{N}$ are related by $R$.
Furthermore, all states in $\mathcal{M}$ have to be related by $R$ to at least one state in $\mathcal{N}$, and vice versa.

To obtain $\mathcal{M}'$ and $\mathcal{N}'$ we first construct 
$\mathcal{M}''$ and $\mathcal{N}''$  that differ from $\mathcal{M}$ and $\mathcal{N}$ only in the exit rates, given by new functions $\rate^{\mathcal{M}''}$ and $\rate^{\mathcal{N}''}$ on 
$\stateSpace^{\mathcal{M}}$ and $\stateSpace^{\mathcal{N}}$. In particular, the transition probabilities are not affected when moving from $\mathcal{M}$ (resp. $\mathcal{N}$) to $\mathcal{M}''$ (resp. $\mathcal{N}''$).

For the construction of $\mathcal{M}''$ and $\mathcal{N}''$, let $C$ be an $R$-equivalence class.
Define 
\begin{align*}
\rate_{\min}^\mathcal{M}(C) &= \min_{p\in \stateSpace^{\mathcal{M}} \cap C} \rate^{\mathcal{M}}(p) \qquad \text{ and } \qquad 
\rate_{\max}^\mathcal{N}(C) = \max_{p\in \stateSpace^{\mathcal{N}} \cap C} \rate^{\mathcal{N}}(p).
\end{align*}
As all states in $C$ are $(0,\delta)$-bisimilar, we know that $\rate_{\max}^\mathcal{N}(C) \leq  \rate_{\min}^\mathcal{M}(C) \cdot e^\delta$.
Now, let
\[
\rate^{\mathcal{M}''}(p) = \rate_{\min}^\mathcal{M}([p]_R) \leq  \rate^{\mathcal{M}}(p) 
\]
for all $p\in \stateSpace^{\mathcal{M}}$ 
and let 
\[
\rate^{\mathcal{N}''}(p') = \rate_{\min}^\mathcal{M}([p']_R) \cdot e^\delta \geq \rate_{\max}^\mathcal{N}([p']_R) \geq \rate^{\mathcal{N}}(p')
\]
for all $p'\in \stateSpace^{\mathcal{N}}$. 
As rates in $\mathcal{M}''$ are all smaller than or equal to the corresponding ones in $\mathcal{M}$ and rates in $\mathcal{N}''$ are all larger than or equal to the corresponding ones in $\mathcal{N}$, we have
\[
\probMeasure^{\mathcal{M}''}(\lozenge^{\leq t} g) \leq  \probMeasure^{\mathcal{M}}(\lozenge^{\leq t} g) \leq \probMeasure^{\mathcal{N}}(\lozenge^{\leq t} g') \leq \probMeasure^{\mathcal{N}''}(\lozenge^{\leq t} g').
\]
Furthermore, $R$ is still a $(0,\delta)$-bisimulation on $\mathcal{M}''\oplus \mathcal{N}''$ as the transition probabilities have not been changed and the rates of related states differ at most by a factor of $e^\delta$ by construction.

Now, we let $\mathcal{M}'$ be the uniformization of $\mathcal{M}''$ with uniformization rate $\uniformizationRate_{\mathcal{M}}=\max_{p\in \stateSpace^{\mathcal{M}}} \rate^{\mathcal{M}''}(p)$
and likewise $\mathcal{N}'$  the uniformization of $\mathcal{N}''$ with uniformization rate $\uniformizationRate_{\mathcal{N}}=\max_{p'\in \stateSpace^{\mathcal{N}}} \rate^{\mathcal{N}''}(p')$.
We view these uniformizations as CTMCs.
For each state $p \in \stateSpace^{\mathcal{M}}$, there is a state $p'\in \stateSpace^{\mathcal{N}}$, namely any state related to $p$ by $R$, such that 
\[
e^\delta \cdot \rate^{\mathcal{M}''}(p) = \rate^{\mathcal{N}''}(p').
\]
Analogously, for each $p'\in \stateSpace^{\mathcal{N}}$, there is a $p \in  \stateSpace^{\mathcal{M}}$ such that this equation holds.
So, $\uniformizationRate_{\mathcal{M}}\cdot e^\delta = \uniformizationRate_\mathcal{N}$.

Now, let $p \in \stateSpace^{\mathcal{M}}$ and  $p'\in \stateSpace^{\mathcal{N}}$ be such that $(p,p')\in R$.
The rates of $p$ and $p'$ in $\mathcal{M}' \oplus \mathcal{N}'$ differ by a factor of $e^\delta$, just like the uniformization rates.
So, for all states $p\not=r\in \stateSpace^{\mathcal{M}}$, the probability to move from $p$ to $r$ in the uniformization $\mathcal{M}'$
is 
\[
\prob^{\mathcal{M}'}(p,r) = \frac{ \rate^{\mathcal{M}''}(p)}{\uniformizationRate_\mathcal{M}} \cdot \prob^{\mathcal{M}}(p,r) .
\]
Likewise, for all $p',r'\in \stateSpace^{\mathcal{N}}$ with $r' \neq p'$ and $(p,p')\in R$,
\[
\prob^{\mathcal{N}'}(p',r') = \frac{ \rate^{\mathcal{N}''}(p')}{\uniformizationRate_\mathcal{N}} \cdot \prob^{\mathcal{N}}(p',r') =  \frac{ \rate^{\mathcal{M}''}(p)}{\uniformizationRate_\mathcal{M}} \cdot \prob^{\mathcal{N}}(p',r')
\]
 The probability to move from $p$ to $p$ and from $p'$ to $p'$ in the respective uniformizations increases by
\[
\frac{\uniformizationRate_\mathcal{M}-\rate^{\mathcal{M}''}(p)}{\uniformizationRate_\mathcal{M}} = \frac{\uniformizationRate_\mathcal{N}-\rate^{\mathcal{N}''}(p')}{q_\mathcal{N}} .
\]
So, in $p$ and $p'$ the outgoing transition probabilities change by the same factor in the uniformization, and the remaining probability mass is redirected to the state itself.
As this holds for all states, $R$ is still an exact bisimulation on the DTMCs embedded in $\mathcal{M}'$ and $\mathcal{N}'$, and so $R$ is a $(0,\delta)$-bisimulation on the CTMCs $\mathcal{M}'$ and $\mathcal{N}'$ 
because all states in $\mathcal{M}'$ have exit rate $q_{\mathcal{M}}$ and all states in $\mathcal{N}'$ have exit rate $q_{\mathcal{N}}=e^\delta \cdot q_{\mathcal{M}}$.
As uniformization does not affect time-bounded reachability probabilities, we also have
\[
\probMeasure^{\mathcal{M}'}(\lozenge^{\leq t} g) \leq  \probMeasure^{\mathcal{M}}(\lozenge^{\leq t} g) \leq \probMeasure^{\mathcal{N}}(\lozenge^{\leq t} g') \leq \probMeasure^{\mathcal{N}'}(\lozenge^{\leq t} g').
\]\qed
\end{proof}

\subsection{Proofs of \Cref{sec:errors-0-delta-states}}
	
	\begin{lemma}\label{lem:binom-coeff}
		Let $m \in \mathbb{N}_{> 0}$ and $c > 1$. Then 
		\begin{align*}
			\sum_{i=0}^{m-1} \frac{1}{i!(m-i-1)!} \cdot (c-1)^{m-i-1} = \frac{1}{(m-1)!}\cdot c^{m-1}.
		\end{align*}
	\end{lemma}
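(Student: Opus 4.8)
The plan is to reduce the claimed identity to the classical binomial theorem. The key observation is that, after clearing denominators, each summand's coefficient becomes a binomial coefficient. Concretely, I would first multiply both sides of the asserted equality by $(m-1)!$. On the right this produces $c^{m-1}$, while on the left the factor $\frac{(m-1)!}{i!\,(m-i-1)!}$ is exactly $\binom{m-1}{i}$, which is well defined for $0 \leq i \leq m-1$ precisely because $m \geq 1$. Hence the goal is reduced to showing
\begin{align*}
	\sum_{i=0}^{m-1} \binom{m-1}{i} (c-1)^{m-i-1} = c^{m-1}.
\end{align*}

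The second step is a reindexing. I would substitute $j = m-1-i$ (or, equivalently, invoke the symmetry $\binom{m-1}{i} = \binom{m-1}{m-1-i}$), so that $i=0$ corresponds to $j=m-1$ and $i=m-1$ to $j=0$, and the exponent $m-i-1$ becomes $j$. The left-hand side then takes the form $\sum_{j=0}^{m-1} \binom{m-1}{j} (c-1)^{j}$. Writing the missing factor as $1^{\,m-1-j}$, this is literally the binomial expansion of $\bigl((c-1)+1\bigr)^{m-1}$, which evaluates to $c^{m-1}$. Undoing the initial multiplication by $(m-1)!$ recovers the stated identity.

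I do not expect any genuine obstacle here; the only points requiring a moment's care are the reindexing and confirming the summation ranges line up. It is worth noting that the hypothesis $c > 1$ plays no role in the argument—the identity holds for every $c$—so I would simply carry it along as inherited from the context (\Cref{prop:error-erlang-ctmc}) in which the lemma is applied, while the assumption $m \in \mathbb{N}_{>0}$ is exactly what guarantees the binomial coefficients and the degree-$(m-1)$ expansion are valid.
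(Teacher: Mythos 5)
Your proof is correct, and it takes a genuinely different (and shorter) route than the paper. The paper multiplies by $(m-1)!$ but then also divides through by $(c-1)^{m-1}$, rewriting the claim as $\sum_{i=0}^{m-1} \binom{m-1}{i}(c-1)^{-i} = \left(\frac{c}{c-1}\right)^{m-1}$, and proves this by induction on $m$ using Pascal's rule $\binom{m}{i+1} = \binom{m-1}{i} + \binom{m-1}{i+1}$, which leads to a fairly long chain of manipulations. You instead stop after clearing the factorials, reindex via $j = m-1-i$, and invoke the binomial theorem on $\bigl((c-1)+1\bigr)^{m-1}$ directly. Your route buys two things: it is much shorter, and it makes visible that the hypothesis $c>1$ is superfluous — the identity is a polynomial identity in $c$. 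By contrast, the paper's normalization genuinely needs $c \neq 1$ (it introduces negative powers of $c-1$), so its proof only establishes the lemma under the stated hypothesis; your observation that $c>1$ is merely inherited from the application in \Cref{prop:error-erlang-ctmc} is accurate and slightly strengthens the statement.
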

	\begin{proof}
		We start by observing that 
		\begin{align*}
			\sum_{i=0}^{m-1} \frac{1}{i!(m-i-1)!} \cdot (c-1)^{m-i-1} = \frac{1}{(m-1)!}c^{m-1}, 
		\end{align*}
		is equivalent to 
		\begin{align*}
			\sum_{i=0}^{m-1} \frac{(m-1)!}{i! (m-i-1)!} \cdot (c-1)^{-i} = \left( \frac{c}{c-1} \right)^{m-1}
		\end{align*}
		which, after rewriting, is equivalent to
		\begin{align*}
			\sum_{i=0}^{m-1} \binom{m-1}{i} \cdot (c-1)^{-i} = \left(\frac{c}{c-1} \right)^{m-1}.
		\end{align*}
		
		We prove the latter identity by induction on $m$, where the induction base $m = 1$ is trivial. For the induction step $m \mapsto m+1$, we make use of the well-known fact that 
		\begin{align*}
			\binom{m}{i+1} = \binom{m-1}{i} + \binom{m-1}{i+1}
		\end{align*}
		to obtain 
		\begin{align*}
			&\textcolor{white}{=} \sum_{i=0}^{m} \binom{m}{i} (c-1)^{-i} \\&= 1 + \sum_{i=1}^m \binom{m}{i} \cdot (c-1)^{-i} \\&= 1 + \sum_{i=0}^{m-1} \binom{m}{i+1} \cdot  (c-1)^{-i-1} \\&= 1 + \frac{1}{c-1} \cdot \sum_{i=0}^{m-1} \binom{m}{i+1} \cdot (c-1)^{-i}  \\
			&= 1 + \frac{1}{c-1} \cdot \sum_{i = 0}^{m-1} \left(\binom{m-1}{i} + \binom{m-1}{i+1} \right) \cdot (c-1)^{-i} \\
			&= 1 + \frac{1}{c-1} \cdot \big(\underbrace{\sum_{i = 0}^{m-1} \binom{m-1}{i} \cdot (c-1)^{-i}}_{\text{(IH): } = \left(\frac{c}{c-1}\right)^{m-1}} + \sum_{i = 0}^{m-1} \binom{m-1}{i+1} \cdot (c-1)^{-i} \big) \\
			&= 1 + \frac{c^{m-1}}{(c-1)^m} + \sum_{i = 0}^{m-1} \binom{m-1}{i+1} \cdot (c-1)^{-(i+1)} \\
			&= 1 + \frac{c^{m-1}}{(c-1)^m} + \sum_{i = 0}^{m-2} \binom{m-1}{i+1} \cdot (c-1)^{-(i+1)} + \underbrace{\binom{m-1}{m} \cdot (c-1)^{-m}}_{= 0}\\
			&= 1 + \frac{c^{m-1}}{(c-1)^m} + \sum_{i=1}^{m-1} \binom{m-1}{i} \cdot (c-1)^{-i} \\
			&= \frac{c^{m-1}}{(c-1)^m} + \underbrace{\sum_{i=0}^{m-1} \binom{m-1}{i} \cdot (c-1)^{-i}}_{\text{(IH): } = \left(\frac{c}{c-1}\right)^{m-1}} \\
			&= \frac{c^{m-1}}{(c-1)^m} + \frac{c^{m-1}}{(c-1)^{m-1}} \\
			&= \frac{c^{m-1} + c^{m-1} \cdot (c-1)}{(c-1)^{m}} \\
			&= \frac{c^m}{(c-1)^m}
		\end{align*}
		and so the claim follows by induction.
	\end{proof}
	
	\PropErrorErlangCTMC*
	\begin{proof}
		We start by observing that if $t = 0$ then the absolute difference in reachability probabilities of the goal state $g$ between $\erlang_n$ and $\erlang_n'$ until time $t$ equals $0$, as in this case either both have $g$ as initial state (if $n = 0$), yielding $\probMeasure^{\erlang_n} (\lozenge^{\leq 0} g) = 1 = \probMeasure^{\erlang_n'}(\lozenge^{\leq 0} g)$, or both have an initial state $s_0 \neq g$, yielding $\probMeasure^{\erlang_n} (\lozenge^{\leq 0} g) = 0 = \probMeasure^{\erlang_n'}(\lozenge^{\leq 0} g)$. Since, independent of $n$, we have 
		\begin{align*}
		\sum_{k=0}^{n-1} \frac{0^k}{k!} \cdot \left(e^{-0} - c^ke^{-c \cdot 0} \right) = \frac{0^0}{0!} \cdot (1 - c^0 \cdot 1) + \sum_{k=1}^{n-1} \frac{0^k}{k!} \cdot \left(e^{-0} - c^ke^{-c \cdot 0} \right) = 0,
		\end{align*}
		 it follows that the claim holds if $t = 0$. 
		
		Moreover, if $c = 1$ then $\erlang_n' = \erlang_n$ and thus
		\begin{align*}
			\vert \probMeasure^{\erlang_n}(\lozenge^{\leq t} g) - \probMeasure^{\erlang_n'}(\lozenge^{\leq t} g) \vert = 0 = \sum_{k=0}^{n-1} \frac{t^k}{k!} \left(e^{-t} - 1^ke^{-1 \cdot t} \right),
		\end{align*}
		so the claim also holds if $c = 1$, independent of $t$ and $n$. 
		
		Now assume that $t > 0$, $c > 1$, and denote, for a given $n$, the states of $\erlang_n$ by $s_0, \ldots, s_{n-1}, g$. As Erlang CTMC are, by definition, acyclic it is possible to compute their transient reachability probabilities explicitly by using, e.g., the ACE algorithm proposed in \cite{TAAMC}. We follow a more direct approach and prove the claim by induction on $n$. 
		
		\smallskip
		\noindent
		\textbf{Induction Base:} \qquad For $n = 0$, $\erlang_0$ consists of the goal state $g$ only. Hence, $\vert \probMeasure_{\erlang_0}(\lozenge^{\leq t} g) - \probMeasure_{\erlang'_0}(\lozenge^{\leq t} g) \vert = \vert 1 - 1 \vert = 0$ for any $t \geq 0$. This matches the empty sum $\sum_{k = 0}^{-1} \frac{t^k}{k!} \cdot \left(e^{-t} - c^ke^{-ct} \right)$. 
		
		If $n = 1$, the probability to reach $g$ in $\erlang_1'$ until at most time $t$ is
		\begin{align}
			\probMeasure_{\erlang_1'}(\lozenge^{\leq t} g) = \probMeasure_{\erlang_1}(\lozenge^{\leq t} g) + \probMeasure_{\erlang_1}(\lozenge^{= t} s_0) \cdot \probMeasure_{s_0}(\lozenge^{\leq t\cdot(c-1)} g), \label{eq:error-erlang-ctmc-eq-1}
		\end{align}
		where $\probMeasure_{\erlang_1}(\lozenge^{= t} x)$ denotes the probability to be in state $x$ of $\erlang_1$ after \emph{exactly} time $t$. 
		Intuitively, \Cref{eq:error-erlang-ctmc-eq-1} holds because accelerating $\erlang_1$ by $c$ to obtain $\erlang_1'$ yields a uniform CTMC with the same structure as $\erlang_1$ and modified rates $c$ for which, as described in \Cref{rem:arbitrary-rate}, the probability to reach the goal state $g$ until time point $t$ is the same as reaching $g$ in $\erlang_1$ until time point $c \cdot t$. This corresponds, however, to either reaching $g$ until $t$ in $\erlang_1$, or still being in $s_0$ at $t$ and reaching $g$ in the remaining time, i.e., in $[t, c \cdot t]$. Because we consider time-homogeneous CTMC, the latter probability is the same as that of reaching $g$ from $s_0$ in the interval $[0, c \cdot t - t] = [0, t \cdot(c-1)]$, i.e., the same as $\probMeasure_{s_0}(\lozenge^{\leq t\cdot(c-1)} g)$. 
		
		From \Cref{eq:error-erlang-ctmc-eq-1} it now follows that
		\begin{align*}
			&\textcolor{white}{=} \vert \probMeasure_{\erlang_1}(\lozenge^{\leq t} g) - \probMeasure_{\erlang'_1}(\lozenge^{\leq t} g) \vert \\&= \left \vert \probMeasure_{\erlang_1}(\lozenge^{\leq t} g) - \left( \probMeasure_{\erlang_1}(\lozenge^{\leq t} g) + \probMeasure_{\erlang_1}(\lozenge^{= t} s_0) \cdot \probMeasure_{s_0}(\lozenge^{\leq t\cdot(c-1)} g) \right) \right \vert\\
			&= \probMeasure_{\erlang_1}(\lozenge^{= t} s_0) \cdot \probMeasure_{s_0}(\lozenge^{\leq t\cdot(c-1)} g) \\
			&= (1 - (1-e^{-t}))\cdot (1-e^{-t(c-1)}) \\
			&= e^{-t} \cdot (1-e^{-t \cdot (c-1)}) \\
			&= e^{-t} - e^{-t -t \cdot (c-1)} \\
			&= \frac{t^0}{0!} \cdot \left(e^{-t} - c^0 e^{-c \cdot t} \right) \\
			&= \sum_{k=0}^{0} \frac{t^k}{k!} \left(e^{-t} - c^k e^{-c \cdot t} \right).
		\end{align*}
		
		The function $e^{-t} - e^{-ct}$ is differentiable in $t$ with $\frac{d}{dt} (e^{-t} - e^{-ct}) = -e^{-t} + ce^{-ct}$. By computing the roots via 
		\begin{align*}
			ce^{-ct} - e^{-t} = 0 \qquad \text{ iff } \qquad \ln(c)-ct = -t \qquad \text{ iff } \qquad t = \frac{\ln(c)}{c-1}
		\end{align*}
		and plugging into the second derivative of $e^{-t} - e^{-ct}$ we observe that the function has a local maximum at this position. Furthermore, $e^{-t} - e^{-ct} \geq 0$ for all $t \geq 0$ because $c > 1$. Since, for $t \geq 0$,  $\frac{d}{dt}(e^{-t} - e^{-ct}) > 0$ iff $ t < \frac{\ln(c)}{c-1}$, $\frac{d}{dt}(e^{-t} - e^{-ct}) < 0$ iff $t > \frac{\ln(c)}{c-1}$, $\lim_{t \to 0} e^{-t} - e^{-ct} = 0 = \lim_{t \to \infty} e^{-t} - e^{-ct}$ and $e^{-t} - e^{-ct}$ is continuous, it follows that the maximum is global on the interval $[0, \infty)$.
		
		\smallskip
		\noindent
		\textbf{Induction Step: } \qquad Assume that the claim holds for $n-1$. Using a similar argument as in the induction base, we can find an analogue to \Cref{eq:error-erlang-ctmc-eq-1} for $\erlang_n'$ and express the reachability probability in the accelerated chain by probabilities in the original chain $\erlang_n$ via 
		\begin{align*}
			\probMeasure_{\erlang_n'}(\lozenge^{\leq t}g) = \probMeasure_{\erlang_n}(\lozenge^{\leq t} g) + \sum_{i = 0}^{n-1} \probMeasure_{\erlang_n}(\lozenge^{= t} s_i) \cdot \probMeasure_{s_i}(\lozenge^{\leq t \cdot (c-1)} g).
		\end{align*}
		Intuitively, this identity expresses that reaching $g$ until time $t$ in $\erlang_n'$ is the same as either reaching $g$ in $\erlang_n$ until time $t$, or being in any of the remaining states of $\erlang_n$ at time point $t$ and reaching $g$ in the remaining time of $t \cdot (c-1)$. For the absolute difference of the reachability probabilities in $\erlang_n$ and $\erlang_n'$ this yields 
		\begin{align*}
			\Diff_t(\erlang_n) = \vert \probMeasure_{\erlang_n}(\lozenge^{\leq t} g) - \probMeasure_{\erlang'_n}(\lozenge^{\leq t} g) \vert = \sum_{i = 0}^{n-1} \pi_{t}(s_0, s_i) \cdot \pi_{t \cdot (c-1)}(s_i, g),
		\end{align*}
		where $\pi_x(s, s')$ denotes the transient probability to be in state $s'$ after time $x$ when starting in state $s$. 
		
		Because of the shape of $\erlang_n$, the probabilities $\pi_t(s_0, s_i)$ and $\pi_{t \cdot (c-1)}(s_i, g)$ for $i = 0, \ldots, n-1$ can be computed explicitly by applying the well-known Poisson distribution: $\pi_t(s_0, s_i)$ describes the probability to make exactly $i$ jumps until time $t$ and is thus given as $\pi_t(s_0, s_i) = \frac{1}{i!} \cdot t^i \cdot e^{-t}$, while $\pi_{t \cdot (c-1)} (s_i, g)$ equals the probability to make the remaining number of jumps required to get from $s_i$ to $g$ until time $t \cdot (c-1)$, so
		\begin{align*}
			\pi_{t \cdot(c-1)}(s_i, g) = 1 - \sum_{j=0}^{n-i-1} \frac{1}{j!} \cdot e^{-t\cdot(c-1)} \cdot (t \cdot(c-1))^j.
		\end{align*}
		When plugging this into the sum, we get
		\begin{align}
			&\textcolor{white}{=} \Diff_t(\erlang_n) \nonumber
			\\&= \sum_{i = 0}^{n-1} \frac{t^i}{i!} \cdot e^{-t} \cdot \left(1 - \sum_{j=0}^{n-i-1} \frac{1}{j!} \cdot e^{-t\cdot(c-1)} \cdot (t \cdot(c-1))^j\right) \label{eq:erlang-bound-eq1} \\
			&= \sum_{i=0}^{n-1}\frac{1}{i!} \cdot \left( t^i e^{-t} - \sum_{j=0}^{n-i-1} \frac{1}{j!} \cdot e^{-c t} \cdot t^{j+i} \cdot (c-1)^j \right) \nonumber \\
			&= \sum_{i=0}^{n-2}\frac{1}{i!} \cdot \left( t^i e^{-t} - \sum_{j=0}^{n-i-1} \frac{1}{j!} \cdot e^{-c t} \cdot t^{j+i} \cdot (c-1)^j \right) \nonumber \\& \qquad + \frac{1}{(n-1)!}\cdot \left (t^{n-1}e^{-t} - \sum_{j=0}^0 \frac{1}{j!} e^{-ct} \cdot t^{j + n-1} \cdot (c-1)^j \right) \nonumber \\
			&= \sum_{i=0}^{n-2}\frac{1}{i!} \cdot \left( t^i e^{-t} - \sum_{j=0}^{n-i-1} \frac{1}{j!} \cdot e^{-c t} \cdot t^{j+i} \cdot (c-1)^j \right) + \frac{t^{n-1}}{(n-1)!} \cdot (e^{-t} - e^{-ct}) \nonumber \\
			&= \sum_{i=0}^{n-2}\frac{1}{i!} \cdot \left( t^i e^{-t} - \sum_{j=0}^{n-i-2} \left(\frac{1}{j!} \cdot e^{-c t} \cdot t^{j+i} \cdot (c-1)^j \right) - \frac{1}{(n-i-1)!} \cdot e^{-ct} \cdot t^{n-1} \cdot  (c-1)^{n-i-1} \right) \nonumber  \\& \qquad + \frac{t^{n-1}}{(n-1)!} \cdot (e^{-t} - e^{-ct}) \nonumber  \\
			&= \sum_{i=0}^{n-2}\frac{1}{i!} \cdot \left( t^i e^{-t} - \sum_{j=0}^{n-i-2} \left(\frac{1}{j!} \cdot e^{-c t} \cdot t^{j+i} \cdot (c-1)^j \right) \right) \nonumber \\& \qquad- \sum_{i=0}^{n-2} \frac{1}{i!(n-i-1)!} \cdot e^{-ct} \cdot t^{n-1} \cdot (c-1)^{n-i-1} + \frac{t^{n-1}}{(n-1)!} \cdot (e^{-t} - e^{-ct}) \nonumber \\
			&= \sum_{i=0}^{n-2} \frac{t^i}{i!} \cdot e^{-t} \cdot \left(1 - \sum_{j=0}^{n-i-2} \left( \frac{1}{j!} \cdot e^{-t \cdot (c-1)} \cdot (t \cdot (c-1))^j \right) \right) + \frac{t^{n-1}}{(n-1)!} \cdot e^{-t} \nonumber \\ & \qquad - t^{n-1} \cdot \sum_{i=0}^{n-1} \frac{1}{i!(n-i-1)!}\cdot e^{-ct} \cdot  (c-1)^{n-i-1}. \nonumber
		\end{align}
		When comparing the first sum of the last term with the equality 
		\begin{align*}
			\Diff_t(\erlang_n) = \sum_{i = 0}^{n-1} \frac{t^i}{i!} \cdot e^{-t} \cdot \left(1 - \sum_{j=0}^{n-i-1} \frac{1}{j!} \cdot e^{-t\cdot(c-1)} \cdot (t \cdot(c-1))^j\right)
		\end{align*}
		from \Cref{eq:erlang-bound-eq1} it is easy to see that they almost coincide, with the only difference being that $n$ in the latter is replaced with $n-1$ in the former. Thus, 
		\begin{align*}
			\sum_{i=0}^{n-2} \frac{t^i}{i!} \cdot e^{-t} \cdot \left(1 - \sum_{j=0}^{n-i-2} \left( \frac{1}{j!} \cdot e^{-t \cdot (c-1)} \cdot (t \cdot (c-1))^j \right) \right) = \Diff_t(\erlang_{n-1})
		\end{align*}
		and so it follows by  induction hypothesis that 
		\begin{align*}
			\sum_{i=0}^{n-2} \frac{t^i}{i!} \cdot e^{-t} \cdot \left(1 - \sum_{j=0}^{n-i-2} \left( \frac{1}{j!} \cdot e^{-t \cdot (c-1)} \cdot (t \cdot (c-1))^j \right) \right) = \sum_{k=0}^{n-2} \frac{t^k}{k!} \cdot \left(e^{-t} - c^ke^{-ct} \right),
		\end{align*}
		yielding
		\begin{align*}
			\Diff_t(\erlang_n) &= \sum_{k=0}^{n-2} \frac{t^k}{k!} \cdot \left(e^{-t} - c^ke^{-ct} \right) + \frac{t^{n-1}}{(n-1)!} \cdot e^{-t} \\& \qquad - t^{n-1} \cdot \sum_{i=0}^{n-1} \frac{1}{i!(n-i-1)!}\cdot e^{-ct} \cdot (c-1)^{n-i-1} \\
			&= \sum_{k=0}^{n-2} \frac{t^k}{k!} \cdot \left(e^{-t} - c^ke^{-ct} \right) + \frac{t^{n-1}}{(n-1)!} \cdot e^{-t} \\& \qquad - t^{n-1} \cdot e^{-ct} \cdot \underbrace{\sum_{i=0}^{n-1} \frac{1}{i!(n-i-1)!} \cdot (c-1)^{n-i-1}}_{= \frac{1}{(n-1)!}\cdot c^{n-1} \text{ by \Cref{lem:binom-coeff}}} \\
			&= \sum_{k=0}^{n-2} \frac{t^k}{k!} \cdot \left(e^{-t} - c^ke^{-ct} \right) + \frac{t^{n-1}}{(n-1)!} \cdot \left( e^{-t} - c^{n-1}e^{-ct} \right) \\ 
			&= \sum_{k=0}^{n-1} \frac{t^k}{k!} \cdot \left(e^{-t} - c^ke^{-ct} \right)
		\end{align*}
		and completing the induction step. 
		
		As a finite sum of differentiable function we can differentiate $\Diff_t(\erlang_n)$ w.r.t. $t$ to obtain 
		\begin{align*}
			&\textcolor{white}{=}\frac{d}{dt} \Diff_t(\erlang_n) \\&= \sum_{k=1}^{n-1} \left(\frac{t^{k-1}}{(k-1)!} \cdot (e^{-t} - c^ke^{-ct}) + \frac{t^k}{k!}\cdot(-e^{-t} + c^{k+1}e^{-ct}) \right) + (ce^{-ct} - e^{-t})\\
			&= \sum_{k=1}^{n-1} \frac{t^{k-1}}{(k-1)!} \cdot (e^{-t} - c^ke^{-ct}) + \sum_{k=1}^{n-1} \frac{t^k}{k!}\cdot(c^{k+1}e^{-ct} - e^{-t}) + (ce^{-ct} - e^{-t}) \\
			&= \sum_{k=1}^{n-1} \frac{t^{k-1}}{(k-1)!} \cdot (e^{-t} - c^ke^{-ct}) + \sum_{k=0}^{n-1} \frac{t^k}{k!}\cdot(c^{k+1}e^{-ct} - e^{-t}) \\ 
			&= \sum_{k=1}^{n-1} \frac{t^{k-1}}{(k-1)!} \cdot (e^{-t} - c^ke^{-ct}) + \sum_{k=1}^{n} \frac{t^{k-1}}{(k-1)!}\cdot(c^{k}e^{-ct} - e^{-t}) \\
			&= \frac{t^{n-1}}{(n-1)!}(c^ne^{-ct} - e^{-t}).
		\end{align*}
		Now it is easy to see that $\frac{t^{n-1}}{(n-1)!} \cdot (c^ne^{-ct} - e^{-t}) = 0$ iff $t = 0$ or $c^ne^{-ct} - e^{-t}$. The former case if of no concern since $\Diff_{t}(\erlang_n) \geq 0 = \Diff_{0}(\erlang_n)$ for all $t \geq 0$. In the latter case we get
		\begin{align*}
			\frac{d}{dt} \Diff_t(\erlang_n) = 0 \qquad \text{ iff } \qquad c^ne^{-ct} - e^{-t} = 0 \qquad \text{ iff } \qquad t = \frac{n \cdot \ln(c)}{(c-1)}.
		\end{align*} 
		By computing the second derivative of $\Diff_t(\erlang_n)$ w.r.t. $t$ and plugging in $t^* = \frac{n \cdot \ln(c)}{(c-1)}$ one can observe that $\Diff_{t^*}(\erlang_n) < 0$, yielding a local maximum in $t^*$. Similar to the induction base, on the interval $[0, \infty)$ the function $\Diff_t(\erlang_n)$ is continuous and nonnegative, has a first derivative $> 0$ iff $t < t^*$ and $<0$ iff $t > t^*$, and converges to $0$ for $t \to 0$ and $t \to \infty$. All in all, this implies that the obtained maximum in $t^*$ is global on $[0, \infty)$. 
	\end{proof}

	\ThmExactBoundZeroDelta*
	\begin{proof}
		If $\initialState = g$ then $\Diff_t(\mathcal{M}) = \vert \probMeasure^{\mathcal{M}'}(\lozenge^{\leq t} g) - \probMeasure^{\mathcal{M}}(\lozenge^{\leq t}g) \vert = 0$ for any $t$. Otherwise we start by noting that, by the choice of $\mathcal{M}' = c \cdot \mathcal{M}$,
		\begin{align*}
			\Diff_t(\mathcal{M}) = \vert \probMeasure^{\mathcal{M}'}(\lozenge^{\leq t} g) - \probMeasure^{\mathcal{M}}(\lozenge^{\leq t} g) \vert = \probMeasure^\mathcal{M}(\lozenge^{(t, c \cdot t]} g), 
		\end{align*}
		where $\probMeasure^\mathcal{M}(\lozenge^{(t, c \cdot t]} g)$ denote the probability of the chain to \emph{enter}  $g$ in the interval $(t, c \cdot t]$. We introduce some notation used throughout the proof. Denote by 
		\begin{align*}
			\Pi_n = \{s_0 s_1 \ldots s_{n-1} s_n \in \stateSpace^{n} \mid s_0 = \initialState, s_n = g \text{ and} s_i \neq g \text{ for } 0 \leq i \leq n-1\}
		\end{align*}	
		the set of all finite (untimed) paths of length $n$ (i.e., with a total of $n$ transitions) from the initial state of $\mathcal{M}$ to the goal state $g$. For $\pi =  \initialState s_1\ldots s_{n-1} g \in \Pi_n$, we additionally define 
		\begin{align*}
			\mathrm{Traj}(\pi) &= \{\initialState t_0 s_1 t_1 \ldots s_{n-1}t_{n-1} g \in S \cdot (\mathbb{R}_{>0} \cdot S)^n\} \\
			\mathrm{Traj}^*(\pi) &= \{\overline{\pi} \in \mathrm{Traj}(\pi) \mid \overline{\pi} \text{ reaches} g \text{ in} (t, c\cdot t]\}.
		\end{align*}
		In words, $\mathrm{Traj}(\pi)$ is the set of all timed paths that follow $\pi$, and $\mathrm{Traj}^*(\pi)$ is the set of all timed paths of $\mathcal{M}$ that follow $\pi$ and enter $g$ in the time interval $(t, c \cdot t]$. 
		Lastly, define the probability of $\mathcal{M}$ to follow a (timed version of a) path in $\Pi_n$ as
		\begin{align*}
			p_n = \probMeasure^{\mathcal{M}}(\Pi_n) = \sum_{\pi \in \Pi_n} \probMeasure^{\mathcal{M}}(\pi).
		\end{align*}
	
		For two events $A, B$, denote by $\probability(A \mid B)$ the \emph{conditional probability} of $A$ given $B$ w.r.t probability measure $\probability$.
		By Bayes' rule, which states that
		\begin{align*}
			\probability(A \mid B) = \frac{\mathrm{Prob}(B \mid A) \cdot \probability(A)}{\probability(B)}
		\end{align*} 
		if $\probability(B) > 0$, it follows that
		\begin{align*}
			\Diff_t(\mathcal{M}) &=  \probMeasure^\mathcal{M}(\lozenge^{(t, c \cdot t]} g) \\
			&= \sum_{n = 0}^{\infty} \sum_{\pi \in \Pi_n} \probMeasure^\mathcal{M}(\mathrm{Traj}^*(\pi)) \\
			&= \sum_{n=0}^\infty \sum_{\pi \in \Pi_n} \frac{\probMeasure^\mathcal{M}(\mathrm{Traj}^*(\pi) \mid \pi) \cdot \probMeasure^\mathcal{M}(\pi)}{\probMeasure^\mathcal{M}(\pi \mid \mathrm{Traj}^*(\pi))} \\
			&= \sum_{n = 0}^{\infty} \sum_{\pi \in \Pi_n} \probMeasure^\mathcal{M}(\mathrm{Traj}^*(\pi) \mid \pi) \cdot \probMeasure^\mathcal{M}(\pi)
		\end{align*}
		where the last equality holds since the probability to follow (an untimed) path $\pi$ when knowing that one follows a timed version of $\pi$, i.e., the probability of $\pi$ in $\mathcal{M}$ given $\mathrm{Traj}^*(\pi)$, equals $1$.
		
		We observe that for any $\pi \in \Pi_n$, $ \probMeasure^\mathcal{M}(\mathrm{Traj}^*(\pi) \mid \pi)$ equals the error obtained at time point $t$ when considering, instead of $\mathcal{M}$, the Erlang-CTMCs $\erlang_n$ and $c \cdot \erlang_n$, respectively, i.e., that $ \probMeasure^\mathcal{M}(\mathrm{Traj}^*(\pi) \mid \pi) = \Diff_t(\erlang_n)$ for $\Diff_t(\erlang_n)$ as in \Cref{prop:error-erlang-ctmc}. Intuitively, this follows from the fact that conditioning on $\pi$ dictates a path through the chain that must be taken, so there are no probabilistic choices involved anymore. Hence, the only possibility in any state $s_i$, $i < n$, along $\pi = s_0 s_1 \ldots$ is to move to the next state $s_{i+1}$ with probability $1$. But this results in precisely the form of the Erlang CTMC $\erlang_n$. Because $\mathrm{Traj}^*(\pi)$ contains only those timed versions of $\pi$ that enter $g$ exactly in the time interval $(t, c\cdot t]$ we can argue as done in the proof of \Cref{prop:error-erlang-ctmc} to obtain \Cref{eq:error-erlang-ctmc-eq-1} and describe the probabilities as the difference of transient probabilities in the corresponding chains $\erlang_n$ and $\erlang_n'$. 
		
		Using this observation, we get the desired identity via
		\begin{align*}
			\Diff_t(\mathcal{M}) &= \sum_{n = 0}^{\infty} \sum_{\pi \in \Pi_n} \probMeasure^\mathcal{M}(\mathrm{Traj}^*(\pi) \mid \pi) \cdot \probMeasure^\mathcal{M}(\pi) \\&= \sum_{n = 0}^{\infty} \Diff_t(\erlang_n) \cdot \sum_{\pi \in \Pi_n} \probMeasure^\mathcal{M}(\pi) \\&= \sum_{n=0}^{\infty} \Diff_t(\erlang_n) \cdot p_n \\&= \sum_{n=1}^{\infty} \Diff_t(\erlang_n) \cdot p_n
		\end{align*}
		where the second to last equality follows from the definition of $p_n$, and the last one holds since $\Diff_t(\erlang_0) = 0$ for all $t$.
	\end{proof}

	\PropImproveBoundZeroDeltaViaMaxErlang*
	\begin{proof}
		From \Cref{thm:exact-computation-error-0-delta} we know that $\Diff_t(\mathcal{M}) = \sum_{n=0}^{\infty} p_n \cdot \Diff_t(\erlang_n)$, where $p_n$ is the probability of $\mathcal{M}$ to enter the goal state $g$ after \emph{exactly} $n$ discrete steps and $\Diff_t(\erlang_n)$ is as in \Cref{prop:error-erlang-ctmc}. In particular, it holds for all $n \in \mathbb{N}$ that $p_n \in [0,1]$, and additionally that $\sum_{n \in \mathbb{N}} p_n \leq 1$ since the total probability to enter $g$ is at most $1$. Thus, we can bound the value of $\Diff_t(\mathcal{M})$ for above via 
		\begin{align}\label{thm:exact-computation-error-0-delta-eq-1}
			\Diff_t(\mathcal{M}) = \sum_{n=0}^{\infty} p_n \cdot \Diff_t(\erlang_n) \leq \max_{n \in \mathbb{N}} \Diff_t(\erlang_n).
		\end{align}
		Since we know from \Cref{prop:error-erlang-ctmc} that $\Diff_t(\erlang_n) = \sum_{k=0}^{n-1} \frac{t^k}{k!} \cdot \left(e^{-t} - c^k e^{-ct} \right)$ for every $t \geq 0$ and all $n \in \mathbb{N}$ it follows that 
		\begin{align*}
			\Diff_t(\mathcal{M}) \leq \max_{n \in \mathbb{N}} \sum_{k=0}^{n-1} \frac{t^k}{k!} \cdot \left(e^{-t} - c^k e^{-ct} \right).
		\end{align*}
		We now show that for every $t \geq 0$ this maximum is attained at $N = \left \lceil \frac{(e^{\delta} - 1) \cdot t}{\delta} \right \rceil$ where, for $x \in \mathbb{R}$,  $\lceil x \rceil$ is defined to be the smallest integer $\geq x$.
		
		If $t = 0$ then $\Diff_t(\mathcal{M}) = 0 = \Diff_t(\erlang_n)$ for all $n$, and so in particular for $N = 0 = \left \lceil \frac{(e^{\delta}-1) \cdot 0}{\delta} \right \rceil$. Thus, the maximum is attained at $N$.
		
		If $t > 0$ we directly obtain from \Cref{prop:error-erlang-ctmc} that
		\begin{align*}
			\Diff_t(\erlang_{n+1}) = \Diff_t(\erlang_n) + \frac{t^n}{n!}\cdot(e^{-t} - c^ne^{-ct})
		\end{align*}
		from which, as $c = e^{\delta} > 1$, we get
		\begin{align}\label{thm:exact-computation-error-0-delta-eq-2}
			\Diff_t(\erlang_{n+1}) < \Diff_t(\erlang_n) \qquad \text{iff} \qquad e^{(c-1)t} < c^n \qquad \text{iff} \qquad \frac{(e^{\delta} - 1) \cdot t}{\delta} < n. 
		\end{align}
		Let $N = \left \lceil \frac{(e^{\delta} - 1) \cdot t}{\delta} \right \rceil$, i.e., $N$ is the smallest positive integer such that $N \geq \frac{(e^{\delta} - 1) \cdot t}{\delta}$ or equivalently (by \Cref{thm:exact-computation-error-0-delta-eq-2}) such that $e^{(c-1)\cdot N} \leq c^N$. To prove that the maximum in \Cref{thm:exact-computation-error-0-delta-eq-1} is obtained for $N$, we do a case distinction. 
		\begin{itemize}
			\item If $n > N$ then $e^{(c-1)t} \leq c^N < \ldots < c^n$, where the first inequality follows from the choice of $N$ as the smallest integer $\geq \frac{(e^\delta -1) \cdot t}{\delta}$, which is equivalent to $e^{(c-1)t}\leq c^N$ by \Cref{thm:exact-computation-error-0-delta-eq-2}, and the remaining inequalities follow by monotonicity and the fact that $c > 1$. But then $\Diff_t(\erlang_n) < \Diff_t(\erlang_N)$, as for all $n > N$ we have $\frac{(e^{\delta} -1) \cdot t}{\delta} \leq N < n$ and hence 
			\begin{align*}
				\Diff_t(\erlang_n) < \Diff_t(\erlang_{n-1}) < \ldots < \Diff_t(\erlang_{N+1}) < \Diff_t(\erlang_N).
			\end{align*}
			again by \Cref{thm:exact-computation-error-0-delta-eq-2} (applied iteratively).
			\item Otherwise, if $n < N$, then $e^{(c-1)t} > c^{N-1} > \ldots  > c^n$, where the first inequality follows from the choice of $N$ as the \emph{smallest} positive integer such that $e^{(c-1)t} \leq c^N$. Applying \Cref{thm:exact-computation-error-0-delta-eq-2} iteratively, it follows that 
			\begin{align*}
				\Diff_t(\erlang_n) < \Diff_t(\erlang_{n+1}) < \ldots < \Diff_t(\erlang_{N-1}) < \Diff_t(\erlang_N).
			\end{align*}
		\end{itemize}
		All in all this shows that indeed 
		\begin{align*}
			\Diff_t(\mathcal{M}) \overset{\text{Eq. } (\ref{thm:exact-computation-error-0-delta-eq-1})}{\leq} \max_{n \in \mathbb{N}} \Diff_t(\erlang_n) = \Diff_t(\erlang_N) = \sum_{k=0}^{N-1} \frac{t^k}{k!}\cdot \left(e^{-t} - c^ke^{-ct} \right).
		\end{align*}
	\end{proof}

	\PropExactComputationPnDiag*
	\begin{proof}
		As $\probMatrix$ is diagonalizable there are matrices $\symmetryMatrix, \diagonalMatrix \in \mathbb{C}^{n \times n}$ such that $\diagonalMatrix$ is a diagonal matrix whose diagonal contains, in descending order w.r.t. their absolute value, the eigenvalues $\lambda_1, \lambda_2, \ldots, \lambda_n$ of $\probMatrix$ (repeated according to their multiplicities), and such that $\probMatrix = \symmetryMatrix \diagonalMatrix\symmetryMatrix^{-1}$. Since $\probMatrix$ is stochastic, $\lambda_1 = 1$\cite{MAALA}. 
		
		The probability to move into the goal state $g = s_n$, which we w.l.o.g. associate with the last row of $\probMatrix$, after \emph{exactly} $k+1$ steps when starting in the initial state $\initialState = s_1$, which is associated with the first row of $\prob$, is given as the entry at position $[1,n]$ of $\probMatrix^{k+1} - \probMatrix^{k}$. If $a_{\probMatrix} = 1$, i.e., if $\mathcal{M}$ only has a single accepting state, we get 
		\begin{align*}
			p_{k+1} &= (\probMatrix^{k+1} - \probMatrix^{k})_{1, n} = \mathbf{\dirac}_{s_1} \cdot ((\symmetryMatrix \diagonalMatrix\symmetryMatrix^{-1})^{k+1} - (\symmetryMatrix \diagonalMatrix\symmetryMatrix^{-1})^k) \cdot \mathbf{\dirac}_{s_n}^{\top} \\
			&= \mathbf{\dirac}_{s_1} \cdot \symmetryMatrix(\diagonalMatrix^{k+1} - \diagonalMatrix^k) \symmetryMatrix^{-1} \cdot \mathbf{\dirac}_{s_n}^\top \\
			&= \begin{pmatrix}
				\symmetryMatrix_{1, 1} &\symmetryMatrix_{1, 2} &\ldots &\symmetryMatrix_{1, n}
			\end{pmatrix} \cdot \begin{pmatrix} 0 & 0 & & & \ldots \\
				0 & \lambda_2^{k} \cdot(\lambda_2 - 1) & 0 & & \ldots  \\
				0 & 0 & \lambda_3^{k} \cdot (\lambda_3 -1) & 0 & \ldots  \\
				\vdots & \vdots & & \ddots & \vdots \\
				0 & 0  & \ldots & 0 & \lambda_n^{k} \cdot (\lambda_n - 1)
			\end{pmatrix}\cdot \begin{pmatrix} \symmetryMatrix^{-1}_{1, n} \\ \symmetryMatrix^{-1}_{2, n}\\ \vdots \\ \symmetryMatrix^{-1}_{n, n}  \end{pmatrix} \\
			\\ &= \sum_{j = 2}^{n} \symmetryMatrix_{1, j} \cdot \symmetryMatrix^{-1}_{j, n} \cdot (\lambda_j - 1) \cdot \lambda_j^k.
		\end{align*}
	
		In the case that $a_{\probMatrix} = 2$ we have $\lambda_2 = 1$, and so the entry in the second row of $\diagonalMatrix^{k+1} - \diagonalMatrix^k$ vanishes as well. Hence, the resulting sum starts from $j = 3$, but stays the same otherwise. Combining both results we obtain the desired identity.
	\end{proof}
	
	\BoundPnDiag*
	\begin{proof}
		By \Cref{prop:exact-computation-pn-diag} we know that, if $\probMatrix$ is diagonalizable, i.e., if $\probMatrix = \symmetryMatrix \diagonalMatrix \symmetryMatrix^{-1}$ for a diagonal matrix $\diagonalMatrix$ and a suitable symmetry matrix $\symmetryMatrix$, 
		we have for every $k \in \mathbb{N}$ that $p_{k+1} = \sum_{j = a_{\probMatrix} + 1}^{n} \symmetryMatrix_{1, j} \cdot \symmetryMatrix^{-1}_{j, n} \cdot (\lambda_j - 1) \cdot \lambda_j^k$. 
		
		Let $\lambda_1', \ldots, \lambda_m'$ be the distinct eigenvalue of $\probMatrix$, given in descending order w.r.t. their absolute values, i.e., such that $1 = \lambda_1' > \vert \lambda_2' \vert > \ldots > \vert \lambda_m' \vert$. 
		
		We first assume that the only absorbing state of $\mathcal{M}$ is $g$, i.e., that $a_{\probMatrix} = 1$. Then the eigenvalues of $\mathcal{M}$, repeated according to their multiplicities, are $\lambda_1 = 1 > \vert \lambda_2 \vert \geq \ldots \geq \vert \lambda_n \vert$, and so $\lambda = \lambda_2$. 
		As $p_{k+1} \geq 0$ for every $k \in \mathbb{N}$ we have $p_{k+1} = \vert p_{k+1}\vert$. Therefore,
		\begin{align*}
			p_{k+1} &= \left \vert \sum_{j=2}^{n} \symmetryMatrix_{1, j} \cdot  \symmetryMatrix^{-1}_{j, n} \cdot (\lambda_j -1) \cdot \lambda_j^k \right \vert \\
			&\leq \sum_{j=2}^{n} \underbrace{\vert \symmetryMatrix_{1,j} \cdot \symmetryMatrix^{-1}_{j,n} \cdot (\lambda_j - 1) \vert}_{\leq \max_{j = 2, \ldots, n} \vert \symmetryMatrix_{1,j} \cdot \symmetryMatrix^{-1}_{j,n} \cdot (\lambda_j - 1) \vert \eqcolon C} \cdot \underbrace{\vert \lambda_j \vert^k}_{\leq \vert \lambda \vert^k} \\
			&\leq \vert \lambda \vert^k \cdot \sum_{j=2}^{n} C = \vert \lambda \vert^k \cdot (n-1) \cdot C.
		\end{align*}
		By plugging this inequality into the result of \Cref{thm:exact-computation-error-0-delta} we obtain
		\begin{align*}
			\Diff_t(\mathcal{M}) &= \sum_{k=1}^{\infty} p_k \cdot \Diff_t(\erlang_k) \\&\leq (n-1) \cdot C \cdot \sum_{k=1}^{\infty} \vert \lambda \vert^{k-1} \cdot \Diff_t(\erlang_k)
			\\&= (n-a_{\probMatrix}) \cdot C \cdot \sum_{k=1}^{\infty} \vert \lambda \vert^{k-1} \cdot \Diff_t(\erlang_k).
		\end{align*}
	
		Using similar calculations it is easy to show that the claim still holds if $\mathcal{M}$ has an additional absorbing fail state, i.e., if $a_{\probMatrix} = 2$.
	\end{proof}
		
	\ThmExactValuePnNonDiagonalizable* 
	\begin{proof}
		The proof proceeds mostly similar to that of \Cref{prop:exact-computation-pn-diag}, with the main difference being that here the JCF of $\probMatrix = \symmetryMatrix \jordanMatrix\symmetryMatrix^{-1}$ is used instead of the (probably not existing) diagonalization. Let $\tau_1, \ldots, \tau_m$ be the distinct eigenvalues of $\probMatrix$, given in descending order w.r.t. their absolute values. $\jordanMatrix$ can contain multiple blocks $\jordanMatrix_{i,1}, \ldots, \jordanMatrix_{i,q}$ that correspond to a single eigenvalue $\tau_i$. To avoid clutter in the notation we do not use double indices, but instead count the Jordan blocks individually, i.e., instead of $\jordanMatrix_{i,1}, \ldots, \jordanMatrix_{i, q}$ we write $\jordanMatrix_{i}, \jordanMatrix_{i+1}, \ldots, \jordanMatrix_{i +q}$. Let $r_{i}$ be the size (i.e., the number of rows resp. columns) of $\jordanMatrix_{i}$, and let $\lambda_i$ be the eigenvalue corresponding to $\jordanMatrix_i$.
		
		We assume w.l.o.g. that the initial state $\initialState$ of $\mathcal{M}$ corresponds to the first row of $\probMatrix$, and that the unique goal state $g$ corresponds to the last, i.e., the $n$-th row of $\probMatrix$. Furthermore, we assume that the Jordan blocks of $\jordanMatrix$ are ordered decreasingly w.r.t. the absolute values of the corresponding eigenvalues, starting with the blocks corresponding to eigenvalue $1$ in the top left of $\jordanMatrix$, and that if there are multiple Jordan blocks for the same eigenvalue, these occur in decreasing order w.r.t their sizes. 
		
		We first consider the case where $g$ is the only absorbing state in $\mathcal{M}$. 
		Since $g$ is reachable from every state in $\mathcal{M}$, the eigenvalue $\lambda_1 = 1$ has (algebraic and geometric) multiplicity $1$, i.e., $a_{\probMatrix} = 1$, and so $\jordanMatrix$ can be written as 
		\begin{align}
			\jordanMatrix = \begin{pmatrix}
				1 & 0  &\ldots & 0 \\
				0 & \jordanMatrix_{2} & \ldots & 0  \\
				\vdots  && \ddots  & \vdots \\
				0 &\ldots  & &\jordanMatrix_{q_{\jordanMatrix}} 
			\end{pmatrix}.\label{eq:jordanForm}
		\end{align}
		where $q_{\jordanMatrix}$ is the total number of Jordan blocks of $\jordanMatrix$. Each $\jordanMatrix_i$ with corresponding eigenvalue $\lambda_i$ is an $r_i \times r_i$ matrix of the form 
		\begin{align*}
			\jordanMatrix_i = \begin{pmatrix}
				\lambda_i & 1 & 0 & 0 & \ldots \\
				0 & \lambda_i & 1 & 0 & \ldots \\
				\vdots & & \ddots & \ddots & \vdots \\
				0 & \ldots & 0& \lambda_i & 1 \\
				0 & \ldots  & & 0 & \lambda_i
			\end{pmatrix}.
		\end{align*}
		The probability to reach $g$ after exactly $N+1$ discrete time steps is
		\begin{align}
			p_{N+1} &= \mathbf{\dirac}_{\initialState} \cdot (\probMatrix^{N+1} - \probMatrix^{N}) \cdot \mathbf{\dirac}_{g}^{\top}\nonumber \\
			&= \mathbf{\dirac}_{\initialState} \cdot \cdot ((\symmetryMatrix \jordanMatrix \symmetryMatrix^{-1})^{N+1} - (\symmetryMatrix \jordanMatrix \symmetryMatrix^{-1})^{N}) \cdot \mathbf{\dirac}_{g}^{\top} \nonumber \\
			&= \mathbf{\dirac}_{\initialState} \cdot \symmetryMatrix \cdot (\jordanMatrix^{N+1} - \jordanMatrix^{N}) \cdot \symmetryMatrix^{-1}  \cdot \mathbf{\dirac}_{g}^{\top} \nonumber \\
			&= \begin{pmatrix} \symmetryMatrix_{1, 1}  & \symmetryMatrix_{1, 2} & \ldots & \symmetryMatrix_{1, n} \end{pmatrix} \cdot (\jordanMatrix^{N+1} - \jordanMatrix^{N}) \cdot  \begin{pmatrix} \symmetryMatrix^{-1}_{1, n}\\ \symmetryMatrix^{-1}_{2, n}\\ \vdots \\ \symmetryMatrix^{-1}_{n, n} \end{pmatrix}. \label{eq:pN+1-written-our-via-Jordan}
		\end{align}
		We first consider the case that $\lambda_i \neq 0$ for all $i$. Then the $N$-th power of $\jordanMatrix_{i}$ has the form
		\begin{align}
			\jordanMatrix_i^N = \begin{pmatrix}
				\lambda_i^N & \binom{N}{1} \lambda_i^{N-1} & \binom{N}{2} \lambda_i^{N-2} & \ldots & \binom{N}{r_i - 1} \lambda_i^{N - r_i + 1} \\
				0 & \lambda_i^N & \binom{N}{1} \lambda_i^{N-1} & \ldots & \binom{N}{r_i - 2} \lambda_i^{N-r_i + 2} \\ \\
				\vdots & &  \ddots & &  \vdots \\ \\
				0 & \ldots & 0 & \lambda_i^N & \binom{N}{1} \lambda_i^{N-1} \\
				0 & \ldots & 0 & 0 & \lambda_i^N
			\end{pmatrix}, \label{eq:power-of-jordan}
		\end{align}
		yielding
		\begin{align*}
			&\jordanMatrix_i^{N+1} - \jordanMatrix_i^N = \jordanMatrix_i^N \cdot (\jordanMatrix_i - \mathbf{I}_{r_i}) \\&= \begin{pmatrix}
				\lambda_i^N ( \lambda_i - 1) & \lambda_i^{N-1}  (\lambda_i  \binom{N+1}{1} - \binom{N}{1}) & \lambda_i^{N-2}  (\lambda_i  \binom{N+1}{2} - \binom{N}{2})  & \ldots & \lambda_i^{N - r_i + 1}  ( \lambda_i  \binom{N+1}{r_i - 1} - \binom{N}{r_i - 1})\\
				0 & \lambda_i^N ( \lambda_i - 1) & \lambda_i^{N-1}  (\lambda_i  \binom{N+1}{1} - \binom{N}{1}) & \ldots & \lambda_i^{N - r_i + 2}  ( \lambda_i  \binom{N+1}{r_i - 2} - \binom{N}{r_i - 2}) \\ \\
				\vdots & &  \ddots & &  \vdots \\ \\
				0 & \ldots & 0 & \lambda_i^N ( \lambda_i - 1) & \lambda_i^{N-1}  (\lambda_i  \binom{N+1}{1} - \binom{N}{1}) \\
				0 & \ldots & 0 & 0 & \lambda_i^N ( \lambda_i - 1)
			\end{pmatrix}
		\end{align*}
		where $\mathbf{I}_{r_i}$ is the $r_i \times r_i$ identity matrix.
		When combining the derived form of $\jordanMatrix_i^{N+1} - \jordanMatrix_i^N$ with \Cref{eq:pN+1-written-our-via-Jordan} and carrying out the multiplication, we obtain 
		\begin{align*}
			p_{N+1} &= \sum_{l = 2}^{q_{\jordanMatrix}} \sum_{j = 1}^{r_{l}} \sum_{k = 1}^{j} \symmetryMatrix_{1, k + h_l} \cdot \symmetryMatrix^{-1}_{j + h_l, n} \cdot \lambda_{l}^{N + k - j} \cdot \left(\lambda_{l} \cdot \binom{N+1}{j-k} - \binom{N}{j-k} \right)
			\\&
			= \sum_{l = a_{\probMatrix} +1}^{q_{\jordanMatrix}} \sum_{j = 1}^{r_{l}} \sum_{k = 1}^{j} \symmetryMatrix_{1, k + h_l} \cdot \symmetryMatrix^{-1}_{j + h_l, n} \cdot \lambda_{l}^{N + k - j} \cdot \left(\lambda_{l} \cdot \binom{N+1}{j-k} - \binom{N}{j-k} \right)
		\end{align*}
		for $h_l = \sum_{i = 1}^{l-1} r_{l}$. The parameter $l$ specifies the Jordan block $\jordanMatrix_{l}$ that we are currently interested in, and the offset $h_l$ ensures that the correct entries of $\symmetryMatrix$ (for the rows) and $\symmetryMatrix^{-1}$ (for the columns) are picked. We do not have to consider $l = 1$ as $\lambda_1 = 1$ and so $\jordanMatrix_1^{N+1} - \jordanMatrix_1^{N}$ equal the (($1 \times 1$)-dimensional) matrix with all entries $0$. 
		
		\medskip
		
		We now consider the case where $\probMatrix$ is allowed to have eigenvalue $0$. We assume that the Jordan blocks $\jordanMatrix_{q_{\jordanMatrix}-z}, \ldots, \jordanMatrix_{q_{\jordanMatrix}-1}, \jordanMatrix_{q_{\jordanMatrix}}$ correspond to the eigenvalue $0$, i.e., that there are a total of $z$ such blocks for some $0 \leq z \leq q_{\jordanMatrix}-1$ (where we have $z < q_{\jordanMatrix}$ as $\lambda_1 = 1$ has a multiplicity of at least $1$). Consider again the identity in \Cref{eq:pN+1-written-our-via-Jordan}, i.e, 
		\begin{align*}
			p_{N+1} = \begin{pmatrix} \symmetryMatrix_{1, 1}  & \symmetryMatrix_{1, 2} & \ldots & \symmetryMatrix_{1, n} \end{pmatrix} \cdot (\jordanMatrix^{N+1} - \jordanMatrix^{N}) \cdot  \begin{pmatrix} \symmetryMatrix^{-1}_{1, n}\\ \symmetryMatrix^{-1}_{2, n}\\ \vdots \\ \symmetryMatrix^{-1}_{n, n} \end{pmatrix}.
		\end{align*}
		When carrying out the multiplications, nothing changes for the first $q_{\jordanMatrix} - z$ Jordan blocks that correspond to a non-zero eigenvalue. However, when considering one of the Jordan blocks $\jordanMatrix_i$ for $i \geq q_{\jordanMatrix}-z$, 
		$\jordanMatrix_i^{N}$ is not of the form described in \Cref{eq:power-of-jordan}, but instead has entries $\jordanMatrix_i^{N}[j,j+N] = 1$ for $1 \leq j \leq r_{i} - N$, and entries $0$ everywhere else. The difference $\jordanMatrix_i^{N+1} - \jordanMatrix_i^{N}$ therefore has entries 
		\begin{align*}
			(\jordanMatrix_i^{N+1} - \jordanMatrix_i^{N})[j,k] &= 
			\begin{cases}
				1, &\text{if } 1 \leq j \leq r_{i} - N - 1, k = j + N + 1 \\
				-1, &\text{if } 1 \leq j \leq r_{i} - N, k = j + N \\
				0, & \text{otherwise}
			\end{cases}.
		\end{align*}
		Thus, for the result of the product of \Cref{eq:power-of-jordan}, these blocks add summands of the form 
		\begin{align*}
			(0 - \symmetryMatrix_{1, h_l + 1} \cdot \symmetryMatrix^{-1}_{h_l + N + 1, n}) + \sum_{j = 1}^{r_{i} - N} (\symmetryMatrix_{1, h_l + h} - \symmetryMatrix_{1, h_l + j + 1}) \cdot \symmetryMatrix^{-1}_{h_l + N + j},
		\end{align*}
		where $h_l$ is, as in the first case, an offset that ensures that the entries of the correct columns resp. rows of $\symmetryMatrix_{1, *}$ resp. $\symmetryMatrix^{-1}_{*, n}$ are chosen. By setting $\symmetryMatrix_{1, h_l + j}^* = \symmetryMatrix_{1, h_l +j}$ if $j > 0$ and  $\symmetryMatrix_{1, h_l + j}^* = 0$ if $j = 0$, summing over all Jordan blocks corresponding to the eigenvalue $0$ yields a term of the form
		\begin{align*}
			R(N, z) \coloneqq \sum_{l = q_{\jordanMatrix}-z+1}^{q_{\jordanMatrix}} \sum_{j=0}^{r_{l} - N - 1} (\symmetryMatrix_{1, h_l + j}^* - \symmetryMatrix_{1, h_l + j + 1}) \cdot \symmetryMatrix^{-1}_{h_l + N + j, n}.
		\end{align*}
 		By combining this with the previous calculations regarding the non-zero eigenvalues we obtain
		\begin{align*}
			p_{N+1} = \sum_{l = a_{\probMatrix} + 1}^{q_{\jordanMatrix}-z} \sum_{j = 1}^{r_{l}} \sum_{k = 1}^{j} \symmetryMatrix_{1, k + h_l} \cdot \symmetryMatrix^{-1}_{j + h_l, n} \cdot \lambda_{l}^{N + k - j} \cdot \left(\lambda_{l} \cdot \binom{N+1}{j-k} - \binom{N}{j-k} \right) + R(N, z). 
		\end{align*}
	
		If $\mathcal{M}$ has a an additional absorbing fail state, then $a_{\probMatrix} = 2$ and a $1 \times 1$ Jordan block for eigenvalue $1$ occurs twice in the top left of \Cref{eq:jordanForm}. However, similar to the above deductions this Jordan block is canceled out in any of the difference $\jordanMatrix^{N+1} - \jordanMatrix^N$, and so the only adjustment needed for this case is an initial value of $l = 3$, which again coincides with $a_{\probMatrix} + 1$. 
	\end{proof}

	\PropJordanBound*
	\begin{proof}
		From \Cref{thm:exact-computation-error-0-delta} we know that
		\begin{align}
			\Diff_t(\mathcal{M}) = \sum_{k = 1}^{R-1} p_k \cdot \Diff_t(\erlang_k) + \sum_{k=R}^{\infty} p_k \cdot \Diff_t(\erlang_k) \label{eq:jordan-bound}
		\end{align}
		for $R$ the maximal size of any Jordan block of $\jordanMatrix$. 
		
		We split the series at position $R$ since for any $k \geq R$ the term $R(k, z)$ from \Cref{thm:exact-value-pn-non-diagonalizable} vanishes, allowing us to ignore it in our subsequent calculations. 
		
		To obtain a bound for $p_{N+1}$  for some $N \geq R-1$ we have to consider its absolute value, as the eigenvalues $\lambda_i$ of $\probMatrix$ can be complex even for stochastic matrices $\probMatrix$. Since $p_{N+1} \in [0,1]$ is a probability it coincides with its absolute value, so
		\begin{align*}
			p_{N+1} &= \left \vert\sum_{l = a_{\probMatrix}+1}^{q_{\jordanMatrix}-z} \sum_{j = 1}^{r_{l}} \sum_{k = 1}^{j} \symmetryMatrix_{1, k + h_l} \cdot \symmetryMatrix^{-1}_{j + h_l, n} \cdot \lambda_{l}^{N + k - j} \cdot \left(\lambda_{l} \cdot \binom{N+1}{j-k} - \binom{N}{j-k} \right) \right \vert
		\end{align*}
		and we want to find a bound for the right-hand side of this equation. Using the triangle inequality we get 
		\begin{align*}
			p_{N+1} &\leq  \sum_{l = a_{\probMatrix}+1}^{q_{\jordanMatrix}-z} \sum_{j = 1}^{r_{l}} \sum_{k = 1}^{j}  \vert \symmetryMatrix_{1, k + h_l} \cdot  \symmetryMatrix^{-1}_{j + h_l, n} \vert \cdot \vert \lambda_{l}^{N + k - j} \vert \cdot \left \vert \left(\lambda_{l} \cdot \binom{N+1}{j-k} - \binom{N}{j-k} \right) \right \vert \\
			&\leq \sum_{l = a_{\probMatrix}+1}^{q_{\jordanMatrix}-z} \sum_{j = 1}^{r_{l}} \sum_{k = 1}^{j}  \vert \symmetryMatrix_{1, k + h_l} \cdot \symmetryMatrix^{-1}_{j + h_l, n}  \vert \cdot \vert \lambda^{N + k - j} \vert \cdot \left \vert \left(\lambda_{l} \cdot \binom{N+1}{j-k} - \binom{N}{j-k} \right) \right \vert \\
			&\leq \vert \lambda \vert^{N - r + 1} \cdot \sum_{l = a_{\probMatrix}+1}^{q_{\jordanMatrix}-z} \sum_{j = 1}^{r_{l}} \sum_{k = 1}^{j}  \vert \symmetryMatrix_{1, k + h_l} \cdot \symmetryMatrix^{-1}_{j + h_l, n} \vert \cdot \left \vert \left(\lambda_{l} \cdot \binom{N+1}{j-k} - \binom{N}{j-k} \right) \right \vert
		\end{align*}
		where the second inequality follows from $\vert \lambda \vert \geq \vert \lambda_i\vert$ for every $a_{\probMatrix} + 1 \leq i \leq q_{\jordanMatrix}-z$, and the last inequality holds since $N + 1 - r$ is the minimal possible exponent of $\lambda$ in any of the summands. 
		
		For $l = a_{\probMatrix}+1, \ldots, q_{\jordanMatrix}-z$ let $\vert \lambda_{l}^* \vert = \max\{\vert \lambda_{l} \vert, \vert 1 - \lambda_{l} \vert\}$. If $j = k = 1$, i.e., if $j - k = 0$, the right-most factor can be bounded via 
		\begin{align*}
			\left \vert \left(\lambda_{l} \cdot \binom{N+1}{j-k} - \binom{N}{j-k} \right) \right \vert &= \vert \lambda_{l} - 1 \vert \leq \lambda_{l}^* = \lambda_{l}^{*} \cdot \frac{(N+1)^{j-k}}{(j-k)!},
		\end{align*}
		while for $j -k > 0$ we obtain 
		\begin{align*}
			&\left \vert \left(\lambda_{l} \cdot \binom{N+1}{j-k} - \binom{N}{j-k} \right) \right \vert \\
			&= \left \vert \left(\lambda_{l} \cdot \binom{N+1}{j-k} - \lambda_{l} \cdot \binom{N}{j-k} + \lambda_{l} \cdot \binom{N}{j-k} - \binom{N}{j-k} \right) \right \vert \\
			&\leq \left \vert \lambda_{l} \cdot \binom{N+1}{j-k} - \lambda_{l} \cdot \binom{N}{j-k} \right \vert + \left \vert \lambda_{l} \cdot \binom{N}{j-k} - \binom{N}{j-k}  \right \vert \\
			&= \vert \lambda_{l} \vert \cdot \left \vert \binom{N+1}{j-k} - \binom{N}{j-k} \right \vert + \binom{N}{j-k} \cdot \vert \lambda_{l} - 1 \vert \\
			&\leq \lambda_{l}^* \cdot \left( \left \vert \binom{N+1}{j-k} - \binom{N}{j-k} \right \vert + \binom{N}{j-k} \right) \\
			&= \lambda_{l}^* \cdot \left( \left \vert \binom{N}{j-k} + \binom{N}{j-k-1} - \binom{N}{j-k} \right \vert + \binom{N}{j-k} \right) \\
			&= \lambda_{l}^* \cdot \left( \binom{N}{j-k-1} + \binom{N}{j-k} \right)  \\
			&=  \lambda_{l}^*  \cdot \binom{N+1}{j-k} \\
			&\leq \vert \lambda_{l}^* \vert \cdot \frac{(N+1)^{j-k}}{(j-k)!}
		\end{align*}
		where the last inequality holds trivially if $N + 1 < j-k$, as in this case we have $\binom{N+1}{j-k} = 0$, and otherwise follows from
		\begin{align*}
			\binom{n}{k} = \frac{n!}{k! \cdot (n-k)!} = \frac{n \cdot {n-1} \cdot \ldots \cdot (n-k+1)}{k!} \leq \frac{n^{k}}{k!}
		\end{align*}
		for every $n \geq k$. As $(j-k)! \geq 1$ for every $j \geq k$, which holds by definition of the corresponding sum, and $j-k \leq r - 1$ we additionally obtain
		\begin{align*}
			\left \vert \left(\lambda_{l} \cdot \binom{N+1}{j-k} - \binom{N}{j-k} \right) \right \vert \leq \vert \lambda_{l}^* \vert \cdot (N+1)^{r-1}. 
		\end{align*}
		All in all, this yields 
		\begin{align*}
			p_{N+1} &\leq \vert \lambda \vert^{N - r + 1} \cdot \sum_{l = a_{\probMatrix}+1}^{q_{\jordanMatrix}-z} \sum_{j = 1}^{r_{l}} \sum_{k = 1}^{j}  \vert \symmetryMatrix_{1, k + h_l}  \cdot \symmetryMatrix^{-1}_{j + h_l, n} \vert \cdot \left \vert \left(\lambda_{l} \cdot \binom{N+1}{j-k} - \binom{N}{j-k} \right) \right \vert \\
			& \leq \vert \lambda \vert^{N - r + 1} \cdot \sum_{l = a_{\probMatrix}+1}^{q_{\jordanMatrix}-z} \sum_{j = 1}^{r_{l}} \sum_{k = 1}^{j}  \vert \symmetryMatrix_{1, k + h_l} \cdot \symmetryMatrix^{-1}_{j + h_l, n}  \vert \cdot \vert \lambda_{l}^* \vert \cdot (N+1)^{r - 1} \\
			&= \vert \lambda \vert^{N - r + 1} \cdot (N+1)^{r - 1} \cdot \underbrace{\sum_{l = a_{\probMatrix}+1}^{q_{\jordanMatrix}-z} \sum_{j = 1}^{r_{l}} \sum_{k = 1}^{j}  \vert \symmetryMatrix_{1, k + h_l} \cdot \symmetryMatrix^{-1}_{j + h_l, n} \vert \cdot \vert \lambda_{l}^* \vert}_{\eqcolon C} \\
			&= \vert \lambda \vert^{N - r + 1} \cdot (N+1)^{r - 1} \cdot C,
		\end{align*}
		where $C$ is a positive constant that is independent of $N$.
		
		In particular, it follows for every $k \geq R$ that
		\begin{align*}
			p_{k} \leq \vert \lambda \vert^{k - r} \cdot k^{r - 1} \cdot C
		\end{align*}
		and plugging this upper bound into \Cref{eq:jordan-bound} finishes the proof.
	\end{proof}

	\section{Additional Proofs of \Cref{sec:reward-bounds}}
	We start with a short recap on the computation of reward-bounded reachability probabilities. We follow \cite{LCPP}[Sec. 3], adjusted to our setting. For $x, y \in \mathbb{R}$, let $x \ominus y = \max\{0, x - y\}$. Further, for a given state $s$ and some reward bound $r \geq 0$, let $K = K_{s, r} = \{x \geq 0 \mid \reward(s) \cdot x \leq r \}$. Lastly, for states $s,s'$, let $R(s,s') = \rate(s) \cdot \prob(s,s')$. 
	
	As we are interested in reward-bounded reachability probabilities, we do not need to consider any constraint on the time up until reaching the goal state $g$ from state $s$. As the reward of $g$ does not affect the reward-bounded probability \emph{until reaching} $g$, we can w.l.o.g. assume $\reward(g) > 0$. In the notation of \cite{LCPP} we get that $\probMeasure_s(\reachability_{\leq r}g)$ for some reward bound $r$ is the least solution of the following set of equations: 
	\begin{align*}
		\probMeasure_s(\reachability_{\leq r}g) = \begin{cases}
			\int_{0}^{\sup K_{s,r}} \sum_{s' \in S} R(s,s') \cdot e^{-\rate(s) \cdot x} \cdot \probMeasure_{s'}(\reachability_{\leq r \ominus \rho(s) \cdot x} g) \ \mathrm{dx}, & \text{if } s \neq g  \\
			1, & \text{if } s = g
		\end{cases}.
	\end{align*}
	
	First, we show that removing self-loops from states with reward $0$ does not change the reward-bounded reachability probabilities of the goal state $g$. Note that in the following lemma we can exclude $\prob(s,s)= 1$, because in this case $s$ is absorbing and the probability to reach $g$ from $s$ equals $0$, independent of the reward-bound. 
	\begin{restatable}{lemma}{LemRemoveSelfLoops}\label{lem:remove-self-loops-reward-bounded-reach}
		Let $\mathcal{M}$ be a CTMC and $s \in \stateSpace$ with $\rho(s) = 0$ and $\prob(s, s) \in (0,1)$. Then $\probMeasure^\mathcal{M}_s(\reachability_{\leq r} g) = \probMeasure^{\mathcal{M}^*}_s(\reachability_{\leq r} g)$, where $\mathcal{M}^*$ is as $\mathcal{M}$, with the only difference being that $\prob^{\mathcal{M}^*}(s,s) = 0$ and $\prob^{\mathcal{M}^*}(s,s') = \frac{\prob^\mathcal{M}(s,s')}{1 - \prob^\mathcal{M}(s,s)}$ for all $s \neq s'$. 
	\end{restatable}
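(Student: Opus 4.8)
The plan is to exploit the least-solution characterisation of reward-bounded reachability recalled above and to show that the defining equation at the zero-reward state $s$ collapses to a \emph{discrete} linear equation from which the self-loop can be eliminated algebraically. Since $\prob^{\mathcal{M}}(s,s) \in (0,1)$, the state $s$ is non-absorbing and $s \neq g$, so the equation to specialise is the $s \neq g$ branch. Because $\reward(s) = 0$, the set $K_{s,r} = \{x \geq 0 \mid \reward(s)\cdot x \leq r\}$ equals $[0,\infty)$ (as $r \geq 0$), so $\sup K_{s,r} = \infty$, and the decremented bound satisfies $r \ominus \reward(s)\cdot x = r$ for every $x$. Hence $\probMeasure_{s'}(\reachability_{\leq r \ominus \reward(s)\cdot x} g) = \probMeasure_{s'}(\reachability_{\leq r} g)$ is independent of $x$ and pulls out of the integral. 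Using $\int_0^\infty e^{-\rate(s)\cdot x}\,\mathrm{dx} = 1/\rate(s)$ (well defined as $\rate(s) > 0$) and $R(s,s')/\rate(s) = \prob^{\mathcal{M}}(s,s')$, the equation for $s$ becomes
\[
\probMeasure^{\mathcal{M}}_s(\reachability_{\leq r} g) = \sum_{s' \in \stateSpace} \prob^{\mathcal{M}}(s,s') \cdot \probMeasure^{\mathcal{M}}_{s'}(\reachability_{\leq r} g).
\]

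Next I would eliminate the self-loop. Isolating the $s' = s$ summand yields $(1 - \prob^{\mathcal{M}}(s,s))\,\probMeasure^{\mathcal{M}}_s(\reachability_{\leq r} g) = \sum_{s' \neq s} \prob^{\mathcal{M}}(s,s')\,\probMeasure^{\mathcal{M}}_{s'}(\reachability_{\leq r} g)$; dividing by $1 - \prob^{\mathcal{M}}(s,s) > 0$ (legitimate since $\prob^{\mathcal{M}}(s,s) \in (0,1)$) and recognising $\prob^{\mathcal{M}^*}(s,s') = \prob^{\mathcal{M}}(s,s')/(1-\prob^{\mathcal{M}}(s,s))$ gives $\probMeasure^{\mathcal{M}}_s(\reachability_{\leq r} g) = \sum_{s' \neq s} \prob^{\mathcal{M}^*}(s,s')\,\probMeasure^{\mathcal{M}}_{s'}(\reachability_{\leq r} g)$. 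Since $\reward(s) = 0$ is unchanged in $\mathcal{M}^*$ and $\prob^{\mathcal{M}^*}(s,s) = 0$, this is exactly the defining $\mathcal{M}^*$-equation at $s$. As $\mathcal{M}$ and $\mathcal{M}^*$ agree on every state other than $s$, all remaining defining equations are literally identical in the two chains.

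Finally, I would conclude via the least-solution property. The two observations above show that the family $\bigl(\probMeasure^{\mathcal{M}}_{s'}(\reachability_{\leq r} g)\bigr)_{s',\,r}$ solves the $\mathcal{M}^*$-system, so by minimality $\probMeasure^{\mathcal{M}^*}_s(\reachability_{\leq r} g) \leq \probMeasure^{\mathcal{M}}_s(\reachability_{\leq r} g)$ for all $s$ and $r$. For the converse I would run the algebra backwards: substituting the $\mathcal{M}^*$-identity $\probMeasure^{\mathcal{M}^*}_s = \tfrac{1}{1-\prob^{\mathcal{M}}(s,s)}\sum_{s'\neq s}\prob^{\mathcal{M}}(s,s')\,\probMeasure^{\mathcal{M}^*}_{s'}$ into the $\mathcal{M}$-equation at $s$ and using $\bigl(\tfrac{\prob^{\mathcal{M}}(s,s)}{1-\prob^{\mathcal{M}}(s,s)}+1\bigr) = \tfrac{1}{1-\prob^{\mathcal{M}}(s,s)}$ shows that the $\mathcal{M}^*$-solution also solves the $\mathcal{M}$-system, whence $\probMeasure^{\mathcal{M}}_s(\reachability_{\leq r} g) \leq \probMeasure^{\mathcal{M}^*}_s(\reachability_{\leq r} g)$ and the two coincide. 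I expect the main obstacle to be this final step rather than the calculus: because the reward-bounded equations admit only a \emph{least} solution rather than a unique one, equality must be obtained by the two-sided comparison through minimality, which requires checking that substituting one chain's solution into the other chain's operator is valid uniformly in all reward bounds $r$ and that the underlying monotone operators indeed possess the asserted least fixed points.
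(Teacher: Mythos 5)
Your proposal is correct, and its computational core is the same as the paper's: use $\reward(s)=0$ to collapse the integral equation at $s$ into the discrete identity $\probMeasure^{\mathcal{M}}_s(\reachability_{\leq r} g) = \sum_{s'} \prob^{\mathcal{M}}(s,s')\,\probMeasure^{\mathcal{M}}_{s'}(\reachability_{\leq r} g)$, isolate the $s'=s$ term, and divide by $1-\prob^{\mathcal{M}}(s,s)$ to recover the transition probabilities of $\mathcal{M}^*$.

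Where you genuinely depart from the paper is the final step, and your version is the more careful one. The paper's proof ends by rewriting the resulting expression as $\int_{0}^{\infty} \sum_{s'} R^{\mathcal{M}^*}(s,s') \cdot e^{-\rate(s)\cdot x} \cdot \probMeasure_{s'}(\reachability_{\leq r} g)\,\mathrm{dx}$ and simply declaring it equal to $\probMeasure^{\mathcal{M}^*}_s(\reachability_{\leq r} g)$ --- i.e., it passes from ``the $\mathcal{M}$-probabilities satisfy the $\mathcal{M}^*$ fixpoint equation at $s$'' to ``they equal the $\mathcal{M}^*$ reachability probabilities'' without comment, even though the characterisation only pins down the \emph{least} solution, and the probabilities $\probMeasure_{s'}$ appearing inside the integral are still those of $\mathcal{M}$. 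Your two-sided argument fills exactly this gap: showing that the $\mathcal{M}$-solution solves the $\mathcal{M}^*$-system (all other equations being literally identical) gives $\probMeasure^{\mathcal{M}^*} \leq \probMeasure^{\mathcal{M}}$ by minimality, and the reverse substitution using $\frac{\prob^{\mathcal{M}}(s,s)}{1-\prob^{\mathcal{M}}(s,s)}+1 = \frac{1}{1-\prob^{\mathcal{M}}(s,s)}$ gives the opposite inequality, whence equality. What the paper's shortcut buys is brevity; what your route buys is that equality is actually derived rather than asserted, uniformly in the reward bound $r$. The only caveat is the one you flag yourself: the argument presupposes that the reward-bounded system does admit a least solution and that it is attained by the reachability probabilities in both chains, which is exactly the characterisation from the cited reference, so nothing further is needed.
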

	\begin{proof}
		By our assumption $\rho(g) > 0$ and hence $s \neq g$. Furthermore, as $\rho(s) = 0$ we have $K_{s,r} = [0, \infty)$ for every $r \geq 0$ and $r \ominus \rho(s) \cdot x = r$ for every $x \geq 0$. Thus, the reward-bounded reachability probabilities from $s$ are given as \cite{LCPP}
		\begin{align*}
			\probMeasure_s(\reachability_{\leq r}g) &= \int_{0}^{\infty} \sum_{s' \in S} R(s,s') \cdot e^{- \rate(s) \cdot x} \cdot \probMeasure_{s'}(\reachability_{\leq r} g) \ \mathrm{dx} \\
			&= \int_{0}^{\infty} \sum_{s' \in S \setminus \{s\}} R(s,s') \cdot e^{- \rate(s) \cdot x} \cdot \probMeasure_{s'}(\reachability_{\leq r} g) \ \mathrm{dx} \\
			& \qquad + \int_{0}^{\infty} R(s,s) \cdot e^{- \rate(s) \cdot x} \cdot \probMeasure_{s}(\reachability_{\leq r} g) \ \mathrm{dx}.
		\end{align*}
		Now we can simplify the second integral via 
		\begin{align*}
			\int_{0}^{\infty} R(s,s) \cdot e^{- \rate(s) \cdot x} \cdot \probMeasure_{s}(\reachability_{\leq r} g) \ \mathrm{dx}
			&= R(s,s) \cdot \probMeasure_{s}(\reachability_{\leq r} g)  \cdot \int_{0}^{\infty} e^{- \rate(s) \cdot x} \ \mathrm{dx} \\
			&= R(s,s) \cdot \probMeasure_{s}(\reachability_{\leq r} g) \cdot \frac{1}{\rate(s)} \\
			&= \prob(s,s) \cdot \probMeasure_{s}(\reachability_{\leq r} g).
		\end{align*}
		Therefore, 
		\begin{align*}
			\probMeasure_s(\reachability_{\leq r}g) = \int_{0}^{\infty} \sum_{s' \in S \setminus \{s\}} R(s,s') \cdot e^{- \rate(s) \cdot x} \cdot \probMeasure_{s'}(\reachability_{\leq r} g) \ \mathrm{dx} + \prob(s,s) \cdot \probMeasure_{s}(\reachability_{\leq r} g)
		\end{align*}
		which is equivalent to 
		\begin{align*}
			\probMeasure_{s}(\reachability_{\leq r} g) \cdot (1 - \prob(s,s)) = \int_{0}^{\infty} \sum_{s' \in S \setminus \{s\}} R(s,s') \cdot e^{- \rate(s) \cdot x} \cdot \probMeasure_{s'}(\reachability_{\leq r} g) \ \mathrm{dx}
		\end{align*}
		and hence 
		\begin{align*}
			\probMeasure_{s}(\reachability_{\leq r} g) &= \frac{1}{1-\prob(s,s)} \cdot \int_{0}^{\infty} \sum_{s' \in S \setminus \{s\}} R(s,s') \cdot e^{- \rate(s) \cdot x} \cdot \probMeasure_{s'}(\reachability_{\leq r} g) \ \mathrm{dx} \\
			&= \int_{0}^{\infty} \sum_{s' \in S \setminus \{s\}} \rate(s) \cdot \frac{\prob(s,s')}{1 - \prob(s,s)} \cdot e^{- \rate(s) \cdot x} \cdot \probMeasure_{s'}(\reachability_{\leq r} g) \ \mathrm{dx} \\
			&= \int_{0}^{\infty} \sum_{s' \in S \setminus \{s\}} \rate(s) \cdot \prob^{\mathcal{M}^*}(s,s') \cdot e^{- \rate(s) \cdot x} \cdot \probMeasure_{s'}(\reachability_{\leq r} g) \ \mathrm{dx} \\
			&= \int_{0}^{\infty} \sum_{s' \in S} \rate(s) \cdot \prob^{\mathcal{M}^*}(s,s') \cdot e^{- \rate(s) \cdot x} \cdot \probMeasure_{s'}(\reachability_{\leq r} g) \ \mathrm{dx} \\
			&= \int_{0}^{\infty} \sum_{s' \in S} R^{\mathcal{M}^*}(s,s') \cdot e^{- \rate(s) \cdot x} \cdot \probMeasure_{s'}(\reachability_{\leq r} g) \ \mathrm{dx} \\
			&= \probMeasure^*_s(\reachability_{\leq r} g)
		\end{align*}
		where the second to last equality follows because $\prob^{\mathcal{M}^*}(s,s) = 0$.
	\end{proof}

	Thanks to \Cref{lem:remove-self-loops-reward-bounded-reach} we can assume w.l.o.g. that no state $s$ in $\mathcal{M}$ with reward $\rho(s) = 0$ has a self-loop when computing reward-bounded reachability probabilities.

	Next, we show that it is possible to safely remove states with reward $0$ from a CTMC without changing the reward-bounded reachability probabilities of any state with positive reward. 
	\PropRemoveRewardZeroStates*
	\begin{proof}
		We show the claim by proving that, for any state $t$ in $\mathcal{M}$ and any successor $t'$ of $t$ with $\rho(t) = 0$ the reward-bounded reachability probabilities in $\mathcal{M}$ are the same as in the chain $\mathcal{M}'$ obtained by removing $t'$ from $\stateSpace$ and setting, for every state $t''$ with $\prob(t'', t') > 0$, the probability distribution as $\prob'(t'', t''') =  \prob(t'',t''') + \prob(t'', t') \cdot \prob(t', t''')$ for all $t''' \neq t'$. By iteratively applying this transformation, we can remove all states with reward $0$ from $\mathcal{M}$, obtaining in the end a CTMC $\mathcal{M}_{> 0}$ of the desired form. 
		
		By our assumptions it holds for any state $t' $ with $\rho(t') = 0$ that $t' \neq g$. Furthermore, we can assume w.l.o.g. that $\prob(t', t') = 0$ because of \Cref{lem:remove-self-loops-reward-bounded-reach}. Using the latter, we show that $\prob'(t'', \cdot)$ as defined above is a distribution for every $t'' \in \stateSpace \setminus \{t'\}$. If $\prob(t'', t') = 0$, moving from $\mathcal{M}$ to $\mathcal{M}'$ does not change the transition probabilities of $t''$ at all, so nothing is to show. Otherwise, 
		\begin{align*}
			\prob'(t'', \stateSpace \setminus \{t'\}) &= \sum_{t \in \stateSpace \setminus \{t'\}} \prob'(t'', t) \\
				&= \sum_{t \in \stateSpace \setminus \{t'\}} \prob(t'', t) + \prob(t'', t') \cdot \prob(t', t) \\
					&=  \sum_{t \in \stateSpace \setminus \{t'\}} \prob(t'', t) + \sum_{t \in \stateSpace \setminus \{t'\}} \prob(t'', t') \cdot \prob(t', t) \\
					&= \sum_{t \in \stateSpace \setminus \{t'\}} \prob(t'', t) + \prob(t'', t') \cdot \underbrace{\sum_{t \in \stateSpace \setminus \{t'\}} \prob(t', t)}_{= \sum_{t \in \stateSpace} \prob(t', t) = 1 \text{ as} \prob(t', t') = 0} \\
					&= \sum_{t \in \stateSpace \setminus \{t'\}} \prob(t'', t) + \prob(t'', t') = 1.
		\end{align*}
	
		To show the preservation of reward-bounded reachability probabilities, we first observe that
		\begin{align}
			\probMeasure_{t'}(\reachability_{\leq r} g) &= \int_{0}^{\infty} \sum_{t'' \in S} R(t',t'') \cdot e^{- \rate(t') \cdot x} \cdot \probMeasure_{t''}(\reachability_{\leq r \ominus \rho(t') \cdot x} g) \ \mathrm{dx} \nonumber \\
			&= \int_{0}^{\infty} \sum_{t'' \in S} R(t',t'') \cdot e^{- \rate(t') \cdot x} \cdot \probMeasure_{t''}(\reachability_{\leq r} g) \ \mathrm{dx} \nonumber \\
			&= \sum_{t'' \in \stateSpace} \probMeasure_{t''}(\reachability_{\leq r} g) \cdot R(t', t'') \cdot \int_{0}^{\infty} e^{-\rate(t') \cdot x} \ \mathrm{dx} \nonumber \\
			&= \sum_{t'' \in \stateSpace} \probMeasure_{t''}(\reachability_{\leq r} g) \cdot \prob(t', t'') \cdot \rate(t') \cdot \frac{1}{\rate(t')} \nonumber \\
			&= \sum_{t'' \in \stateSpace} \probMeasure_{t''}(\reachability_{\leq r} g) \cdot \prob(t', t'') \nonumber \\
			&= \sum_{t'' \in \stateSpace \setminus \{t'\}} \probMeasure_{t''}(\reachability_{\leq r} g) \cdot \prob(t', t''). \label{eq:remove-rewards-eq-1}
		\end{align}
		Let now $t \in \stateSpace$ with $\prob(t,t') > 0$. Then 
		\begin{align*}
			&\textcolor{white}{=}\probMeasure^\mathcal{M}_t(\reachability_{\leq r} g) \\&= \int_{0}^{\sup K_{t,r}} \sum_{t''' \in \stateSpace} R(t,t''') \cdot e^{- \rate(t) \cdot x} \cdot \probMeasure_{t'''}(\lozenge_{\leq r\ominus \rho(t) \cdot x} g) \ \mathrm{dx} \\
			&= \int_{0}^{\sup K_{t,r}} \sum_{t''' \in \stateSpace \setminus \{t'\}} R(t,t''') \cdot e^{- \rate(t) \cdot x} \cdot \probMeasure_{t'''}(\lozenge_{\leq r\ominus \rho(t)\cdot x} g) \\
			& \qquad + R(t,t') \cdot e^{- \rate(t) \cdot x} \cdot  \probMeasure_{t'}(\lozenge_{\leq r \ominus \rho(t) \cdot x} g) \ \mathrm{dx} \\
			&= \int_{0}^{\sup K_{t,r}} \sum_{t''' \in \stateSpace \setminus \{t'\}} R(t,t''') \cdot e^{- \rate(t) \cdot x} \cdot \probMeasure_{t'''}(\lozenge_{\leq r\ominus \rho(t)\cdot x} g) \\
			& \qquad + R(t,t') \cdot e^{- \rate(t) \cdot x} \cdot \sum_{t'' \in \stateSpace \setminus \{t'\}} \probMeasure_{t''}(\reachability_{\leq r \ominus \rho(t)\cdot x} g) \cdot \prob(t', t'') \ \mathrm{dx}
			\\ 
			&= \int_{0}^{\sup K_{t,r}} \sum_{t''' \in \stateSpace \setminus \{t'\}} \rate(t) \cdot (\prob(t,t''') + \prob(t,t') \cdot \prob(t', t''')) \cdot e^{- \rate(t) \cdot x} \cdot \probMeasure_{t'''}(\lozenge_{\leq r\ominus \rho(t)\cdot x} g) \ \mathrm{dx} \\
			&= \int_{0}^{\sup K_{t,r}} \sum_{t''' \in \stateSpace \setminus \{t'\}} \rate(t)  \cdot \prob'(t,t''') \cdot e^{- \rate(t) \cdot x} \cdot \probMeasure_{t'''}(\lozenge_{\leq r\ominus \rho(t)\cdot x} g) \ \mathrm{dx} \\
			&= \int_{0}^{\sup K_{t,r}} \sum_{t''' \in \stateSpace \setminus \{t'\}} R^{\mathcal{M}'}(t,t'')  \cdot e^{- \rate(t) \cdot x} \cdot \probMeasure_{t'''}(\lozenge_{\leq r\ominus \rho(t)\cdot x} g) \ \mathrm{dx} \\
			&= \probMeasure^{\mathcal{M}'}_t(\reachability_{\leq r} g),
		\end{align*}
		so removing $t'$ and redistributing the incoming probabilities as described above indeed preserves reward-bounded reachability probabilities. 
	\end{proof}

	\begin{remark}
		Assume that we are given a CTMC $\mathcal{M}$ and a state $s$ with $\rho(s) = 0$, and that we want to compute $\probMeasure_s(\lozenge_{\leq r} g)$. Then, when constructing $\mathcal{M}_{> 0}$ as proposed in (the proof of) \Cref{prop:remove-reward-0-states}, the state $s$ is removed from the chain, making $\probMeasure_s^{\mathcal{M}_{> 0}}$ not well-defined as $s \notin \stateSpace^{\mathcal{M}_{> 0}}$. However, it is obvious that $\probMeasure_s(\lozenge_{\leq r} g) = \sum_{s' \in \Succ(s)} \prob(s,s') \cdot \probMeasure_{s'}(\lozenge_{\leq r} g)$, as $s$ has reward $0$ and since we can safely assume that $\prob(s,s) = 0$ by \Cref{lem:remove-self-loops-reward-bounded-reach}. Hence, we can still construct $\mathcal{M}_{> 0}$, compute the reward bounded reachability probabilities for all $s' \in \Succ(s)$ (potentially applying the above argument multiple times if there is a path starting in $s$ with an initial prefix of states with reward $0$) and taking the weighted sum of these values to obtain the corresponding probability for $s$. 
	\end{remark}
\end{document}